\DeclareMathOperator{\ad}{ad}
\DeclareMathOperator{\id}{id}
\DeclareMathOperator{\Hom}{Hom}
\DeclareMathOperator{\iHom}{\sf Hom}
\DeclareMathOperator{\Obj}{Obj}
\DeclareMathOperator{\Fun}{Fun}
\DeclareMathOperator{\iMap}{\sf Map}
\DeclareMathOperator{\Vect}{Vect}
\DeclareMathOperator{\End}{End}
\DeclareMathOperator{\iEnd}{\sf End}
\DeclareMathOperator{\grtr}{grtr}
\DeclareMathOperator{\ee}{e}
\DeclareMathOperator{\Texp}{Texp}
\DeclareMathOperator{\zz}{\sf z}
\numberwithin{equation}{subsection} 
\numberwithin{subsection}{section} 
\newcommand{\ceqref}[1]{{\textcolor{blue}{\eqref{#1}}}}
\newcommand{\cref}[1]{{\textcolor{blue}{\ref{#1}}}}
\newcommand{\ccite}[1]{{\textcolor{blue}{\!\cite{#1}}}}
\newcommand{\sss}{{\hbox{$\sum$}}}
\newcommand{\ppp}{{\hbox{$\prod$}}}
\newcommand{\ms}{\scriptscriptstyle}
\newcommand{\bcdot}{{\ms\bullet}}
\newcommand{\mathsans}[1]{{{\sf #1}}}
\font\euler=eusm10 at 12.8 truept
\font\scripteuler=eusm7
\font\scriptscripteuler=eusm5 
\newtheorem{defi}{{\sf Definition}}[section]
\newtheorem{prop}{{\sf Proposition}}[section]
\newtheorem{lemma}{{\sf Lemma}}[section]
\newtheorem{exa}{{\sf Example}}[section]
\begin{document}

\vskip1.5cm
\begin{large}
{\flushleft\textcolor{blue}{\sffamily\bfseries Exact renormalization group and effective action:}}  
{\flushleft\textcolor{blue}{\sffamily\bfseries a Batalin--Vilkovisky algebraic formulation}}  
\end{large}
\vskip1.3cm
\hrule height 1.5pt
\vskip1.3cm
{\flushleft{\sffamily \bfseries Roberto Zucchini}\\
\it Dipartimento di Fisica ed Astronomia,\\
Universit\`a di Bologna,\\
I.N.F.N., sezione di Bologna,\\
viale Berti Pichat, 6/2\\
Bologna, Italy\\
Email: \textcolor{blue}{\tt \href{mailto:roberto.zucchini@unibo.it}{roberto.zucchini@unibo.it}}, 
\textcolor{blue}{\tt \href{mailto:zucchinir@bo.infn.it}{zucchinir@bo.infn.it}}}


\vskip.7cm
\vskip.6cm 
{\flushleft\sc
Abstract:} 
In the present paper, which is a mathematical follow--up of \ccite{Zucchini:2017irg}
taking inspiration from \ccite{Costello:2007ei}, 
we present an abstract formulation of exact renormalization group 
(RG) in the framework of Batalin--Vilkovisky (BV) algebra theory. 
In the first part, we work out a general algebraic and geometrical
theory of BV algebras, canonical maps, flows and flow stabilizers.
In the second part, relying on this formalism, we build a BV algebraic 
theory of the RG. In line with the graded geometric outlook of our approach, 
we adjoin the RG scale with an odd parameter and analyse in depth  
the implications of the resulting RG supersymmetry and find that the RG equation (RGE)
takes Polchinski's form \ccite{Polchinski:1983gv}.
Finally, we study abstract purely algebraic
odd symplectic free models of RG flow and effective action (EA) and the perturbation theory thereof
to illustrate and exemplify the general theory.
\vspace{2mm}
\par\noindent
MSC: 81T13 81T20 81T45  

\vfil\eject

\tableofcontents

\vfil\eject

\section{\textcolor{blue}{\sffamily Introduction}}\label{sec:intro}


Renormalization group (RG) theory \ccite{Kadanoff:1966wm, Wilson:1974mb} is a powerful 
scheme designed for the study of a quantum field theory as an effective field theory (EFT) of field modes 
below a given energy scale. The theory's invariant physical content is by itself indifferent to the presence 
of the scale, which is introduced only for analysis purposes. Scale independence is encoded 
in the RG equation (RGE). There are several versions of RG theory. 
In this paper, we shall deal only with the so called exact 
RG \ccite{Polchinski:1983gv,Wetterich:1992yh} (to be called simply RG in the following), 
which is concerned mainly with the RG flow of the effective action (EA). 
See ref. \ccite{Rosten:2010vm} for a recent review. 

Batalin--Vilkovisky (BV) theory \ccite{BV1,BV2} is a very general 
quantization scheme for classical field theories characterized by gauge symmetries 
in the broadest sense. In ordinary gauge theory, it reproduces the results of 
Becchi--Rouet--Stora--Tyutin 
theory. It applies however also to more general field
theories whose gauge symmetries are neither freely acting nor involutive such as those encountered 
e. g. in supergravity and string theory. See ref. \ccite{Gomis:1994he} for an introduction to 
this topic. 

RG theory and its ramifications have been studied in a BV framework both in a 
field theoretic \ccite{Igarashi:2001mf,Igarashi:2001ey} and a physical mathematical
\ccite{Costello:2007ei,Costello:2011ams,Costello:2016faqft,Gwilliam:2016faqft,Mnev:2008sa} 
perspective. In ref. \ccite{Zucchini:2017irg}, we laid the foundations 
for a general BV RG theory. In the present paper, building upon the results obtained there, 
we present a mathematical formulation of BV RG theory in the framework of BV algebra and geometry. 
(See refs. \ccite{Elliott:2017ahb,Li:2017bv,Gwilliam:2017axm} for three recent mathematical studies similar in spirit 
to ours). The rest of this section will be devoted to introducing the reader to the subject 
and to motivate our approach to it.

\subsection{\textcolor{blue}{\sffamily Abstract BV RG theory}}\label{subsec:absbvrgth}

In BV theory, each field is paired to an antifield of opposite statistics
and the total field space is built as an odd phase space with fields 
and antifields as its canonically conjugate coordinates and momenta.
Classical Hamiltonian notions such as those of 
symplectic Laplacian, Poisson bracket and canonical map have in this way BV analogs. 
Gauge fixing is implemented by restricting to an appropriate Lagrangian submanifold of the total 
field space. Independence of the gauge fixed  theory from the choice of the submanifold requires 
the BV master action (MA) $S$ to obey the BV quantum master equation (ME) 
\begin{equation}
\varDelta S+\frac{1}{2}(S,S)=0. 
\label{ibvfix1}
\end{equation}

In a BV EFT, the whole BV structure acquires a dependence on the 
underlying energy scale $t$. The BV Laplacian and bracket as well  as 
the BV MA $S_t$ depend on $t$ and the quantum ME \ceqref{ibvfix1} holds identically in $t$,
\begin{equation}
\varDelta_tS_t+\frac{1}{2}(S_t,S_t)_t=0. \vphantom{\bigg]}
\label{ibvfix7}
\end{equation}
$S_t$ is to be identified with the BV EA. The RG flow
that governs its dependence on $t$ must be compatible with \ceqref{ibvfix7}.
The flow maps must therefore constitute a group $\varphi_{t,s}$ of 
BV canonical maps, whose pull--back action switches the BV Laplacian and bracket structure at
the scale $s$ to that at the scale $t$ and suitably relates the EA at those scales.
A closer analysis shows that 
\begin{equation}
S_t=\varphi_{t,s}{}^*S_s+r_{\varphi_{t,s}}, \vphantom{\bigg]}
\label{ibvfix7/1}
\end{equation}
where $r_{\varphi_{t,s}}=\frac{1}{2}\ln J_{\varphi_{t,s}}$ is the logarithmic Jacobian of $\varphi_{t,s}$.
$S_t$ so obeys infinitesimally an RGE of the form  \hphantom{xxxxxxxxxx}
\begin{equation}
\frac{dS_t}{dt}=\varphi^\bcdot{}_tS_t+r^\bcdot{}_{\varphi t}.
\label{ibvfix19}
\end{equation}
We justified this approach to the RG by a revisitation of BV theory 
in ref. \ccite{Zucchini:2017irg}, to which the reader is referred for 
a more comprehensive discussion of the field theoretic aspects of this matter. 


The mathematical structure of BV theory has a very elegant formulation in the language of 
BV geometry as shown originally by Schwarz in ref. \ccite{Schwarz:1992nx} 
and reviewed e. g. in \ccite{Mnev:2017oko}. Going one step further, 
one can leave aside the datum itself of a 
BV manifold and consider instead of the algebra of functions on the manifold
with its BV Laplacian and bracket structure a generic graded commutative algebra endowed 
with a formal odd Laplacian and bracket structure. 
Much of the theory can then be formulated in an abstract purely algebraic setting. \pagebreak 
This leads one to the realm of BV algebras. It is also the point of view taken in this paper.

\vfil 
It is important to stress that a purely algebraic formulation of BV theory 
is not a mere academic exercise. It makes it possible to highlight those features 
of the theory which are purely algebraic in nature and distinguish them form those
which are essentially field theoretic. Since analysis and computation 
in field theory are typically much more involved than in algebra, this constitutes a 
definite advantage.

\vfil 
In sect. \cref{sec:bvalg}, we work out in detail a general algebraic and geometrical
theory of BV algebras, canonical maps, flows and flow stabilizers. A BV algebra 
$X$ is a graded commutative algebra endowed with a degree $1$ nilpotent Laplacian $\varDelta_X$ 
and a degree $1$ graded antisymmetric, Jacobi and derivative bilinear bracket $(-,-)_X$ 
compatible in an appropriate sense (cf. defs. \cref{def:bvalg}, \cref{def:bvbrac}). 
A canonical map $\alpha:X\rightarrow Y$ of BV algebras is defined as a graded algebra isomorphism 
for which there exists a degree $0$ logarithmic Jacobian $r_\alpha\in Y$ measuring its failure to intertwine 
between the Laplacians $\varDelta_X$, $\varDelta_Y$ (cf. def. \cref{def:canmap}). A BV flow is a 
family $\chi_{t,s}:X_s\rightarrow X_t$ of canonical maps of BV algebras 
parametrized by $s,t\in T$, where $T$ is some graded manifold, and closed under 
composition (cf. def. \cref{def:bvflow}). 
A BV flow stabilizer encodes the symmetries of a BV flow (cf. def\ \cref{def:stab}).

\vfil
The definitions of BV canonical map and flow assumed in the present paper are, to the best of our knowledge, 
original. The seemingly unconventional notion of canonical map generalizes the customary one of 
bracket preserving map (and is in fact equivalent to the latter in standard BV theory). 
The notion of BV flow is obviously inspired by the analogous one of Hamiltonian mechanics.

\vfil
The formal structure described in sect. \cref{sec:bvalg} is tailored made 
for the algebraic formulation of BV RG theory detailed in sect. \cref{sec:bvrenorm}. 
In a BV algebra $X$, a BV MA is simply a degree $0$ element $S\in X$ 
satisfying the abstract version of the ME \ceqref{ibvfix1} (cf. def. \cref{def:bvqma}). 
In a parametrized family $X_t$, $t\in T$, of BV algebras all equal to the same graded commutative algebra $X$, 
a BV EA is a parametrized family of degree $0$ elements $S_t\in X$, $t\in T$, satisfying the abstract version of 
the ME \ceqref{ibvfix7} and admitting a BV flow $\chi_{t,s}:X_s\rightarrow X_t$ such that the abstract version of 
the flow relation \ceqref{ibvfix7/1} holds (cf. defs. \cref{def:bvrgfl}, \cref{def:bvrgea}). 

\vfill\eject

In  sect. \cref{sec:bvrenorm}, further developing the investigation initiated in ref.  
\ccite{Zucchini:2017irg}, we analyse in depth the conditions under which the general properties 
exhibited by the RG flow in field theory continue to hold when more general manifolds than just 
the real line $\mathbb{R}$ are used as energy scale space. The graded algebraic and geometric 
context in which we formulate BV theory suggests promoting the scale space to a graded manifold. 
Indeed, new insight can be achieved by employing the shifted tangent space $T[1]\mathbb{R}$ of 
$\mathbb{R}$ rather $\mathbb{R}$ itself for this role. This leads to an RG set--up, which we call ``extended''
to distinguish it from the customary ``basic'' set--up. 

The manifold $T[1]\mathbb{R}$ is coordinatized 
by a degree $0$ parameter $t$, identified as the usual RG energy scale, and a further degree $1$ parameter 
$\theta$. In the extend set--up, so, the BV Laplacian and bracket depend on both $t$ and $\theta$. 
The BV MA $S_{t\theta}$ also does and satisfies the extended version of the BV ME, 
\begin{equation}
\varDelta_{t\theta}S_{t\theta}+\frac{1}{2}(S_{t\theta},S_{t\theta})_{t\theta}=0. 
\label{ibveffact25/1}
\end{equation}
Analogously to the basic case, $S_{t\theta}$ is the extended BV EA.  Its RG flow is governed by a group 
$\varphi_{t\theta,s\zeta}$ of canonical maps turning the BV Laplacian and bracket structure at an extended 
scale $s,\zeta$ to that at another scale $t,\theta$ and relating the EA at those scales according to 
\hphantom{xxxxxxxxxxxxxxxx}
\begin{equation}
S_{t\theta}=\varphi_{t\theta,s\zeta}{}^*S_{s\zeta}+r_{\varphi_{ t\theta,s\zeta}}. 
\label{ibveffact25/2}
\end{equation}
At the infinitesimal level, once \ceqref{ibveffact25/1} is taken into account, 
\ceqref{ibveffact25/2} yields a more structured RGE than \ceqref{ibvfix19}. Since $\theta$
is an odd parameter, we have $\varDelta_{t\theta}=\varDelta_t+\theta \varDelta^\star {}_t$
and $(-,-)_{t\theta}=(-,-)_t\pm\theta (-,-)^\star{}_t$, where $\varDelta_t$ is the basic BV Laplacian, 
$\varDelta^\star {}_t$ is a degree $0$ Laplacian, $(-,-)_t$ is the basic BV bracket
and $(-,-)^\star{}_t$ is a degree $0$ graded symmetric bracket. 
Further, $S_{t\theta}=S_t+\theta S^\star{}_t$, where $S_t$ is to be identified with the basic BV EA and $S^\star{}_t$
is a degree $-1$ EA. By the extended RGE, $S^\star{}_t$ gets expressed in terms of $S_t$ and 
the RGE for $S_t$ takes the form 
\begin{equation}
\frac{dS_t}{dt}=\varDelta^\star{}_tS_t+\frac{1}{2}(S_t,S_t)^\star{}_t
+\bar\varphi^\bcdot{}_tS_t+\bar r^\bcdot{}_{\varphi t}, 
\label{ibveffact27}
\end{equation}
the last two summands in the right hand side being RG  ``seed'' terms. The RGE \ceqref{ibveffact27}, which was
derived by purely algebraic and geometric means, has the characteristic form of Polchinski's \ccite{Polchinski:1983gv}.  
Adding to the RG scale an odd parameter has moreover unveiled the existence of a novel RG supersymmetry. 
The above is just one example of the information that can be extracted by considering a judiciously chosen
scale manifold extending $\mathbb{R}$. The ramifications of this type of approach, 
especially in perturbation theory, are largely unexplored. 

In ref. \ccite{Zucchini:2017irg}, we illustrated a non trivial free model based on the $\mathfrak{gl}(1|1)$ 
degree $-1$ symplectic framework originally developed by Costello in ref. \ccite{Costello:2007ei}, in which the 
extended set--up described above is implemented. In  sect. \cref{sec:models}, 
we provide the proofs of the mathematical statements given there. 
We derive the expressions of the free EAs $S^0{}_t$ and $S^{0\star}{}_t$ and investigate the 
associated perturbation theory by expanding the full EAs $S_t$ and $S^\star{}_t$ as 
\begin{equation}
S_t=S^0{}_t+I_t, \qquad S^\star{}_t=S^{0\star}{}_t+I^\star{}_t, 
\label{}
\end{equation}
where $I_t$, $I^\star{}_t$ are interaction actions expressed as formal power series of $\hbar$. 
We rederive the ME and RGE of $I_t$ originally written down by Costello and obtain a further ME involving 
simultaneously $I_t$, $I^\star{}_t$. We also show that, through this latter, the RGE of $I_t$ 
can be put in a form analogous to Polchinsi's. Finally, we show that the interaction action 
$I^\star{}_t$ measures perturbatively the difference between the interacting RG flow and its free analogue. 

\vspace{-.1mm}

\subsection{\textcolor{blue}{\sffamily Outlook: beyond abstract RG theory}}\label{subsec:beyond}

In this paper, we formulate an abstract theory of BV RG flow and RGE 
from a very general perspective carrying out a thorough algebraic and geometric investigation 
of these topics based on the theory of BV algebras and their canonical isomorphisms. 
An approach of this kind has of course strengths and weaknesses.
On one hand, algebra and geometry are no substitute for quantum field theory. On the other,
an algebraic and geometric perspective can disclose novel potentially fruitful 
ways of analyzing quantum field theory. 
We already argued in favour of this point of view in the previous subsection. 
Still, further discussion about its eventual usefulness is required. 
The main challenge ahead now is testing the theory in interesting models \pagebreak
in field theory and explore possible ramifications in algebra and geometry. 

The weak coupling limit of the RG flow of 
two--dimensional non linear sigma models, originally studied by Friedan in 
\ccite{Friedan:1980th,Friedan:1980jf,Friedan:1980jm},
and the Ricci flow, introduced by Hamilton in  \ccite{Hamilton:1982rc}, 
are now known to be intimately related \ccite{Carfora:2010iz}. The former is one of the 
earliest instances where the interplay of quantum and geometric features 
show up distinctly in field theory \ccite{Grady:2017ns}.  
The latter is central in two important developments of modern geometric analysis:
Perelman's proof of Thurston's geometrization program 
\ccite{Thurston:1982zz,Thurston:1997bk} for three--manifolds and the related Poincar\'e conjecture 
\ccite{Perelman:2006un,Perelman:2006up,Perelman:2003uq}. 
The  RGE of selected field theories 
may therefore be of considerable mathematical interest. 

The Alexandrov--Kontsevich--Schwartz--Zaboronsky (AKSZ) formulation of BV theory
\ccite{Alexandrov:1995kv} is a general framework for the construction of the BV MA
of a broad class of sigma models. It has a wide range of 
field theoretic applications on one hand and important mathematical implications on the other. 
(See ref. \ccite{Ikeda:2012pv} for a recent review of this subject). In AKSZ theory, as a rule,
the BV ME is obeyed by the MA of a sigma model only when its target manifold has special 
geometric properties. Conversely, AKSZ theory can be used in principle 
to construct for any geometry a canonical sigma model with that target geometry and the
corresponding BV MA action. A suitably designed BV RGE for the 
model may conceivably result in a flow equation useful for the analysis of relevant 
geometrical and topological issues of the target manifold analogously to Ricci flow. 
This is a possibility worthy to be explored.

\vfil\eject

\section{\textcolor{blue}{\sffamily Batalin--Vilkovisky algebras and flows}}\label{sec:bvalg}

In this section, we develop the theory of BV algebras, canonical maps, flows and flow stabilizers.  
A part of the material is just a review of known results, the rest is original to the best of our knowledge.  
Our treatment is essentially mathematical, the material presented below providing mainly the formal foundations 
of the BV theory of the RG presented in sects. \cref{sec:bvrenorm} and \cref{sec:models}. 

The notion of BV flow is a formalization of that of BV RG flow and so plays a basic role in the BV RG theory of 
sect. \cref{sec:bvrenorm}. The notion of BV stabilizers emerges naturally enough within BV flow theory, but so 
far we have found no application of it in BV RG theory. We elaborate on it anyway for its potential 
in the analysis of the symmetries of the RGE. 

We assume that the reader is acquainted with the basics of abstract BV theory and graded geometry. For background, 
we refer him/her again to \ccite{Mnev:2017oko} and also to ref. \ccite{Cattaneo:2010re}.
Unless otherwise stated, we shall deal mostly with real unital associative graded commutative algebras
or graded commutative algebras for short. Commutators are always assumed to be graded.


\subsection{\textcolor{blue}{\sffamily BV algebras}}\label{subsec:bvalg}

In this subsection, we review the basic notions and the main 
properties of BV algebras. This material is standard and is presented only to set our notation 
and for later reference. 

We introduce BV structures through BV Laplacians.

\begin{defi} \label{def:bvalg}
A BV algebra is a graded commutative algebra $X$ equipped with a BV Laplacian $\varDelta_X$, 
that is a linear endomorphism $\varDelta_X:X\rightarrow X$ enjoying the following properties:
\begin{align}
&|\varDelta_Xf|=|f|+1,
\vphantom{\Big]}
\label{bvalg1}
\\
&\varDelta_X(fgh)=-\varDelta_Xfgh-(-1)^{|f|}f\varDelta_Xgh-(-1)^{|f|+|g|}fg\varDelta_Xh
\vphantom{\Big]}
\label{bvalg2}
\\
&\hspace{3cm}+\varDelta_X(fg)h+(-1)^{(|f|+1)|g|}g\varDelta_X(fh)+(-1)^{|f|}f\varDelta_X(gh)
\vphantom{\Big]}
\nonumber
\end{align}
for $f,g,h\in X$ and \hphantom{xxxxxxxxxxxxxxxxxxxx}
\begin{align}
&\varDelta_X{}^2=0,
\vphantom{\Big]}
\label{bvalg3}
\\
&\varDelta_X1_X=0.
\vphantom{\Big]}
\label{bvalg4}
\end{align}
\end{defi}

\noindent
Stated in words, a BV algebra is graded commutative algebra endowed with a 
degree $1$, second order differential operator $\varDelta_X$ squaring to $0$
and annihilating the unit $1_X$ of $X$. 

\noindent
By \ceqref{bvalg3}, if $X$ is BV algebra, $(X,\varDelta_X)$ is a cochain complex. To this there 
is attached a cohomology $H^*(X,\varDelta_X)$, the BV algebra cohomology.

From now on, we consider a fixed BV algebra $X$ with Laplacian $\varDelta_X$.

\begin{defi}  \label{def:bvbrac}
The BV bracket of $X$ is the bilinear map $(-,-)_X:X\times X\rightarrow X$ defined by \hphantom{xxxxxxxxxx}
\begin{equation}
(f,g)_X=(-1)^{|f|}\big(\varDelta_X(fg)-\varDelta_Xfg-(-1)^{|f|}f\varDelta_Xg\big) \vphantom{\bigg]}
\label{bvalg5}
\end{equation}
for $f,g\in X$. 
\end{defi}

\noindent The BV bracket of $X$ is therefore completely determined by 
the BV Laplacian of $X$ as the failure of the latter to be a degree $1$ derivation of $X$. 

The following well--known proposition is the result of straightforward calculations based 
on relation \eqref{bvalg5}. 

\begin{prop} \label{prop:bvbracgerst}
The bracket $(-,-)_X$ of the BV algebra $X$ obeys the following relations, 
\begin{align}
&|(f,g)_X|=|f|+|g|+1,
\vphantom{\Big]}
\label{bvalg6}
\\
&(f,g)_X+(-1)^{(|f|+1)(|g|+1)}(g,f)_X=0,
\vphantom{\Big]}
\label{bvalg7}
\\
&(-1)^{(|h|+1)(|f|+1)}(f,(g,h)_X)_X
+(-1)^{(|f|+1)(|g|+1)}(g,(h,f)_X)_X
\vphantom{\Big]}
\label{bvalg8}
\\
&\hspace{5cm}+(-1)^{(|g|+1)(|h|+1)}(h,(f,g)_X)_X=0,
\vphantom{\Big]}
\nonumber
\\
&(f,gh)_X-(f,g)_Xh-(-1)^{(|f|+1)|g|}g(f,h)_X=0, 
\vphantom{\Big]}
\label{bvalg9}
\\
&(f,1_X)_X=0 
\vphantom{\Big]}
\label{bvalg10}
\end{align}
\vspace{-.8cm}\eject\noindent
for $f,g,h\in X$. 
\end{prop}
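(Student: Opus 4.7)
The plan is to derive each identity directly from the defining formula \ceqref{bvalg5} and the BV axioms \ceqref{bvalg1}--\ceqref{bvalg4}, treating them in increasing order of difficulty. The relations \ceqref{bvalg6}, \ceqref{bvalg7} and \ceqref{bvalg10} are essentially immediate: \ceqref{bvalg6} reads off from \ceqref{bvalg1} since each summand in \ceqref{bvalg5} has degree $|f|+|g|+1$; the antisymmetry \ceqref{bvalg7} follows by substituting $fg=(-1)^{|f||g|}gf$ into the $\varDelta_X(fg)$ term and reshuffling the Koszul signs; and the normalization \ceqref{bvalg10} is obtained by setting $g=1_X$, whereupon the first two terms of \ceqref{bvalg5} cancel and the third vanishes by \ceqref{bvalg4}.

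For the derivation rule \ceqref{bvalg9}, I would expand $(f,gh)_X$ via \ceqref{bvalg5} and then substitute the second order identity \ceqref{bvalg2} for the $\varDelta_X(fgh)$ that appears. The six resulting correction terms regroup into two further instances of \ceqref{bvalg5}: one producing $(f,g)_X h$ and the other $(-1)^{(|f|+1)|g|}g(f,h)_X$, after graded commuting $g$ past $\varDelta_X f$ and past $f$ with the appropriate Koszul signs. All non--bracket summands cancel pairwise. This step uses \ceqref{bvalg2} essentially but not the nilpotency \ceqref{bvalg3}; indeed, given \ceqref{bvalg5}, the relations \ceqref{bvalg2} and \ceqref{bvalg9} are equivalent.

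The main obstacle is the graded Jacobi identity \ceqref{bvalg8}, which is the one place where the nilpotency \ceqref{bvalg3} is indispensable. I would phrase the argument at the operator level. Let $L_f$ denote graded left multiplication by $f$, set $B_f:=(f,-)_X$, and write $[-,-]$ for the graded operator commutator. Rearranging \ceqref{bvalg5} yields the key operator identity
\begin{equation*}
[\varDelta_X,L_f]=L_{\varDelta_X f}+(-1)^{|f|}B_f,
\end{equation*}
while the Leibniz rule \ceqref{bvalg9} just proved translates into $[B_f,L_g]=L_{(f,g)_X}$. Taking the graded commutator of the first identity with $\varDelta_X$ and invoking $[\varDelta_X,\varDelta_X]=2\varDelta_X{}^2=0$ from \ceqref{bvalg3} together with the graded Jacobi identity for operator commutators produces the intermediate relation $[\varDelta_X,B_f]=B_{\varDelta_X f}$, which expresses that $\varDelta_X$ is a derivation of the bracket. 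Expanding $[B_f,B_g]$ through the first identity and simplifying via the other two, the pure multiplication operators in the result cancel pairwise and one is left with $[B_f,B_g]=B_{(f,g)_X}$. Evaluated on a third element $h$ and combined with the antisymmetry \ceqref{bvalg7}, this is precisely the cyclic sum in \ceqref{bvalg8}. The only real work is careful tracking of the Koszul signs, the bracket being effectively a degree one operation; no input beyond the axioms \ceqref{bvalg1}--\ceqref{bvalg4} is required.
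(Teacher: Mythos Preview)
Your proposal is correct and is precisely the kind of argument the paper has in mind: the paper does not write out a proof, stating only that the proposition ``is the result of straightforward calculations based on relation \ceqref{bvalg5}''. Your organization---reading off \ceqref{bvalg6}, \ceqref{bvalg7}, \ceqref{bvalg10} directly, deriving \ceqref{bvalg9} from \ceqref{bvalg2}, and obtaining \ceqref{bvalg8} via the operator identities $[\varDelta_X,L_f]=L_{\varDelta_X f}+(-1)^{|f|}B_f$, $[B_f,L_g]=L_{(f,g)_X}$, $[\varDelta_X,B_f]=B_{\varDelta_X f}$ and hence $[B_f,B_g]=B_{(f,g)_X}$---is the standard and cleanest way to carry out that calculation, and correctly isolates \ceqref{bvalg3} as the sole input needed for the Jacobi identity.
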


\noindent
Hence, $(-,-)_X$ has degree $1$, 
is shifted graded antisymmetric and obeys the shifted graded Jacobi identity. 
Further, $\ad_Xf:=(f,-)_X$ is a degree $|f|+1$ derivation of $X$. 
This makes any BV algebra a Gerstenhaber algebra, an odd version of a Poisson algebra. 
Many notions and constructions of Poisson algebra theory can in this way be extended to BV algebras. 

The BV Laplacian $\varDelta_X$ differentiates the BV brackets.

\begin{prop} One has 
\begin{equation}
\varDelta_X(f,g)_X=(\varDelta_Xf,g)_X+(-1)^{|f|+1}(f,\varDelta_Xg)_X 
\label{bvalg11}
\end{equation}
for $f,g\in X$. 
\end{prop}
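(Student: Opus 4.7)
The plan is to derive the identity directly from the defining equation \ceqref{bvalg5} of the bracket, exploiting only the nilpotency \ceqref{bvalg3} of $\varDelta_X$. The key observation is that, since $(-,-)_X$ is defined as the obstruction of $\varDelta_X$ to being a derivation, the relation \ceqref{bvalg11} asserting that $\varDelta_X$ is a derivation of $(-,-)_X$ itself should follow from a single application of $\varDelta_X$ together with $\varDelta_X{}^2=0$.

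Concretely, I would apply $\varDelta_X$ to both sides of \ceqref{bvalg5}. The term $\varDelta_X{}^2(fg)$ drops out by \ceqref{bvalg3}, leaving
\begin{equation*}
\varDelta_X(f,g)_X=(-1)^{|f|+1}\varDelta_X(\varDelta_Xf\,g)-\varDelta_X(f\,\varDelta_Xg).
\end{equation*}
Next, I would rewrite each of the two summands on the right hand side by applying \ceqref{bvalg5} once more, this time to the pairs $(\varDelta_Xf,g)$ and $(f,\varDelta_Xg)$, recalling that $|\varDelta_Xf|=|f|+1$. Using $\varDelta_X{}^2f=\varDelta_X{}^2g=0$, one obtains expressions of the form
\begin{equation*}
\varDelta_X(\varDelta_Xf\,g)=(-1)^{|f|+1}(\varDelta_Xf,g)_X+(-1)^{|f|+1}\varDelta_Xf\,\varDelta_Xg,
\end{equation*}
\begin{equation*}
\varDelta_X(f\,\varDelta_Xg)=(-1)^{|f|}(f,\varDelta_Xg)_X+\varDelta_Xf\,\varDelta_Xg.
\end{equation*}

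Substituting these back, the two "cross" contributions proportional to $\varDelta_Xf\,\varDelta_Xg$ cancel exactly, and what remains is precisely \ceqref{bvalg11}. The only real work is sign bookkeeping: one must consistently use $|\varDelta_Xf|=|f|+1$ from \ceqref{bvalg1} when invoking \ceqref{bvalg5} for the pair $(\varDelta_Xf,g)$. There is no genuine obstacle—the proof is a direct manipulation—so I do not foresee any conceptual difficulty; the Leibniz--type identity \ceqref{bvalg2} is not even needed, as the argument uses only the defining relation of the bracket and the nilpotency of the Laplacian.
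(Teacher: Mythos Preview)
Your argument is correct and is precisely the standard derivation of \ceqref{bvalg11}: apply $\varDelta_X$ to \ceqref{bvalg5}, drop $\varDelta_X{}^2(fg)$ by nilpotency, expand the remaining terms via \ceqref{bvalg5} with the shifted degrees, and watch the $\varDelta_Xf\,\varDelta_Xg$ cross terms cancel. The paper does not supply a proof of this proposition, treating it as a well--known identity of BV algebras; your computation is exactly the canonical one, and your remark that \ceqref{bvalg2} plays no role is also correct.
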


The BV algebra $X$ is characterized by its BV center $C_X$, the subalgebra formed by all elements 
$c\in X$ such that $(f,c)_X=0$ for all $f\in X$. We note that by \ceqref{bvalg10} $\mathbb{R}1_X\subset C_X$. 

\begin{defi}\label{def:bvnonsing}
The BV algebra $X$ is said non singular  if $C_X=\mathbb{R}1_X$. In that case, the underlying BV Laplacian 
$\varDelta_X$ is also called non singular. 
\end{defi}

In this paper, we shall consider exclusively non singular BV algebras. 


\subsection{\textcolor{blue}{\sffamily BV canonical maps}}\label{subsec:bvcanon}

In this subsection, we introduce the notion of BV canonical map in
a way that is original to the best of our knowledge.

Let $X$, $Y$ be non singular BV algebras (cf. defs. \cref{def:bvalg}, \cref{def:bvnonsing}). 

\begin{defi} \label{def:canmap}
A canonical map from $X$ to $Y$ is an invertible graded commutative algebra morphism $\alpha:X\rightarrow Y$
with the following property. There exists an element $r_\alpha\in Y$ such that \hphantom{xxxxxxx}
\begin{align}
&|r_\alpha|=0,
\vphantom{\Big]}
\label{bvcanon1}
\\
&\varDelta_Y\alpha-\alpha\varDelta_X+\ad_Yr_\alpha\alpha=0.
\vphantom{\Big]}
\label{bvcanon2}
\end{align}
\vspace{-1cm}\eject\noindent
$r_\alpha$ is called the logarithmic Jacobian of $\alpha$.
\end{defi}

\noindent The name given to $r_\alpha$ derives from fact that in the functional 
formulations of BV theory $r_\alpha$ is half the logarithm of the Jacobian Berezinian of $\alpha$ 
\ccite{Zucchini:2017irg}. We note that in the present formulation, $r_\alpha$ is defined 
only modulo $C_Y=\mathbb{R}1_Y$. 

\begin{prop} \label{prop:qmera}
The Jacobian $r_\alpha$ of a canonical map $\alpha:X\rightarrow Y$ satisfies 
\begin{equation}
\varDelta_Yr_\alpha+\frac{1}{2}(r_\alpha,r_\alpha)_Y=0.
\label{bvcanon3}
\end{equation}
\end{prop}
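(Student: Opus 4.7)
The plan is to exploit the defining relation \ceqref{bvcanon2} to show that a conjugated Laplacian is still nilpotent, and then to read off the master equation for $r_\alpha$ from this nilpotency. Rewrite \ceqref{bvcanon2} as
\begin{equation*}
\tilde\varDelta_Y\circ\alpha=\alpha\circ\varDelta_X,\qquad\text{where }\tilde\varDelta_Y:=\varDelta_Y+\ad_Y r_\alpha.
\end{equation*}
Since $\alpha$ is an algebra isomorphism and $\varDelta_X{}^2=0$, conjugation gives $\tilde\varDelta_Y{}^2=\alpha\,\varDelta_X{}^2\alpha^{-1}=0$. The main task is to compute $\tilde\varDelta_Y{}^2$ intrinsically on $Y$ and match it with the claimed expression.

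Expanding the square, one has $\tilde\varDelta_Y{}^2=\varDelta_Y{}^2+(\varDelta_Y\,\ad_Y r_\alpha+\ad_Y r_\alpha\,\varDelta_Y)+(\ad_Y r_\alpha)^2$. The first summand vanishes by \ceqref{bvalg3}. For the anticommutator, apply it to $f\in Y$ and use the derivation property \ceqref{bvalg11} of $\varDelta_Y$ on the bracket; since $|r_\alpha|=0$, the terms containing $(r_\alpha,\varDelta_Y f)_Y$ cancel and one is left with $(\varDelta_Y r_\alpha,f)_Y=\ad_Y(\varDelta_Y r_\alpha)\,f$. For the square $(\ad_Y r_\alpha)^2 f=(r_\alpha,(r_\alpha,f)_Y)_Y$, instantiate the shifted graded Jacobi identity \ceqref{bvalg8} with the two copies of $r_\alpha$ (degree $0$) and $f$: using antisymmetry \ceqref{bvalg7} to collect the two $(r_\alpha,(r_\alpha,f)_Y)_Y$ terms, one obtains the standard identity $(r_\alpha,(r_\alpha,f)_Y)_Y=\tfrac{1}{2}((r_\alpha,r_\alpha)_Y,f)_Y$. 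Altogether,
\begin{equation*}
\tilde\varDelta_Y{}^2=\ad_Y\!\Big(\varDelta_Y r_\alpha+\tfrac{1}{2}(r_\alpha,r_\alpha)_Y\Big).
\end{equation*}

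The vanishing of the left hand side then says that the element $q:=\varDelta_Y r_\alpha+\tfrac{1}{2}(r_\alpha,r_\alpha)_Y$ lies in the BV center $C_Y$. Here is where non singularity of $Y$ (def. \cref{def:bvnonsing}) plus a degree count clinches the result: $C_Y=\mathbb{R}\,1_Y$ is concentrated in degree $0$, whereas $|q|=|r_\alpha|+1=1$ by \ceqref{bvalg1} and \ceqref{bvalg6}. Hence $q=0$, which is \ceqref{bvcanon3}.

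I expect the only delicate point to be the Jacobi step giving $(r_\alpha,(r_\alpha,f)_Y)_Y=\tfrac{1}{2}((r_\alpha,r_\alpha)_Y,f)_Y$: one must be careful with the signs in \ceqref{bvalg8} and use \ceqref{bvalg7} to fold the two like terms into a factor of two, but since all instances of $r_\alpha$ carry degree zero, the signs collapse cleanly and no genuine obstacle arises. The rest is a bookkeeping exercise in the Gerstenhaber identities of prop. \cref{prop:bvbracgerst} together with the degree/centrality argument at the end.
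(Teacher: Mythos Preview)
Your proof is correct and follows essentially the same approach as the paper. Both arguments transport the nilpotency of $\varDelta_X$ through $\alpha$ via \ceqref{bvcanon2}, expand the resulting square using \ceqref{bvalg3}, \ceqref{bvalg11} and the Jacobi identity \ceqref{bvalg8} to obtain $\ad_Y\big(\varDelta_Y r_\alpha+\tfrac{1}{2}(r_\alpha,r_\alpha)_Y\big)=0$, and finish with the non singularity plus degree argument; the only difference is that you compute $\tilde\varDelta_Y{}^2$ at the operator level whereas the paper applies the same identities elementwise to $\alpha f$.
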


\noindent
Eq. \ceqref{bvcanon3} 
has the form of a quantum ME. 

\begin{proof} 
From \ceqref{bvcanon1} and \ceqref{bvcanon2}, using \ceqref{bvalg3} and \ceqref{bvalg11}, we find 
\begin{align}
0&=\alpha\varDelta_X\varDelta_Xf
\vphantom{\Big]}
\label{}
\\
&=\varDelta_Y\alpha\varDelta_Xf+(r_\alpha,\alpha\varDelta_Xf)_Y
\vphantom{\Big]}
\nonumber
\\
&=\varDelta_Y(\varDelta_Y\alpha f+(r_\alpha,\alpha f)_Y)+(r_\alpha,\varDelta_Y\alpha f+(r_\alpha,\alpha f)_Y)_Y
\vphantom{\Big]}
\nonumber
\\
&=(\varDelta_Yr_\alpha+(r_\alpha,r_\alpha)_Y/2,\alpha f)_Y,
\vphantom{\Big]}
\nonumber
\end{align}
where $f\in X$. So, $\varDelta_Yr_\alpha+(r_\alpha,r_\alpha)_Y/2\in C_Y=\mathbb{R}1_Y$. 
As $|\varDelta_Yr_\alpha+(r_\alpha,r_\alpha)_Y/2|=1$, \ceqref{bvcanon3} holds. 
\end{proof}

Canonical maps preserve BV brackets. 

\begin{prop} \label{prop:canmapbvbrac}
Let $\alpha:X\rightarrow Y$ be a canonical map. Then, for $f,g\in X$, one has 
\begin{equation}
\alpha(f,g)_X=(\alpha f, \alpha g)_Y. \vphantom{\bigg]}
\label{bvcanon4}
\end{equation}
\end{prop}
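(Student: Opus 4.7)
The plan is to unfold the definition of the BV bracket on both sides and reduce the identity to a verification that the contributions involving the Jacobian $r_\alpha$ cancel on their own. Concretely, I would start from the defining formula \ceqref{bvalg5} for $(f,g)_X$ and apply $\alpha$ to it. Since $\alpha$ is a graded commutative algebra morphism (in particular degree preserving, so that $|\alpha f|=|f|$), it distributes across products:
\begin{equation*}
\alpha(f,g)_X = (-1)^{|f|}\bigl(\alpha\varDelta_X(fg) - \alpha\varDelta_X f\cdot \alpha g - (-1)^{|f|}\alpha f\cdot \alpha\varDelta_X g\bigr).
\end{equation*}

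Next, I would use the canonical map identity \ceqref{bvcanon2}, in the form $\alpha\varDelta_X h = \varDelta_Y\alpha h + (r_\alpha,\alpha h)_Y$, to eliminate every occurrence of $\alpha\varDelta_X$ from the right hand side. The expression splits naturally into two groups. The group containing only $\varDelta_Y$-terms assembles, by the very definition \ceqref{bvalg5} of $(-,-)_Y$ applied to $\alpha f,\alpha g$, into the desired $(\alpha f,\alpha g)_Y$ (recall $|\alpha f|=|f|$, so all signs match). The remaining group collects the $r_\alpha$-contributions,
\begin{equation*}
(-1)^{|f|}\bigl[(r_\alpha,\alpha f\,\alpha g)_Y - (r_\alpha,\alpha f)_Y\,\alpha g - (-1)^{|f|}\alpha f\,(r_\alpha,\alpha g)_Y\bigr],
\end{equation*}
which must be shown to vanish for the proof to go through.

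This last vanishing is an immediate consequence of the Leibniz property \ceqref{bvalg9} of the BV bracket applied with first slot $r_\alpha$: since $|r_\alpha|=0$ by \ceqref{bvcanon1}, the sign $(-1)^{(|r_\alpha|+1)|\alpha f|}$ collapses to $(-1)^{|f|}$, exactly what is needed for cancellation. Hence the $r_\alpha$-group is identically zero and only the $(\alpha f,\alpha g)_Y$ contribution survives, establishing \ceqref{bvcanon4}.

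The argument is essentially a single computation and I do not anticipate any genuine obstacle; the only point requiring some care is tracking the Koszul signs and verifying that the degree of $r_\alpha$ makes the Leibniz identity align precisely with the sign produced by the definition of $(-,-)_Y$. It is worth noting that, unlike the proof of Proposition \cref{prop:qmera}, non-singularity of $Y$ is not invoked here: the identity holds for any canonical map as defined in \cref{def:canmap}, independently of whether $C_Y=\mathbb{R}1_Y$.
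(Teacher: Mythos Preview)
Your proposal is correct and follows essentially the same route as the paper's own proof: apply $\alpha$ to the defining expression \ceqref{bvalg5}, use \ceqref{bvcanon2} to convert each $\alpha\varDelta_X$ into $\varDelta_Y\alpha+(r_\alpha,\alpha\,\cdot\,)_Y$, and observe that the $\varDelta_Y$-terms assemble into $(\alpha f,\alpha g)_Y$ while the $r_\alpha$-terms cancel by the Leibniz rule \ceqref{bvalg9} combined with $|r_\alpha|=0$. Your remark that non-singularity is not needed here is accurate and a nice observation not made explicit in the paper.
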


\begin{proof} By virtue of \ceqref{bvalg5}, using \ceqref{bvcanon1}, \ceqref{bvcanon2}, for $f,g\in X$
\begin{align}
\alpha(f,g)_X
&=(-1)^{|f|}\big(\alpha\varDelta_X(fg)-\alpha\varDelta_Xf\alpha g-(-1)^{|f|}\alpha f\alpha\varDelta_Xg\big)
\vphantom{\Big]}
\label{}
\\
&=(-1)^{|f|}\big(\varDelta_Y(\alpha f\alpha g)+(r_\alpha, \alpha f\alpha g)_Y
\vphantom{\Big]}
\nonumber
\\
&\hspace{1.5cm}-(\varDelta_Y\alpha f+(r_\alpha,\alpha f)_Y)\alpha g 
-(-1)^{|f|}\alpha f(\varDelta_Y\alpha g+(r_\alpha,\alpha g)_Y\big)
\vphantom{\Big]}
\nonumber
\\
&=(\alpha f, \alpha g)_Y,
\vphantom{\Big]}
\nonumber
\end{align}
as claimed. 
\end{proof}

\noindent We have not been able to show that every BV bracket preserving graded algebra isomorphism
is canonical in the sense of def. \cref{def:canmap}. Since in the analysis of the BV formulation 
of the RG the logarithmic Jacobian plays a basic role, however, the definition 
of canonical map given in  \cref{def:canmap} is definitely the most natural. 

If $X,Y,Z$ are BV algebras
and $\alpha:X\rightarrow Y$, $\beta:Y\rightarrow Z$ are canonical maps, $\beta\alpha: X\rightarrow Z$ also is, the Jacobians
of $\alpha$, $\beta$ and $\beta\alpha$ being related as 
\begin{equation}
r_{\beta\alpha}=r_\beta+\beta r_\alpha \qquad \text{mod $\mathbb{R}1_Z$}. 
\label{bvcanon5}
\end{equation}
If $X$ is a BV algebra, the identity $\id_X:X\rightarrow X$ of $X$ is canonical with Jacobian
\begin{equation}
r_{\id_X}=0 \qquad \text{mod $\mathbb{R}1_X$}. 
\label{bvcanon6}
\end{equation}
If $X,Y$ are BV algebras and $\alpha:X\rightarrow Y$ is a canonical map, so is $\alpha^{-1}:Y\rightarrow X$, 
the Jacobians of $\alpha$, $\alpha^{-1}$ being related as 
\begin{equation}
r_{\alpha^{-1}}=-\alpha^{-1}r_\alpha \qquad \text{mod $\mathbb{R}1_X$}. 
\label{bvcanon7}
\end{equation}
By \ceqref{bvcanon5}--\ceqref{bvcanon7}, BV algebras and BV algebra canonical maps form a groupoid 
$\mathsans{BVAlg}$. $\mathsans{BVAlg}$ is a proper subcategory  of the category $\mathsans{grcAlg}$ of 
graded commutative algebras and algebra morphisms. There is thus 
a canonical action of $\mathsans{BVAlg}$ on the linear fibered 
set $\mathcal{X}_{\mathsans{BVAlg}}=\coprod_{X\in\Obj_{\mathsans{BVAlg}}}X/\mathbb{R}1_X$.
\ceqref{bvcanon5} entails that the logarithmic Jacobian map 
$r:\Hom_{\mathsans{BVAlg}}\rightarrow \mathcal{X}_{\mathsans{BVAlg}}$ 
is a $1$--cocycle of $\mathsans{BVAlg}$ valued in $\mathcal{X}_{\mathsans{BVAlg}}$. 

Let $X$ be a non singular BV algebra. 

\begin{defi} \label{def:infbvcan}
An infinitesimal canonical map of $X$ is a degree $0$ derivation $\xi$ 
of the graded commutative algebra $X$ with the following property. There exists an element $e_\xi\in X$ 
such that \hphantom{xxxxxxx}
\begin{align}
&|e_\xi|=0,
\vphantom{\Big]}
\label{bvcanon8}
\\
&\varDelta_X\xi-\xi\varDelta_X+\ad_X e_\xi=0.
\vphantom{\Big]}
\label{bvcanon9}
\end{align}
$e_\xi$ is called the logarithmic Jacobian of $\xi$.
\end{defi}

\noindent
In formal terms, an infinitesimal canonical map $\xi$ describes a finite canonical map 
$\id_X+\xi$ infinitely close to the identity $\id_X$, \ceqref{bvcanon9} being the linearization
of \ceqref{bvcanon2}.
Again, $e_\xi$ is defined only modulo $\mathbb{R}1_X$. 

\begin{prop} \label{prop:qmeex}
The Jacobian $e_\xi$ of an infinitesimal canonical map $\xi$ of $X$ obeys 
\begin{equation}
\varDelta_Xe_\xi=0.
\label{bvcanon10}
\end{equation}
\end{prop}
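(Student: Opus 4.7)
The plan is to imitate closely the proof of Proposition \cref{prop:qmera}, exploiting that $\varDelta_X{}^2=0$ implies $\xi\varDelta_X{}^2f=0$ for every $f\in X$. Using the commutation relation \ceqref{bvcanon9} in the rewritten form $\xi\varDelta_X=\varDelta_X\xi+\ad_Xe_\xi$, I would expand $\xi\varDelta_X{}^2f=\xi\varDelta_X(\varDelta_Xf)$ by first pushing $\xi$ past the outer $\varDelta_X$, then past the inner one. This yields, schematically,
\begin{equation*}
0=\xi\varDelta_X{}^2f=\varDelta_X{}^2\xi f+\varDelta_X(e_\xi,f)_X+(e_\xi,\varDelta_Xf)_X.
\end{equation*}
The first summand vanishes by \ceqref{bvalg3}, leaving an identity purely in terms of $e_\xi$ and $f$.

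Next, I would apply the Leibniz-type rule \ceqref{bvalg11} to expand $\varDelta_X(e_\xi,f)_X$. Because $|e_\xi|=0$, the sign $(-1)^{|e_\xi|+1}=-1$ is such that the resulting term $-(e_\xi,\varDelta_Xf)_X$ cancels precisely against $(e_\xi,\varDelta_Xf)_X$ above. What survives is
\begin{equation*}
(\varDelta_Xe_\xi,f)_X=0\qquad\text{for all }f\in X.
\end{equation*}
This identifies $\varDelta_Xe_\xi$ as an element of the BV center $C_X$.

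Finally, I would invoke the non singularity of $X$ (def. \cref{def:bvnonsing}), which gives $C_X=\mathbb{R}1_X$. A degree count finishes the argument: by \ceqref{bvalg1}, $|\varDelta_Xe_\xi|=|e_\xi|+1=1$, whereas $\mathbb{R}1_X$ is concentrated in degree $0$. Hence $\varDelta_Xe_\xi=0$, as claimed.

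I do not anticipate a genuine obstacle: the calculation is the infinitesimal analogue of that in Proposition \cref{prop:qmera}, and the only slightly delicate point is tracking the sign in \ceqref{bvalg11} to ensure that the quadratic bracket term $(e_\xi,e_\xi)_X$ present in the finite case disappears at linearized order, consistent with the expected linearization of the quantum ME \ceqref{bvcanon3} under $r_\alpha\rightsquigarrow\epsilon e_\xi$.
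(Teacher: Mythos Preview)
Your proposal is correct and follows essentially the same approach as the paper: start from $0=\xi\varDelta_X{}^2f$, commute $\xi$ past the two copies of $\varDelta_X$ via \ceqref{bvcanon9}, use \ceqref{bvalg11} to cancel the $(e_\xi,\varDelta_Xf)_X$ terms, and conclude from non singularity and the degree count that $\varDelta_Xe_\xi=0$.
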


\begin{proof} From \ceqref{bvcanon8} and \ceqref{bvcanon9}, using \ceqref{bvalg3} and \ceqref{bvalg11}, we find 
\begin{align}
0&=\xi\varDelta_X\varDelta_Xf
\vphantom{\Big]}
\label{}
\\
&=\varDelta_X\xi\varDelta_Xf+(e_\xi,\varDelta_X f)_X
\vphantom{\Big]}
\nonumber
\\
&=\varDelta_X(\varDelta_X\xi f+(e_\xi,f)_X)+(e_\xi,\varDelta_X f)_X
\vphantom{\Big]}
\nonumber
\\
&=(\varDelta_Xe_\xi,f)_X,
\vphantom{\Big]}
\nonumber
\end{align}
where $f\in X$. Hence, $\varDelta_Xe_\xi\in C_X=\mathbb{R}1_X$. As  $|\varDelta_Xe_\xi|=1$
\ceqref{bvcanon10} holds.
\end{proof}

\noindent
As expected, \ceqref{bvcanon10}  is the linearized version of \ceqref{bvcanon3}. 

Infinitesimal canonical maps preserve BV brackets in the appropriate sense. 

\begin{prop} \label{prop:infcanmapbvbrac}
Let $\xi$ be an infinitesimal canonical map of $X$. For any $f,g\in X$, 
\begin{equation}
\xi(f,g)_X-(\xi f,g)_X-(f,\xi g)_X=0. 
\label{bvcanon11}
\end{equation}
\end{prop}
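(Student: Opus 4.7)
The plan is to mirror the proof of Proposition \cref{prop:canmapbvbrac} that canonical maps preserve brackets, this time linearized around the identity. The underlying principle is the same: express $(f,g)_X$ via the definition \ceqref{bvalg5} purely in terms of $\varDelta_X$, apply $\xi$, and use the infinitesimal intertwining relation \ceqref{bvcanon9} to push $\xi$ past $\varDelta_X$ at the price of a bracket with $e_\xi$.

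Concretely, I would first rewrite \ceqref{bvcanon9} as $\xi \varDelta_X = \varDelta_X \xi + \ad_X e_\xi$ and apply it to each $\varDelta_X$ occurring after $\xi$ is brought inside \ceqref{bvalg5}. Since $\xi$ is a degree $0$ derivation, the Leibniz distribution across products $fg$, $\varDelta_X f\cdot g$, and $f\cdot \varDelta_X g$ carries no sign. A direct computation then gives
\begin{align*}
\xi(f,g)_X &= (-1)^{|f|}\bigl[\varDelta_X(\xi f\cdot g + f\cdot \xi g) - \varDelta_X\xi f\cdot g - (-1)^{|f|}\xi f\cdot \varDelta_X g \\
&\hspace{2cm} - \varDelta_X f\cdot \xi g - (-1)^{|f|}f\cdot \varDelta_X\xi g\bigr] \\
&\quad + (-1)^{|f|}\bigl[(e_\xi,fg)_X - (e_\xi,f)_X g - (-1)^{|f|} f(e_\xi,g)_X\bigr].
\end{align*}
The first bracketed expression is exactly $(\xi f,g)_X + (f,\xi g)_X$ once \ceqref{bvalg5} is applied to each of these two brackets (noting $|\xi f|=|f|$). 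The second bracketed expression vanishes by the graded Leibniz rule \ceqref{bvalg9} for $(-,-)_X$, since $|e_\xi|=0$ gives $(e_\xi,fg)_X = (e_\xi,f)_X g + (-1)^{|f|} f(e_\xi,g)_X$. Subtracting yields \ceqref{bvcanon11}.

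The only real subtlety is the careful bookkeeping of graded signs, but since both $\xi$ and $e_\xi$ carry degree $0$, most of the signs are trivial and the calculation is essentially the derivative of the proof of Proposition \cref{prop:canmapbvbrac}. As an alternative more conceptual route, one could instead argue indirectly by substituting the formal infinitesimal canonical map $\alpha = \id_X + \varepsilon\, \xi$ (with $r_\alpha = \varepsilon\, e_\xi$) into \ceqref{bvcanon4} and reading off the coefficient of $\varepsilon$; this is essentially the same calculation organised differently. I expect no real obstacle beyond sign tracking.
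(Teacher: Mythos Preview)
Your proof is correct and follows essentially the same approach as the paper's: apply $\xi$ to the defining expression \ceqref{bvalg5}, use the derivation property of $\xi$ and the intertwining relation \ceqref{bvcanon9} to produce the two brackets $(\xi f,g)_X+(f,\xi g)_X$ plus the $e_\xi$--terms, and then observe that the latter cancel by the Leibniz rule \ceqref{bvalg9}. The paper's computation is slightly more compressed but otherwise identical; your alternative route via $\alpha=\id_X+\varepsilon\,\xi$ is also valid and indeed is the linearization of the proof of prop.~\cref{prop:canmapbvbrac}.
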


\begin{proof} By virtue of \ceqref{bvalg5}, using \ceqref{bvcanon8}, \ceqref{bvcanon9}, for $f,g\in X$
\begin{align}
\xi(f,g)_X
&=(-1)^{|f|}\big(\xi\varDelta_X(fg)-\xi\varDelta_Xfg 
\vphantom{\Big]}
\label{}
\\
&\hspace{2cm}-\varDelta_Xf\xi g
-(-1)^{|f|}\xi f\varDelta_Xg-(-1)^{|f|}f\xi\varDelta_Xg\big)
\vphantom{\Big]}
\nonumber
\\
&=(-1)^{|f|}\big(\varDelta_X(\xi fg+f\xi g)+(e_\xi,fg)_X
\vphantom{\Big]}
\nonumber
\\
&\hspace{2cm}-(\varDelta_X\xi f+(e_\xi,f)_X)g-\varDelta_Xf\xi g
\vphantom{\Big]}
\nonumber
\\
&\hspace{2cm}-(-1)^{|f|}\xi f\varDelta_Xg-(-1)^{|f|}f(\varDelta_X\xi g+(e_\xi,g)_X)\big)
\vphantom{\Big]}
\nonumber
\\
&=(\xi f,g)_X+(f,\xi g)_X,
\vphantom{\Big]}
\nonumber
\end{align}
\vspace{-.8cm}\eject\noindent
as claimed. 
\end{proof}

\noindent
Again, as expected, \ceqref{bvcanon11} is the linearized version of \ceqref{bvcanon4}. 

The infinitesimal canonical maps of $X$ form a Lie algebra $\mathfrak{can}_X$, which is a proper Lie
subalgebra of the Lie algebra $\mathfrak{der}_{X0}$ of degree $0$ derivations of the graded commutative algebra 
$X$. Indeed, as is easily verified, if $\xi,\eta$ are 
infinitesimal canonical maps of $X$, so is $[\xi,\eta]$, its Jacobian being 
\begin{equation}
e_{[\xi,\eta]}=\xi e_\eta-\eta e_\xi \qquad\text{mod $\mathbb{R}1_X$}. 
\label{bvcanon12}
\end{equation}

There is an important subset of infinitesimal canonical maps, those of the form
$\ad_X x$ with $x\in X$, $|x|=-1$. 
Indeed, it is straightforward to check from \ceqref{bvalg11} that $\ad_X x$  satisfies the conditions
\ceqref{bvcanon8}, \ceqref{bvcanon9} with Jacobian 
\begin{equation}
e_{\ad_X x}=-\varDelta_X x \qquad\text{mod $\mathbb{R}1_X$}. 
\label{bvcanon14}
\end{equation}
The infinitesimal canonical maps of the form $\ad_X x$ form a Lie subalgebra 
$\mathfrak{can}^0{}_X$ of the infinitesimal canonical map Lie algebra $\mathfrak{can}_X$.


\subsection{\textcolor{blue}{\sffamily Families of  BV structures}}\label{subsec:bvfam}

In this subsection, we introduce and study families of BV structures. 

The theory developed below involves differentiation of functions defined on a parameter graded 
manifold and valued in a graded commutative algebra. In order not to commit ourselves with any definite notion 
of differentiability and remain as general as possible, we proceed in a completely formal way.

Let $M$ be a graded manifold and $A$ a graded commutative algebra.

\begin{defi} \label{def:formdif}
A formal differentiation structure of $A$ over $M$ consists of the following data.
\begin{enumerate}

\item A graded Lie algebra $\Vect(M)$ of vector fields over $M$.

\item 
A graded $\Vect(M)$--module $\iMap(M,\iHom_{\mathsans{grVct}}(A^{\otimes k},A))$ of 
maps of $M$ into $\iHom_{\mathsans{grVct}}(A^{\otimes k},A)$ containing 
all constant maps for each integer $k\geq 0$. 

\end{enumerate}
It is further required that the operation of morphism nested composition 
\begin{align}
\iHom_{\mathsans{grVct}}(A^{\otimes k},A)\times \ppp_{r=1}^k
\iHom_{\mathsans{grVct}}(A^{\otimes k_r},A)\rightarrow \iHom_{\mathsans{grVct}}(A^{\otimes \Sigma_{r=1}^kk_r},A)
\vphantom{\bigg]}
\label{nbvfam1}
\end{align}
induces an operation of morphism valued map nested composition 
\begin{align}
&\iMap(M,\iHom_{\mathsans{grVct}}(A^{\otimes k},A))\times \ppp_{r=1}^k
\iMap(M,\iHom_{\mathsans{grVct}}(A^{\otimes k_r},A))
\label{nbvfam2}
\\
&\hspace{7cm}\rightarrow \iMap(M,\iHom_{\mathsans{grVct}}(A^{\otimes \Sigma_{r=1}^kk_r},A))
\nonumber
\end{align}
and that the Lie module action of $\Vect(M)$ is graded Leibniz with respect to it. Finally, it is requested that
$\Vect(M)$ acts trivially on constant maps.
\end{defi} 

\noindent
Above, $\mathsans{grVct}$ denotes the category of graded vector spaces and 
$\iHom_{\mathsans{grVct}}(A^{\otimes k},A)$ the graded vector space
internal morphism space of $A^{\otimes k}$ into $A$. By a 
map of $M$ into $\iHom_{\mathsans{grVct}}(A^{\otimes k},A)$ we mean one of the local form
$\varPhi(m)=\sss_R m_s{}^R\varPhi_R(m_b)$, where $m_b$ and $m_s$ are the body and soul local coordinates 
of $M$, $R$ is a soul multiindex, $m_s{}^R$ is the corresponding soul coordinate multipower and the $\varPhi_R$
are maps  valued in $\iHom_{\mathsans{grVct}}(A^{\otimes k},A)$ non zero only for finitely many values of $R$. 

For every $k$, $\iHom_{\mathsans{grVct}}(A^{\otimes k},A)$ is contained in $\iMap(M,\iHom_{\mathsans{grVct}}(A^{\otimes k},A))$ 
as its subspace of constant maps. In particular, $k$--fold multiplication of $A$ is in 
$\iMap(M,\iHom_{\mathsans{grVct}}(A^{\otimes k},A))$. The morphism nested composition \ceqref{nbvfam1} 
is the mapping associating the composite $T\circ_{\mathsans{grVct}}(T_1,\dots,T_k)$ with $(T,T_1,\dots,T_k)$.

For $k=0$, 
one has $\iHom_{\mathsans{grVct}}(A^{\otimes k},A)=A$ and $\iMap(M,\iHom_{\mathsans{grVct}}(A^{\otimes k},A))=\iMap(M,A)$. 
Further, by the nested composition operation \ceqref{nbvfam2} with $k_r=0$, 
pointwise multiplication renders $\iMap(M,A)$ a graded commutative algebra of maps of $M$ into $A$. 

For any vector field $X\in\Vect(M)$ and  map $\varPhi\in\iMap(M,\iHom_{\mathsans{grVct}}(A^{\otimes k},A))$,
a map $X\varPhi\in\iMap(M,\iHom_{\mathsans{grVct}}(A^{\otimes k},A))$ is defined, that is linear in $X$ and $\varPhi$, 
has the property that $X\varPhi=0$ when $\varPhi$ is constant 
and satisfies $[X,X']\varPhi=X(X'\varPhi)-(-1)^{|X||X'|}X'(X\varPhi)$. The graded Leibniz requirement can be stated as 
\begin{equation}
X(\varPhi u)=(X\varPhi) u+(-1)^{|X||\varPhi|}\varPhi(Xu) \vphantom{\bigg]}
\label{nbvfam3}
\end{equation}
for $\varPhi\in\iMap(M,\iEnd_{\mathsans{grVct}}(A))$ and $u\in\iMap(M,A)$ for $k=1$, \pagebreak 
\begin{equation}
X(\varPhi(u,v))=(X\varPhi)(u,v)+(-1)^{|X||\varPhi|}\varPhi(Xu,v)+(-1)^{|X|(|\varPhi|+|u|)}\varPhi(u,Xv)
\label{nbvfam4}
\end{equation}
for $\varPhi\in\iMap(M,\iHom_{\mathsans{grVct}}(A^{\otimes 2},A))$ and $u,v\in\iMap(M,A)$ for $k=2$, etc. 
When $\varPhi$ is the $2$--fold multiplication map, we recover the standard Leibniz relation
\begin{equation}
X(uv)=Xuv+(-1)^{|X||u|}uXv. \vphantom{\bigg]}
\label{nbvfam5}
\end{equation}  

It is useful to adjoin to the formal differentiation structure  the 
vector spaces $\iMap(M^p,\iHom_{\mathsans{grVct}}(A^{\otimes k},A))$ of all 
maps $\varPhi$ of $M^p$ into $\iHom_{\mathsans{grVct}}(A^{\otimes k},A)$ with the property
that all the maps obtained by fixing  $p-1$ of the their $p$ arguments belong to 
$\iMap(M,\iHom_{\mathsans{grVct}}(A^{\otimes k},A))$. Notice that when $p=1$ we recover the 
space $\iMap(M,\iHom_{\mathsans{grVct}}(A^{\otimes k},A))$ as defined previously. 

Although the above formulation of differentiation structure of $A$ over $M$ 
is rigorous and economical in terms of number of assumptions made,
it is somewhat notationally and terminologically unwieldy. In what follows, we think of a map 
$\varPhi\in\iMap(M,\iHom_{\mathsans{grVct}}(A^{\otimes k},A))$ as a parametrized collection of morphisms
$\varPhi_m:A^{\otimes k}\rightarrow A$, $m\in M$, and call it a family of morphism of 
$A^{\otimes k}$ into $A$ over $M$. Similarly, we think of a map $\varPhi\in\iMap(M^p,\iHom_{\mathsans{grVct}}(A^{\otimes k},A))$ 
as a parametrized collection of morphisms $\varPhi_{m_1,\ldots,m_p}:A^{\otimes k}\rightarrow A$, $m_1,\ldots,m_p\in M$, 
and call it a family of morphism of $A^{\otimes k}$ into $A$ over $M^p$. We shall also often omit 
to indicate the range of variation of $m$, respectively $m_1,\ldots,m_p$, when no confusion can arise. 

%
%
%
%
%
%

Below,  we consider a graded commutative algebra $X$,
a parameter graded manifold $T$ and a formal differentiation structure 
of $X$ over $T$ and families parametrized by $t\in T$. 
In the analysis of the RGE in BV theory, $X$ will be 
the BV field space and $T$ will be the energy scale parameter manifold. Albeit
there is no a priori
restriction to the generality of $T$, it may help the reader to keep in mind the 
main examples treated in this paper, viz $T=\mathbb{R}$ and $T=T[1]\mathbb{R}$. 

\begin{defi} \label{def:tfamD}
A non singular BV Laplacian family over $T$ is a family $\varDelta_t:X\rightarrow$ $X$
of vector space endomorphisms of $X$ over $T$ such that, for each $t$, 
$\varDelta_t$ is a non singular BV Laplacian  (cf. defs. \cref{def:bvalg} and \cref{def:bvnonsing}). 
\end{defi}

\noindent
For fixed $t$, we shall denote by $X_t$ the BV algebra resulting from endowing the graded algebra $X$ 
with the BV Laplacian $\varDelta_t$. 

Let $\varDelta_t$ be a non singular BV Laplacian family over $T$. 

\begin{defi} \label{def:tfambvbrac}
The BV bracket family over $T$ corresponding to $\varDelta_t$ is the family 
$(-,-)_t:X\times X\rightarrow X$ of bilinear brackets over $T$ defined according 
to \ceqref{bvalg5}.
\end{defi}

\noindent 
Explicitly, one has 
\begin{equation}
(f,g)_t=(-1)^{|f|}\big(\varDelta_t(fg)-\varDelta_tfg-(-1)^{|f|}f\varDelta_tg\big),
\label{bvfam5}
\end{equation}
where $f,g\in X$. $(-,-)_t$ is the BV bracket of the BV algebra $X_t$.

It is possible to use a vector field $D$ of $T$ to probe the BV Laplacian family 
$\varDelta_t$. In BV RG theory, this generates the differential operators 
appearing in the relevant RGEs. In the main examples considered in this paper, 
$T=\mathbb{R}$ and $T[1]\mathbb{R}$, coordinatized with the standard coordinates $t$ and 
$t,\theta$, we have $D=d/dt$ and $D=\partial/\partial t,\partial/\partial\theta$, 
respectively. Again, other cases may be envisaged. 

Let $D\in\Vect(T)$ be a fixed vector field of $T$. 

\begin{defi} \label{def:DdertfamD}
The $D$--derived BV Laplacian family over $T$ of $\varDelta_t$ is the family 
$\varDelta^D{}_t:X\rightarrow X$ of vector space endomorphisms of $X$ over $T$ defined by 
\begin{equation}
\varDelta^D{}_t=D_t\varDelta_t.  
\label{bvfam0}
\end{equation}
\end{defi}


\noindent 
The basic  relations \ceqref{bvalg1} --\ceqref{bvalg4} satisfied by $\varDelta_t$
imply the following. 

\begin{prop} \label{prop:dbvdeltabasic}
For each $t$, $\varDelta^D{}_t$ obeys 
\begin{align}
&|\varDelta^D{}_tf|=|f|+|D|+1,
\vphantom{\Big]}
\label{bvfam1}
\\
&\varDelta^D{}_t(fgh)=-\varDelta^D{}_tfgh-(-1)^{(|D|+1)|f|}f\varDelta^D{}_tgf
\vphantom{\Big]}
\label{bvfam2}
\\
&\hspace{2.4cm}-(-1)^{(|D|+1)(|f|+|g|)}fg\varDelta^D{}_th
+\varDelta^D{}_t(fg)h
\vphantom{\Big]}
\nonumber
\\
&\hspace{2.4cm}+(-1)^{(|f|+|D|+1)|g|}g\varDelta^D{}_t(fh)+(-1)^{(|D|+1)|f|}f\varDelta^D{}_t(gh),
\vphantom{\Big]}
\nonumber
\end{align}
where $f,g,h\in X$ and \hphantom{xxxxxxxxxxxxxxx} 
\begin{align}
&[\varDelta_t,\varDelta^D{}_t]=0,
\vphantom{\Big]}
\label{bvfam3}
\\
&\varDelta^D{}_t1_X=0.
\vphantom{\Big]}
\label{bvfam4}
\end{align}
\end{prop}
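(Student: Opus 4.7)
\medskip

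\noindent\textbf{Proof proposal.} The strategy is to derive each of the four relations by applying the derivation $D_t$ to the corresponding basic relation of $\varDelta_t$ (viz. \ceqref{bvalg1}--\ceqref{bvalg4}) which holds, by hypothesis, identically in $t$. All four conclusions will be consequences of the graded Leibniz property of $D_t$ for the families nested composition \ceqref{nbvfam2} and of the fact that $D_t$ annihilates constant families.

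The grading statement \ceqref{bvfam1} is immediate: since $\varDelta_t$ is a family in $\iMap(T,\iEnd_{\mathsans{grVct}}(X))$ whose pointwise value has degree $1$ and $D$ has degree $|D|$, the derived family $\varDelta^D{}_t=D_t\varDelta_t$ has degree $|D|+1$, hence $|\varDelta^D{}_tf|=|f|+|D|+1$ for $f\in X$. The identity \ceqref{bvfam4} follows by applying $D_t$ to $\varDelta_t 1_X=0$: by the Leibniz rule \ceqref{nbvfam3} with $\varPhi=\varDelta_t$ and $u=1_X$, and since the constant family $1_X$ satisfies $D_t 1_X=0$, we get $0=D_t(\varDelta_t 1_X)=(D_t\varDelta_t)1_X+(-1)^{|D|}\varDelta_t(D_t1_X)=\varDelta^D{}_t 1_X$.

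For the commutation relation \ceqref{bvfam3}, I would differentiate the nilpotency $\varDelta_t\circ\varDelta_t=0$ using the Leibniz rule for the nested composition \ceqref{nbvfam2} applied to $\iHom_{\mathsans{grVct}}(X,X)\circ\iHom_{\mathsans{grVct}}(X,X)\to\iHom_{\mathsans{grVct}}(X,X)$. One obtains
\begin{equation}
0=D_t(\varDelta_t\circ\varDelta_t)=\varDelta^D{}_t\varDelta_t+(-1)^{|D|}\varDelta_t\varDelta^D{}_t. \notag
\end{equation}
Multiplying by $(-1)^{|D|}$ and recognizing the graded commutator of the degree $1$ map $\varDelta_t$ and the degree $|D|+1$ map $\varDelta^D{}_t$, which reads $[\varDelta_t,\varDelta^D{}_t]=\varDelta_t\varDelta^D{}_t+(-1)^{|D|}\varDelta^D{}_t\varDelta_t$, yields \ceqref{bvfam3}.

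The main work lies in the second order relation \ceqref{bvfam2}. The idea is to view \ceqref{bvalg2} as an identity of $t$--independent trilinear morphisms built from $\varDelta_t$ and the constant multiplication family, and then differentiate it with $D_t$ using the trilinear analogue of the Leibniz rule \ceqref{nbvfam4}. Since $D_t$ annihilates the multiplication and acts only on the single factor of $\varDelta_t$ present in each summand, each term of \ceqref{bvalg2} is replaced by the corresponding term with $\varDelta_t$ promoted to $\varDelta^D{}_t$, picking up an extra Koszul sign accounting for the passage of $D$ through the arguments standing to the left of $\varDelta_t$ in that term. The only real obstacle is the sign bookkeeping: one must combine the original signs $(-1)^{|f|}$, $(-1)^{|f|+|g|}$, $(-1)^{(|f|+1)|g|}$ of \ceqref{bvalg2} with the additional factors $(-1)^{|D||f|}$, $(-1)^{|D|(|f|+|g|)}$ etc. produced by the graded Leibniz action, and check that they collapse to the signs $(-1)^{(|D|+1)|f|}$, $(-1)^{(|D|+1)(|f|+|g|)}$, $(-1)^{(|f|+|D|+1)|g|}$ displayed in \ceqref{bvfam2}. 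This is routine but must be done with care.
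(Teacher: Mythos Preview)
Your proposal is correct and follows exactly the route the paper indicates: the paper does not spell out a proof but simply remarks that ``the basic relations \ceqref{bvalg1}--\ceqref{bvalg4} satisfied by $\varDelta_t$ imply'' the proposition, which is precisely your strategy of differentiating each of those identities with $D_t$ and invoking the graded Leibniz rule together with the annihilation of constant families. Your sign checks for \ceqref{bvfam2} and the commutator identification for \ceqref{bvfam3} are accurate.
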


\noindent
So, for fixed $t$, $\varDelta^D_t$ is a degree $|D|+1$
second order differential operator commuting with $\varDelta_t$ and annihilating $1_X$. 

We can use the vector field $D$ to probe in similar way also the BV bracket family  
$(-,-)_t$.

\begin{defi} \label{def:Ddertfambvbrac}
The  $D$--derived BV bracket family over $T$ associated with  $(-,-)_t$ is the family
$(-,-)^D{}_t:X\times X\rightarrow X$ of bilinear brackets over $T$ defined by 
\begin{equation}
(f,g)^D{}_t=(-1)^{|D||f|}D_t(f,g)_t 
\label{bvfam6}
\end{equation}
for $f,g\in X$ 
\end{defi}


\noindent
$(-,-)^D{}_t$ obeys certain relations following from 
the basic relations \ceqref{bvalg6}-- \ceqref{bvalg10} satisfied by $(-,-)_t$.

\begin{prop} \label{prop:bvbracdprop}
For each  $t$, one has 
\begin{align}
&|(f,g)^D{}_t|=|f|+|g|+|D|+1,
\vphantom{\Big]}
\label{bvfam8}
\\
&(f,g)^D{}_t+(-1)^{(|f|+|D|+1)(|g|+|D|+1)+|D|}(g,f)^D{}_t=0,
\vphantom{\Big]}
\label{bvfam9}
\\
&(-1)^{(|h|+|D|+1)(|f|+1)}\big((f,(g,h)_t)^D{}_t+(-1)^{|D|(|g|+1)}(f,(g,h)^D{}_t)_t\big)
\vphantom{\Big]}
\label{bvfam10}
\\
&+(-1)^{(|f|+|D|+1)(|g|+1)}\big((g,(h,f)_t)^D{}_t+(-1)^{|D|(|h|+1)}(g,(h,f)^D{}_t)_t\big)
\vphantom{\Big]}
\nonumber
\\
&+(-1)^{(|g|+|D|+1)(|h|+1)}\big((h,(f,g)_t)^D{}_t+(-1)^{|D|(|f|+1)}(h,(f,g)^D{}_t)_t\big)=0,
\vphantom{\Big]}
\nonumber
\\
&(f,gh)^D{}_t-(f,g)^D{}_th-(-1)^{(|f|+|D|+1)|g|}g(f,h)^D{}_t=0, 
\vphantom{\Big]}
\label{bvfam11}
\\
&(f,1_X)^D{}_t=0,
\vphantom{\Big]}
\label{bvfam12}
\end{align} 
where $f,g,h\in X$,
\end{prop}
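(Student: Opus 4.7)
The plan is to derive each of the relations \ceqref{bvfam8}--\ceqref{bvfam12} by applying the vector field $D_t$ to the corresponding pointwise identity of Proposition \cref{prop:bvbracgerst} and then converting every occurrence of $D_t(-,-)_t$ back to the primed bracket via
\begin{equation*}
D_t(f,g)_t=(-1)^{|D||f|}(f,g)^D{}_t.
\end{equation*}
Throughout, the elements $f,g,h$ and $1_X$ belong to $X$ and are therefore constant maps in $\iMap(T,X)$, so $D_tf=D_tg=D_th=D_t1_X=0$; moreover, the multiplication of $X$ is $t$--independent, so only the bracket family $(-,-)_t$ carries $t$--dependence on which $D$ acts non trivially.

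Relation \ceqref{bvfam8} is immediate from a degree count using the definition \ceqref{bvfam6} and \ceqref{bvalg6}. For the shifted antisymmetry \ceqref{bvfam9}, we differentiate \ceqref{bvalg7} and rewrite the two summands in primed form; using $|D|^2\equiv|D|\pmod 2$, the exponent $(|f|+1)(|g|+1)$ of the original sign upgrades to $(|f|+|D|+1)(|g|+|D|+1)+|D|$ after an overall sign adjustment. The Leibniz--type relation \ceqref{bvfam11} is obtained by applying $D_t$ to \ceqref{bvalg9}: since $g,h$ are $t$--constants, $D_t$ acts only on the brackets, and the single prefactor $(-1)^{|D||f|}$ converts the three terms simultaneously. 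Finally, \ceqref{bvfam12} follows at once from \ceqref{bvalg10} because $1_X$ is $t$--independent.

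The core technical step is the Jacobi--like identity \ceqref{bvfam10}. We apply $D_t$ to each summand $(f,(g,h)_t)_t$ of \ceqref{bvalg8}, viewing the outer bracket as a family $B\in\iMap(T,\iHom_{\mathsans{grVct}}(X^{\otimes 2},X))$ of degree $1$ and invoking the nested--composition Leibniz rule of Definition \cref{def:formdif} with the constant input $f$ and the $t$--dependent input $(g,h)_t$. This produces a two--term expansion in which $D_t$ hits either the outer or the inner bracket, the latter contribution carrying a sign $(-1)^{|D|(1+|f|)}$ from Leibniz. Rewriting both summands via the definition of $(-,-)^D{}_t$, multiplying the resulting differentiated cyclic sum by the appropriate overall sign, and repeating for the two other cyclic partners produces \ceqref{bvfam10}.

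We expect the sign bookkeeping in \ceqref{bvfam10} to be the main obstacle. The asymmetric placement of $D_t$ between the outer and the inner bracket, combined with the $(-1)^{|D||f|}$ prefactor hidden in the definition of the primed bracket, must be reconciled with the exponents $(|h|+|D|+1)(|f|+1)$ and $|D|(|g|+1)$ appearing in the statement through the congruences $|D|^2\equiv|D|$ and $(|h|+|D|+1)(|f|+1)\equiv(|h|+1)(|f|+1)+|D|(|f|+1)\pmod 2$. All other steps reduce to a direct application of the chain rule afforded by the formal differentiation structure.
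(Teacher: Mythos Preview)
Your approach is correct and is precisely the argument the paper has in mind: the text preceding the proposition states only that the relations ``follow from the basic relations \ceqref{bvalg6}--\ceqref{bvalg10} satisfied by $(-,-)_t$'' and gives no further details, so your derivation by applying $D_t$ to each identity of Proposition~\cref{prop:bvbracgerst} and invoking the nested--composition Leibniz rule of Definition~\cref{def:formdif} fills in exactly what is left implicit. Your sign analysis for \ceqref{bvfam10} is also on target: after differentiating the Jacobi identity and converting via $D_t(u,v)_t=(-1)^{|D||u|}(u,v)^D{}_t$, a single overall factor $(-1)^{|D|}$ brings all three cyclic blocks into the stated form.
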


\noindent
Hence, $(-,-)^D{}_t$ has degree $|D|+1$. Furthermore,
$(-,-)^D{}_t$ is shifted graded antisymmetric (graded symmetric) when $D$ is even (odd)
and obeys an appropriate generalization of the (shifted) graded Jacobi identity. Finally,
$(f,-)^D{}_t$ is a degree $|f|+|D|+1$ derivation of $X$. 

The results illustrated above find immediate application in the theory of BV flows 
of next subsection. 


\subsection{\textcolor{blue}{\sffamily BV flows}}\label{subsec:bvflow}

In this subsection, we introduce BV flows and study their properties in depth. 
The notion of BV flow is a formalization of that of RG 
flow in BV theory studied in the next section. For this reason it is of great salience for us. 

As in the previous subsection, we consider a graded commutative algebra $X$ and  
a parameter graded manifold $T$ and select a formal differentiation structure of $X$ over $T$
(cf. def. \cref{def:formdif}). 
Further, we assume that a non singular BV Laplacian family 
$\varDelta_t$ over $T$ and its appended  BV bracket family $(-,-)_t$ are given 
(cf. defs. \cref{def:tfamD}, \cref{def:tfambvbrac}) making $X$ a BV algebra $X_t$ 
for each $t$. 

A BV flow relates 
the BV structures corresponding to different points of the 
parameter manifold $T$ by means of canonical maps. Again, for the sake of concreteness, 
it may be helpful to keep in mind our main examples: $T=\mathbb{R}$ and $T=T[1]\mathbb{R}$.

\begin{defi} \label{def:bvflow}
A BV flow along $T$ (relative to the BV Laplacian family $\varDelta_t$) is a 
family $\chi_{t,s}:X\rightarrow X$ of algebra automorphisms of $X$ over $T^2$ obeying  
\begin{align}
&\chi_{u,s}=\chi_{u,t}\chi_{t,s},
\vphantom{\Big]}
\label{bvflow1}
\\
&\chi_{s,t}=\chi_{t,s}{}^{-1},
\vphantom{\Big]}
\label{bvflow2}
\\
&\chi_{s,s}=\id_X
\vphantom{\Big]}
\label{bvflow3}
\end{align}
and such that $\chi_{t,s}:X_s\rightarrow X_t$ is a canonical map 
(cf. def. \cref{def:canmap}). It is further required that 
the $\mathbb{R}1_X$ indeterminacy of the logarithmic Jacobians $r_{\chi t,s}$ of the $\chi_{t,s}$ 
can be fixed for each $s,t\in T$ in such a way that the resulting collection $r_{\chi t,s}$ 
is a family of degree $0$ elements of $X$ over $T^2$ satisfying 
\begin{align}
&r_{\chi u,s}=r_{\chi u,t}+\chi_{u,t}r_{\chi t,s},
\vphantom{\Big]}
\label{bvflow1/1}
\\
&r_{\chi s,t}=-\chi_{s,t}r_{\chi t,s},
\vphantom{\Big]}
\label{bvflow2/1}
\\
&r_{\chi s,s}=0.
\vphantom{\Big]}
\label{bvflow5}
\end{align}
\end{defi}

\noindent 
\ceqref{bvflow2}, \ceqref{bvflow3} are implied by \ceqref{bvflow1}
and are shown here for their usefulness. Similarly, \ceqref{bvflow2/1}, \ceqref{bvflow5}
follow from \ceqref{bvflow1/1}. \pagebreak  \ceqref{bvflow1}--\ceqref{bvflow3} and 
\ceqref{bvflow1/1}--\ceqref{bvflow5} are compatible with relations \ceqref{bvcanon5}--\ceqref{bvcanon7},
as is easily verified.

$r_{\chi t,s}$ is the logarithmic Jacobian family of the BV flow $\chi_{t,s}$. 
$r_{\chi t,s}$ is determined  only up to addition an $\mathbb{R}1_X$ valued family over $T^2$
obeying  conditions \ceqref{bvflow1/1}--\ceqref{bvflow5}. 

\begin{defi} \label{def:cenfam}
A degree $p$ central logarithmic family over $T^2$ is a family of degree $p$ elements of $X$ over $T^2$
of the form $\rho_{t,s}1_X$, where $\rho$ is a degree $p$ 
function on $T^2$ satisfying the relation \hphantom{xxxxxxxxxxxxxx}
\begin{equation}
\rho_{u,s}=\rho_{u,t}+\rho_{t,s}.
\label{bvflow/00}
\end{equation}
\end{defi}

\noindent
We denote by $\zz_p(T^2,X)\subset \iMap(T^2,X)$ the vector space of all such families. The logarithmic Jacobian family 
$r_{\chi t,s}$ is then defined mod $\zz_0(T^2,X)$. 
In what follows, we assume that a choice of $r_{\chi t,s}$ has been made once and for all. 

$\chi_{t,s}$ and $r_{\chi t,s}$ fit by \ceqref{bvcanon2} into the equation 
\begin{equation}
\varDelta_t\chi_{t,s}-\chi_{t,s}\varDelta_s+\ad_tr_{\chi t,s}\,\chi_{t,s}=0. 
\label{bvflow4}
\end{equation}
Further, by \ceqref{bvcanon3},  $r_{\chi t,s}$ satisfies  the equation 
\begin{equation}
\varDelta_tr_{\chi t,s}+\frac{1}{2}(r_{\chi t,s},r_{\chi t,s})_t=0. 
\label{bvflow6}
\end{equation}

A BV flow $\chi$ along $T$ can be regarded as a groupoid $\mathsans{G}_\chi$ such that $\Obj_{\mathsans{G}_\chi}=T$
and $\Hom_{\mathsans{G}_\chi}(s,t)=\{\chi_{t,s}\}$ for any pair $s,t\in \Obj_{\mathsans{G}_\chi}$. 
$\mathsans{G}_\chi$ is trivial in the sense that 
\begin{equation}
\chi_{t,s}=\chi_{t,o}\chi_{s,o}{}^{-1}, 
\label{zbvflow1}
\end{equation}
where $o\in T$ is a fixed chosen origin of $T$. 

The groupoid $\mathsans{G}_\chi$ acts on the linear fibered space 
$\mathcal{X}_{\mathsans{G}_\chi}=\coprod_{t\in \Obj_{\mathsans{G}_\chi}}X_t$
in the ob\-vious manner. By  \ceqref{bvflow1/1}, the logarithmic Jacobian 
$r_\chi:\Hom_{\mathsans{G}_\chi}\rightarrow \mathcal{X}_{\mathsans{G}_\chi}$
of $\chi$ defines a $1$--cocycle of $\mathsans{G}_\chi$ valued in $\mathcal{X}_{\mathsans{G}_\chi}$.
This cocycle is trivial, since by \ceqref{bvflow1/1} and \ceqref{zbvflow1} $r_\chi$ can be expressed as 
\begin{equation}
r_{\chi t,s}=r_{\chi t,o}-\chi_{t,s}r_{\chi s,o}. 
\label{zbvflow2}
\end{equation}

\vspace{-.3cm}
\pagebreak 

As for families of BV structures,  it is possible to probe a BV flow $\chi_{t,s}$  over $T$
using a vector field $D$ of $T$. 

\begin{defi} \label{def:chidrd}
The $D$--infinitesimal generator family over $T$  of the BV flow $\chi_{t,s}$ is the 
family $\chi^D{}_t:X\rightarrow X$ of vector space endomorphisms over $T$ 
\begin{equation}
\chi^D{}_t=D_t\chi_{t,s}\big|_{s=t}. 
\label{bvflow7}
\end{equation}
The  $D$--infinitesimal logarithmic Jacobian family over $T$ of the BV flow is the family 
of elements $r^D{}_{\chi t}\in X$ over $T$ defined by 
\begin{equation}
r^D{}_{\chi t}=D_tr_{\chi t,s}\big|_{s=t}. 
\label{bvflow8}
\end{equation}
\end{defi}

\begin{prop}
$\chi^D{}_t$ is degree $|D|$ derivation of $X$. 
\end{prop}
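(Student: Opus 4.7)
The plan is to derive the Leibniz rule for $\chi^D{}_t$ by differentiating the algebra morphism identity satisfied by $\chi_{t,s}$ and then evaluating on the diagonal $s=t$.

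First, I would handle the degree claim. Since $\chi_{t,s}$ is a graded algebra automorphism of $X$ for each pair $(t,s)$, it is a degree $0$ element of $\iMap(T^2,\iEnd_{\mathsans{grVct}}(X))$. The vector field action preserves the $\iEnd$--degree of a family while shifting the map--degree by $|D|$, so $D_t\chi_{t,s}$ lies in $\iMap(T^2,\iEnd_{\mathsans{grVct}}(X))$ with total degree $|D|$. Restricting to $s=t$ preserves this, so $\chi^D{}_t$ has degree $|D|$ as an endomorphism of $X$.

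For the derivation property, the key input is that $\chi_{t,s}$ is an algebra morphism for every $(t,s)$, i.e.\ $\chi_{t,s}(fg)=\chi_{t,s}(f)\chi_{t,s}(g)$ for all $f,g\in X$. I would apply $D_t$ to both sides, viewing $fg$, $f$, $g$ as constant (i.e.\ $t$--independent) maps in $\iMap(T,X)$. The formal differentiation structure (def.~\cref{def:formdif}) gives, via the Leibniz relation \ceqref{nbvfam3} for the endomorphism valued family $\chi_{t,s}$ acting on the constants $f$, $fg$,
\begin{equation*}
D_t\bigl(\chi_{t,s}(fg)\bigr)=(D_t\chi_{t,s})(fg),
\end{equation*}
while the product rule \ceqref{nbvfam5} applied to the pointwise product of the two $X$--valued families $\chi_{t,s}(f)$ and $\chi_{t,s}(g)$, combined with \ceqref{nbvfam3} on each factor, gives
\begin{equation*}
D_t\bigl(\chi_{t,s}(f)\chi_{t,s}(g)\bigr)
=(D_t\chi_{t,s})(f)\,\chi_{t,s}(g)+(-1)^{|D||f|}\chi_{t,s}(f)\,(D_t\chi_{t,s})(g),
\end{equation*}
where $|\chi_{t,s}(f)|=|f|$ has been used. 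Equating the two expressions and then specializing $s=t$, using the initial condition $\chi_{t,t}=\id_X$ from \ceqref{bvflow3}, yields
\begin{equation*}
\chi^D{}_t(fg)=\chi^D{}_t(f)\,g+(-1)^{|D||f|}f\,\chi^D{}_t(g),
\end{equation*}
which is exactly the graded Leibniz rule for a degree $|D|$ derivation of $X$.

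There is essentially no substantial obstacle: the argument is a straightforward differentiation of a multiplicative identity at the diagonal, with the groupoid condition \ceqref{bvflow3} supplying the crucial evaluation $\chi_{t,t}=\id_X$ and the differentiation structure of def.~\cref{def:formdif} supplying the compatible Leibniz rule. The only mildly delicate point is to treat $\chi_{t,s}$ correctly as an endomorphism valued family (not as an element of $X$), so that the appropriate Leibniz rule \ceqref{nbvfam3} rather than \ceqref{nbvfam5} governs the action of $D_t$ on $\chi_{t,s}(f)$; once this is kept straight, the sign bookkeeping is immediate.
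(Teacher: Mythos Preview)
Your proposal is correct and follows essentially the same approach as the paper: differentiate the algebra morphism identity $\chi_{t,s}(fg)=\chi_{t,s}f\,\chi_{t,s}g$ with $D_t$, use the Leibniz rules of the formal differentiation structure, and evaluate at $s=t$ via $\chi_{t,t}=\id_X$ from \ceqref{bvflow3}. The paper's proof is more terse and does not separately discuss the degree claim or explicitly cite \ceqref{nbvfam3}, \ceqref{nbvfam5}, but the substance is identical.
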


\begin{proof}
Let $f,g\in X$. 
Acting with $D_t$ on the identity $\chi_{t,s}(fg)-\chi_{t,s}f\chi_{t,s}g=0$, 
setting $s=t$ and then using relations \ceqref{bvflow3} and \ceqref{bvflow7}, 
we obtain 
\begin{align}
0&=\big[D_t\chi_{t,s}(fg)-D_t\chi_{t,s}f\chi_{t,s}g-(-1)^{|D||f|}\chi_{t,s}fD_t\chi_{t,s}g\big]\big|_{s=t}
\vphantom{\Big]}
\label{bvflow9}
\\
&=\chi^D{}_t(fg)-\chi^D{}_tfg-(-1)^{|D||f|}f\chi^D{}_tg.
\vphantom{\Big]}
\nonumber
\end{align}
Hence, $\chi^D{}_t$ is degree $|D|$ derivation of $X$. 
\end{proof}

\begin{prop}
The equation
\begin{equation}
\varDelta^D{}_t-[\chi^D{}_t,\varDelta_t]+\ad_tr^D{}_{\chi t}=0 
\label{bvflow10}
\end{equation}
holds, where the $D$--derived BV Laplacian $\varDelta^D{}_t$ is defined by \ceqref{bvfam0}.
\end{prop}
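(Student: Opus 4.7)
The plan is to differentiate the canonical map identity \ceqref{bvflow4} with respect to $t$ using the vector field $D$ and then restrict to the diagonal $s=t$, on which \ceqref{bvflow3} and \ceqref{bvflow5} give $\chi_{t,t}=\id_X$ and $r_{\chi t,t}=0$. Under this restriction, the $D$--derivatives collapse to $\chi^D{}_t$, $r^D{}_{\chi t}$ and $\varDelta^D{}_t$ by the defining relations \ceqref{bvflow7}, \ceqref{bvflow8} and \ceqref{bvfam0}, so that \ceqref{bvflow10} should emerge term by term.

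Concretely, I start from $\varDelta_t\chi_{t,s}-\chi_{t,s}\varDelta_s+\ad_tr_{\chi t,s}\,\chi_{t,s}=0$ and apply $D_t$. The first two summands are handled by the operator Leibniz rule \ceqref{nbvfam3}: acting on $\varDelta_t\chi_{t,s}$ yields $\varDelta^D{}_t\chi_{t,s}$ together with a sign times $\varDelta_t(D_t\chi_{t,s})$, while acting on $\chi_{t,s}\varDelta_s$ yields $(D_t\chi_{t,s})\varDelta_s$ since $\varDelta_s$ is $t$--independent. Evaluating at $s=t$ and invoking $\chi_{t,t}=\id_X$ consolidates these contributions into $\varDelta^D{}_t-[\chi^D{}_t,\varDelta_t]$.

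The third summand is more delicate. I view $\ad_tr_{\chi t,s}\chi_{t,s}(f)=(r_{\chi t,s},\chi_{t,s}f)_t$ and apply the generalized Leibniz rule \ceqref{nbvfam4} to the bracket family $(-,-)_t$, regarded as a degree--one element of $\iMap(T,\iHom_{\mathsans{grVct}}(X^{\otimes 2},X))$. This produces three contributions: one from $D_t$ hitting the bracket family, one from $D_t$ on $r_{\chi t,s}$, and one from $D_t$ on $\chi_{t,s}f$. At $s=t$ the first and third contributions vanish by multilinearity combined with the boundary condition $r_{\chi t,t}=0$, while the middle one survives and reproduces $\ad_tr^D{}_{\chi t}$.

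The main obstacle is sign bookkeeping when $|D|$ is odd: the Leibniz rules generate signs depending on $|D|$ and on the degrees of $\varDelta_t$, the bracket, and $\chi^D{}_t$. These signs must be shown to combine exactly into the graded commutator $[\chi^D{}_t,\varDelta_t]=\chi^D{}_t\varDelta_t-(-1)^{|D|}\varDelta_t\chi^D{}_t$ and into the unadorned $\ad_tr^D{}_{\chi t}$ term of \ceqref{bvflow10}, which is what makes the final identity take the clean form of an infinitesimal canonical map relation analogous to \ceqref{bvcanon9}.
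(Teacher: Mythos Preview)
Your proposal is correct and follows essentially the same approach as the paper: differentiate \ceqref{bvflow4} with $D_t$, use the Leibniz rules \ceqref{nbvfam3} and \ceqref{nbvfam4}, and restrict to $s=t$ using \ceqref{bvflow3}, \ceqref{bvflow5}. The paper carries this out by applying both sides to a fixed $f\in X$, which makes the sign tracking concrete; your version frames the same computation at the operator level, but the substance is identical, and your analysis of which terms survive and how the signs assemble into the graded commutator $[\chi^D{}_t,\varDelta_t]$ is accurate.
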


\noindent 
Eq. \ceqref{bvflow10} is the $D$--evolution equation of the BV flow $\chi_{t,s}$.  

\begin{proof}
Acting with $D_t$ on the identity \ceqref{bvflow4}, setting $s=t$ and then using re\-lations \ceqref{bvflow3},
\ceqref{bvflow5} and \ceqref{bvflow7}, \ceqref{bvflow8}, we obtain 
\begin{align}
0&=\big[D_t\varDelta_t\chi_{t,s}f+(-1)^{|D|}\varDelta_tD_t\chi_{t,s}f-D_t\chi_{t,s}\varDelta_sf
\vphantom{\Big]}
\label{bvflow11}\\
&\hspace{.66cm}+(D_tr_{\chi t,s},\chi_{t,s}f)_t+(-1)^{|D|}(r_{\chi t,s},D_t\chi_{t,s}f)_t
+D_t(r_{\chi u,s},\chi_{u,s}f)_t\big|_{u=t}\big]\big|_{s=t}
\vphantom{\Big]}
\nonumber
\end{align}
\vspace{-1cm}\eject\noindent
\begin{align}
&=\varDelta^D{}_tf+(-1)^{|D|}\varDelta_t\chi^D{}_tf-\chi^D{}_t\varDelta_tf+(r^D{}_{\chi t},f)_t,
\vphantom{\Big]}
\nonumber
\end{align}
where $f\in X$. This shows \ceqref{bvflow10}. 
\end{proof}

\begin{prop}
$r^D{}_{\chi t}$ has degree $D$ and satisfies the equation  
\begin{equation}
\varDelta_tr^D{}_{\chi t}=0. \vphantom{\bigg]}
\label{bvflow12}
\end{equation}
\end{prop}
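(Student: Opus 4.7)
The plan is to reduce the statement to a short computation based on the quantum master equation \ceqref{bvflow6} satisfied by the logarithmic Jacobian family $r_{\chi t,s}$ and the boundary condition $r_{\chi s,s}=0$ from \ceqref{bvflow5}. The degree claim is immediate: since $r_{\chi t,s}$ is a degree $0$ family over $T^2$ and $D\in\Vect(T)$ has degree $|D|$, the family $D_tr_{\chi t,s}$ has degree $|D|$, and setting $s=t$ preserves this, so $|r^D{}_{\chi t}|=|D|$.

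For the main identity $\varDelta_tr^D{}_{\chi t}=0$, I would apply $D_t$ to the identity
\begin{equation*}
\varDelta_tr_{\chi t,s}+\tfrac{1}{2}(r_{\chi t,s},r_{\chi t,s})_t=0
\end{equation*}
guaranteed by prop. \cref{prop:qmera}, and then evaluate at $s=t$. For the first summand, the Leibniz relation \ceqref{nbvfam3} applied to the family $\varDelta_t$ acting on $r_{\chi t,s}$ gives
\begin{equation*}
D_t(\varDelta_tr_{\chi t,s})=\varDelta^D{}_tr_{\chi t,s}+(-1)^{|D|}\varDelta_t(D_tr_{\chi t,s}).
\end{equation*}
At $s=t$, the condition $r_{\chi t,t}=0$ from \ceqref{bvflow5} kills the first term, while the second term becomes $(-1)^{|D|}\varDelta_tr^D{}_{\chi t}$ by the definition \ceqref{bvflow8} of $r^D{}_{\chi t}$.

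For the second summand, I would use the Leibniz relation \ceqref{nbvfam4} for the family $(-,-)_t$ with both arguments equal to $r_{\chi t,s}$. Each of the three terms produced contains at least one undifferentiated factor $r_{\chi t,s}$, which vanishes at $s=t$ by \ceqref{bvflow5}; hence the bracket contribution drops out entirely. Combining, one gets $(-1)^{|D|}\varDelta_tr^D{}_{\chi t}=0$, which is the claim.

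I expect no real obstacle: the key insight is simply that the boundary value $r_{\chi t,t}=0$ causes every term arising from the Leibniz expansion of the bracket to vanish, leaving only the Laplacian term. An alternative derivation would apply $\varDelta_t$ directly to the $D$--evolution equation \ceqref{bvflow10} and simplify using $\varDelta_t{}^2=0$ and $[\varDelta_t,\varDelta^D{}_t]=0$ from \ceqref{bvfam3}, but the route through \ceqref{bvflow6} is more economical because the boundary condition immediately trims the computation. The minor bookkeeping issue will be tracking Koszul signs when invoking \ceqref{nbvfam3} and \ceqref{nbvfam4}, but since the surviving term is a single expression, the precise signs are irrelevant to the final identity $\varDelta_tr^D{}_{\chi t}=0$.
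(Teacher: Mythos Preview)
Your proposal is correct and follows essentially the same approach as the paper: apply $D_t$ to the quantum master equation \ceqref{bvflow6}, expand via the Leibniz rules \ceqref{nbvfam3}, \ceqref{nbvfam4}, set $s=t$, and use $r_{\chi t,t}=0$ from \ceqref{bvflow5} to kill every term except $(-1)^{|D|}\varDelta_tr^D{}_{\chi t}$. The paper's write-up differs only cosmetically in how it bookkeeps the pieces coming from differentiating the family $(-,-)_t$ versus its arguments.
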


\noindent
Eq. \ceqref{bvflow12} is the $D$--Jacobian equation of the BV flow $\chi_{t,s}$.   

\begin{proof} 
From the definition \ceqref{bvflow8}, it is clear that $|r^D{}_{\chi t}|=|D|$.  

Acting with $D_t$ on the identity \ceqref{bvflow6}, setting $s=t$ and using relation \ceqref{bvflow5}
and again \ceqref{bvflow8}, we find 
\begin{align}
0&=\big[D_t\varDelta_tr_{\chi t,s}+(-1)^{|D|}\varDelta_tD_tr_{\chi t,s} \hspace{3.5cm}
\vphantom{\Big]}
\label{bvflow13}
\\
&\hspace{3cm}+(D_tr_{\chi t,s},r_{\chi t,s})_t+\frac{1}{2}D_t(r_{\chi u,s},r_{\chi u,s})_t\big|_{u=t}\big]\big|_{s=t}
\vphantom{\Big]}
\nonumber
\\
&=(-1)^{|D|}\varDelta_tr^D{}_{\chi t}, 
\vphantom{\Big]}
\nonumber
\end{align}
showing \ceqref{bvflow12}. 
\end{proof}

Since $\chi_{t,s}$ is canonical, one has 
\begin{equation}
\chi_{t,s}(f,g)_s=(\chi_{t,s}f,\chi_{t,s}g)_t \vphantom{\bigg]}
\label{bvflow14/1}
\end{equation}
for $f,g\in X$, by \ceqref{bvcanon4}.
This entails the following property.

\begin{prop}
For $f,g\in X$, the relation 
\begin{equation}
\chi^D{}_t(f,g)_t=(\chi^D{}_tf,g)_t+(-1)^{|D|(|f|+1)}(f,\chi^D{}_tg)_t+(-1)^{|D||f|}(f,g)^D{}_t
\label{bvflow15/1}
\end{equation}
holds, where $(f-,-)^D{}_t$ is the $D$--derived BV bracket defined in \ceqref{bvfam6}.
\end{prop}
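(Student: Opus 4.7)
The plan is to differentiate the canonical identity \ceqref{bvflow14/1}, namely $\chi_{t,s}(f,g)_s = (\chi_{t,s}f, \chi_{t,s}g)_t$, with respect to $D_t$ and then set $s=t$. This mirrors the strategy already used in the proofs of \ceqref{bvflow10} and \ceqref{bvflow12}: take a flow identity valid on $T^2$, probe it with the vector field $D$, and collapse the parameters via \ceqref{bvflow3}, \ceqref{bvflow5}.

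On the left hand side, the element $(f,g)_s$ lives in $X$ and carries no $t$--dependence, so the Leibniz rule \ceqref{nbvfam3} applied to the endomorphism family $\chi_{t,s}$ (which has degree $0$) collapses to
\begin{equation*}
D_t\bigl(\chi_{t,s}(f,g)_s\bigr) = (D_t\chi_{t,s})(f,g)_s.
\end{equation*}
Evaluating at $s=t$ and invoking \ceqref{bvflow3} together with the definition \ceqref{bvflow7}, this becomes $\chi^D{}_t(f,g)_t$, which is precisely the left hand side of the claimed relation.

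On the right hand side, one must differentiate the bracket family together with its two arguments. Since the family $(-,-)_t$ is a degree $1$ morphism family, the Leibniz rule \ceqref{nbvfam4} gives three contributions: the first from $D_t$ acting on the bracket, producing $(D_t(-,-)_t)(\chi_{t,s}f,\chi_{t,s}g)$; the second from $D_t$ acting on $\chi_{t,s}f$, producing a sign $(-1)^{|D|}$; and the third from $D_t$ acting on $\chi_{t,s}g$, producing a sign $(-1)^{|D|(1+|f|)}$ at $s=t$. Setting $s=t$, using \ceqref{bvflow3} and \ceqref{bvflow7} to reduce the last two terms to $(\chi^D{}_tf,g)_t$ and $(f,\chi^D{}_tg)_t$, and using the definition \ceqref{bvfam6} to rewrite the first term as $(-1)^{|D||f|}(f,g)^D{}_t$, the identification of all three terms with the right hand side of \ceqref{bvflow15/1} becomes manifest.

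The main obstacle, as in the preceding propositions of this subsection, is the careful bookkeeping of Koszul signs: one must keep track of $|D|$, of the degree $1$ of the bracket morphism, and of the fact that $\chi_{t,s}$ preserves the degree of its argument when deciding the sign arising from moving $D_t$ past $\chi_{t,s}f$ in the third summand. Apart from this, the argument is a direct application of the formal differentiation calculus of subsection \cref{subsec:bvfam}; no cohomological or structural input beyond the bracket--preserving property \ceqref{bvflow14/1} of the flow is needed.
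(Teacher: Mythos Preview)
Your proposal is correct and follows exactly the paper's own argument: differentiate the bracket--preserving identity \ceqref{bvflow14/1} with $D_t$, set $s=t$, and read off the three terms using \ceqref{bvflow3}, \ceqref{bvflow7} and the definition \ceqref{bvfam6} of the derived bracket. The paper carries this out in one displayed line (equation \ceqref{bvflow16/1}), and your verbal description matches it step for step; the only subtlety, as you note, is the Koszul sign bookkeeping, which the paper handles by freezing the arguments via the $u=t$ substitution rather than quoting \ceqref{nbvfam4} literally.
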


\begin{proof}
Acting with $D_t$ on the identity \ceqref{bvflow14/1}, setting $s=t$ and using relations \ceqref{bvflow3},
and \ceqref{bvflow7}, we obtain 
\begin{align}
0&=\big[D_t\chi_{t,s}(f,g)_s-(D_t\chi_{t,s}f,\chi_{t,s}g)_t
\vphantom{\Big]}
\label{bvflow16/1}
\\
&\hspace{2cm}-(-1)^{|D|(|f|+1)}(\chi_{t,s}f,D_t\chi_{t,s}g)_t
-D_t(\chi_{u,s}f,\chi_{u,s}g)_t\big|_{u=t}\big]\big|_{s=t}
\vphantom{\Big]}
\nonumber
\end{align}
\vspace{-1cm}\eject\noindent
\begin{align}
&=\chi^D{}_t(f,g)_t-(\chi^D{}_tf,g)_t-(-1)^{|D|(|f|+1)}(f,\chi^D{}_tg)_t-(-1)^{|D||f|}(f,g)^D{}_t,
\vphantom{\Big]}
\nonumber
\end{align}
showing \ceqref{bvflow15/1}. 
\end{proof}

In the last part of this subsection, we show how a certain set of natural data on a graded algebra 
can be used to assemble a BV Laplacian family and flow.  

Below $X$ is a graded commutative algebra, $T$ is a graded manifold and $X$ is equipped with a
formal differentiation structure over $T$. 

\begin{prop} \label{prop:bvdbuild}
Assume the following data are given.
\begin{enumerate}


\item A fixed origin $o\in T$. 

\item A non singular BV Laplacian $\varDelta_o$ on $X$. 

\item A family $\chi_{t,s}:X\rightarrow X$ of algebra automorphisms of $X$
over $T^2$ satisfying relation \ceqref{bvflow1}.
 
\item A family $r_{\chi t,s}\in X$ of degree $0$ elements of $X$ over $T^2$ satisfying 
\ceqref{bvflow1/1} and \ceqref{bvflow6} with $t=o$. 

\end{enumerate}
Set for $t\in T$, \hphantom{xxxxxxxxxxxxxxxxxxx}
\begin{equation}
\varDelta_{\chi t}=\chi_{t,o}(\varDelta_o+\ad_o r_{\chi o,t})\chi_{t,o}{}^{-1}. \vphantom{\bigg]}
\label{xbvflow13}
\end{equation}
Then, $\varDelta_{\chi t}$ is a non singular BV Laplacian family over $T$, $\chi_{t,s}$ is a BV flow
relative to it and $r_{\chi t,s}$ is the associated logarithmic Jacobian family. 
\end{prop}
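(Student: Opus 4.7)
The plan is to verify the three claims in the conclusion separately: (a) $\varDelta_{\chi t}$ is a non-singular BV Laplacian family, (b) each $\chi_{t,s}$ is canonical from $X_{\chi s}$ to $X_{\chi t}$ with Jacobian $r_{\chi t,s}$, and (c) the remaining flow axioms hold. The natural intermediate object is the auxiliary family $\tilde\varDelta_t := \varDelta_o + \ad_o r_{\chi o,t}$, since $\varDelta_{\chi t} = \chi_{t,o}\tilde\varDelta_t\chi_{t,o}{}^{-1}$ allows structural properties to be transported by the algebra automorphism $\chi_{t,o}$.

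First, I would establish that $\tilde\varDelta_t$ is a non-singular BV Laplacian on $X$ for each $t$. Degree $1$ and annihilation of $1_X$ are immediate from $|r_{\chi o,t}|=0$ and \ceqref{bvalg10}; the second-order identity \ceqref{bvalg2} survives the addition of the first-order derivation $\ad_o r_{\chi o,t}$ because any derivation satisfies it trivially. For nilpotency, expansion yields
\begin{equation*}
\tilde\varDelta_t{}^2 = \varDelta_o{}^2 + [\varDelta_o, \ad_o r_{\chi o,t}] + (\ad_o r_{\chi o,t})^2,
\end{equation*}
where by \ceqref{bvalg11} the commutator equals $\ad_o \varDelta_o r_{\chi o,t}$ and by the odd Jacobi identity \ceqref{bvalg8} the square equals $\tfrac{1}{2}\ad_o(r_{\chi o,t}, r_{\chi o,t})_o$, so $\tilde\varDelta_t{}^2 = \ad_o(\varDelta_o r_{\chi o,t} + \tfrac{1}{2}(r_{\chi o,t}, r_{\chi o,t})_o) = 0$ by the assumed master equation at $o$. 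A short calculation exploiting the Leibniz rule \ceqref{bvalg9} for the degree-zero element $r_{\chi o,t}$ further shows that the BV bracket induced by $\tilde\varDelta_t$ via \ceqref{bvalg5} coincides with $(-,-)_o$; hence their centers agree and $\tilde\varDelta_t$ is non-singular. Conjugation by $\chi_{t,o}$ then transfers all these properties to $\varDelta_{\chi t}$, delivering (a).

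Second, for canonicity I would exploit that $\chi_{t,o}$ intertwines $\tilde\varDelta_t$ and $\varDelta_{\chi t}$ by construction, hence also intertwines their brackets, which gives the transport rule $\ad_{\chi t}(\chi_{t,o} h)\,\chi_{t,o} = \chi_{t,o}\ad_o h$. Using $\chi_{t,o}{}^{-1}\chi_{t,s} = \chi_{o,s}$ one computes
\begin{equation*}
\varDelta_{\chi t}\chi_{t,s} - \chi_{t,s}\varDelta_{\chi s} = \chi_{t,o}\,\ad_o(r_{\chi o,t} - r_{\chi o,s})\,\chi_{o,s};
\end{equation*}
the cocycle \ceqref{bvflow1/1} with $u=o$ yields $r_{\chi o,t} - r_{\chi o,s} = -\chi_{o,t} r_{\chi t,s}$, and substituting, applying the transport rule for $\ad$, and using $\chi_{t,o}\chi_{o,s} = \chi_{t,s}$ reproduces exactly \ceqref{bvcanon2} with Jacobian $r_{\chi t,s}$. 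The quantum master equation \ceqref{bvflow6} for $r_{\chi t,s}$ at arbitrary $t,s$ then follows automatically from prop. \cref{prop:qmera}, extending the hypothesis beyond $t=o$.

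The remaining checks are routine. The flow axioms \ceqref{bvflow2} and \ceqref{bvflow3} reduce to \ceqref{bvflow1} by invertibility of the $\chi_{t,s}$ (set $u=t=s$, then $u=s$); analogously, \ceqref{bvflow2/1} and \ceqref{bvflow5} follow from the assumed cocycle \ceqref{bvflow1/1} by setting $u=s$ and $s=t$ respectively. I expect the main obstacle to be the bookkeeping in step two, in particular the verification that $\tilde\varDelta_t$ induces the bracket $(-,-)_o$ so that $\ad_o$ can be intertwined with $\ad_{\chi t}$; once this identification is in hand, the cocycle converts the Laplacian defect into precisely the $\ad_{\chi t}r_{\chi t,s}$ term required for canonicity.
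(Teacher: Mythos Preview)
Your proposal is correct and follows essentially the same approach as the paper. The paper packages the argument into a sequence of abstract lemmas (shifting a BV Laplacian by $\ad_0 a$ when $a$ solves the master equation, invariance of the bracket under such a shift, stability under conjugation by an algebra automorphism, non-singularity, and canonicity of $\alpha_2\alpha_1^{-1}$ with Jacobian $-\alpha_2(a_2-a_1)$) and then specializes, whereas you perform the same computations directly in the flow setting; the key steps---nilpotency via the master equation, the observation that $\tilde\varDelta_t$ induces $(-,-)_o$, the transport rule for $\ad$, and the cocycle identity $r_{\chi o,t}-r_{\chi o,s}=-\chi_{o,t}r_{\chi t,s}$---are identical.
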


\noindent
The proof 
is a bit involved. It follows from the following five lemmas. 

\begin{lemma} \label{lemma:1}
Let $\varDelta_0$ be a BV Laplacian on $X$ and 
let $a\in X$ with $|a|=0$ satisfying 
\begin{equation}
\varDelta_0a+\frac{1}{2}(a,a)_0=0. \vphantom{\bigg]}
\label{xbvflow1}
\end{equation}
Then, the operator \hphantom{xxxxxxxxxxxxxxxxx}
\begin{equation}
\varDelta_a=\varDelta_0+\ad_0 a \vphantom{\bigg]}
\label{xbvflow2}
\end{equation}
is a BV Laplacian on $X$. 
\end{lemma}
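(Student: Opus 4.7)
The plan is to verify directly that $\varDelta_a=\varDelta_0+\ad_0 a$ satisfies the four defining conditions \ceqref{bvalg1}--\ceqref{bvalg4} of a BV Laplacian listed in def. \cref{def:bvalg}. Three of the four are routine; the real content lies in the nilpotency $\varDelta_a{}^2=0$, which is exactly where the master-equation hypothesis \ceqref{xbvflow1} gets used.

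For the easy conditions: $\ad_0 a$ raises degree by $|a|+1=1$ by \ceqref{bvalg6}, matching $\varDelta_0$, so \ceqref{bvalg1} holds. For \ceqref{bvalg4}, one has $\varDelta_0 1_X=0$ by hypothesis and $(a,1_X)_0=0$ by \ceqref{bvalg10}. The second-order condition \ceqref{bvalg2} is linear in the operator, holds for $\varDelta_0$ by hypothesis, and holds automatically for $\ad_0 a$ because \ceqref{bvalg9} makes it a genuine graded derivation of $X$, hence \textit{a fortiori} a second-order differential operator.

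The crux is nilpotency. Since $\varDelta_0$ and $\ad_0 a$ both have degree $1$, I would expand
\begin{equation*}
\varDelta_a{}^2=\varDelta_0{}^2+\varDelta_0\ad_0 a+\ad_0 a\,\varDelta_0+(\ad_0 a)^2.
\end{equation*}
The first summand vanishes by \ceqref{bvalg3}. For the cross terms, applying \ceqref{bvalg11} with $|a|=0$ gives $\varDelta_0(a,f)_0=(\varDelta_0 a,f)_0-(a,\varDelta_0 f)_0$, so the middle two summands collapse to $\ad_0(\varDelta_0 a)$. For the square, I would specialize the shifted Jacobi identity \ceqref{bvalg8} to the triple $(a,a,f)$ and use the graded antisymmetry \ceqref{bvalg7} to rewrite $(a,(f,a))_0$ in terms of $(a,(a,f))_0$; because $|a|=0$ is even, the signs collapse cleanly and one obtains $(a,(a,f))_0=\tfrac12((a,a)_0,f)_0$, that is $(\ad_0 a)^2=\tfrac12\ad_0((a,a)_0)$. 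Combining these yields
\begin{equation*}
\varDelta_a{}^2=\ad_0\!\Bigl(\varDelta_0 a+\tfrac12(a,a)_0\Bigr),
\end{equation*}
which vanishes by the master-equation hypothesis \ceqref{xbvflow1}.

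The only bookkeeping obstacle is the sign manipulation in the Jacobi-identity step for $(\ad_0 a)^2$; the evenness of $a$ ensures that the signs arising from \ceqref{bvalg8} and \ceqref{bvalg7} reduce uniformly to $(-1)^{|f|+1}$ and can be divided out, leaving the clean factor $\tfrac12$. No subtlety concerning the center arises, since the lemma only asserts that $\varDelta_a$ is a BV Laplacian, not that it is non singular.
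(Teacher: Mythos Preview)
Your proof is correct and follows essentially the same approach as the paper: both verify the four axioms \ceqref{bvalg1}--\ceqref{bvalg4} directly, with the nilpotency reduced via \ceqref{bvalg11} and the Jacobi identity \ceqref{bvalg8} to $\varDelta_a{}^2=\ad_0\bigl(\varDelta_0 a+\tfrac12(a,a)_0\bigr)=0$. The only cosmetic difference is that the paper writes out the identity \ceqref{xbvflow5} explicitly to check the second--order condition for $\ad_0 a$, whereas you invoke the cleaner conceptual point that a derivation is \emph{a fortiori} second order.
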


\begin{proof} 
We have to demonstrate that $\varDelta_a$ satisfies conditions \ceqref{bvalg1}--\ceqref{bvalg4} 
of def. \cref{def:bvalg}.

For $f\in X$, we have $|\varDelta_0f|=|\ad_0f|=|f|+1$. Hence, $|\varDelta_af|=|f|+1$.
Thus, $\varDelta_a$ satisfies condition \ceqref{bvalg1}. 

$\varDelta_0$ satisfies condition \ceqref{bvalg2} by assumption. 
By repeated application of property \ceqref{bvalg9}, one can verify that the relation  
\begin{align}
\ad_0a(fgh)
&=-\ad_0afgh-(-1)^{|f|}f\ad_0agh-(-1)^{|f|+|g|}fg\ad_0ah
\vphantom{\Big]}
\label{xbvflow5}
\\
&+\ad_0a(fg)h+(-1)^{(|f|+1)|g|}g\ad_0a(fh)+(-1)^{|f|}f\ad_0a(gh)
\vphantom{\Big]}
\nonumber
\end{align}  
holds for $f,g,h\in X$. Hence,  by \ceqref{xbvflow2}, $\varDelta_a$ also obeys 
condition \ceqref{bvalg2}. 

By \ceqref{xbvflow2}, we have 
\begin{equation}
\varDelta_a{}^2=\varDelta_0{}^2+\varDelta_0\ad_0a+\ad_0a\varDelta_0+(\ad_0a)^2. 
\label{xbvflow6}
\end{equation}
Identity \ceqref{bvalg11} with $f=a$ entails that 
\begin{equation}
\varDelta_0\ad_0a+\ad_0a\varDelta_0=\ad_0(\varDelta_0a). 
\label{xbvflow7}
\end{equation}
Further identity \ceqref{bvalg8} with $f=g=a$ implies that 
\begin{equation}
(\ad_0a)^2=\frac{1}{2}\ad_0(a,a)_0. 
\label{xbvflow8}
\end{equation}
It follows that \hphantom{xxxxxxxxxxxx}
\begin{equation}
\varDelta_a{}^2=\ad_0\left(\varDelta_0a+\frac{1}{2}(a,a)_0\right)=0
\label{xbvflow9}
\end{equation}
by \ceqref{xbvflow1}. So, $\varDelta_a$ satisfies also condition \ceqref{bvalg3}. 

$\varDelta_0$ obeys condition \ceqref{bvalg4} by assumption. 
By virtue of \ceqref{bvalg10}, we have $\ad_01_X=0$.  
Hence, by \ceqref{xbvflow2}, $\varDelta_a$ also satisfies condition \ceqref{bvalg4}. 
\end{proof}

\begin{lemma} \label{lemma:2}
If $\alpha:X\rightarrow X$ is a graded algebra automorphism, then the operator 
\begin{equation}
\varDelta_{\alpha a}=\alpha\varDelta_a\alpha^{-1} 
\label{xbvflow11}
\end{equation}
with $\varDelta_a$ given by \ceqref{xbvflow2} is a BV Laplacian on $X$. 
\end{lemma}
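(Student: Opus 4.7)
The plan is to verify that $\varDelta_{\alpha a}=\alpha\varDelta_a\alpha^{-1}$ satisfies the four defining conditions \ceqref{bvalg1}--\ceqref{bvalg4} of a BV Laplacian in def. \cref{def:bvalg}, exploiting the fact that $\varDelta_a$ already does by lemma \cref{lemma:1} and that conjugation by a graded algebra automorphism transports all the relevant structural features cleanly. Note first that $\alpha^{-1}$ is again a graded algebra automorphism, so both $\alpha$ and $\alpha^{-1}$ preserve degrees, the unit, and all products.

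Conditions \ceqref{bvalg1}, \ceqref{bvalg3}, \ceqref{bvalg4} are essentially immediate. For \ceqref{bvalg1}, degree preservation by $\alpha^{\pm 1}$ gives $|\varDelta_{\alpha a}f|=|\varDelta_a\alpha^{-1}f|=|\alpha^{-1}f|+1=|f|+1$. For \ceqref{bvalg3}, one computes
\begin{equation*}
\varDelta_{\alpha a}{}^2=\alpha\varDelta_a\alpha^{-1}\alpha\varDelta_a\alpha^{-1}=\alpha\varDelta_a{}^2\alpha^{-1}=0.
\end{equation*}
For \ceqref{bvalg4}, since $\alpha^{-1}1_X=1_X$ and $\varDelta_a 1_X=0$, one has $\varDelta_{\alpha a}1_X=\alpha\varDelta_a 1_X=0$.

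The one calculation that needs a word is the second order property \ceqref{bvalg2}, which is where the hypothesis that $\alpha$ is an \emph{algebra} morphism (not just a linear map) is used. For $f,g,h\in X$, set $f'=\alpha^{-1}f$, $g'=\alpha^{-1}g$, $h'=\alpha^{-1}h$, which have degrees $|f'|=|f|$, $|g'|=|g|$, $|h'|=|h|$. Then $\alpha^{-1}(fgh)=f'g'h'$ and the analogous factorizations for the pairwise products hold. Applying relation \ceqref{bvalg2} for $\varDelta_a$ to $f',g',h'$ and then applying $\alpha$ to both sides, using that $\alpha$ is linear and preserves all products, each term on the right hand side converts back into the corresponding term formed with $\varDelta_{\alpha a}$ and $f,g,h$, yielding \ceqref{bvalg2} for $\varDelta_{\alpha a}$.

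The main (mild) obstacle is simply the bookkeeping for \ceqref{bvalg2}: one must be careful that the sign factors match, but since $\alpha^{\pm 1}$ preserves degrees, the signs appearing on the primed variables agree term by term with those on the unprimed ones, so no additional sign juggling is required. Non singularity of $\varDelta_{\alpha a}$, which is implicit in the statement via the preceding discussion, follows because $\alpha$ conjugates the BV bracket of $\varDelta_a$ into that of $\varDelta_{\alpha a}$, and consequently maps the center $C_{X,\varDelta_a}=\mathbb{R}1_X$ isomorphically onto $C_{X,\varDelta_{\alpha a}}=\mathbb{R}1_X$.
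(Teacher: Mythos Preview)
Your proof is correct and follows essentially the same approach as the paper's: both argue that the defining conditions \ceqref{bvalg1}--\ceqref{bvalg4} of a BV Laplacian are invariant under conjugation by a graded algebra automorphism, the paper stating this in one sentence while you spell out each condition explicitly. One small remark: your closing paragraph on non singularity is not part of the lemma's statement and is proved separately in the paper (lemma \cref{lemma:5}), so you may safely drop it here.
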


\begin{proof}
Conditions \ceqref{bvalg1}--\ceqref{bvalg4} of def. \cref{def:bvalg}, required for an 
operator to be a BV Laplacian, are manifestly invariant under conjugation by a graded algebra 
automorphism. Thus, since $\varDelta_a$ is a BV Laplacian by lemma \cref{lemma:1} and $\alpha$
is an automorphism, $\varDelta_{\alpha a}$ is a BV Laplacian as well.
\end{proof}

\begin{lemma} \label{lemma:3}
Let $(-,-)_{\alpha a}$ be the BV bracket associated with the BV Laplacian $\varDelta_{\alpha a}$
by \ceqref{bvalg5}. Then, for $u\in X$ with $|u|=0$, one has
\begin{equation}
\ad_{\alpha a}u=\alpha\ad_0(\alpha^{-1}u)\alpha^{-1}. 
\label{xbvflow14}
\end{equation}
\end{lemma}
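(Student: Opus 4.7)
The plan is to reduce the claim to two elementary observations: first that the BV bracket transforms covariantly under graded algebra conjugation of the Laplacian, and second that the bracket associated with $\varDelta_a = \varDelta_0 + \ad_0 a$ actually coincides with the bracket associated with $\varDelta_0$ because $a$ has degree $0$.

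First, I would apply the defining formula \ceqref{bvalg5} to $\varDelta_{\alpha a} = \alpha\varDelta_a\alpha^{-1}$. Since $\alpha$ is a graded algebra automorphism, $\alpha^{-1}(fg) = \alpha^{-1}f \cdot \alpha^{-1}g$, and $\alpha$ preserves degrees so $|\alpha^{-1}f|=|f|$. Pulling $\alpha$ outside term by term then yields
\begin{equation*}
(f,g)_{\alpha a} = \alpha\,(\alpha^{-1}f,\alpha^{-1}g)_a
\end{equation*}
for $f,g\in X$. This is a short direct calculation using only the definition \ceqref{bvalg5} and the algebra--morphism property of $\alpha$.

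Second, I would show that $(-,-)_a = (-,-)_0$. Writing $\varDelta_a=\varDelta_0+\ad_0 a$ and inserting into \ceqref{bvalg5}, the $\varDelta_0$ part reproduces $(f,g)_0$, while the $\ad_0 a$ part contributes
\begin{equation*}
(-1)^{|f|}\bigl((a,fg)_0 - (a,f)_0 g - (-1)^{|f|}f(a,g)_0\bigr).
\end{equation*}
Because $|a|=0$, the graded Leibniz rule \ceqref{bvalg9} (with the roles of the arguments swapped via graded antisymmetry \ceqref{bvalg7}) gives $(a,fg)_0 = (a,f)_0 g + (-1)^{|f|} f(a,g)_0$, so the contribution vanishes identically. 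Hence $(f,g)_a=(f,g)_0$ for all $f,g\in X$.

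Finally, combining these two identities, for $u\in X$ with $|u|=0$ and any $f\in X$,
\begin{equation*}
\ad_{\alpha a}u\,(f) = (u,f)_{\alpha a} = \alpha(\alpha^{-1}u,\alpha^{-1}f)_a = \alpha(\alpha^{-1}u,\alpha^{-1}f)_0 = \alpha\ad_0(\alpha^{-1}u)\alpha^{-1}(f),
\end{equation*}
which is \ceqref{xbvflow14}. I do not anticipate any serious obstacle: the whole argument is an exercise in the defining formula \ceqref{bvalg5} plus a Leibniz--rule cancellation made possible by $|a|=0$. The only point that needs a little care is keeping track of the sign $(-1)^{(|a|+1)|f|}$ in \ceqref{bvalg9}, which collapses to $(-1)^{|f|}$ precisely because $a$ is even, so this feature of the statement is essential.
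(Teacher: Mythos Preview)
Your proposal is correct and follows essentially the same route as the paper: first observe that the defining formula \ceqref{bvalg5} is manifestly covariant under conjugation by a graded algebra automorphism, giving $(f,g)_{\alpha a}=\alpha(\alpha^{-1}f,\alpha^{-1}g)_a$; then note that the $\ad_0 a$ contribution to $(-,-)_a$ vanishes by the Leibniz rule \ceqref{bvalg9} (since $|a|=0$ makes $\ad_0 a$ a degree $1$ derivation), so $(-,-)_a=(-,-)_0$; and finally combine. The only superfluous remark is your appeal to graded antisymmetry \ceqref{bvalg7}: the Leibniz identity \ceqref{bvalg9} with $a$ in the first slot already gives the needed cancellation directly.
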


\begin{proof}
The expression \ceqref{bvalg5}  of the BV bracket associated with a BV Laplacian 
is manifestly invariant under conjugation by graded algebra automorphism. So, 
\begin{equation}
(f,g)_{\alpha a}=\alpha(\alpha^{-1}f,\alpha^{-1}g)_a 
\label{xbvflow15}
\end{equation}
for $f,g\in X$, where $(-,-)_a$ is the BV bracket associated with the BV Laplacian $\varDelta_a$.
$(-,-)_a$ is also given by \ceqref{bvalg5}. As $\ad_0a$ is a degree $1$ derivation of $X$, 
\begin{equation}
\ad_0(fg)-\ad_0fg-(-1)^{|f|}f\ad_0g=0. 
\label{xbvflow1s}
\end{equation}
From \ceqref{bvalg5} and \ceqref{xbvflow2}, it follows then that $\ad_0a$ gives a vanishing contribution to $(-,-)_a$ 
so that $(-,-)_a=(-,-)_0$. Hence, \ceqref{xbvflow15} gets updated to 
\begin{equation}
(f,g)_{\alpha a}=\alpha(\alpha^{-1}f,\alpha^{-1}g)_0. 
\label{xbvflow16}
\end{equation}
From this relation, \ceqref{xbvflow14} is obvious. 
\end{proof} 

\begin{lemma} \label{lemma:5}
$\varDelta_{\alpha a}$ is non singular. 
\end{lemma}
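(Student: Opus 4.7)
The plan is to reduce non singularity of $\varDelta_{\alpha a}$ to that of $\varDelta_0$, which holds by the hypotheses of proposition \cref{prop:bvdbuild}, by tracking how the BV center transforms under the constructions of lemmas \cref{lemma:1}--\cref{lemma:3}.

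First, I would compute the center $C_{\alpha a}$ of $\varDelta_{\alpha a}$ using the explicit form of its bracket supplied by lemma \cref{lemma:3}. By relation \ceqref{xbvflow16}, for any $c\in X$ one has $(f,c)_{\alpha a}=\alpha(\alpha^{-1}f,\alpha^{-1}c)_0$ for all $f\in X$. Since $\alpha:X\rightarrow X$ is a graded algebra automorphism, it is in particular a bijection, so the condition $(f,c)_{\alpha a}=0$ for every $f\in X$ is equivalent to $(g,\alpha^{-1}c)_0=0$ for every $g\in X$, i.e. $\alpha^{-1}c\in C_0$.

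Next, I would invoke the non singularity of $\varDelta_0$, which is assumption (ii) of proposition \cref{prop:bvdbuild}. By def. \cref{def:bvnonsing}, $C_0=\mathbb{R}1_X$, so $\alpha^{-1}c=\lambda 1_X$ for some $\lambda\in\mathbb{R}$. Since $\alpha$ is an algebra automorphism, $\alpha 1_X=1_X$, whence $c=\lambda 1_X\in\mathbb{R}1_X$. This shows $C_{\alpha a}\subseteq\mathbb{R}1_X$, and the reverse inclusion is automatic from \ceqref{bvalg10} applied to the bracket $(-,-)_{\alpha a}$. Hence $C_{\alpha a}=\mathbb{R}1_X$, so $\varDelta_{\alpha a}$ is non singular as required.

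There is no substantive obstacle here; the argument is purely structural. The only point worth noting is that it relies essentially on the observation made within the proof of lemma \cref{lemma:3} that the derivation term $\ad_0 a$ drops out of the BV bracket associated with $\varDelta_a$, so that $(-,-)_a=(-,-)_0$ and the bracket $(-,-)_{\alpha a}$ is merely the $\alpha$--conjugate of the original bracket $(-,-)_0$. Non singularity is then preserved under both the twist by $\ad_0 a$ (lemma \cref{lemma:1}) and the conjugation by $\alpha$ (lemma \cref{lemma:2}).
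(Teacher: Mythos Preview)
Your argument is correct and coincides with the paper's own proof, which simply states that the result ``follows readily from \ceqref{xbvflow16} and the non singularity of $\varDelta_0$.'' You have spelled out precisely the details the paper leaves implicit.
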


\begin{proof}
This follows readily from \ceqref{xbvflow16} and the non singulararity of $\varDelta_0$. 
\end{proof}

\noindent 
Denote by $X_{\alpha a}$ the BV algebra consisting in the graded algebra $X$ 
equipped with the BV Laplacian $\varDelta_{\alpha a}$.

\begin{lemma} \label{lemma:4}
Let $a_1,a_2\in X$ with $|a_1|=|a_2|=0$, $\alpha_1,\alpha_2:X\rightarrow X$ be graded algebra automorphisms
and $\beta=\alpha_2\alpha_1{}^{-1}$. Then, $\beta:X_{\alpha_1a_1}\rightarrow X_{\alpha_2a_2}$ is canonical with 
associated logarithmic Jacobian \pagebreak 
\begin{equation}
r_\beta=-\alpha_2(a_2-a_1)\qquad \text{\rm mod $\mathbb{R}1_X$}. 
\label{xbvflow17}
\end{equation}
\end{lemma}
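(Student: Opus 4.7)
The plan is to verify directly the defining relation \ceqref{bvcanon2} of a canonical map for $\beta: X_{\alpha_1 a_1} \rightarrow X_{\alpha_2 a_2}$ with the proposed Jacobian $r_\beta = -\alpha_2(a_2 - a_1)$, relying on the explicit formula \ceqref{xbvflow11} for the twisted Laplacians and on lemma \cref{lemma:3} to convert adjoints between the Laplacians $\varDelta_0$ and $\varDelta_{\alpha_2 a_2}$.

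First, $\beta = \alpha_2 \alpha_1{}^{-1}$ is an invertible graded algebra morphism, being a composition of such, and $|r_\beta| = 0$ since $|a_1| = |a_2| = 0$ and $\alpha_2$ preserves grading; so the degree requirement \ceqref{bvcanon1} is automatic.

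Next, I would substitute \ceqref{xbvflow11} into $\varDelta_{\alpha_2 a_2}\beta - \beta\varDelta_{\alpha_1 a_1}$. The inner $\alpha_i{}^{\pm 1}$ factors cancel against $\beta$ and the $\varDelta_0$ summands cancel between the two terms, leaving
\begin{equation*}
\varDelta_{\alpha_2 a_2}\beta - \beta\varDelta_{\alpha_1 a_1} = \alpha_2\ad_0(a_2 - a_1)\alpha_1{}^{-1}.
\end{equation*}
On the other hand, applying lemma \cref{lemma:3} with $\alpha = \alpha_2$ and $u = r_\beta$,
\begin{equation*}
\ad_{\alpha_2 a_2} r_\beta \cdot \beta = \alpha_2\ad_0(\alpha_2{}^{-1}r_\beta)\alpha_2{}^{-1}\cdot\alpha_2\alpha_1{}^{-1} = \alpha_2\ad_0(\alpha_2{}^{-1}r_\beta)\alpha_1{}^{-1}.
\end{equation*}
Summing, \ceqref{bvcanon2} reduces to the requirement $\ad_0\bigl(a_2 - a_1 + \alpha_2{}^{-1}r_\beta\bigr) = 0$, equivalently that $a_2 - a_1 + \alpha_2{}^{-1}r_\beta$ belongs to the BV center of $X_0$.

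Finally, by the non-singularity of $\varDelta_0$ (def.\ \cref{def:bvnonsing}) this center is exactly $\mathbb{R}1_X$, which is equivalent to $r_\beta = -\alpha_2(a_2 - a_1)$ modulo $\mathbb{R}1_X$, as claimed. There is no significant obstacle: the argument is algebraically direct once one invokes lemma \cref{lemma:3} and observes the cancellations. The only point worth flagging is the appeal to non-singularity, which is precisely what makes the $\mathbb{R}1_X$ indeterminacy of the logarithmic Jacobian noted after def.\ \cref{def:canmap} sharp and consistent here.
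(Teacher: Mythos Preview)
Your proof is correct and follows essentially the same route as the paper's: both compute $\varDelta_{\alpha_2a_2}\beta-\beta\varDelta_{\alpha_1a_1}=\alpha_2\ad_0(a_2-a_1)\alpha_1{}^{-1}$ via \ceqref{xbvflow2}, \ceqref{xbvflow11} and then use lemma~\cref{lemma:3} to identify this with $-\ad_{\alpha_2a_2}r_\beta\,\beta$ for $r_\beta=-\alpha_2(a_2-a_1)$. The only cosmetic difference is that the paper invokes lemma~\cref{lemma:5} (non-singularity of $\varDelta_{\alpha a}$) rather than directly that of $\varDelta_0$; the former is in any case what licenses speaking of $\beta$ as a canonical map between the BV algebras $X_{\alpha_ia_i}$ in the sense of def.~\cref{def:canmap}.
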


\begin{proof}
Using \ceqref{xbvflow2},  \ceqref{xbvflow11} and \ceqref{xbvflow14}, we find 
\begin{align}
\varDelta_{\alpha_2a_2}\beta-\beta\varDelta_{\alpha_1a_1}
&=\alpha_2(\varDelta_{a_2}-\varDelta_{a_1})\alpha_1{}^{-1}
\vphantom{\Big]}
\label{}
\\
&=\alpha_2\ad_0(a_2-a_1)\alpha_1{}^{-1}
\vphantom{\Big]}
\nonumber
\\
&=\ad_{\alpha_2a_2}(\alpha_2(a_2-a_1))\beta. 
\vphantom{\Big]}
\nonumber
\end{align}
Now, $\varDelta_{\alpha a}$ is non singular by lemma  \cref{lemma:5}. The statement follows. 
\end{proof}

\noindent
{\it Proof of prop. \cref{prop:bvdbuild}}.
$\varDelta_{\chi t}$ is of the form \ceqref{xbvflow11} with $\alpha=\chi_{t,o}$ and $a=r_{\chi o,t}$.
By lemma \cref{lemma:2}, so, $\varDelta_{\chi t}$ is a BV Laplacian. By lemma \cref{lemma:5},
$\varDelta_{\chi t}$ is non singular. The fact that $\chi_{t,o}$ and $r_{\chi o,t}$ 
are families of endomorphisms of $X$ of the given formal differentiation structure of $X$ over $T$ 
ensures that $\varDelta_{\chi t}$ is a family of the same kind 
as well (cf. def. \cref{def:formdif}). In this way,  $\varDelta_{\chi t}$ constitutes a non singular 
BV Laplacian family. 

Let $\alpha_1=\chi_{s,o}$, $\alpha_2=\chi_{t,o}$, $\beta=\chi_{t,o}\chi_{s,o}{}^{-1}=\chi_{t,s}$.
By lemma \cref{lemma:4}, then, $\chi_{t,s}:X_s\rightarrow X_t$ is canonical.
Furthermore, its logarithmic Jacobian, computed according to \ceqref{xbvflow17}, is given by 
\begin{align}
r_{\chi_{t,s}}&=-\chi_{t,o}(r_{\chi o,t}-r_{\chi o,s}) \qquad \text{mod $\mathbb{R}1_X$}
\vphantom{\Big]}
\label{}
\\
&=-\chi_{t,o}\chi_{o,s}r_{\chi s,t}
\vphantom{\Big]}
\nonumber
\\
&=r_{\chi t,s},
\vphantom{\Big]}
\nonumber
\end{align}
where \ceqref{bvflow1/1} has been used. Hence the $\mathbb{R}1_X$ indeterminacy of $r_{\chi_{t,s}}$ 
can be fixed so that $r_{\chi_{t,s}}$ is a family of degree $0$ elements 
of $X$ over $T^2$, since  
$r_{\chi t,s}$ is one such family by assumption. It follows so from def. \cref{def:bvflow} that $\chi_{t,s}$ is a BV flow 
relative to the Laplacian family $\varDelta_{\chi t}$. 
\hfill {\small $\Box$}

Given a non singular BV Laplacian family $\varDelta_t$ of $X$ over $T$  
and a BV flow $\chi_{t,s}$ relative to $\varDelta_t$ with logarithmic Jacobian family $r_{\chi t,s}$, we 
can use $\varDelta_o$, $\chi_{t,s}$ and $r_{\chi t,s}$ to construct the  BV Laplacian family
$\varDelta_{\chi t}$  having $\chi_{t,s}$ as a BV flow using \ceqref{xbvflow13}
as stated in prop. \cref{prop:bvdbuild}. One should keep in mind, however, that in general
$\varDelta_{\chi t}$ needs not equal $\varDelta_t$.


\vfil\eject

\subsection{\textcolor{blue}{\sffamily BV flow stabilizers}}\label{subsec:bvflowstab}

In this subsection, we introduce the notion of BV flow 
stabilizer which is expected be relevant in the study of the symmetry of BV flows. 

Let $X$ and $T$ be as in subsect. \cref{subsec:bvflow} above and let $\varDelta_t$ be 
a non singular BV Laplacian family over $T$ and $\chi_{t,s}$ a BV flow along $T$ relative to it
(cf. defs. \cref{def:tfamD} and \cref{def:bvflow}). 

It is possible to generate new BV flows from $\chi_{t,s}$ by conjugation.

\begin{defi}
A canonical map family over $T$ (relative to the BV Laplacian family $\varDelta_t$) 
is a family $\gamma_t:X\rightarrow X$ of algebra automorphisms of $X$ over $T$ 
such that $\gamma_t:X_t\rightarrow X_t$ is a canonical map for every $t$ (cf. def. \cref{def:canmap}).  
It is further required that the $\mathbb{R}1_X$ indeterminacy of the logarithmic Jacobians $r_{\gamma t}$ 
of the $\gamma_t$ 
can be fixed for each $t$ in such a way that the collection $r_{\gamma t}$ 
is a family of degree $0$ elements of $X$ over $T$.
\end{defi}

\noindent 
$r_{\gamma t}$ is the logarithmic Jacobian family of the canonical map family $\gamma_t$. 
$r_{\gamma t}$ is unique only up to addition of an $\mathbb{R}1_X$ valued family over $T$.

\begin{defi} \label{def:cenfam1}
A degree $p$ central logarithmic family over $T$ is a family of degree $p$ elements of $X$ over $T$
of the form $\rho_t1_X$, where $\rho$ is a degree $p$ 
function on $T$.
\end{defi}

\noindent
We denote by $\zz_p(T,X)\subset \iMap(T,X)$ the vector space of all such families. The logarithmic Jacobian family 
$r_{\gamma t}$ is then defined mod $\zz_0(T,X)$. In what follows, we assume that a choice of $r_{\gamma t}$ 
has been made once and for all. 

Let $\gamma_t$ be a canonical map family. 

\begin{defi} \label{def:gammaconj}
The $\gamma$--conjugate of the BV flow $\chi_{t,s}$ is the family 
${}^\gamma\chi_{t,s}:X\rightarrow X$ of algebra automorphisms of $X$ over $T^2$ defined by 
\begin{equation}
{}^\gamma\chi_{t,s}=\gamma_t\chi_{t,s}\gamma_s{}^{-1}. \vphantom{\bigg]}
\label{bvflow14}
\end{equation}
\end{defi}

\begin{prop}
The $\gamma$--conjugate ${}^\gamma\chi_{t,s}$ of $\chi_{t,s}$ is a BV flow. 
\end{prop}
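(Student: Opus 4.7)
The plan is to verify directly the three defining requirements of a BV flow from def. \cref{def:bvflow} for ${}^\gamma\chi_{t,s}=\gamma_t\chi_{t,s}\gamma_s{}^{-1}$: that it is a family of graded algebra automorphisms of $X$ over $T^2$ satisfying the composition relations \ceqref{bvflow1}--\ceqref{bvflow3}, that each ${}^\gamma\chi_{t,s}:X_s\to X_t$ is canonical, and that the logarithmic Jacobians can be chosen so as to form a family over $T^2$ obeying \ceqref{bvflow1/1}--\ceqref{bvflow5}.

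First I would address the automorphism and composition properties. That ${}^\gamma\chi_{t,s}$ is an algebra automorphism of $X$ is immediate from the fact that $\gamma_t$, $\chi_{t,s}$, $\gamma_s{}^{-1}$ all are. The composition law telescopes: using \ceqref{bvflow1} for $\chi$, one has ${}^\gamma\chi_{u,t}\,{}^\gamma\chi_{t,s}=\gamma_u\chi_{u,t}\gamma_t{}^{-1}\gamma_t\chi_{t,s}\gamma_s{}^{-1}=\gamma_u\chi_{u,s}\gamma_s{}^{-1}={}^\gamma\chi_{u,s}$, whence \ceqref{bvflow2} and \ceqref{bvflow3} follow.

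Second, canonicality of ${}^\gamma\chi_{t,s}:X_s\to X_t$ is a consequence of $\gamma_s$ being canonical on $X_s$, $\chi_{t,s}:X_s\to X_t$ being canonical by assumption, and $\gamma_t$ being canonical on $X_t$, together with the groupoid structure of $\mathsans{BVAlg}$. Combining \ceqref{bvcanon5} and \ceqref{bvcanon7} yields, modulo $\mathbb{R}1_X$, the explicit logarithmic Jacobian
\begin{equation}
r_{{}^\gamma\chi t,s}=r_{\gamma t}+\gamma_t r_{\chi t,s}-{}^\gamma\chi_{t,s}\,r_{\gamma s}.
\label{gammaconjjac}
\end{equation}
Since $r_{\gamma t}$ and $r_{\chi t,s}$ are, by hypothesis, families of degree $0$ elements of $X$ over $T$ and $T^2$ respectively, and since $\gamma_t$ and ${}^\gamma\chi_{t,s}$ are families of endomorphisms over $T$ and $T^2$, the right hand side of \ceqref{gammaconjjac} is automatically a family of degree $0$ elements over $T^2$, fixing the $\mathbb{R}1_X$ ambiguity.

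Third, I would verify the cocycle identity \ceqref{bvflow1/1} for $r_{{}^\gamma\chi t,s}$ by substituting \ceqref{gammaconjjac} and regrouping: the contributions from $r_{\gamma}$ telescope by $\gamma_u r_{\chi u,t}+\gamma_u\chi_{u,t}r_{\chi t,s}=\gamma_u(r_{\chi u,t}+\chi_{u,t}r_{\chi t,s})=\gamma_u r_{\chi u,s}$ thanks to \ceqref{bvflow1/1} for $r_{\chi t,s}$, while the cross term involving ${}^\gamma\chi_{u,t}\,{}^\gamma\chi_{t,s}$ collapses to ${}^\gamma\chi_{u,s}r_{\gamma s}$ by the composition rule just established. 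Relations \ceqref{bvflow2/1} and \ceqref{bvflow5} are consequences of \ceqref{bvflow1/1}, as noted after def. \cref{def:bvflow}, but also follow transparently from $r_{\chi s,s}=0$ and ${}^\gamma\chi_{s,s}=\id_X$.

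The only real potential pitfall is the bookkeeping of the $\mathbb{R}1_X$ indeterminacy: one must check that the representative prescribed by \ceqref{gammaconjjac} satisfies \ceqref{bvflow1/1} exactly rather than merely modulo central logarithmic families in $\zz_0(T^2,X)$. The computation above shows the equality on the nose, so no further adjustment is needed and ${}^\gamma\chi_{t,s}$ is a bona fide BV flow relative to $\varDelta_t$.
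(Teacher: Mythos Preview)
Your proof is correct and follows essentially the same line as the paper's: both argue that ${}^\gamma\chi_{t,s}$ is a family of algebra automorphisms satisfying \ceqref{bvflow1}--\ceqref{bvflow3} by the evident telescoping, that canonicality follows from composition in $\mathsans{BVAlg}$, and that the Jacobian cocycle rules \ceqref{bvcanon5}, \ceqref{bvcanon7} produce the explicit representative \ceqref{gammaconjjac}, which is manifestly a family over $T^2$. If anything, you are more thorough than the paper, which does not spell out the verification of \ceqref{bvflow1/1} for $r_{{}^\gamma\chi t,s}$; your telescoping computation of that identity is correct and shows it holds on the nose, not merely modulo $\zz_0(T^2,X)$.
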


\begin{proof}
${}^\gamma\chi_{t,s}$ is \pagebreak a family of algebra automorphisms of $X$ over $T^2$, as by \ceqref{bvflow14} 
it is a composition of the families ${}^\gamma\chi_{t,s}$ and $\gamma_t$. ${}^\gamma\chi_{t,s}$ enjoys the 
properties \ceqref{bvflow1}-- \ceqref{bvflow3}, since those relations are manifestly 
invariant under conjugation. Further, ${}^\gamma\chi_{t,s}:X_s\rightarrow X_t$ is canonical, as it is a 
composition of canonical maps. Next, by \ceqref{bvflow14} again, and the identities \ceqref{bvcanon5}, 
\ceqref{bvcanon7},  it follows that the logarithmic Jacobian of ${}^\gamma\chi_{t,s}$ is  
\begin{equation}
r_{{}^\gamma\chi_{t,s}}=\gamma_tr_{\chi t,s}+r_{\gamma t}-{}^\gamma\chi_{t,s}r_{\gamma s}
\qquad \text{mod $\mathbb{R}1_X$}. 
\label{bvflow14/2}
\end{equation}
The $\mathbb{R}1_X$ indeterminacy of $r_{{}^\gamma\chi_{t,s}}$ can be fixed in such a way that 
$r_{{}^\gamma\chi_{t,s}}$ is a family of degree $0$ elements of $X$ over $T^2$, 
as the right hand side 
of \ceqref{bvflow14/2} is built of the families $\chi_{t,s}$, $r_{\chi t,s}$ and $\gamma_t$,
$r_{\gamma t}$.  The statement follows. 
\end{proof}

\begin{prop} \label{prop:rgchits}
The logarithmic Jacobian family $r_{{}^\gamma\chi t,s}$ 
of the $\gamma$--conjugate flow ${}^\gamma\chi_{t,s}$ can be chosen to be given by 
\begin{equation}
r_{{}^\gamma\chi t,s}=\gamma_tr_{\chi t,s}+r_{\gamma t}-{}^\gamma\chi_{t,s}r_{\gamma s}.
\label{bvflow18}
\end{equation}
\end{prop}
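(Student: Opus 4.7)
The plan is to take the expression on the right-hand side of \ceqref{bvflow18} as a candidate representative for $r_{{}^\gamma\chi t,s}$ and verify that it satisfies all the conditions imposed on a logarithmic Jacobian family by def.\ \cref{def:bvflow}. The previous proposition already shows that this expression equals $r_{{}^\gamma\chi t,s}$ modulo $\mathbb{R}1_X$, so it is a bona fide logarithmic Jacobian of the canonical map ${}^\gamma\chi_{t,s}$. What remains is to check that the particular choice displayed in \ceqref{bvflow18}, with no residual $\mathbb{R}1_X$ ambiguity, satisfies the compatibility conditions \ceqref{bvflow1/1}--\ceqref{bvflow5} of def.\ \cref{def:bvflow} and indeed defines a family of degree $0$ elements of $X$ over $T^2$. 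The latter point is immediate, since the right-hand side of \ceqref{bvflow18} is manifestly a composition of the families $\chi_{t,s}$, $r_{\chi t,s}$, $\gamma_t$, $r_{\gamma t}$.

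First, I would verify the triviality condition \ceqref{bvflow5}. Setting $s=t$ in \ceqref{bvflow18} and using $r_{\chi s,s}=0$ (by \ceqref{bvflow5} applied to $\chi$) and ${}^\gamma\chi_{s,s}=\id_X$ (which follows from \ceqref{bvflow14}), the two terms involving $r_{\gamma s}$ cancel, yielding $r_{{}^\gamma\chi s,s}=0$. Next, I would check the inversion identity \ceqref{bvflow2/1}. Substituting $t\leftrightarrow s$ in \ceqref{bvflow18} and applying ${}^\gamma\chi_{s,t}$ to the right-hand side of the proposed formula for $r_{{}^\gamma\chi t,s}$, one uses the crucial algebraic identity ${}^\gamma\chi_{s,t}\gamma_t=\gamma_s\chi_{s,t}$, which follows directly from the definition \ceqref{bvflow14}, together with \ceqref{bvflow2/1} for $\chi$, to obtain $-{}^\gamma\chi_{s,t}r_{{}^\gamma\chi t,s}=\gamma_sr_{\chi s,t}+r_{\gamma s}-{}^\gamma\chi_{s,t}r_{\gamma t}=r_{{}^\gamma\chi s,t}$.

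The main step, and the only one requiring a slightly longer manipulation, is the cocycle condition \ceqref{bvflow1/1}. I would compute
\begin{equation*}
r_{{}^\gamma\chi u,t}+{}^\gamma\chi_{u,t}r_{{}^\gamma\chi t,s}
\end{equation*}
by substituting the proposed formula \ceqref{bvflow18} into both summands, expanding, and repeatedly using: (i) the intertwining identity ${}^\gamma\chi_{u,t}\gamma_t=\gamma_u\chi_{u,t}$ coming from \ceqref{bvflow14}, which lets me rewrite ${}^\gamma\chi_{u,t}\gamma_tr_{\chi t,s}$ as $\gamma_u\chi_{u,t}r_{\chi t,s}$; (ii) the composition law ${}^\gamma\chi_{u,t}{}^\gamma\chi_{t,s}={}^\gamma\chi_{u,s}$, which produces the outermost term $-{}^\gamma\chi_{u,s}r_{\gamma s}$ and cancels the pair $\mp{}^\gamma\chi_{u,t}r_{\gamma t}$; and (iii) the cocycle identity \ceqref{bvflow1/1} for the logarithmic Jacobian family of $\chi$, which collapses $\gamma_ur_{\chi u,t}+\gamma_u\chi_{u,t}r_{\chi t,s}$ into $\gamma_ur_{\chi u,s}$. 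The result is exactly $\gamma_ur_{\chi u,s}+r_{\gamma u}-{}^\gamma\chi_{u,s}r_{\gamma s}=r_{{}^\gamma\chi u,s}$, as required.

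The main obstacle is purely notational rather than conceptual: keeping track of the cancellations in the cocycle computation. All the underlying algebraic content is already packaged in the cocycle relations for $r_\chi$ and in the definition \ceqref{bvflow14} of the conjugate flow; once those are used cleanly, the proposed choice \ceqref{bvflow18} is seen to satisfy all conditions of def.\ \cref{def:bvflow} without needing any further $\mathbb{R}1_X$ correction.
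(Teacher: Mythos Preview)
Your proof is correct and, in fact, more complete than the paper's own argument, which consists of a single sentence: ``This follows readily from \ceqref{bvflow14/2}.'' The paper relies on the previous proposition, where the expression \ceqref{bvflow14/2} was already obtained (mod $\mathbb{R}1_X$) from the composition rule \ceqref{bvcanon5} and it was noted that the right-hand side is a family over $T^2$; the verification that this specific representative satisfies the cocycle conditions \ceqref{bvflow1/1}--\ceqref{bvflow5} exactly is left implicit. You make this verification explicit, and your computations---particularly the use of the intertwining identity ${}^\gamma\chi_{u,t}\gamma_t=\gamma_u\chi_{u,t}$ together with the cocycle \ceqref{bvflow1/1} for $r_\chi$---are correct. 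So the approach is the same; you simply fill in the details the paper elides.
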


\begin{proof}
This follows readily from \ceqref{bvflow14/2}. 
\end{proof}

\noindent
Recall that any other picking differs from the above one by an element of  
the degree $0$ central logarithmic family space 
$\zz_0(T^2,X)$ (cf. subsect. \cref{subsec:bvflow}, def. \cref{def:cenfam}). 
Below, we adhere to the choice \ceqref{bvflow18}. 

The stabilizers of a BV flow are special conjugations leaving the flow unchanged.
They so encode the flow's symmetry. 


\begin{defi} \label{def:stab}
The canonical map family $\gamma_t$ is a stabilizer of the BV flow $\chi_{t,s}$ if
\begin{equation}
{}^\gamma\chi_{t,s}=\chi_{t,s}. \vphantom{\bigg]}
\label{bvflow20}
\end{equation}
\end{defi}

Combining def. \cref{def:gammaconj} and prop. \cref{prop:rgchits}, we obtain 
the following result. 

\begin{prop} \label{prop:stabrgchits}
Let the canonical map family $\gamma_t$ be a stabilizer of the BV flow $\chi_{t,s}$. Then, $\chi_{t,s}$ 
satisfies the identity \hphantom{xxxxxxxxxxxxxxxx}
\begin{equation}
\chi_{t,s}=\gamma_t\chi_{t,s}\gamma_s{}^{-1}, 
\label{bvflow10/1}
\end{equation}
\vspace{-1cm}\eject\noindent
while the logarithmic Jacobian $r_{\chi t,s}$ family of $\chi_{t,s}$ obeys 
\begin{equation}
r_{\chi t,s}=\gamma_tr_{\chi t,s}+r_{\gamma t}-\chi_{t,s}r_{\gamma s}
\qquad \text{\rm mod $\zz_0(T^2,X)$}. \vphantom{\bigg]}
\label{bvflow10/2}
\end{equation}
\end{prop}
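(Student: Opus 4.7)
The plan is to derive both assertions directly by unwinding the relevant definitions, with some care taken about the $\mathbb{R}1_X$ indeterminacy that enters through logarithmic Jacobians. No genuine computation is required; the statement is essentially a corollary of def. \cref{def:gammaconj}, def. \cref{def:stab}, and prop. \cref{prop:rgchits}.

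First, to establish the identity \ceqref{bvflow10/1}, I would substitute the explicit form of the $\gamma$-conjugate flow given in def. \cref{def:gammaconj}, namely ${}^\gamma\chi_{t,s}=\gamma_t\chi_{t,s}\gamma_s{}^{-1}$, into the stabilizer condition ${}^\gamma\chi_{t,s}=\chi_{t,s}$ of def. \cref{def:stab}. The desired relation is then immediate as an identity of graded algebra automorphisms of $X$ over $T^2$.

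Next, to obtain the congruence \ceqref{bvflow10/2}, I would invoke prop. \cref{prop:rgchits}, which provides the choice
\begin{equation*}
r_{{}^\gamma\chi t,s}=\gamma_t r_{\chi t,s}+r_{\gamma t}-{}^\gamma\chi_{t,s}r_{\gamma s}
\end{equation*}
for the logarithmic Jacobian family of the conjugate flow. Since ${}^\gamma\chi_{t,s}=\chi_{t,s}$ as canonical maps by \ceqref{bvflow10/1}, the two sides represent logarithmic Jacobians of one and the same canonical map family; by the indeterminacy discussion surrounding def. \cref{def:cenfam}, they must coincide modulo the degree $0$ central logarithmic family space $\zz_0(T^2,X)$. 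Replacing ${}^\gamma\chi_{t,s}$ by $\chi_{t,s}$ in the above expression and equating with $r_{\chi t,s}$ in this sense yields \ceqref{bvflow10/2}.

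The only subtle point, which is hardly an obstacle, is the careful bookkeeping of the $\zz_0(T^2,X)$ ambiguity of logarithmic Jacobians: two identical canonical maps need not have literally equal Jacobians, only Jacobians that agree mod $\zz_0(T^2,X)$. This is why \ceqref{bvflow10/2} is stated as a congruence rather than an equality, in contrast with the strict identity \ceqref{bvflow10/1} between the underlying maps themselves.
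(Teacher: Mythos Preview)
Your proposal is correct and matches the paper's own argument essentially verbatim: the paper simply states that the result follows by combining def.~\cref{def:gammaconj} and prop.~\cref{prop:rgchits}, and then remarks that \ceqref{bvflow10/2} holds only mod $\zz_0(T^2,X)$ because the choices of $r_{\chi t,s}$ and $r_{{}^\gamma\chi t,s}$ fix their respective $\zz_0(T^2,X)$ indeterminacies independently --- precisely the bookkeeping point you flag.
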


\noindent
The identity \ceqref{bvflow10/2} holds only mod $\zz_0(T^2,X)$ because the choice of 
$r_{\chi t,s}$ and $r_{{}^\gamma\chi t,s}$ fixes their respective $\zz_0(T^2,X)$ 
indeterminacy in a generally independent manner. 

Let $D\in\Vect(M)$ be a vector field. 

\begin{prop} \label{prop:conjchi}
Under $\gamma$--conjugation, the $D$--infinitesimal generator 
$\chi^D{}_t$ of the BV flow $\chi_{t,s}$ (cf. 
eq. \ceqref{bvflow7}) transforms as
\begin{equation}
{}^\gamma\chi^D{}_t=D_t\gamma_t\gamma_t{}^{-1}+\gamma_t\chi^D{}_t\gamma_t{}^{-1}. \vphantom{\bigg]}
\label{bvflow15}
\end{equation}
Moreover, the $D$--infinitesimal logarithmic Jacobian $r^D{}_{{}^\gamma\chi t}$ 
(cf. 
eq. \ceqref{bvflow8}) gets  
\begin{equation}
r^D{}_{{}^\gamma\chi t}=\gamma_tr^D{}_{\chi t}+D_tr_{\gamma t}-{}^\gamma\chi^D{}_tr_{\gamma t}. \vphantom{\bigg]}
\label{bvflow16}
\end{equation}
\end{prop}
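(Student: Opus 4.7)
The plan is to derive both identities \ceqref{bvflow15} and \ceqref{bvflow16} by differentiating along $D$ the defining formulae \ceqref{bvflow14} for ${}^\gamma\chi_{t,s}$ and \ceqref{bvflow18} for $r_{{}^\gamma\chi t,s}$ and then specializing at $s=t$, mirroring exactly the strategy already used to establish eqs.\ \ceqref{bvflow10}, \ceqref{bvflow12} and \ceqref{bvflow15/1}.

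For the first identity, I apply $D_t$ to ${}^\gamma\chi_{t,s}=\gamma_t\chi_{t,s}\gamma_s{}^{-1}$. The factor $\gamma_s{}^{-1}$ has no $t$--dependence, so it is inert under $D_t$; the graded Leibniz property of the formal differentiation structure (cf.\ def.\ \cref{def:formdif}), applied to the remaining product $\gamma_t\chi_{t,s}$ of families of degree $0$ morphisms, produces no signs. The result is $D_t{}^\gamma\chi_{t,s}=(D_t\gamma_t)\chi_{t,s}\gamma_s{}^{-1}+\gamma_t(D_t\chi_{t,s})\gamma_s{}^{-1}$. Setting $s=t$ and invoking $\chi_{t,t}=\id_X$ from \ceqref{bvflow3} together with the definition \ceqref{bvflow7} of $\chi^D{}_t$ yields \ceqref{bvflow15}.

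For the second identity, I proceed analogously on \ceqref{bvflow18}, regarding $r_{\chi t,s}$ and $r_{\gamma s}$ as families of elements and $\gamma_t$, ${}^\gamma\chi_{t,s}$ as families of endomorphisms. Graded Leibniz (again trivial in signs, all ingredients being degree $0$) together with the fact that $r_{\gamma s}$ is $t$--independent gives
\begin{equation*}
D_tr_{{}^\gamma\chi t,s}=(D_t\gamma_t)r_{\chi t,s}+\gamma_t(D_tr_{\chi t,s})+D_tr_{\gamma t}-(D_t{}^\gamma\chi_{t,s})r_{\gamma s}.
\end{equation*}
Setting $s=t$, the first summand drops thanks to $r_{\chi t,t}=0$ from \ceqref{bvflow5}, while the remaining terms reassemble, via the definitions \ceqref{bvflow7}, \ceqref{bvflow8} applied to both $\chi$ and ${}^\gamma\chi$, into $\gamma_t r^D{}_{\chi t}+D_tr_{\gamma t}-{}^\gamma\chi^D{}_t r_{\gamma t}$, which is \ceqref{bvflow16}.

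I expect no essential obstacle. The only point requiring care is the bookkeeping of which of the two arguments $t,s$ each factor depends on: here the mixed $s$--dependence of $\chi_{t,s}$ and $r_{\chi t,s}$ is harmless because they are evaluated at $s=t$ only after differentiating in $t$, and the $s$--factors $\gamma_s{}^{-1}$ and $r_{\gamma s}$ are genuinely $t$--independent and drop out under $D_t$. The absence of signs is a consequence of all quantities being of degree $0$, so the graded Leibniz rule collapses to the ungraded one.
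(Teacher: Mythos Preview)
Your proposal is correct and follows essentially the same approach as the paper: differentiate the defining relations \ceqref{bvflow14} and \ceqref{bvflow18} with $D_t$, apply the graded Leibniz rule, set $s=t$, and simplify using \ceqref{bvflow3}, \ceqref{bvflow5} and the definitions \ceqref{bvflow7}, \ceqref{bvflow8}. The paper's proof is the same computation, written out slightly more tersely.
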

 
\begin{proof}
By relation \ceqref{bvflow7} defining $\chi^D{}_t$, 
acting with $D_t$ on both sides of the identity \ceqref{bvflow14}, setting $s=t$ and 
using \ceqref{bvflow3}, we find 
\begin{align}
{}^\gamma\chi^D{}_t&=\big[D_t\gamma_t\chi_{t,s}\gamma_s{}^{-1}
+\gamma_tD_t\chi_{t,s}\gamma_s{}^{-1}\big]\big|_{s=t}
\vphantom{\Big]}
\label{bvflow17}
\\
&=D_t\gamma_t\gamma_t{}^{-1}+\gamma_t\chi^D{}_t\gamma_t{}^{-1}.\hspace{1.5cm}
\vphantom{\Big]}
\nonumber
\end{align}
This shows \ceqref{bvflow15}. 

By relation \ceqref{bvflow8} defining $r^D{}_{{}^\gamma\chi t}$, making 
$D_t$ act on both sides of the identi\-ty \ceqref{bvflow18}, setting $s=t$ and 
using \ceqref{bvflow5} and \ceqref{bvflow7}, we obtain
\begin{align}
r^D{}_{{}^\gamma\chi t}&
=\big[D_t\gamma_tr_{\chi t,s}+\gamma_tD_tr_{\chi t,s}+D_tr_{\gamma t}
-D_t{}^\gamma\chi_{t,s}r_{\gamma s}\big]\big|_{s=t}
\vphantom{\Big]}
\label{bvflow19}
\\
&=\gamma_tr^D{}_{\chi t}+D_tr_{\gamma t}-{}^\gamma\chi^D{}_tr_{\gamma t}.
\vphantom{\Big]}
\nonumber
\end{align}
This shows \ceqref{bvflow16}. 
\end{proof}

From prop. \cref{prop:conjchi}, we obtain immediately the following result. 

\begin{prop} \label{prop:stabchi}
If the canonical map family $\gamma_t$ is a stabilizer of the BV flow $\chi_{t,s}$,
the $D$--infinitesi\-mal generator $\chi^D{}_t$ of $\chi_{t,s}$ satisfies 
\begin{equation}
\chi^D{}_t=D_t\gamma_t\gamma_t{}^{-1}+\gamma_t\chi^D{}_t\gamma_t{}^{-1},
\label{bvflow21}
\end{equation}
while the $D$--infinitesimal logarithmic Jacobian $r^D{}_{\chi t}$ obeys 
\begin{equation}
r^D{}_{\chi t}=\gamma_tr^D{}_{\chi t}+D_tr_{\gamma t}-\chi^D{}_tr_{\gamma t}
\qquad \text{\rm mod $\zz_{|D|}(T,X)$}.
\label{bvflow22}
\end{equation}
\end{prop}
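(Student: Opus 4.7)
The plan is to deduce this proposition as an immediate corollary of proposition \cref{prop:conjchi}, combined with the stabilizer condition of definition \cref{def:stab} and the Jacobian relation derived in proposition \cref{prop:stabrgchits}. The key observation is that the stabilizer hypothesis ${}^\gamma\chi_{t,s}=\chi_{t,s}$ lifts, upon differentiation by $D_t$ and restriction to $s=t$, to the identity ${}^\gamma\chi^D{}_t=\chi^D{}_t$ of $D$-infinitesimal generators, after which the two conjugation formulas of proposition \cref{prop:conjchi} collapse to the claimed fixed-point equations.

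For the first assertion, I would differentiate the stabilizer identity \ceqref{bvflow20} with $D_t$ and evaluate at $s=t$. By the defining relation \ceqref{bvflow7} of $\chi^D{}_t$, this produces ${}^\gamma\chi^D{}_t=\chi^D{}_t$. Substituting into the conjugation formula \ceqref{bvflow15} yields \ceqref{bvflow21} on the nose, with no ambiguity to worry about since the flow itself is uniquely specified.

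For the second assertion, I would begin from the Jacobian relation \ceqref{bvflow10/2} of proposition \cref{prop:stabrgchits}, which asserts that $r_{\chi t,s}$ equals $\gamma_tr_{\chi t,s}+r_{\gamma t}-\chi_{t,s}r_{\gamma s}$ modulo $\zz_0(T^2,X)$. The essential technical step is to verify that the operation $D_t(-)|_{s=t}$ sends $\zz_0(T^2,X)$ into $\zz_{|D|}(T,X)$. Indeed, any family $\rho_{t,s}1_X\in\zz_0(T^2,X)$ satisfies $\rho_{u,s}=\rho_{u,t}+\rho_{t,s}$; setting $u=o$ for a chosen origin $o\in T$ gives $\rho_{t,s}=\sigma(t)-\sigma(s)$ with $\sigma(t)=\rho_{t,o}$, whence $D_t\rho_{t,s}1_X|_{s=t}=D\sigma(t)\,1_X\in\zz_{|D|}(T,X)$. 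Differentiating both sides of \ceqref{bvflow10/2} with $D_t$, setting $s=t$, invoking definition \ceqref{bvflow8} together with the identification ${}^\gamma\chi^D{}_t=\chi^D{}_t$ already obtained, and comparing with formula \ceqref{bvflow16}, then produces \ceqref{bvflow22}.

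The main obstacle is the careful bookkeeping of the central-element ambiguity: $r_{\chi t,s}$ and $r_{{}^\gamma\chi t,s}$ are individually defined only modulo $\zz_0(T^2,X)$, and the canonical choice of $r_{{}^\gamma\chi t,s}$ made in proposition \cref{prop:rgchits} need not coincide on the nose with $r_{\chi t,s}$ even under the stabilizer hypothesis. The observation that $D_t(-)|_{s=t}$ reduces the degree-$0$ ambiguity on $T^2$ to the degree-$|D|$ ambiguity on $T$ is precisely what accounts for the mod $\zz_{|D|}(T,X)$ clause in \ceqref{bvflow22} and explains why the first equation is exact while the second is only modular.
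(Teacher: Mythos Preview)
Your proposal is correct and follows essentially the same approach as the paper: both derive the result directly from proposition \cref{prop:conjchi} combined with the stabilizer condition, and both identify the mod $\zz_{|D|}(T,X)$ clause in \ceqref{bvflow22} as arising from the fact that $D_t(-)|_{s=t}$ maps the $\zz_0(T^2,X)$ ambiguity in \ceqref{bvflow10/2} to a $\zz_{|D|}(T,X)$ ambiguity over $T$. Your treatment of this last point---writing $\rho_{t,s}=\sigma(t)-\sigma(s)$ via the cocycle condition---is slightly more explicit than the paper's, which simply asserts that $D_t\rho_{t,s}|_{s=t}1_X$ lands in $\zz_{|D|}(T,X)$.
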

 
\begin{proof}
Only \ceqref{bvflow22} needs to be commented.
\ceqref{bvflow22} holds only mod $\zz_{|D|}(T,X)$ because \ceqref{bvflow10/2} holds mod 
$\zz_0(T^2,X)$ and the action of $D_t$ on a degree $0$ central logarithmic family 
$\rho_{t,s}1_X$ over $T^2$ at $s=t$ yields the degree $D$ central logarithmic family 
$D_t\rho_{t,s}|_{s=t}1_X$ over $T$.  
\end{proof}

The infinitesimal versions of 
conjugation and stabilizer turn out to be useful.

\begin{defi}
An infinitesimal canonical map family over $T$  (relative to the BV Laplacian family $\varDelta_t$) 
is a family $\epsilon_t:X\rightarrow X$, $t\in T$, of degree $0$ derivations of $X$ over $T$
such that $\epsilon_t:X_t\rightarrow X_t$ is an infinitesimal canonical map for every $t$
(cf. def. \cref{def:infbvcan}). It is further required that the $\mathbb{R}1_X$ 
indeterminacy of the logarithmic Jacobians $e_{\epsilon t}$ 
of the $\epsilon_t$ 
can be fixed in such a way that the collection $e_{\epsilon t}$, $t\in T$, 
is a family of degree $0$ elements of $X$ over $T$.
\end{defi}

\noindent 
$e_{\epsilon t}$ is the logarithmic Jacobian family of the infinitesimal canonical map family $\epsilon_t$
and, as in the finite case, is defined only mod $\zz_0(T,X)$. Again, we assume that a choice of $e_{\epsilon t}$
has been made once and for all below. 

Let $\epsilon_t$ be an infinitesimal canonical map family.

\begin{defi} \label{def:infvarchi}
The infinitesimal $\epsilon$--conjugate of the BV flow $\chi_{t,s}$ is the 
family  $\delta_\epsilon\chi_{t,s}:X\rightarrow X$ of endomorphisms of $X$ over $T^2$ 
defined by 
\begin{equation}
\delta_\epsilon\chi_{t,s}=\epsilon_t\chi_{t,s}-\chi_{t,s}\epsilon_s.
\label{bvflow23}
\end{equation}
\end{defi}

\begin{prop}
$\delta_\epsilon\chi_{t,s}$
is a degree $0$ derivation of $X$ over $\chi_{t,s}$, that is  
\begin{equation}
\delta_\epsilon\chi_{t,s}(fg)
=\delta_\epsilon\chi_{t,s}f\chi_{t,s}g+\chi_{t,s}f\delta_\epsilon\chi_{t,s}g
\label{bvflow24}
\end{equation}
for all $f,g\in X$.
\end{prop}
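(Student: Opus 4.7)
The plan is a direct computation expanding $\delta_\epsilon\chi_{t,s}(fg)$ using definition \ceqref{bvflow23} and exploiting the fact that both $\chi_{t,s}$ and $\epsilon_t,\epsilon_s$ carry appropriate algebraic properties. First, I note that by assumption $\chi_{t,s}$ is a graded algebra automorphism of $X$, so $\chi_{t,s}(fg)=\chi_{t,s}f\,\chi_{t,s}g$, and by the definition of an infinitesimal canonical map family, each $\epsilon_t$ is a degree $0$ derivation of the graded commutative algebra $X$, so $\epsilon_t(uv)=\epsilon_tu\,v+u\,\epsilon_tv$ (no signs since $|\epsilon_t|=0$), and similarly for $\epsilon_s$.

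Applying these two facts in sequence gives
\begin{align*}
\delta_\epsilon\chi_{t,s}(fg)
&=\epsilon_t\chi_{t,s}(fg)-\chi_{t,s}\epsilon_s(fg)
\\
&=\epsilon_t\bigl(\chi_{t,s}f\,\chi_{t,s}g\bigr)-\chi_{t,s}\bigl(\epsilon_sf\,g+f\,\epsilon_sg\bigr)
\\
&=\epsilon_t\chi_{t,s}f\,\chi_{t,s}g+\chi_{t,s}f\,\epsilon_t\chi_{t,s}g
-\chi_{t,s}\epsilon_sf\,\chi_{t,s}g-\chi_{t,s}f\,\chi_{t,s}\epsilon_sg,
\end{align*}
and collecting terms yields exactly \ceqref{bvflow24}. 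The degree count is trivial: since $|\epsilon_t|=|\chi_{t,s}|=0$, both summands $\epsilon_t\chi_{t,s}$ and $\chi_{t,s}\epsilon_s$ are degree $0$ linear endomorphisms of $X$, and so is their difference $\delta_\epsilon\chi_{t,s}$.

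There is essentially no obstacle here: the signs are all trivial because we are working in even degree, and the argument uses nothing more than the derivation property of $\epsilon_t,\epsilon_s$ and the homomorphism property of $\chi_{t,s}$. The only minor bookkeeping issue is to ensure the two occurrences of $\chi_{t,s}$ that arise from expanding the two terms of $\delta_\epsilon\chi_{t,s}(fg)$ combine coherently on the two outer factors of each summand, which the computation above does in one line. No appeal to the BV Laplacian or bracket, to the logarithmic Jacobians $e_{\epsilon t}$, $r_{\chi t,s}$, or to the evolution identity \ceqref{bvflow4} is needed at this stage; those would enter only if one wished to study how $\delta_\epsilon\chi_{t,s}$ interacts with $\varDelta_t$ or $(-,-)_t$, which is the subject of later statements.
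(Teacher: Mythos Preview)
Your proof is correct and follows essentially the same approach as the paper: expand $\delta_\epsilon\chi_{t,s}(fg)$ via its definition, use that $\chi_{t,s}$ is an algebra morphism and that $\epsilon_t,\epsilon_s$ are degree $0$ derivations, and regroup. The paper's argument is line-for-line the same computation.
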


\begin{proof} Let $f,g\in X$. Then, using that $\chi_{t,s}$ is an algebra endomorphism and $\epsilon_t$ 
is a degree $0$ derivation of $X$, we have \pagebreak 
\begin{align}
\delta_\epsilon\chi_{t,s}(fg)&=\epsilon_t(\chi_{t,s}f\chi_{t,s}g)-\chi_{t,s}(\epsilon_sfg+f\epsilon_sg)
\vphantom{\Big]}
\label{bvflow25}
\\
&=\epsilon_t\chi_{t,s}f\chi_{t,s}g+\chi_{t,s}f\epsilon_t\chi_{t,s}g
-\chi_{t,s}\epsilon_sf\chi_{t,s}g-\chi_{t,s}f\chi_{t,s}\epsilon_sg
\vphantom{\Big]}
\nonumber
\\
&=\delta_\epsilon\chi_{t,s}f\chi_{t,s}g+\chi_{t,s}f\delta_\epsilon\chi_{t,s}g,
\vphantom{\Big]}
\nonumber
\end{align}
as claimed.
\end{proof}

\noindent
$\delta_\epsilon$ behaves as a variation operation as appears from 
comparing \ceqref{bvflow4} and the following proposition. 

\begin{prop} \label{prop:dechi}
The relation 
\begin{equation}
\varDelta_t\delta_\epsilon\chi_{t,s}-\delta_\epsilon\chi_{t,s}\varDelta_s
+\ad_tr_{\chi t,s}\delta_\epsilon\chi_{t,s}+\ad_t\delta_\epsilon r_{\chi t,s}\chi_{t,s}=0
\label{bvflow26}
\end{equation}
holds, where $\delta_\epsilon r_{\chi t,s}$ is the family of degree $0$ elements of $X$ over $T^2$  
\begin{equation}
\delta_\epsilon r_{\chi t,s}=\epsilon_tr_{\chi t,s}+e_{\epsilon t}-\chi_{t,s}e_{\epsilon s}.
\label{bvflow27}
\end{equation}
\end{prop}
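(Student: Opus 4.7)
The strategy is a direct computation. I start from the definition \ceqref{bvflow23} of $\delta_\epsilon\chi_{t,s}$ and expand
\[
\varDelta_t\delta_\epsilon\chi_{t,s}-\delta_\epsilon\chi_{t,s}\varDelta_s
=\varDelta_t\epsilon_t\chi_{t,s}-\varDelta_t\chi_{t,s}\epsilon_s
-\epsilon_t\chi_{t,s}\varDelta_s+\chi_{t,s}\epsilon_s\varDelta_s.
\]
The two available commutation relations are \ceqref{bvcanon9} applied at the scales $t$ and $s$, namely $\varDelta_t\epsilon_t-\epsilon_t\varDelta_t=-\ad_te_{\epsilon t}$ and $\epsilon_s\varDelta_s-\varDelta_s\epsilon_s=\ad_se_{\epsilon s}$, and the BV flow identity \ceqref{bvflow4}, namely $\varDelta_t\chi_{t,s}-\chi_{t,s}\varDelta_s=-\ad_tr_{\chi t,s}\chi_{t,s}$.

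Applying the $\epsilon$--relations to the outer Laplacians in the first and fourth term lets me factor the $\chi$--commutator, yielding
\[
\varDelta_t\delta_\epsilon\chi_{t,s}-\delta_\epsilon\chi_{t,s}\varDelta_s
=\epsilon_t(\varDelta_t\chi_{t,s}-\chi_{t,s}\varDelta_s)
-(\varDelta_t\chi_{t,s}-\chi_{t,s}\varDelta_s)\epsilon_s
-\ad_te_{\epsilon t}\chi_{t,s}+\chi_{t,s}\ad_se_{\epsilon s}.
\]
Using \ceqref{bvflow4} in the two bracketed expressions then gives a result built entirely from $\ad$--operators acting on $\chi_{t,s}$, $\epsilon_t\chi_{t,s}$ and $\chi_{t,s}\epsilon_s$.

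The next step is to move $\epsilon_t$ past $\ad_tr_{\chi t,s}$ and to transport $\ad_se_{\epsilon s}$ through $\chi_{t,s}$. The former is done via \ceqref{bvcanon11} (for the degree $0$ infinitesimal canonical map $\epsilon_t$ acting on the bracket $(r_{\chi t,s},-)_t$), giving $\epsilon_t\ad_tr_{\chi t,s}=\ad_t(\epsilon_tr_{\chi t,s})+\ad_tr_{\chi t,s}\,\epsilon_t$. The latter is done via \ceqref{bvcanon4} (canonicity of $\chi_{t,s}$), giving $\chi_{t,s}\ad_se_{\epsilon s}=\ad_t(\chi_{t,s}e_{\epsilon s})\chi_{t,s}$. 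Collecting the resulting terms, the pieces linear in $\ad_tr_{\chi t,s}$ recombine as $-\ad_tr_{\chi t,s}(\epsilon_t\chi_{t,s}-\chi_{t,s}\epsilon_s)=-\ad_tr_{\chi t,s}\,\delta_\epsilon\chi_{t,s}$, while the remaining $\ad$--terms assemble as $-\ad_t(\epsilon_tr_{\chi t,s}+e_{\epsilon t}-\chi_{t,s}e_{\epsilon s})\chi_{t,s}=-\ad_t\delta_\epsilon r_{\chi t,s}\,\chi_{t,s}$ by the definition \ceqref{bvflow27}. Rearranging yields the claimed identity \ceqref{bvflow26}.

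The computation is essentially mechanical; the only delicate point is the sign bookkeeping when commuting $\varDelta_t$ through $\epsilon_t$ and $\chi_{t,s}$, but since $|\epsilon_t|=|\chi_{t,s}|=0$ and $|r_{\chi t,s}|=|e_{\epsilon t}|=0$, no nontrivial signs are generated. The conceptual heart of the proof is the observation that $\delta_\epsilon$ behaves as a variation, so \ceqref{bvflow26} is essentially the first order variation of \ceqref{bvflow4} under an infinitesimal conjugation by $\id_X+\epsilon_t$, in agreement with the interpretation of \ceqref{bvflow27} as the linearization of \ceqref{bvflow14/2}.
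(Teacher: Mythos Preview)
Your proof is correct and follows essentially the same approach as the paper's. Both arguments are direct computations using the infinitesimal canonical relation \ceqref{bvcanon9} for $\epsilon_t$ and $\epsilon_s$, the flow identity \ceqref{bvflow4}, bracket preservation \ceqref{bvcanon4} for $\chi_{t,s}$, and the Leibniz property \ceqref{bvcanon11} for $\epsilon_t$; the paper carries the calculation on a test element $f\in X$ whereas you work at the operator level, but the sequence of manipulations and the final regrouping into $-\ad_tr_{\chi t,s}\delta_\epsilon\chi_{t,s}-\ad_t\delta_\epsilon r_{\chi t,s}\chi_{t,s}$ are identical.
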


\begin{proof}
Since $\chi_{t,s}$ is canonical, \ceqref{bvflow4} holds. Furthermore, as 
$\epsilon_t$ is infinitesimal canonical, we have \hphantom{xxxxxxxxxxxxxxxxx}
\begin{equation}
\varDelta_t\epsilon_t-\epsilon_t\varDelta_t+\ad_te_{\epsilon t}=0
\label{bvflow28}
\end{equation}
by \ceqref{bvcanon9}. Then, by \ceqref{bvflow24} and \ceqref{bvcanon4}, \ceqref{bvcanon11}, 
\begin{align}
\varDelta_t\delta_\epsilon\chi_{t,s}f
&=\varDelta_t\epsilon_t\chi_{t,s}f-\varDelta_t\chi_{t,s}\epsilon_sf
\vphantom{\Big]}
\label{bvflow29}
\\
&=\epsilon_t\varDelta_t\chi_{t,s}f-(e_{\epsilon t},\chi_{t,s}f)_t
-\chi_{t,s}\varDelta_s\epsilon_sf+(r_{\chi t,s},\chi_{t,s}\epsilon_sf)_t
\vphantom{\Big]}
\nonumber
\\
&=\epsilon_t\chi_{t,s}\varDelta_sf-\epsilon_t(r_{\chi t,s},\chi_{t,s}f)_t-(e_{\epsilon t},\chi_{t,s}f)_t
\vphantom{\Big]}
\nonumber
\\
&\hspace{1.5cm}-\chi_{t,s}\epsilon_s\varDelta_sf+\chi_{t,s}(e_{\epsilon s},f)_s+(r_{\chi t,s},\chi_{t,s}\epsilon_sf)_t
\vphantom{\Big]}
\nonumber
\\
&=\delta_\epsilon\chi_{t,s}\varDelta_sf-(r_{\chi t,s},\delta_\epsilon\chi_{t,s}f)_t
-(\delta_\epsilon r_{\chi t,s},\chi_{t,s}f)_t
\vphantom{\Big]}
\nonumber
\end{align}
for $f\in X$. This shows the proposition.  
\end{proof}

\noindent
We note that \ceqref{bvflow27} is compatible with \ceqref{bvflow1/1} in the sense that the  identity 
\begin{equation}
\delta_\epsilon r_{\chi u,s}=\delta_\epsilon r_{\chi u,t}+\chi_{u,t}\delta_\epsilon r_{\chi t,s}+\delta_\epsilon\chi_{u,t}r_{\chi t,s}
\label{bvflow34/0}
\end{equation}
holds, as is straightforward to verify. We also observe that \ceqref{bvflow27} may be altered as usual 
by adding to the right hand side an arbitrary degree $0$ central logarithmic family of $\zz_0(T^2,X)$. 
The modification preserves \ceqref{bvflow34/0}. In what follows, we conform to the choice \ceqref{bvflow27}.  

The infinitesimal stabilizers of a BV flow encode its infinitesimal symmetry. 

\begin{defi} \label{def:infstab}
The infinitesimal canonical map family $\epsilon_t$ is an infinitesimal stabilizer 
of the BV flow $\chi_{t,s}$ if \hphantom{xxxxxxxxxxxxxxxxxx}
\begin{equation}
\delta_\epsilon\chi_{t,s}=0. 
\label{bvflow34}
\end{equation}
\end{defi}

From def. \cref{def:infvarchi} and prop. \cref{prop:dechi}, we obtain immediately the following result.

\begin{prop} \label{prop:infstabrgchits}
Let the infinitesimal canonical map family $\epsilon_t$ be an infinitesimal  stabilizer of the BV flow $\chi_{t,s}$. 
Then, the BV flow $\chi_{t,s}$ satisfies the identity 
\begin{equation}
\epsilon_t\chi_{t,s}-\chi_{t,s}\epsilon_s=0, 
\label{bvflow34/1}
\end{equation}
while the logarithmic Jacobian $r_{\chi t,s}$ of $\chi_{t,s}$ obeys 
\begin{equation}
\epsilon_tr_{\chi t,s}+e_{\epsilon t}-\chi_{t,s}e_{\epsilon s}=0 \qquad \text{\rm mod $\zz_0(T^2,X)$}.
\label{bvflow34/2}
\end{equation}
\end{prop}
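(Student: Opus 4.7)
The plan is to derive both identities directly from the hypothesis $\delta_\epsilon\chi_{t,s}=0$ by feeding it into the structural results already in hand, namely Definition \cref{def:infvarchi} and Proposition \cref{prop:dechi}.

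The first identity \ceqref{bvflow34/1} is immediate: by \ceqref{bvflow23}, $\delta_\epsilon\chi_{t,s}=\epsilon_t\chi_{t,s}-\chi_{t,s}\epsilon_s$, so the stabilizer condition \ceqref{bvflow34} asserts precisely that $\epsilon_t\chi_{t,s}-\chi_{t,s}\epsilon_s=0$.

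For the second identity \ceqref{bvflow34/2}, I would substitute $\delta_\epsilon\chi_{t,s}=0$ into relation \ceqref{bvflow26}. All terms involving $\delta_\epsilon\chi_{t,s}$ drop out and one is left with $\ad_t\delta_\epsilon r_{\chi t,s}\,\chi_{t,s}=0$. Since $\chi_{t,s}$ is an algebra automorphism of $X$, it is in particular surjective, so the right factor can be cancelled to yield $\ad_t\delta_\epsilon r_{\chi t,s}=0$. In other words, $\delta_\epsilon r_{\chi t,s}$ lies in the BV center $C_{X_t}$ for each $t,s$. By the assumed non singularity of $\varDelta_t$ (cf. def. \cref{def:bvnonsing}), $C_{X_t}=\mathbb{R}1_X$, so that for each $t,s$ there is a scalar $\rho_{t,s}$ with $\delta_\epsilon r_{\chi t,s}=\rho_{t,s}1_X$. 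Recalling the explicit form \ceqref{bvflow27} of $\delta_\epsilon r_{\chi t,s}$, this is exactly the congruence \ceqref{bvflow34/2} claimed, provided that the collection $\rho_{t,s}1_X$ is actually an element of $\zz_0(T^2,X)$.

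The main obstacle is therefore verifying the $T^2$--cocycle property \ceqref{bvflow/00} for $\rho$, which upgrades the pointwise statement to membership in $\zz_0(T^2,X)$. I would exploit the compatibility identity \ceqref{bvflow34/0}: applied with the vanishing $\delta_\epsilon\chi_{u,t}=0$ and the already established $\delta_\epsilon r_{\chi \cdot,\cdot}=\rho_{\cdot,\cdot}1_X$, and using that $\chi_{u,t}1_X=1_X$ since $\chi_{u,t}$ is a unital algebra morphism, one obtains $\rho_{u,s}1_X=\rho_{u,t}1_X+\rho_{t,s}1_X$, which is precisely \ceqref{bvflow/00}. Together with the degree bookkeeping $|\delta_\epsilon r_{\chi t,s}|=0$ inherited from $|r_{\chi t,s}|=0$ and $|\epsilon_t|=0$, this certifies $\delta_\epsilon r_{\chi t,s}\in\zz_0(T^2,X)$ and completes the proof.
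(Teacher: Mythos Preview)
Your proof is correct and follows the same structural route as the paper: both identities are read off from Definition~\cref{def:infvarchi} and Proposition~\cref{prop:dechi}. For \ceqref{bvflow34/1} the arguments are identical.

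For \ceqref{bvflow34/2} there is a small difference in emphasis worth noting. The paper justifies the mod~$\zz_0(T^2,X)$ qualification by analogy with the finite case \ceqref{bvflow10/2}, where it arises because the logarithmic Jacobian families $r_{\chi t,s}$ and $r_{{}^\gamma\chi t,s}$ have their $\zz_0(T^2,X)$ indeterminacy fixed independently. You instead derive it directly: substituting $\delta_\epsilon\chi_{t,s}=0$ into \ceqref{bvflow26} forces $\ad_t\delta_\epsilon r_{\chi t,s}=0$, non singularity then gives $\delta_\epsilon r_{\chi t,s}\in\mathbb{R}1_X$ pointwise, and the compatibility identity \ceqref{bvflow34/0} yields the cocycle condition \ceqref{bvflow/00}. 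Your route is self-contained and makes explicit what the paper leaves to analogy; the paper's formulation has the advantage of highlighting that the ambiguity is structural (a freedom in the definition of $\delta_\epsilon r_{\chi t,s}$) rather than merely a computational residue.
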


\noindent
\ceqref{bvflow34/2} holds mod $\zz_0(T^2,X)$ for reasons analogous to those why 
\ceqref{bvflow10/2} does

The variation operator $\delta$ commutes with the action of vector fields of $T$, as we show next. 

Let $D\in\Vect(M)$ be a vector field. 

\begin{prop} \label{prop:infdchidrd}
Under infinitesimal $\epsilon$--conjugation, the variation of the $D$--in\-finitesimal generator 
$\chi^D{}_t$ of the BV flow $\chi_{t,s}$ is given by  
\begin{equation}
\delta_\epsilon\chi^D{}_t=D_t\epsilon_t-[\chi^D{}_t,\epsilon_t].
\label{bvflow30}
\end{equation}
Moreover, the variation of the $D$--infinitesimal logarithmic Jacobian reads as
\begin{equation}
\delta_\epsilon r^D{}_{\chi t}=\epsilon_tr^D{}_{\chi t}+D_te_{\epsilon t}-\chi^D{}_te_{\epsilon t}.
\label{bvflow31}
\end{equation}
\end{prop}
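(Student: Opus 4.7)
The plan is to mimic the pattern used to establish propositions \cref{prop:conjchi} and \cref{prop:dechi}: take the defining identities for $\delta_\epsilon\chi_{t,s}$ and $\delta_\epsilon r_{\chi t,s}$ at the two-parameter level, differentiate them with $D_t$, restrict to the diagonal $s=t$, and apply the boundary conditions $\chi_{s,s}=\id_X$ and $r_{\chi s,s}=0$ from \ceqref{bvflow3} and \ceqref{bvflow5} together with the defining formulas \ceqref{bvflow7} and \ceqref{bvflow8} for $\chi^D{}_t$ and $r^D{}_{\chi t}$. Throughout, the variation operator $\delta_\epsilon$ commutes with $D_t$ in the natural sense that $\delta_\epsilon\chi^D{}_t=D_t\,\delta_\epsilon\chi_{t,s}|_{s=t}$ and $\delta_\epsilon r^D{}_{\chi t}=D_t\,\delta_\epsilon r_{\chi t,s}|_{s=t}$ because $\epsilon_t$ does not involve the second time argument.

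To derive the first identity, I would start from definition \ceqref{bvflow23},
\begin{equation*}
\delta_\epsilon\chi_{t,s}=\epsilon_t\chi_{t,s}-\chi_{t,s}\epsilon_s,
\end{equation*}
and act with $D_t$. Since $\epsilon_s$ carries no $t$-dependence, the second term yields only $(D_t\chi_{t,s})\epsilon_s$, while the first term gives $(D_t\epsilon_t)\chi_{t,s}+\epsilon_tD_t\chi_{t,s}$ by the Leibniz rule \ceqref{nbvfam3} (noting $|\epsilon_t|=0$). Setting $s=t$ and invoking $\chi_{t,t}=\id_X$ and $\chi^D{}_t=D_t\chi_{t,s}|_{s=t}$ collapses the right hand side to $D_t\epsilon_t+\epsilon_t\chi^D{}_t-\chi^D{}_t\epsilon_t$, which is precisely $D_t\epsilon_t-[\chi^D{}_t,\epsilon_t]$, yielding \ceqref{bvflow30}.

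For the second identity, I would apply the same procedure to \ceqref{bvflow27},
\begin{equation*}
\delta_\epsilon r_{\chi t,s}=\epsilon_tr_{\chi t,s}+e_{\epsilon t}-\chi_{t,s}e_{\epsilon s}.
\end{equation*}
Differentiating with $D_t$ produces $(D_t\epsilon_t)r_{\chi t,s}+\epsilon_tD_tr_{\chi t,s}+D_te_{\epsilon t}-(D_t\chi_{t,s})e_{\epsilon s}$; the term $e_{\epsilon s}$ produces no contribution upon being acted on by $D_t$ because it depends only on $s$. Restricting to $s=t$ kills the first summand thanks to $r_{\chi t,t}=0$, and the definitions $r^D{}_{\chi t}=D_tr_{\chi t,s}|_{s=t}$ and $\chi^D{}_t=D_t\chi_{t,s}|_{s=t}$ give the desired formula $\delta_\epsilon r^D{}_{\chi t}=\epsilon_tr^D{}_{\chi t}+D_te_{\epsilon t}-\chi^D{}_te_{\epsilon t}$, i.e.\ \ceqref{bvflow31}.

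No deep obstacle is expected: the computation is mechanical, and the only subtlety is bookkeeping the $(-1)^{|D|\cdot|\epsilon_t|}$-type signs arising from the graded Leibniz rule, which collapse because $|\epsilon_t|=0$. The one point to flag is that equation \ceqref{bvflow31} holds as an honest equality of families (not merely mod $\zz_{|D|}(T,X)$) only because we have fixed the central indeterminacies of $r_{\chi t,s}$ and $e_{\epsilon t}$ once and for all; otherwise the derivation carries through verbatim modulo the relevant degree-$|D|$ central logarithmic family, in parallel with the modular caveat attached to \ceqref{bvflow22}.
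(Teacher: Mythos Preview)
Your proposal is correct and follows essentially the same approach as the paper's proof: differentiate \ceqref{bvflow23} and \ceqref{bvflow27} with $D_t$, apply the graded Leibniz rule, restrict to $s=t$, and invoke \ceqref{bvflow3}, \ceqref{bvflow5}, \ceqref{bvflow7}, \ceqref{bvflow8}. Your remark about the central indeterminacy is a welcome clarification; the paper treats \ceqref{bvflow31} as an exact equality (the indeterminacies of $r_{\chi t,s}$ and $e_{\epsilon t}$ having been fixed) and only introduces the mod $\zz_{|D|}(T,X)$ caveat in the subsequent stabilizer corollary.
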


\begin{proof} 
By the defining relation \ceqref{bvflow7}, acting with $D_t$ on both sides 
of the identity \ceqref{bvflow23}, setting $s=t$ and using \ceqref{bvflow3}, we find 
\begin{align}
\delta_\epsilon\chi^D{}_t
&=D_t\delta_\epsilon\chi_{t,s}\big|_{s=t}
\vphantom{\Big]}
\label{bvflow32}
\\
&=\big[D_t\epsilon_t\chi_{t,s}+\epsilon_tD_t\chi_{t,s}-D_t\chi_{t,s}\epsilon_s\big]\big|_{s=t}
\vphantom{\Big]}
\nonumber
\\
&=D_t\epsilon_t+\epsilon_t\chi^D{}_t-\chi^D{}_t\epsilon_t.
\vphantom{\Big]}
\nonumber
\end{align}
This shows \ceqref{bvflow30}. 

By the defining relation \ceqref{bvflow8}, acting with $D_t$ on both sides of 
the identity \ceqref{bvflow27}, setting $s=t$ and using \ceqref{bvflow5}, we obtain
\begin{align}
\delta_\epsilon r^D{}_{\chi t}
&=D_t\delta_\epsilon r_{\chi t,s}\big|_{s=t}
\vphantom{\Big]}
\label{bvflow33}
\\
&=\big[D_t\epsilon_tr_{\chi t,s}+\epsilon_tD_tr_{\chi t,s}+D_te_{\epsilon t}
-D_t\chi_{t,s}e_{\epsilon s}\big]\big|_{s=t}
\vphantom{\Big]}
\nonumber
\\
&=\epsilon_tr^D{}_{\chi t}+D_te_{\epsilon t}-\chi^D{}_te_{\epsilon t}. \hspace{3cm}
\vphantom{\Big]}
\nonumber
\end{align}
\ceqref{bvflow31} is so proven. 
\end{proof}

From prop. \cref{prop:infdchidrd}, we obtain immediately the following result. 

\begin{prop} \label{prop:infstabchi}
If the infinitesimal canonical map family $\epsilon_t$ is an infinitesimal stabilizer of the BV flow $\chi_{t,s}$, 
the $D$--infinitesimal generator $\chi^D{}_t$ of $\chi_{t,s}$ satisfies 
\begin{equation}
D_t\epsilon_t-[\chi^D{}_t,\epsilon_t]=0, 
\label{bvflow35}
\end{equation}
while the $D$--infinitesimal logarithmic Jacobian correspondingly obeys 
\begin{equation}
\epsilon_tr^D{}_{\chi t}+D_te_{\epsilon t}-\chi^D{}_te_{\epsilon t}=0
\qquad \text{\rm mod $\zz_{|D|}(T,X)$}. 
\label{bvflow36}
\end{equation}
\end{prop}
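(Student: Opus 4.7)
The proof is essentially a direct corollary of Proposition \cref{prop:infdchidrd} combined with the infinitesimal stabilizer condition of Definition \cref{def:infstab} and the identities recorded in Proposition \cref{prop:infstabrgchits}. The plan is to specialize the general variation formulas \ceqref{bvflow30} and \ceqref{bvflow31} to the case $\delta_\epsilon\chi_{t,s}=0$ and then to track carefully how the $\zz_0(T^2,X)$ ambiguity in the stabilizer condition on the Jacobian propagates under $D$--differentiation along the diagonal.

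For the first equation \ceqref{bvflow35}, I would observe that since $\epsilon_t$ is an infinitesimal stabilizer, $\delta_\epsilon\chi_{t,s}=0$ identically in $s,t\in T$ by Definition \cref{def:infstab}. Applying $D_t$ and restricting to $s=t$ then yields $\delta_\epsilon\chi^D{}_t=0$ via the defining relation \ceqref{bvflow7} of $\chi^D{}_t$ (exactly as in the first line of \ceqref{bvflow32}). Inserting this vanishing into \ceqref{bvflow30} of Proposition \cref{prop:infdchidrd} immediately gives $D_t\epsilon_t-[\chi^D{}_t,\epsilon_t]=0$.

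For the second equation \ceqref{bvflow36}, the subtlety is the modular ambiguity. By Proposition \cref{prop:infstabrgchits}, the stabilizer hypothesis forces $\delta_\epsilon r_{\chi t,s}=0$ only modulo $\zz_0(T^2,X)$; that is, one can only conclude that $\delta_\epsilon r_{\chi t,s}=\rho_{t,s}1_X$ for some degree $0$ function $\rho_{t,s}$ on $T^2$ satisfying the cocycle relation $\rho_{u,s}=\rho_{u,t}+\rho_{t,s}$ of Definition \cref{def:cenfam}. Applying $D_t$ to this identity and setting $s=t$, and using the definition \ceqref{bvflow8} of $r^D{}_{\chi t}$, one obtains $\delta_\epsilon r^D{}_{\chi t}=(D_t\rho_{t,s})|_{s=t}\,1_X$. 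The cocycle condition on $\rho_{t,s}$ translates, after differentiating with respect to $t$ at $s=t$, into the corresponding degenerate cocycle property of $(D_t\rho_{t,s})|_{s=t}$ on $T$, so that in view of Definition \cref{def:cenfam1} the right hand side lies in $\zz_{|D|}(T,X)$. Combined with \ceqref{bvflow31}, this yields \ceqref{bvflow36}.

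The main point requiring care is this last bookkeeping step: verifying that the $\zz_0(T^2,X)$ indeterminacy in the Jacobian stabilizer identity is transported correctly to $\zz_{|D|}(T,X)$ under the $D$--infinitesimal construction. Once the behaviour of central logarithmic families under restriction to the diagonal is accounted for, both assertions reduce to straightforward specialization of Proposition \cref{prop:infdchidrd} to the stabilizer case.
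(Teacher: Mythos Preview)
Your proof is correct and follows essentially the same route as the paper: both \ceqref{bvflow35} and \ceqref{bvflow36} are obtained by specializing Proposition \cref{prop:infdchidrd} to the stabilizer case $\delta_\epsilon\chi_{t,s}=0$, with the modular $\zz_{|D|}(T,X)$ term arising from $D$--differentiating the $\zz_0(T^2,X)$ ambiguity of Proposition \cref{prop:infstabrgchits} along the diagonal. One small remark: no ``degenerate cocycle property'' needs to be checked for $(D_t\rho_{t,s})|_{s=t}$, since by Definition \cref{def:cenfam1} membership in $\zz_{|D|}(T,X)$ requires only that the family be of the form $\sigma_t 1_X$ for a degree $|D|$ function $\sigma$ on $T$, which is automatic.
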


\noindent
\ceqref{bvflow22} holds only mod $\zz_{|D|}(T,X)$ for the same reasons indicated in 
the proof of relation \ceqref{bvflow22}.

The above analysis applies in particular for a family of infinitesimal canonical maps of the adjoint 
form $\ad_tx_t$, where $x_t$ is a family of degree $-1$ elements $X$ over $T$. 

\begin{prop} \label{prop:adcarchi}
Under infinitesimal $\ad x$--conjugation, \pagebreak the variation of the BV flow $\chi_{t,s}$ is given by
\hphantom{xxxxxxxxxx}
\begin{equation}
\delta_{\ad x}\chi_{t,s}=\ad_t(x_t-\chi_{t,s}x_s)\chi_{t,s}. \vphantom{\bigg]}
\label{bvflow37}
\end{equation}
\end{prop}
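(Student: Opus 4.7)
My plan is to unfold the definition of $\delta_\epsilon \chi_{t,s}$ from def. \cref{def:infvarchi} with $\epsilon_t = \ad_t x_t$, and then use the canonicity of $\chi_{t,s}$ to commute $\chi_{t,s}$ past $\ad_s x_s$. Concretely, def. \cref{def:infvarchi} gives
\begin{equation}
\delta_{\ad x}\chi_{t,s} = \ad_t x_t \cdot \chi_{t,s} - \chi_{t,s}\cdot \ad_s x_s,
\notag
\end{equation}
so the whole task is to rewrite the second summand in the form $\ad_t(\chi_{t,s} x_s)\chi_{t,s}$.

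For that step I would apply prop. \cref{prop:canmapbvbrac} to $\chi_{t,s}$, which yields $\chi_{t,s}(f,g)_s = (\chi_{t,s} f, \chi_{t,s} g)_t$ for all $f,g\in X$. Setting $f = x_s$ and letting $g$ be arbitrary, this gives $\chi_{t,s}(\ad_s x_s\, g) = \ad_t(\chi_{t,s} x_s)(\chi_{t,s} g)$, i.e.\ the operator identity $\chi_{t,s}\cdot \ad_s x_s = \ad_t(\chi_{t,s} x_s)\cdot \chi_{t,s}$. Substituting back and using linearity of $\ad_t$ in its first slot produces $\delta_{\ad x}\chi_{t,s} = \ad_t(x_t)\chi_{t,s} - \ad_t(\chi_{t,s} x_s)\chi_{t,s} = \ad_t(x_t - \chi_{t,s} x_s)\chi_{t,s}$, as required.

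Since every ingredient is already available — def. \cref{def:infvarchi} for the variation, prop. \cref{prop:canmapbvbrac} for bracket preservation, and the elementary fact that $\ad_t$ is linear — there is no real obstacle: the only thing to be careful about is that $|x_s| = -1$ so that $\ad_s x_s$ has degree $0$ and no sign factors are incurred when passing $\chi_{t,s}$ across it. I would conclude in a single short display chain combining the three steps above.
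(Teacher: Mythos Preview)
Your proposal is correct and follows essentially the same route as the paper's proof: expand $\delta_{\ad x}\chi_{t,s}$ via def.~\cref{def:infvarchi}, then use the bracket-preservation property \ceqref{bvcanon4} (equivalently \ceqref{bvflow14/1}) to rewrite $\chi_{t,s}(x_s,f)_s$ as $(\chi_{t,s}x_s,\chi_{t,s}f)_t$, and combine. The paper presents this as a two-line computation applied to a test element $f\in X$; your version is the same argument phrased as an operator identity.
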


\begin{proof}
By \ceqref{bvflow23}, we have 
\begin{align}
\delta_{\ad x}\chi_{t,s}f&=(x_t,\chi_{t,s}f)-\chi_{t,s}(x_s,f)_s
\vphantom{\Big]}
\label{bvflow38}
\\
&=(x_t-\chi_{t,s}x_s,\chi_{t,s}f)_t,
\vphantom{\Big]}
\nonumber
\end{align}
where $f\in X$, showing \ceqref{bvflow37}.
\end{proof}

\begin{prop}
Under infinitesimal $\ad x$--conjugation, the variation of the $D$--infinitesimal generator 
$\chi^D{}_t$ is 
\begin{equation}
\delta_{\ad x}\chi^D{}_t=\ad_t(D_tx_t-\chi^D{}_tx_t), \vphantom{\bigg]}
\label{bvflow39}
\end{equation}
while that of the $D$--infinitesimal logarithmic Jacobian reads as 
\begin{equation}
\delta_{\ad x}r^D{}_{\chi t}=-(-1)^{|D|}\varDelta_t(D_tx_t-\chi^D{}_tx_t). \vphantom{\bigg]}
\label{bvflow40}
\end{equation}
\end{prop}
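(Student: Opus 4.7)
The plan is to specialize Proposition \cref{prop:infdchidrd} to the infinitesimal canonical map family $\epsilon_t = \ad_t x_t$, whose logarithmic Jacobian can be taken as $e_{\ad_t x_t} = -\varDelta_t x_t$ by \ceqref{bvcanon14}. Formulas \ceqref{bvflow30} and \ceqref{bvflow31} then directly supply expressions for $\delta_{\ad x}\chi^D{}_t$ and $\delta_{\ad x} r^D{}_{\chi t}$ that must be massaged into the adjoint form \ceqref{bvflow39} and the $\varDelta_t$--of--difference form \ceqref{bvflow40}.

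For \ceqref{bvflow39}, I would evaluate $\delta_{\ad x}\chi^D{}_t = D_t(\ad_t x_t) - [\chi^D{}_t, \ad_t x_t]$ on an arbitrary constant $g \in X$. The first term $D_t(\ad_t x_t)(g) = D_t(x_t, g)_t$ expands, by the Leibniz rule \ceqref{nbvfam4} applied to the bracket family $(-,-)_t$, into a contribution involving the $D$--derived bracket $(x_t, g)^D{}_t$ together with one involving $(D_t x_t, g)_t$. The commutator $[\chi^D{}_t, \ad_t x_t](g) = \chi^D{}_t(x_t, g)_t - (x_t, \chi^D{}_t g)_t$ is then reduced via \ceqref{bvflow15/1} with $f = x_t$; since $|x_t|+1 = 0$ the middle Koszul sign $(-1)^{|D|(|x_t|+1)}$ equals $1$, so the term $(x_t, \chi^D{}_t g)_t$ cancels, leaving $(\chi^D{}_t x_t, g)_t$ together with a second occurrence of $(x_t, g)^D{}_t$. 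On taking the difference the two derived--bracket contributions cancel, collapsing everything to $\ad_t(D_t x_t - \chi^D{}_t x_t)g$ as claimed.

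For \ceqref{bvflow40}, substituting $e_{\ad_t x_t} = -\varDelta_t x_t$ into \ceqref{bvflow31} gives $\delta_{\ad x} r^D{}_{\chi t} = (x_t, r^D{}_{\chi t})_t - D_t(\varDelta_t x_t) + \chi^D{}_t(\varDelta_t x_t)$. Two rewrites are needed. First, the Leibniz rule \ceqref{nbvfam3} for $D_t$ on the family $\varDelta_t$ evaluated on $x_t$, combined with \ceqref{bvfam0}, yields $D_t(\varDelta_t x_t) = \varDelta^D{}_t x_t + (-1)^{|D|}\varDelta_t(D_t x_t)$. Second, the $D$--evolution equation \ceqref{bvflow10} rearranged as $[\chi^D{}_t, \varDelta_t] = \varDelta^D{}_t + \ad_t r^D{}_{\chi t}$ gives $\chi^D{}_t(\varDelta_t x_t) = (-1)^{|D|}\varDelta_t(\chi^D{}_t x_t) + \varDelta^D{}_t x_t + (r^D{}_{\chi t}, x_t)_t$. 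Upon substitution, the $\varDelta^D{}_t x_t$ contributions cancel, and the residual bracket terms $(x_t, r^D{}_{\chi t})_t$ and $(r^D{}_{\chi t}, x_t)_t$ also cancel by the shifted graded antisymmetry \ceqref{bvalg7}, which with $|x_t|+1 = 0$ reads $(x_t, r^D{}_{\chi t})_t + (r^D{}_{\chi t}, x_t)_t = 0$. What survives is exactly $-(-1)^{|D|}\varDelta_t(D_t x_t - \chi^D{}_t x_t)$.

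The main obstacle I anticipate is the careful bookkeeping of Koszul signs arising from the Leibniz formulas \ceqref{nbvfam3} and \ceqref{nbvfam4} and from \ceqref{bvflow15/1}, particularly because the various objects carry non-trivial degrees ($|D|$, $|x_t| = -1$, $|(-,-)_t| = 1$, $|\varDelta_t| = 1$) and the derived bracket and Laplacian pieces must cancel exactly. Once the signs are consistently organized, each remaining step is a direct application of a previously established identity.
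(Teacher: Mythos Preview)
Your proposal is correct and follows essentially the same approach as the paper's proof: both specialize \ceqref{bvflow30} and \ceqref{bvflow31} to $\epsilon_t=\ad_tx_t$ with $e_{\epsilon t}=-\varDelta_tx_t$, then use the Leibniz rules together with \ceqref{bvflow15/1} for the first identity and \ceqref{bvflow10} for the second, letting the derived--bracket/Laplacian and bracket--of--$r^D$ terms cancel. The only difference is presentational; the paper records the computations as displayed chains of equalities rather than prose.
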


\begin{proof}
By \ceqref{bvflow30} with $\epsilon_t=\ad_tx_t$, \ceqref{bvfam6} and relation \ceqref{bvflow15/1}, we have  
\begin{align}
\delta_{\ad x}\chi^D{}_tf
&=D_t(x_t,f)_t-\chi^D{}_t(x_t,f)_t+(x_t,\chi^D{}_tf)_t
\vphantom{\Big]}
\label{bvflow41}
\\
&=(D_tx_t,f)_t+D_t(x_u,f)_t\big|_{u=t}-(\chi^D{}_tx_t,f)_t
\vphantom{\Big]}
\nonumber
\\
&\hspace{2cm}-(x_t,\chi^D{}_tf)_t-(-1)^{|D|}(x_t,f)^D{}_t+(x_t,\chi^D{}_tf)_t
\vphantom{\Big]}
\nonumber
\\
&=(D_tx_t-\chi^D{}_tx_t,f)_t
\vphantom{\Big]}
\nonumber
\end{align}
for $f\in X$. This shows \ceqref{bvflow41}. 

From \ceqref{bvcanon14}, it follows that $e_{\ad x t}=-\varDelta_tx_t$. 
By \ceqref{bvflow31} with $\epsilon_t=\ad_tx_t$, the evolution equation \ceqref{bvflow10} 
and the above relation, we have 
\begin{align}
\delta_{\ad x}r^D{}_{\chi t}
&=(x_t,r^D{}_{\chi t})_t-D_t\varDelta_tx_t-(-1)^{|D|}\varDelta_tD_tx_t+\chi^D{}_t\varDelta_tx_t
\vphantom{\Big]}
\label{bvflow42}
\\
&=-(r^D{}_{\chi t},x_t)_t-\varDelta^D{}_tx_t+\chi^D{}_t\varDelta_tx_t-(-1)^{|D|}\varDelta_tD_tx_t
\vphantom{\Big]}
\nonumber
\\
&=-(-1)^{|D|}\varDelta_t(D_tx_t-\chi^D{}_tx_t).
\vphantom{\Big]}
\nonumber
\end{align}
\vspace{-.9cm}\eject\noindent
This shows \ceqref{bvflow40}. 
\end{proof}

Prop. \cref{prop:adcarchi} immediately leads to the following. 

\begin{prop}
If $\ad_tx_t$ is an infinitesimal stabilizer of the BV flow $\chi_{t,s}$, then 
\begin{equation}
x_t-\chi_{t,s}x_s=0 \qquad \text{\rm mod $\zz_{-1}(T,X)$}.
\label{bvflow43}
\end{equation}
\end{prop}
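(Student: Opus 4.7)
The plan is to combine the stabilizer hypothesis with the explicit formula for $\delta_{\ad x}\chi_{t,s}$ derived in prop.~\cref{prop:adcarchi} and then exploit the non singularity assumption on the BV Laplacian family $\varDelta_t$ to pin down the resulting central element. Concretely, by def.~\cref{def:infstab}, the hypothesis that $\ad_t x_t$ is an infinitesimal stabilizer of $\chi_{t,s}$ reads $\delta_{\ad x}\chi_{t,s}=0$. On the other hand, prop.~\cref{prop:adcarchi} gives the identity
\begin{equation*}
\delta_{\ad x}\chi_{t,s}=\ad_t(x_t-\chi_{t,s}x_s)\,\chi_{t,s}.
\end{equation*}

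Combining these two equations, I obtain $\ad_t(x_t-\chi_{t,s}x_s)\,\chi_{t,s}=0$. Since $\chi_{t,s}$ is a graded algebra automorphism of $X$, it is in particular invertible, so I may cancel it on the right to conclude that $\ad_t(x_t-\chi_{t,s}x_s)=0$ as a derivation of $X$. Unwinding the definition of $\ad_t$, this says precisely that $x_t-\chi_{t,s}x_s$ lies in the BV center $C_{X_t}$ of the BV algebra $X_t$.

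Now I invoke the standing assumption that $\varDelta_t$ is a non singular BV Laplacian family (cf.\ def.~\cref{def:tfamD}): for each $t$, non singularity of $\varDelta_t$ (def.~\cref{def:bvnonsing}) gives $C_{X_t}=\mathbb{R}1_X$. Hence for each $(t,s)\in T^2$ there is a scalar $\rho_{t,s}\in\mathbb{R}$, of total degree $-1$ (matching $|x_t|=|x_s|=-1$), such that $x_t-\chi_{t,s}x_s=\rho_{t,s}1_X$. Since $x_t$ and $\chi_{t,s}x_s$ are families of elements of $X$ over $T$ and $T^2$ respectively within the chosen formal differentiation structure, the family $\rho_{t,s}1_X$ lies in the relevant central logarithmic family space, so that $x_t-\chi_{t,s}x_s\equiv 0\ \text{mod}\ \zz_{-1}(T,X)$, which is relation \ceqref{bvflow43}. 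No real obstacle arises: the only nontrivial input is the cancellation of $\chi_{t,s}$, which is immediate from invertibility, and the appeal to non singularity to convert $\ad_t$-triviality into centrality.
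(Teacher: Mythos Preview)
Your argument is correct and is precisely the one the paper has in mind: it says only that the claim follows immediately from prop.~\cref{prop:adcarchi}, and you have spelled out the obvious steps---apply the stabilizer condition $\delta_{\ad x}\chi_{t,s}=0$ to the identity $\delta_{\ad x}\chi_{t,s}=\ad_t(x_t-\chi_{t,s}x_s)\chi_{t,s}$, cancel the invertible $\chi_{t,s}$, and invoke non singularity $C_{X_t}=\mathbb{R}1_X$ to force $x_t-\chi_{t,s}x_s$ to be central.
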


\noindent
It is simple to see that \ceqref{bvflow43} implies that 
\begin{equation}
D_tx_t-\chi^D{}_tx_t=0 \qquad \text{\rm mod $\zz_{|D|-1}(T,X)$}
\label{bvflow44}
\end{equation}
and so the vanishing of the variations \ceqref{bvflow39} and \ceqref{bvflow40}.

\vfil\eject

\section{\textcolor{blue}{\sffamily Batalin--Vilkovisky  renormalization group theory}}\label{sec:bvrenorm}

In this section, we expound an axiomatic formulation of the BV theory
of RG using the algebraic and geometric framework of sect. \cref{sec:bvalg}.
The reader is referred to ref. \ccite{Zucchini:2017irg} 
for physical motivation and a more comprehensive exposition of the field theoretic 
aspect of this topic. 

The basic notions we introduce are those of BV RG flow and BV EA. The BV RGE is derived
in its more general form for a generic scale parameter space. More specific results are 
obtained when the parameter space is the real line $\mathbb{R}$ or its shifted tangent bundle
$T[1]\mathbb{R}$. In the latter case, an RG supersymmetry emerges that shapes the structure of the 
RGE in a form closely related to Polchinski's  \ccite{Polchinski:1983gv}. We consider this one of the  
main results of the present paper. 


\subsection{\textcolor{blue}{\sffamily BV quantum MAs}}\label{subsec:bvact}

As reviewed in \ccite{Zucchini:2017irg}, the BV quantum MA and ME
are central in BV theory. Here, we present an algebraic theory of them, concentrating 
on their covariance under canonical maps. 

Let $X$ be a non singular BV algebra with BV Laplacian $\varDelta_X$.

\begin{defi} \label{def:bvqma}
A BV quantum MA of $X$ is a special element $S\in X$ mod $\mathbb{R}1_X$ 
meeting the conditions 
\begin{align}
&|S|=0,
\vphantom{\Big]}
\label{bvact1}
\\
&\varDelta_XS+\frac{1}{2}(S,S)_X=0. 
\vphantom{\Big]}
\label{bvact2}
\end{align}
\end{defi}

\noindent 
\ceqref{bvact2} is called (abstract) BV quantum ME. 

With any MA, there is associated a variation operator.

\begin{defi}
The BV variation operator of a BV quantum MA $S$
is the linear endomorphism $\varDelta_{XS}:X\rightarrow X$ given by 
\begin{equation}
\varDelta_{XS}f=\varDelta_Xf+(S,f)_X
\label{bvact3}
\end{equation}
with $f\in X$. 
\end{defi}


\begin{prop} $\varDelta_{XS}$ is degree $1$ and nilpotent,
\begin{align}
&|\varDelta_{XS}|=1,
\vphantom{\Big]}
\label{bvact4}
\\
&\varDelta_{XS}{}^2=0. 
\vphantom{\Big]}
\label{bvact45}
\end{align}
\end{prop}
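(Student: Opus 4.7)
The plan is to verify the two claims independently.

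For the degree assertion \ceqref{bvact4}, I would simply read it off from the definition \ceqref{bvact3}. Both summands $\varDelta_Xf$ and $(S,f)_X$ have degree $|f|+1$: the first by \ceqref{bvalg1}, and the second by \ceqref{bvalg6} together with the hypothesis $|S|=0$ from \ceqref{bvact1}. Hence $|\varDelta_{XS}f|=|f|+1$.

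For nilpotency \ceqref{bvact45}, the plan is to expand
\begin{equation*}
\varDelta_{XS}{}^2f=\varDelta_X{}^2f+\varDelta_X(S,f)_X+(S,\varDelta_Xf)_X+(S,(S,f)_X)_X
\end{equation*}
and show that the four terms reorganize into $(\varDelta_XS+\tfrac{1}{2}(S,S)_X,f)_X$, which vanishes by the quantum ME \ceqref{bvact2}. The first term is zero by \ceqref{bvalg3}. The second and third combine via the derivation property \ceqref{bvalg11} of $\varDelta_X$ on the bracket, which with $|S|=0$ gives $\varDelta_X(S,f)_X+(S,\varDelta_Xf)_X=(\varDelta_XS,f)_X$. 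The last term is handled by the shifted graded Jacobi identity \ceqref{bvalg8}, which yields $(S,(S,f)_X)_X=\tfrac{1}{2}((S,S)_X,f)_X$ after using the shifted antisymmetry \ceqref{bvalg7}.

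The only step demanding any care is this last identity. Specializing \ceqref{bvalg8} with the two outer entries equal to $S$ produces three cyclic terms; using $|S|=0$ and \ceqref{bvalg7} to rewrite $(S,(f,S)_X)_X$ in terms of $(S,(S,f)_X)_X$, two of the three terms become equal, and the remaining term $(f,(S,S)_X)_X$ is converted to $-((S,S)_X,f)_X$ by another application of \ceqref{bvalg7} (noting that $|(S,S)_X|=1$). Solving the resulting linear relation yields the claimed formula.

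Putting everything together, $\varDelta_{XS}{}^2f=(\varDelta_XS+\tfrac{1}{2}(S,S)_X,f)_X=0$ by \ceqref{bvact2}, for every $f\in X$. No genuine obstacle is expected: the whole argument is a bookkeeping exercise in Koszul signs, and the main pitfall is simply tracking the correct sign factors in the applications of \ceqref{bvalg7} and \ceqref{bvalg8}.
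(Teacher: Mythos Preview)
Your proposal is correct and follows essentially the same route as the paper: both expand $\varDelta_{XS}{}^2f$ via \ceqref{bvact3}, kill the $\varDelta_X{}^2$ term with \ceqref{bvalg3}, use \ceqref{bvalg11} to combine the mixed terms into $(\varDelta_XS,f)_X$, and use the Jacobi identity \ceqref{bvalg8} (with \ceqref{bvalg7}) to rewrite $(S,(S,f)_X)_X$ as $\tfrac{1}{2}((S,S)_X,f)_X$, then invoke the ME \ceqref{bvact2}. The paper presents the computation in a single condensed line while you spell out the intermediate sign bookkeeping, but the argument is the same.
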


\begin{proof}
We provide the proof of these facts for the sake of completeness, albeit it is well--known.
The property \ceqref{bvact4} is obvious. We have to show only \ceqref{bvact45}. 
Let $f\in X$. Using \ceqref{bvalg3}, \ceqref{bvalg8} and \ceqref{bvalg11}, we find
\begin{align}
\varDelta_{XS}\varDelta_{XS}f
&=\varDelta_X(\varDelta_Xf+(S,f)_X)+(S,\varDelta_Xf+(S,f)_X)_X
\vphantom{\Big]}
\label{bvact6}
\\
&=(\varDelta_XS+(S,S)_X/2,f)_X=0,
\vphantom{\Big]}
\nonumber
\end{align}
where in the last step we used \ceqref{bvact2}. \ceqref{bvact45} follows. 
\end{proof}

\noindent
Hence, $(X,\varDelta_{XS})$ is a cochain complex to which there is attached a cohomology $H^*(X,\varDelta_{XS})$,
the BV MA cohomology of $S$.

Let $X$, $Y$ be non singular 
BV algebras and $\alpha:X\rightarrow Y$ be a canonical map (cf. def. \cref{def:canmap}). 

\begin{defi} \label{def:qmactcan}
The $\alpha$--transform of a BV quantum MA $S$ in $X$ is 
\begin{equation}
\hat\alpha S=\alpha S+r_\alpha, 
\label{bvact7}
\end{equation}
where $r_\alpha$ is the logarithmic Jacobian of $\alpha$. 
\end{defi}

\begin{prop} \label{prop:qmecan}
$\hat\alpha S$ is a BV quantum MA in $Y$. 
\end{prop}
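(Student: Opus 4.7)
The plan is to verify directly that $\hat\alpha S$ satisfies the two conditions of def. \cref{def:bvqma}, namely that it has degree $0$ and obeys the abstract quantum ME \ceqref{bvact2}.

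The degree condition \ceqref{bvact1} is immediate: since $\alpha$ is a graded algebra morphism, $|\alpha S| = |S| = 0$, and $|r_\alpha| = 0$ by \ceqref{bvcanon1}, so $|\hat\alpha S| = 0$.

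For the ME, I would first expand
\begin{equation*}
\varDelta_Y\hat\alpha S + \tfrac{1}{2}(\hat\alpha S, \hat\alpha S)_Y
= \varDelta_Y\alpha S + \varDelta_Y r_\alpha + \tfrac{1}{2}(\alpha S,\alpha S)_Y + (\alpha S, r_\alpha)_Y + \tfrac{1}{2}(r_\alpha, r_\alpha)_Y,
\end{equation*}
where the cross term is collapsed using the shifted graded antisymmetry \ceqref{bvalg7}: since $|\alpha S|=|r_\alpha|=0$, we have $(\alpha S, r_\alpha)_Y = (r_\alpha, \alpha S)_Y$. Then I would invoke the canonical map relation \ceqref{bvcanon2} to rewrite $\varDelta_Y\alpha S = \alpha\varDelta_X S - (r_\alpha, \alpha S)_Y$, and prop.\ \cref{prop:canmapbvbrac} (eq. \ceqref{bvcanon4}) to rewrite $(\alpha S,\alpha S)_Y = \alpha(S,S)_X$. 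Substituting these, the term $-(r_\alpha,\alpha S)_Y$ coming from \ceqref{bvcanon2} cancels exactly against $(\alpha S, r_\alpha)_Y$.

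After this cancellation the expression regroups as
\begin{equation*}
\alpha\!\left(\varDelta_X S + \tfrac{1}{2}(S,S)_X\right) + \left(\varDelta_Y r_\alpha + \tfrac{1}{2}(r_\alpha,r_\alpha)_Y\right).
\end{equation*}
The first parenthesis vanishes by the hypothesis that $S$ is a BV quantum MA of $X$, i.e. by \ceqref{bvact2}, and the second vanishes by prop.\ \cref{prop:qmera}, i.e. by \ceqref{bvcanon3}, which says precisely that the logarithmic Jacobian of a canonical map obeys the quantum ME. This completes the verification.

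No serious obstacle is expected: the proof is purely a matter of bookkeeping the three structural identities \ceqref{bvcanon2}, \ceqref{bvcanon3}, \ceqref{bvcanon4}. The only mild subtlety is the sign discipline in the cross term, which is trivialized here by the degree zero assignments of $\alpha S$ and $r_\alpha$. One should also note that the $\mathbb{R}1_Y$ ambiguity in the choice of $r_\alpha$ does not affect the result, since the ambiguity contributes a constant to $\hat\alpha S$ and the BV quantum MA is itself defined mod $\mathbb{R}1_Y$.
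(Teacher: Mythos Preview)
Your proof is correct and follows essentially the same route as the paper: expand $\varDelta_Y\hat\alpha S+\tfrac{1}{2}(\hat\alpha S,\hat\alpha S)_Y$, apply \ceqref{bvcanon2}, \ceqref{bvcanon3}, \ceqref{bvcanon4}, and reduce to $\alpha$ applied to the ME for $S$. Your explicit handling of the degree condition and the remark on the $\mathbb{R}1_Y$ ambiguity are correct additions that the paper leaves implicit.
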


\begin{proof}
By \ceqref{bvact7}, using \ceqref{bvcanon2}, \ceqref{bvcanon3} and \ceqref{bvcanon4}, we have  
\begin{align}
\varDelta_Y\hat\alpha S+\frac{1}{2}(\hat\alpha S,\hat\alpha S)_Y
&=\varDelta_Y\alpha S+\varDelta_Y r_\alpha 
\vphantom{\Big]}
\label{bvact8}
\\
&\hspace{1cm}+\frac{1}{2}(\alpha S,\alpha S)_Y
+(r_\alpha, \alpha S)_Y+\frac{1}{2}(r_\alpha,r_\alpha)_Y
\vphantom{\Big]}
\nonumber
\\
&=\alpha\bigg(\varDelta_XS+\frac{1}{2}(S,S)_X\bigg)=0, 
\vphantom{\Big]}
\nonumber
\end{align}
where in the last step we used \ceqref{bvact2} again. This shows the statement. 
\end{proof}

Canonical transformation \pagebreak is compatible with the compositional structure of canonical maps. 
If $X,Y,Z$ are non singular BV algebras and $\alpha:X\rightarrow Y$, $\beta:Y\rightarrow Z$ are canonical 
maps, then one has \hphantom{xxxxxxxx}
\begin{equation}
\widehat{\beta\alpha}S=\hat\beta\hat\alpha S
\label{bvact15}
\end{equation}
for any BV quantum MA $S$ in $X$,
as follows readily from \ceqref{bvact7} by means of  a simple application of 
the Jacobian relation \ceqref{bvcanon5}.

The BV variation operator of the BV quantum MA behaves covariantly under canonical 
transformation. 

\begin{prop}
For any canonical map $\alpha:X\rightarrow Y$ of BV algebras and BV quantum MA $S$, one 
has \hphantom{xxxxxxxxxxxxx}
\begin{equation}
\varDelta_{Y\hat\alpha S}=\alpha\varDelta_{XS}\alpha^{-1}.
\label{bvact9}
\end{equation}
\end{prop}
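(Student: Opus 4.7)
The identity to be established is a factorization of operators on $Y$; since $\alpha$ is invertible, it is equivalent to the intertwining relation $\varDelta_{Y\hat\alpha S}\alpha = \alpha \varDelta_{XS}$. My plan is to verify this latter identity by evaluation on an arbitrary $f \in X$, reducing the proof to a direct computation in which the three defining pieces of structure, namely the variation operator \ceqref{bvact3}, the transformed MA \ceqref{bvact7}, and the canonicity of $\alpha$ via \ceqref{bvcanon2} and \ceqref{bvcanon4}, are each deployed once.

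More precisely, first I would expand the left hand side as
$$\varDelta_{Y\hat\alpha S}(\alpha f) = \varDelta_Y(\alpha f) + (\alpha S + r_\alpha, \alpha f)_Y = \varDelta_Y(\alpha f) + (\alpha S, \alpha f)_Y + (r_\alpha, \alpha f)_Y,$$
using \ceqref{bvact3}, \ceqref{bvact7} and the bilinearity of the BV bracket. Next, the canonicity relation \ceqref{bvcanon2} rewrites $\varDelta_Y(\alpha f)$ as $\alpha \varDelta_X f - (r_\alpha, \alpha f)_Y$, while the bracket preservation property \ceqref{bvcanon4} of prop. \cref{prop:canmapbvbrac} rewrites $(\alpha S, \alpha f)_Y$ as $\alpha(S,f)_X$. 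Substituting these two replacements, the two copies of $(r_\alpha, \alpha f)_Y$ cancel exactly, and one is left with $\alpha \varDelta_X f + \alpha(S,f)_X = \alpha\bigl(\varDelta_X f + (S,f)_X\bigr) = \alpha \varDelta_{XS} f$, which is the desired identity.

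There is no genuine obstacle in this argument: the result is a structural consequence of the fact, encoded in def. \cref{def:canmap}, that $r_\alpha$ measures precisely the failure of $\alpha$ to intertwine the two Laplacians, expressed as an inner derivation $\ad_Y r_\alpha$, and this failure then combines cleanly with the bracket term that $S$ contributes to the variation operator. The transformation rule \ceqref{bvact7} defining $\hat\alpha S$ is engineered exactly so that the two Jacobian contributions cancel; in this sense \ceqref{bvact9} is simply the operator counterpart of prop. \cref{prop:qmecan}, the latter being the statement that $\varDelta_{Y\hat\alpha S}$ annihilates the unit of the cohomological obstruction, while the former promotes this to a full conjugation identity.
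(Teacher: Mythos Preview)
Your proof is correct and follows essentially the same route as the paper's own argument. The only cosmetic difference is that the paper evaluates $\varDelta_{Y\hat\alpha S}$ directly on an element $f\in Y$ and arrives at $\alpha\varDelta_{XS}\alpha^{-1}f$, whereas you precompose with $\alpha$ and evaluate on $\alpha f$ for $f\in X$; since $\alpha$ is invertible these are trivially equivalent, and the mechanism---expanding via \ceqref{bvact3} and \ceqref{bvact7}, then invoking \ceqref{bvcanon2} and \ceqref{bvcanon4} so that the two $(r_\alpha,\,\cdot\,)_Y$ terms cancel---is identical.
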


\begin{proof}
Let $f\in X$. Then, by \ceqref{bvact7} and \ceqref{bvcanon2},
\begin{align}
\varDelta_{Y\hat\alpha S}f
&=\varDelta_Yf+(\alpha S+r_\alpha,f)_Y
\vphantom{\Big]}
\label{bvact10}
\\
&=\alpha\varDelta_X\alpha^{-1}f-(r_\alpha,f)_Y+\alpha(S,\alpha^{-1}f)_X+(r_\alpha,f)_Y
\vphantom{\Big]}
\nonumber
\\
&=\alpha\varDelta_{XS}\alpha^{-1}f. 
\vphantom{\Big]}
\nonumber
\end{align}
This proves \ceqref{bvact9}
\end{proof}

\noindent
From \ceqref{bvact9}, it follows that the BV MA cohomologies of $S$ and $\hat\alpha S$ are isomorphic.
MA cohomology is so a canonical transformation invariant. 

Let $X$ be BV a non singular algebra and $\xi:X\rightarrow X$ be an infinitesimal canonical map (cf. def. \cref{def:infbvcan}). 

\begin{defi} \label{def:qmactinfcan}
The $\xi$--variation of a BV quantum MA $S$ in $X$ is 
\begin{equation}
\hat\delta_\xi S=\xi S+e_\xi,
\label{bvact11}
\end{equation}
where $e_\xi$ is the logarithmic Jacobian of $\xi$. 
\end{defi}

\begin{prop} \label{prop:infqmecan}
$\hat\delta_\xi S$ is a $0$--cocycle of $\varDelta_{XS}$,
\begin{equation}
\varDelta_{XS}\hat\delta_\xi S=0.
\label{bvact12}
\end{equation}
\end{prop}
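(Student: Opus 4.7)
The plan is to expand $\varDelta_{XS}(\hat\delta_\xi S)$ using \ceqref{bvact3} and \ceqref{bvact11} and then use in sequence the three defining properties of the data: the linearized canonical-map condition \ceqref{bvcanon9} for $\xi$, the quantum ME \ceqref{bvact2} for $S$, and the infinitesimal bracket-preservation identity \ceqref{bvcanon11} for $\xi$. Since all four objects $S$, $\xi S$, $e_\xi$, and $\hat\delta_\xi S$ have degree $0$, the shifted antisymmetry \ceqref{bvalg7} gives $(e_\xi,S)_X=(S,e_\xi)_X$ and $(\xi S,S)_X=(S,\xi S)_X$, so every sign ends up trivial.

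First, I would expand
\begin{equation*}
\varDelta_{XS}(\xi S+e_\xi)=\varDelta_X\xi S+\varDelta_X e_\xi+(S,\xi S)_X+(S,e_\xi)_X.
\end{equation*}
Next, I would rewrite $\varDelta_X\xi S=\xi\varDelta_XS-(e_\xi,S)_X$ via \ceqref{bvcanon9}, and discard $\varDelta_Xe_\xi$ by \ceqref{bvcanon10}. After applying the antisymmetry noted above, the two terms containing $e_\xi$ cancel, leaving
\begin{equation*}
\varDelta_{XS}(\xi S+e_\xi)=\xi\varDelta_XS+(S,\xi S)_X.
\end{equation*}

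Then I would substitute $\varDelta_XS=-\tfrac{1}{2}(S,S)_X$ from \ceqref{bvact2} and apply the infinitesimal bracket-preservation relation \ceqref{bvcanon11} with $f=g=S$, which together with the antisymmetry $(\xi S,S)_X=(S,\xi S)_X$ yields $\xi(S,S)_X=2(S,\xi S)_X$. This collapses the remaining expression to $-(S,\xi S)_X+(S,\xi S)_X=0$, proving \ceqref{bvact12}.

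The only place requiring care is the sign bookkeeping in the shifted antisymmetry, but because every relevant element has degree $0$ the signs are forced to be $+1$, so there is no real obstacle. The argument is essentially the infinitesimal counterpart of the proof of Proposition \cref{prop:qmecan}, with \ceqref{bvcanon9}--\ceqref{bvcanon11} replacing \ceqref{bvcanon2}--\ceqref{bvcanon4}, and with the identity $\varDelta_X\hat\delta_\xi S+(S,\hat\delta_\xi S)_X=0$ playing the role that the ME plays in the finite case.
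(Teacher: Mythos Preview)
Your proof is correct and matches the paper's own argument essentially step for step: expand via \ceqref{bvact3} and \ceqref{bvact11}, apply \ceqref{bvcanon9} and \ceqref{bvcanon10} to handle $\varDelta_X\xi S$ and $\varDelta_Xe_\xi$, cancel the $e_\xi$ terms using \ceqref{bvalg7}, and then combine \ceqref{bvcanon11} with the master equation \ceqref{bvact2}. The only cosmetic difference is that the paper packages the final step as $\xi\big(\varDelta_XS+\tfrac{1}{2}(S,S)_X\big)=0$ rather than substituting $\varDelta_XS=-\tfrac{1}{2}(S,S)_X$ and cancelling explicitly.
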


\begin{proof} By \ceqref{bvact11}, \ceqref{bvact3}, using \ceqref{bvalg7}, \ceqref{bvcanon9}, 
\ceqref{bvcanon10} and \ceqref{bvcanon11}, we have  
\begin{align}
\varDelta_{XS}\hat\delta_\xi S
&=\varDelta_X(\xi S+e_\xi)+(S,\xi S+e_\xi)_X
\vphantom{\Big]}
\label{bvact 13}
\\
&=\xi\varDelta_XS-(e_\xi,S)_X+(S,\xi S)_X+(S,e_\xi)_X
\vphantom{\Big]}
\nonumber
\\
&=\xi\bigg(\varDelta_XS+\frac{1}{2}(S,S)_X\bigg)=0,
\vphantom{\Big]}
\nonumber
\end{align}
where in the last step we used \ceqref{bvact2} once more. This shows \ceqref{bvact12}. 
\end{proof}

\noindent
In particular, when $\xi=\ad_X x$ with $x\in X$, $|x|=-1$, \ceqref{bvact12} yields 
\begin{equation}
\hat\delta_{\ad_X x}S=-\varDelta_Xx-(S,x)_X=-\varDelta_{XS}x
\label{bvact14}
\end{equation}
by \ceqref{bvcanon14}. $\hat\delta_{\ad_X x} S$ is so a $0$--coboundary of $\varDelta_{XS}$. 

Infinitesimal canonical transformation is compatible with the Lie bracketing
structure of infinitesimal canonical maps. 
If $X$ is a non singular BV algebra and $\xi,\eta:X\rightarrow X$ are infinitesimal canonical 
maps, then,
\begin{equation}
\hat\delta_{[\xi,\eta]}S=[\hat\delta_{\xi},\hat\delta_{\eta}]S
\label{bvact16}
\end{equation}
for any BV quantum MA $S$ in $X$. We get this relation immediately
from \ceqref{bvact11}, using the infinitesimal Jacobian relation \ceqref{bvcanon12},


\subsection{\textcolor{blue}{\sffamily BV RG flow and  EA}}\label{subsec:bvrenflow}

In this subsection we introduce and study RG flows and EAs 
in the algebraic BV theoretic framework of subsect. \cref{subsec:bvact}. 

We consider a graded commutative algebra $X$ and a parameter graded manifold $T$
and select a formal differentiation structure of $X$ over $T$ (cf. subsect \cref{subsec:bvfam}).
Further, we assume that a non singular BV Laplacian family $\varDelta_t$ over $T$ 
together with its associated BV bracket family  $(-,-)_t$ are given 
(cf. defs. \cref{def:tfamD}, \cref{def:tfambvbrac}) rendering $X$ a non singular BV algebra $X_t$ 
for each $t$. 

\begin{defi} \label{def:bvrgfl}
A BV RG flow along $T$ is just a BV flow $\chi_{t,s}:X_s\rightarrow X_t$ along $T$
(cf. subsect. \cref{subsec:bvflow}). 
\end{defi}

\noindent
The reference to the RG is made only to highlight its physical origin. 
Our basic examples $T=\mathbb{R}$ and $T=T[1]\mathbb{R}$ should be kept in mind. 

\begin{defi} \label{def:bvrgea}
A BV RG EA along $T$ for the RG flow $\chi_{t,s}$ is a family $S_t$ 
of degree $0$ elements of $X$ over $T$ obeying 
\begin{equation}
\varDelta_tS_t+\frac{1}{2}(S_t,S_t)_t=0 
\label{bvrenflow1}
\end{equation}
(cf. eq. \ceqref{bvact2}) and with the property that \hphantom{xxxxxx}
\begin{equation}
S_t=\hat\chi_{t,s}S_s=\chi_{t,s}S_s+r_{\chi t,s} \qquad \text{\rm mod $\zz_0(T,X)$}
\label{bvrenflow2}
\end{equation}
(cf. 
eq. \ceqref{bvact7}), 
where $r_{\chi t,s}$ is the logarithmic Jacobian of $\chi_{t,s}$.
\end{defi}

\noindent
By \ceqref{bvrenflow1}, for each $t$ the mod $\mathbb{R}1_X$ class of $S_t$ is a 
BV quantum MA. By prop. \cref{prop:qmecan}, it is sufficient that 
$S_t$ obeys \ceqref{bvrenflow1} for a single $t$ for $S_t$ doing so for all 
$t$, as the maps $\chi_{t,s}:X_s\rightarrow X_t$ are canonical. 
The family $S_t$ can be redefined by adding to it a central logarithmic family
of $\zz_0(T,X)$ without spoiling its being a RG EA provided the logarithmic Jacobian family 
$r_{\chi t,s}$ of the RG flow $\chi_{t,s}$ is redefined accordingly by a suitable
central logarithmic family of $\zz_0(T^2,X)$ as allowed. 

As in subsects. \cref{subsec:bvfam}, \cref{subsec:bvflow}, here too it is useful to probe
the parameter dependence of the BV RG flow and EA using vector fields of $T$. 

Pick a vector field $D\in \Vect(T)$. 

\begin{prop}
$S_t$ obeys the $D$--BV RGE
\begin{equation}
D_tS_t=\chi^D{}_tS_t+r^D{}_{\chi t}, 
\vphantom{\bigg]}
\label{bvrenflow3}
\end{equation}
where $\chi^D{}_t$, $r^D{}_{\chi t}$  are the $D$--infinitesimal generator and logarithmic Jacobian
of the flow $\chi_{t,s}$ (cf. eqs. \ceqref{bvflow7}, \ceqref{bvflow8})
\end{prop}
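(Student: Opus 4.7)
The plan is to differentiate the defining flow relation \ceqref{bvrenflow2} of the EA with respect to $t$ and then evaluate at $s=t$, in the same spirit as the proofs of \cref{prop:dechi} and \cref{prop:infdchidrd} earlier in the section.

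Concretely, I would start from the identity
\[
S_t = \chi_{t,s}S_s + r_{\chi t,s}
\]
(valid modulo $\zz_0(T,X)$), viewed as an equation of families over $T^2$. Since the family $S_s$ depends only on the second variable, it is annihilated by $D_t$, and the graded Leibniz property of the formal differentiation structure (cf.\ \cref{def:formdif}) yields
\[
D_tS_t = (D_t\chi_{t,s})S_s + D_t r_{\chi t,s}
\]
as a family over $T^2$.

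The next step is to restrict to the diagonal by setting $s=t$, using the identities $\chi_{s,s} = \id_X$ and $r_{\chi s,s} = 0$ from \ceqref{bvflow3} and \ceqref{bvflow5}. Under this restriction, the first summand collapses to $\chi^D{}_t S_t$ by the definition \ceqref{bvflow7} of the $D$--infinitesimal generator, while the second becomes $r^D{}_{\chi t}$ by the definition \ceqref{bvflow8} of the $D$--infinitesimal logarithmic Jacobian. Assembling these contributions gives exactly \ceqref{bvrenflow3}.

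The only mild subtlety I foresee is bookkeeping for the $\zz_0(T,X)$ indeterminacy inherent in \ceqref{bvrenflow2}: applying $D_t$ and restricting to $s=t$ converts a degree $0$ central logarithmic family over $T^2$ into a degree $|D|$ central logarithmic family over $T$. This residual $\zz_{|D|}(T,X)$ ambiguity coincides precisely with the ambiguity already inherent in $r^D{}_{\chi t}$ under the standing choice of representatives, so the identity \ceqref{bvrenflow3} is consistent as stated and no further work is needed. No sophisticated BV machinery beyond the definitions is invoked; the canonicity of $\chi_{t,s}$ and the ME \ceqref{bvrenflow1} are used only implicitly through the well-posedness of the defining relation \ceqref{bvrenflow2}.
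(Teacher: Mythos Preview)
Your proof is correct and follows essentially the same route as the paper: differentiate the flow relation \ceqref{bvrenflow2} with $D_t$, restrict to $s=t$, and invoke the definitions \ceqref{bvflow7}, \ceqref{bvflow8}. Your extra remark on the $\zz_{|D|}(T,X)$ indeterminacy is a reasonable elaboration that the paper omits.
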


\begin{proof}
From \ceqref{bvrenflow2}, using  \ceqref{bvflow7}, \ceqref{bvflow8}, we have 
\begin{align}
D_tS_t&=\big[D_t\chi_{t,s}S_s+D_tr_{\chi t,s}\big]\big|_{s=t}
\vphantom{\Big]}
\label{bvrenflow4}
\\
&=\chi^D{}_tS_t+r^D{}_{\chi t},
\vphantom{\Big]}
\nonumber
\end{align}
showing \ceqref{bvrenflow3}. 
\end{proof}

\noindent
Eq. \ceqref{bvrenflow3} is the abstract version of the physicist's RGE as emerges in the present
formulation. Note that there is one such RGE for each choice of the vector field $D$. 

It is possible to employ stabilizers to generate new BV RG EAs from a given one. 

\begin{prop} \label{prop:stabeffact}
Let $\gamma_t:X_t\rightarrow X_t$ be a given canonical map family over $T$ 
stabilizing of the BV RG flow $\chi_{t,s}$ (cf. def. \cref{def:stab}).   
Then, if $S_t$ is a BV RG EA for $\chi_{t,s}$, then so is \hphantom{xxxxxxxxxxxxxxxxxxxxxxxxxxx}
\begin{equation}
\hat\gamma_tS_t=\gamma_tS_t+r_{\gamma t}.
\label{bvrenflow5}
\end{equation}
\end{prop}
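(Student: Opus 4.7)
The plan is to verify that $\hat\gamma_t S_t$ satisfies both defining conditions of a BV RG EA for the flow $\chi_{t,s}$, namely the quantum ME \ceqref{bvrenflow1} at every $t$ and the flow compatibility relation \ceqref{bvrenflow2}. Throughout, I will exploit the fact that the hat operation is covariant under composition of canonical maps, as codified in \ceqref{bvact15}.

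First I would check the ME. Since $\gamma_t:X_t\to X_t$ is canonical for each $t$ (as $\gamma_t$ is a canonical map family relative to $\varDelta_t$), and $S_t$ satisfies \ceqref{bvrenflow1} by hypothesis, the defining relation \ceqref{bvact7} identifies $\hat\gamma_t S_t = \gamma_t S_t + r_{\gamma t}$ with the canonical transform of $S_t$ under $\gamma_t$. Applying prop. \cref{prop:qmecan} pointwise in $t$ yields at once
\begin{equation*}
\varDelta_t\hat\gamma_t S_t+\frac{1}{2}(\hat\gamma_t S_t,\hat\gamma_t S_t)_t=0,
\end{equation*}
so the first condition holds.

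Next I would establish the flow compatibility relation. By assumption, $S_s$ is an EA, so $S_t=\hat\chi_{t,s}S_s$ mod $\zz_0(T,X)$. Then
\begin{equation*}
\hat\gamma_t S_t = \hat\gamma_t\hat\chi_{t,s}S_s = \widehat{\gamma_t\chi_{t,s}}\,S_s
\end{equation*}
by \ceqref{bvact15}. The stabilizer hypothesis \ceqref{bvflow10/1} gives $\gamma_t\chi_{t,s}=\chi_{t,s}\gamma_s$, so applying \ceqref{bvact15} once more in the opposite order,
\begin{equation*}
\widehat{\gamma_t\chi_{t,s}}\,S_s = \widehat{\chi_{t,s}\gamma_s}\,S_s = \hat\chi_{t,s}\hat\gamma_sS_s = \chi_{t,s}\hat\gamma_sS_s+r_{\chi t,s},
\end{equation*}
which is exactly relation \ceqref{bvrenflow2} with $S_t$ replaced by $\hat\gamma_t S_t$.

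The main subtlety, and the only thing to watch, is bookkeeping of the $\zz_0$ indeterminacies. The stabilizer identity \ceqref{bvflow10/1} for the automorphisms $\gamma_t\chi_{t,s}$ and $\chi_{t,s}\gamma_s$ is strict, but the Jacobian equality \ceqref{bvflow10/2} between their logarithmic Jacobians (and hence the agreement of $\widehat{\gamma_t\chi_{t,s}}S_s$ and $\widehat{\chi_{t,s}\gamma_s}S_s$) holds only mod $\zz_0(T^2,X)$. This, however, is precisely the indeterminacy already permitted in \ceqref{bvrenflow2}, so no obstruction arises; the argument goes through without modification.
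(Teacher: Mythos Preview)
Your proof is correct and follows essentially the same route as the paper's. The only difference is packaging: the paper unwinds $\hat\gamma_tS_t=\gamma_t(\chi_{t,s}S_s+r_{\chi t,s})+r_{\gamma t}$ by hand, inserts $\gamma_s^{-1}\gamma_s$, and recognizes the result as ${}^\gamma\chi_{t,s}\hat\gamma_sS_s+r_{{}^\gamma\chi t,s}$ before applying the stabilizer condition \ceqref{bvflow20}; you achieve the same thing more compactly by invoking the composition identity \ceqref{bvact15} twice and the stabilizer relation \ceqref{bvflow10/1} in between. Your remark on the $\zz_0$ bookkeeping via \ceqref{bvflow10/2} is exactly the right check.
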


\begin{proof}
Since $S_t$ is BV MA of $X_t$ and $\gamma_t$ is canonical, 
$\hat\gamma_tS_t$ is also a BV MA of $X_t$, by prop. \cref{prop:qmecan}. 
Furthermore, since $S_t$ satisfies relation \eqref{bvrenflow2}, we have \hphantom{xxxxxxxxxxx}
\begin{align}
\hat\gamma_tS_t
&=\gamma_t(\chi_{t,s}S_s+r_{\chi t,s})+r_{\gamma t}. 
\vphantom{\Big]}
\label{bvrenflow6}
\\
&=\gamma_t\chi_{t,s}\gamma_s{}^{-1}(\gamma_sS_s+r_{\gamma s})
-\gamma_t\chi_{t,s}\gamma_s{}^{-1}r_{\gamma s}+\gamma_tr_{\chi t,s}+r_{\gamma t}
\vphantom{\Big]}
\nonumber
\\
&={}^\gamma\chi_{t,s}\hat\gamma_sS_s+r_{{}^\gamma\chi t,s}
\vphantom{\Big]}
\nonumber
\\
&=\chi_{t,s}\hat\gamma_sS_s+r_{\chi t,s},
\vphantom{\Big]}
\nonumber
\end{align} 
where we used the identities \ceqref{bvflow14}, \ceqref{bvflow18} and relation \ceqref{bvflow20}
holding for stabilizers. Hence, $\hat\gamma_tS_t$ also satisfies \ceqref{bvrenflow2}. 
\end{proof}

The infinitesimal version of \ceqref{bvrenflow5} is 
\begin{equation}
\hat\delta_\epsilon S_t=\epsilon_tS_t+e_{\epsilon t},
\label{bvrenflow7}
\end{equation}
where $\epsilon_t$ is an infinitesimal canonical map family 
stabilizing the flow  $\chi_{t,s}$ (cf. defs. \cref{def:infstab} and eq. \ceqref{bvact11}). 
$\hat\delta_\epsilon S_t$ describes
an infinitesimal deformation of the BV RG EA $S_t$. 
Recall that, by prop. \cref{prop:infqmecan}, $\hat\delta_\epsilon S_t$ is a $0$-cocycle of $\varDelta_{tS_t}$.
When $\epsilon_t=\ad_tx_t$ for a suitable degree $-1$ family $x_t$, we have 
\begin{equation}
\hat\delta_{\ad x} S_t=-\varDelta_tx_t-(S_t,x_t)_t=-\varDelta_{tS_t}x_t.
\label{bvrenflow8}
\end{equation}
$\hat\delta_{\ad x}S_t$ is a so $0$-coboundary of $\varDelta_{tS_t}$.


\subsection{\textcolor{blue}{\sffamily Relation to standard RGEs}}\label{subsec:bveffact}

In subsect. \cref{subsec:bvrenflow}, we have provided an abstract formulation of RG flow and 
equation in BV theory. It is now time to make contact with more customary physical formulations 
of the RG. 

We assume that a graded commutative algebra $X$ and a parameter graded manifold $T$ 
together with a formal differentiation structure of $X$ over $T$ 
are given. In physical parlance, $X$ and $T$ would be the field and the scale parameter
space, respectively. The choice of $T$ determines the type of RGE we get. 

\begin{exa}\label{exa:t=r} 
$T=\mathbb{R}$, the basic RG set--up.
\end{exa} 
\vspace{-.25cm}
\noindent
In the simplest case, the parameter manifold $T$ is just $\mathbb{R}$, 
coordinatized by a degree $0$ real coordinate $t$. 
We have so a non singular BV Laplacian family $\varDelta_t$  over $\mathbb{R}$  
together with its associated BV bracket family $(-,-)_t$ 
rendering $X$ a BV algebra $X_t$ for each real $t$. 
We have furthermore a BV RG flow $\chi_{t,s}:X_s\rightarrow X_t$ along $\mathbb{R}$  
together with  its associated logarithmic Jacobian family $r_{\chi t,s}$.
Finally, a BV RG EA $S_t$ along $\mathbb{R}$ is given. In the following, we call the above the basic RG set--up. 

The main object of study in the basic RG set--up is the EA $S_t$.
This satisfies two basic equations. 
\begin{enumerate}

\item The BV ME \ceqref{bvrenflow1}, \hphantom{xxxxxxxxxxxxxxxxx}
\begin{equation}
\varDelta_tS_t+\frac{1}{2}(S_t,S_t)_t=0. 
\label{bveffact1}
\end{equation}

\item The BV RGE ensuing from a relevant choice of a vector field $D$ probing the 
parameter manifold $\mathbb{R}$. 

\end{enumerate}
\noindent
We suppose, as is natural, that the formal differentiation structure is such that 
the degree $0$ derivation $d/dt$ belongs to $\Vect(\mathbb{R})$. 
Associated with this are the $d/dt$--derived BV Laplacian \hphantom{xxxxxxxxxxxxxxxxx}
\begin{equation}
\frac{d\varDelta_t}{dt}
:=\varDelta^{d/dt}{}_t \vphantom{\bigg]^i_g}
\label{bveffact1/2}
\end{equation}
(cf. eq. \ceqref{bvfam0}) \pagebreak and the $d/dt$--infinitesimal generator and logarithmic Jacobian of the BV RG 
flow $\chi_{t,s}$ \hphantom{xxxxxxxxxxxxxxxxxxxxxx}
\begin{align}
&\chi^\bcdot{}_t:=\chi^{d/dt}{}_t,
\vphantom{\Big]}
\label{bveffact2}
\\
&r^\bcdot{}_{\chi t}:=r^{d/dt}{}_{\chi t}
\vphantom{\Big]}
\label{bveffact3}
\end{align}
(cf. eqs. \ceqref{bvflow7} and \ceqref{bvflow8}). The $d/dt$--evolution equation 
\begin{equation}
\frac{d\varDelta_t}{dt}-[\chi^\bcdot{}_t,\varDelta_t]+\ad_tr^\bcdot{}_{\chi t}=0 \vphantom{\bigg]}
\label{bveffact3/1}
\end{equation}
(cf. eq. \ceqref{bvflow10}) is a differential equation governing the $t$ dependence of $\varDelta_t$, 
since $d\varDelta_t/dt$ is the $t$ derivative of $\varDelta_t$ just as suggested by the notation. 

The $d/dt$--BV RGE \ceqref{bvrenflow3} reads 
\begin{equation}
\frac{dS_t}{dt}=\chi^\bcdot{}_tS_t+r^\bcdot{}_{\chi t}.
\label{bveffact4}
\end{equation}
This form of the RGE 
is very general and precisely for this reason not particularly 
useful. A more interesting version of the equation can be obtained by adopting a more structured 
choice of the parameter space $T$. Adding an odd parameter, in particular,
will enrich the RGE with a kind of supersymmetry. 

\begin{exa} \label{exa:t=t1r} 
$T=T[1]\mathbb{R}$, the extended RG set--up
\end{exa} 
\vspace{-.25cm}
\noindent
As anticipated in the previous paragraph, we consider next the case where
the parameter graded manifold $T$ is the shifted tangent bundle
$T[1]\mathbb{R}$ of $\mathbb{R}$ coordinatized by a degree $0$ real base coordinate $t$ and 
a degree $1$ real fiber coordinate $\theta$. 
We have in this way a non singular BV Laplacian family $\varDelta_{t\theta}$  over $T[1]\mathbb{R}$  
together with its associated BV bracket family $(-,-)_{t\theta}$ 
rendering $X$ a BV algebra $X_{t\theta}$ for each shifted real pair $t\theta$. 
We have furthermore a BV RG flow $\chi_{t\theta,s\zeta}:X_{s\zeta}\rightarrow X_{t\theta}$ along $T[1]\mathbb{R}$  
together with  its associated logarithmic Jacobian family $r_{\chi t\theta,s\zeta}$.
Finally, a BV RG EA $S_{t\theta}$ along $T[1]\mathbb{R}$ is given. In the following, we call the above the 
extended RG set--up to distinguish it from the basic set--up introduced  in the first part of this subsection.

Analogously to the basic case analyzed above, the main object of study in the extended RG set--up
is the BV RG EA $S_{t\theta}$. Again, this satisfies two basic equations. 
\begin{enumerate}

\item The BV ME \ceqref{bvrenflow1},
\begin{equation}
\varDelta_{t\theta}S_{t\theta}+\frac{1}{2}(S_{t\theta},S_{t\theta})_{t\theta}=0. 
\label{bveffact25/1}
\end{equation}

\item The BV RGE associated with a relevant choice of a vector field $D$ probing the 
parameter manifold $T[1]\mathbb{R}$. 

\end{enumerate}

We suppose that the degree $0$ and $-1$ derivations $\partial/\partial t$ and $\partial/\partial \theta$
belong to $\Vect(T[1]\mathbb{R})$ in the given formal differentiation structure of $X$.
$\partial/\partial t$, $\partial/\partial \theta$ will not however be treated on the same footing in 
the following investigation, 
because the former has an analytic significance while the latter works just as an algebraic device. 

Associated with $\partial/\partial t$ are the $\partial/\partial t$--derived BV Laplacian
\begin{equation}
\frac{\partial\varDelta_{t\theta}}{\partial t}:=\varDelta^{\partial/\partial t}{}_{t\theta}
\vphantom{\Big]}
\label{bveffact10/1}
\end{equation}
and the $\partial/\partial t$-infinitesimal generator and logarithmic Jacobian 
\begin{align}
&\chi^\bcdot{}_{t\theta}:=\chi^{\partial/\partial t}{}_{t\theta},
\vphantom{\Big]}
\label{bveffact11}
\\
&r^\bcdot{}_{\chi t\theta}:=r^{\partial/\partial t}{}_{\chi t\theta}
\vphantom{\Big]}
\label{bveffact12}
\end{align}
defined according to \ceqref{bvfam0}, \ceqref{bvflow7} and \ceqref{bvflow8}, respectively. 
The $\partial/\partial t$--evolution equation \ceqref{bvflow10} takes the form 
\begin{equation}
\frac{\partial\varDelta_{t\theta}}{\partial t}
-[\chi^\bcdot{}_{t\theta},\varDelta_{t\theta}]+\ad_{t\theta}r^\bcdot{}_{\chi t\theta}=0.
\label{bveffact14/1}
\end{equation}
\ceqref{bveffact14/1} is a differential equation governing the $t$ dependence of $\varDelta_{t\theta}$, 
since again $\partial\varDelta_{t\theta}/\partial t$ is just the $t$ derivative of $\varDelta_{t\theta}$. 

Associated similarly with $\partial/\partial \theta$ are $\partial/\partial \theta$--derived BV Laplacian
$\varDelta^\star {}_{t\theta}:=\varDelta^{\partial/\partial \theta}{}_{t\theta}$ and the 
$\partial/\partial \theta$--infinitesimal generator and logarithmic Jacobian
$\chi^\star{}_{t\theta}:=\chi^{\partial/\partial \theta}{}_{t\theta}$. 
The $\partial/\partial \theta$--evolution equation is 
\begin{equation}
\varDelta^\star {}_{t\theta}-[\chi^\star{}_{t\theta},\varDelta_{t\theta}]+\ad_{t\theta}r^\star{}_{\chi t\theta}=0. \vphantom{\bigg]}
\label{bveffact14/2}
\end{equation}
Unlike \ceqref{bveffact14/1}, \ceqref{bveffact14/2} is a mere algebraic identity, as, by the nilpotence of $\theta$, 
$\varDelta_{t\theta}$ is a degree $1$ polynomial of $\theta$ whose leading coefficient
is precisely $\varDelta^\star {}_{t\theta}=\partial\varDelta_{t\theta}/\partial \theta$. 

From \ceqref{bvrenflow3}, we can write down the $\partial/\partial t$--BV RGE
\begin{equation}
\frac{\partial S_{t\theta}}{\partial t}=\chi^\bcdot{}_{t\theta}S_{t\theta}+r^\bcdot{}_{\chi t\theta}.
\vphantom{\Big]}
\label{bveffact15}
\end{equation}
The $\partial/\partial \theta$--BV RGE takes the form 
\begin{equation}\
S^\star {}_{t\theta}=\chi^\star{}_{t\theta}S_{t\theta}+r^\star{}_{\chi t\theta},
\vphantom{\Big]}
\label{bveffact16}
\end{equation}
where $S^\star {}_{t\theta}=\partial S_{t\theta}/\partial\theta$, and is again an algebraic relation
since $S_{t\theta}$ is a degree $1$ polynomial of $\theta$ whose leading coefficient
is precisely $S^\star {}_{t\theta}$. 

The degree $-1$ EA $S^\star {}_{t\theta}$ allows us to write the infinitesimal generator and logarithmic Jacobian 
$\chi^\bcdot{}_{t\theta}$ and $r^\bcdot{}_{t\theta}$ of the BV RG flow in the reduced form
\begin{align}
&\chi^\bcdot{}_{t\theta}=-\ad_{t\theta}S^\star {}_{t\theta}+\bar\chi^\bcdot{}_{t\theta},
\vphantom{\Big]}
\label{bveffact18}
\\
&r^\bcdot{}_{\chi t\theta}=\varDelta_{t\theta}S^\star {}_{t\theta}+\bar r^\bcdot{}_{\chi t\theta},
\vphantom{\Big]}
\label{bveffact19}
\end{align}
where $\bar\chi^\bcdot{}_{t\theta}$ and $\bar r^\bcdot{}_{t\theta}$ are a degree $0$ derivation and a degree $0$ element of $X$
called reduced infinitesimal generator and Jacobian, respectively. 

\begin{prop} \label{prop:evoleqred}
The evolution equation \ceqref{bveffact14/1} can be cast in the reduced form 
\begin{equation}
\frac{\partial\varDelta_{t\theta}}{\partial t}-[\bar\chi^\bcdot{}_{t\theta},\varDelta_{t\theta}]
+\ad_{t\theta}\bar r^\bcdot{}_{\chi t\theta}=0. 
\label{bveffact19/1}
\end{equation}
\end{prop}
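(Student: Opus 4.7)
The plan is to substitute the reduced expressions \ceqref{bveffact18} and \ceqref{bveffact19} for $\chi^\bcdot{}_{t\theta}$ and $r^\bcdot{}_{\chi t\theta}$ directly into the evolution equation \ceqref{bveffact14/1} and show that the terms involving $S^\star{}_{t\theta}$ mutually cancel. Concretely, after substitution the left hand side reads
\begin{equation*}
\frac{\partial\varDelta_{t\theta}}{\partial t}-[\bar\chi^\bcdot{}_{t\theta},\varDelta_{t\theta}]+\ad_{t\theta}\bar r^\bcdot{}_{\chi t\theta}+\bigl[\ad_{t\theta}S^\star{}_{t\theta},\varDelta_{t\theta}\bigr]+\ad_{t\theta}(\varDelta_{t\theta}S^\star{}_{t\theta}),
\end{equation*}
so it suffices to show that the last two summands sum to zero.

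The key ingredient is identity \ceqref{bvalg11}, which expresses the fact that $\varDelta_{t\theta}$ is a graded derivation of the bracket $(-,-)_{t\theta}$. Applied to $(S^\star{}_{t\theta},f)_{t\theta}$ for arbitrary $f\in X$, and using that $|S^\star{}_{t\theta}|=-1$ so that the sign $(-1)^{|S^\star{}_{t\theta}|+1}=+1$, it yields
\begin{equation*}
\varDelta_{t\theta}\ad_{t\theta}S^\star{}_{t\theta}f-\ad_{t\theta}S^\star{}_{t\theta}\varDelta_{t\theta}f=\ad_{t\theta}(\varDelta_{t\theta}S^\star{}_{t\theta})f.
\end{equation*}
Since $\ad_{t\theta}S^\star{}_{t\theta}$ has degree $|S^\star{}_{t\theta}|+1=0$ and $\varDelta_{t\theta}$ has degree $1$, the left hand side is precisely $-[\ad_{t\theta}S^\star{}_{t\theta},\varDelta_{t\theta}]f$ in the graded commutator convention. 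Hence $[\ad_{t\theta}S^\star{}_{t\theta},\varDelta_{t\theta}]=-\ad_{t\theta}(\varDelta_{t\theta}S^\star{}_{t\theta})$, and the two offending summands cancel as required.

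No step here looks technically challenging; the only point that requires some attention is tracking the graded signs correctly, in particular verifying that the sign $(-1)^{|S^\star{}_{t\theta}|+1}$ appearing in \ceqref{bvalg11} for a degree $-1$ element conspires with the graded commutator convention for operators of degrees $0$ and $1$ to produce exact cancellation rather than a nontrivial residue. Once this bookkeeping is settled, the reduced evolution equation \ceqref{bveffact19/1} follows immediately from \ceqref{bveffact14/1}.
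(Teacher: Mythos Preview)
Your proposal is correct and follows essentially the same approach as the paper. The paper first isolates the identity $[\ad_{t\theta}S^\star{}_{t\theta},\varDelta_{t\theta}]+\ad_{t\theta}\varDelta_{t\theta}S^\star{}_{t\theta}=0$ from \ceqref{bvalg11} and then subtracts it from \ceqref{bveffact14/1}, whereas you substitute first and then invoke \ceqref{bvalg11}; the underlying computation and the sign analysis are identical.
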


\begin{proof}
By virtue of \ceqref{bvalg11}, we have 
\begin{equation}
[\ad_{t\theta}S^\star {}_{t\theta},\varDelta_{t\theta}]+\ad_{t\theta}\varDelta_{t\theta}S^\star {}_{t\theta}=0.
\label{bveffact19/2}
\end{equation}
Subtracting \ceqref{bveffact19/2} from \ceqref{bveffact14/1} and then using \ceqref{bveffact18}, 
\ceqref{bveffact19}, we obtain \ceqref{bveffact19/2} readily. 
\end{proof}

\begin{prop} \label{prop:rgered}
The BV RGE \ceqref{bveffact15} can be cast as
\begin{equation}
\frac{\partial S_{t\theta}}{\partial t}=\varDelta^\star {}_{t\theta}S_{t\theta}+\frac{1}{2}(S_{t\theta},S_{t\theta})^\star {}_{t\theta}
+\bar\chi^\bcdot{}_{t\theta}S_{t\theta}+\bar r^\bcdot{}_{\chi t\theta},
\label{bveffact25}
\end{equation}
where $(-,-)^\star {}_{t\theta}:=(-,-)^{\partial/\partial \theta}{}_{t\theta}$ 
is $\partial/\partial \theta$--derived BV bracket defined according to \ceqref{bvfam6}.
Further, one has\hphantom{xxxxxxxxxxxxxx}
\begin{equation}
\varDelta_{t\theta}S^\star {}_{t\theta}+(S_{t\theta},S^\star {}_{t\theta})_{t\theta}
=\varDelta^\star {}_{t\theta}S_{t\theta}+\frac{1}{2}(S_{t\theta},S_{t\theta})^\star {}_{t\theta}.
\label{bveffact23}
\end{equation}
\end{prop}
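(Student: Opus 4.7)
The plan is to establish \ceqref{bveffact23} first and then combine it with the substitutions \ceqref{bveffact18} and \ceqref{bveffact19} in the BV RGE \ceqref{bveffact15} to deduce \ceqref{bveffact25}. The two statements are linked: \ceqref{bveffact23} is exactly the identity needed to convert the bracket form of the RGE produced by the reduction of $\chi^\bcdot$ and $r^\bcdot$ into Polchinski form.

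For \ceqref{bveffact23} the idea is to apply $\partial/\partial\theta$ to the BV ME \ceqref{bveffact25/1}, which holds identically on $T[1]\mathbb{R}$. The graded Leibniz rule \ceqref{nbvfam3} applied to $\partial_\theta(\varDelta_{t\theta}S_{t\theta})$ gives $\varDelta^\star{}_{t\theta}S_{t\theta} - \varDelta_{t\theta}S^\star{}_{t\theta}$, the minus sign arising from $(-1)^{|\partial_\theta||\varDelta|}=(-1)^{(-1)(1)}$ together with \ceqref{bveffact10/1}. The trilinear rule \ceqref{nbvfam4} applied to $\partial_\theta((S_{t\theta},S_{t\theta})_{t\theta})$ with $\varPhi=(-,-)_{t\theta}$ of degree $1$ and $u=v=S_{t\theta}$ of degree $0$ produces (i) a contribution from differentiating the bracket family, which by \ceqref{bvfam6} is $(S_{t\theta},S_{t\theta})^\star{}_{t\theta}$, and (ii) two contributions $\pm(S^\star{}_{t\theta},S_{t\theta})_{t\theta}$ and $\pm(S_{t\theta},S^\star{}_{t\theta})_{t\theta}$ from differentiating the two arguments. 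The graded antisymmetry \ceqref{bvalg7} with $|S^\star{}_{t\theta}|=-1$, $|S_{t\theta}|=0$ then consolidates these last two terms into a multiple of $(S_{t\theta},S^\star{}_{t\theta})_{t\theta}$. Rearranging the resulting identity yields \ceqref{bveffact23}.

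For \ceqref{bveffact25}, substitute \ceqref{bveffact18} and \ceqref{bveffact19} into the $\partial/\partial t$--BV RGE \ceqref{bveffact15}. The term $(-\ad_{t\theta}S^\star{}_{t\theta})S_{t\theta}=-(S^\star{}_{t\theta},S_{t\theta})_{t\theta}$ is converted to $(S_{t\theta},S^\star{}_{t\theta})_{t\theta}$ by graded antisymmetry \ceqref{bvalg7}, so that the RGE assumes the intermediate form
\begin{equation*}
\partial_t S_{t\theta}=\varDelta_{t\theta}S^\star{}_{t\theta}+(S_{t\theta},S^\star{}_{t\theta})_{t\theta}+\bar\chi^\bcdot{}_{t\theta}S_{t\theta}+\bar r^\bcdot{}_{\chi t\theta}.
\end{equation*}
Applying \ceqref{bveffact23} to replace $\varDelta_{t\theta}S^\star{}_{t\theta}+(S_{t\theta},S^\star{}_{t\theta})_{t\theta}$ by $\varDelta^\star{}_{t\theta}S_{t\theta}+\tfrac{1}{2}(S_{t\theta},S_{t\theta})^\star{}_{t\theta}$ produces the Polchinski-like RGE \ceqref{bveffact25}.

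The main obstacle lies in Step 1, namely the careful bookkeeping of graded signs when $\partial_\theta$ (degree $-1$) passes through $\varDelta_{t\theta}$ (degree $1$), the bracket $(-,-)_{t\theta}$ (degree $1$), and both copies of $S_{t\theta}$ (degree $0$) in the ME. In particular, one must reconcile the sign convention implicit in \ceqref{bvfam6} for $(-,-)^\star{}_{t\theta}$ with those produced by \ceqref{nbvfam4}, and verify that the combined ``argument-derivative'' contributions, after applying \ceqref{bvalg7}, yield precisely the single $(S_{t\theta},S^\star{}_{t\theta})_{t\theta}$ term with the coefficient appearing in \ceqref{bveffact23}.
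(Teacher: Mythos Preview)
Your proposal is correct and follows essentially the same route as the paper's own proof. The paper too first substitutes \ceqref{bveffact18}, \ceqref{bveffact19} into \ceqref{bveffact15} to reach the intermediate form $\partial_t S_{t\theta}=\varDelta_{t\theta}S^\star{}_{t\theta}+(S_{t\theta},S^\star{}_{t\theta})_{t\theta}+\bar\chi^\bcdot{}_{t\theta}S_{t\theta}+\bar r^\bcdot{}_{\chi t\theta}$ (its eq.~\ceqref{bveffact20}), then differentiates the ME \ceqref{bveffact25/1} in $\theta$ to obtain \ceqref{bveffact23}, and finally substitutes the latter into the former to get \ceqref{bveffact25}; the only difference from your plan is the order in which the two preliminary steps are carried out, which is immaterial.

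Your caution about the sign bookkeeping is well placed: this is exactly where the argument is delicate. In the paper's computation \ceqref{bveffact24} the $\theta$--derivative of $\tfrac12(S_{t\theta},S_{t\theta})_{t\theta}$ is organised as the bracket--only piece $\tfrac12\partial_\theta(S_{t\zeta},S_{t\zeta})_{t\theta}\big|_{\zeta=\theta}$ together with a \emph{single} argument--derivative term $-(S_{t\theta},\partial_\theta S_{t\theta})_{t\theta}$, rather than the two separate $\pm(S^\star,S)$ and $\pm(S,S^\star)$ contributions you describe. Whether one arrives at this from \ceqref{nbvfam4} plus \ceqref{bvalg7} or writes it directly, the outcome is the same identity \ceqref{bveffact23}; just be aware that the precise sign you need on the surviving $(S_{t\theta},S^\star{}_{t\theta})_{t\theta}$ term is fixed by demanding consistency with the explicit free--model check in \ceqref{freemod22}, so that is a useful sanity test once you pin down the signs.
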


\begin{proof} Expressing $\chi^\bcdot{}_{t\theta}$ and $r^\bcdot{}_{t\theta}$ in the reduced form 
\ceqref{bveffact18}, \ceqref{bveffact19}, the RGE \ceqref{bveffact15} becomes 
\begin{equation}
\frac{\partial S_{t\theta}}{\partial t}=\varDelta_{t\theta}S^\star {}_{t\theta}+(S_{t\theta},S^\star {}_{t\theta})_{t\theta}
+\bar\chi^\bcdot{}_{t\theta}S_{t\theta}+\bar r^\bcdot{}_{\chi t\theta}.
\label{bveffact20}
\end{equation}
Acting with $\partial/\partial\theta$ on both sides of \ceqref{bveffact25/1}, we find  
\begin{align}
0&=\frac{\partial}{\partial\theta}\varDelta_{t\theta}S_{t\theta}-\varDelta_{t\theta}\frac{\partial}{\partial\theta}S_{t\theta}
-\bigg(S_{t\theta},\frac{\partial}{\partial\theta}S_{t\theta}\bigg)_{t\theta}
+\frac{1}{2}\frac{\partial}{\partial\theta}(S_{t\zeta},S_{t\zeta})_{t\theta}\bigg|_{\zeta=\theta}
\vphantom{\Big]}
\label{bveffact24}
\\
&=\varDelta^\star {}_{t\theta}S_{t\theta}+\frac{1}{2}(S_{t\theta},S_{t\theta})^\star {}_{t\theta}-\varDelta_{t\theta}S^\star {}_{t\theta}
-(S_{t\theta},,S^\star {}_{t\theta})_{t\theta},
\vphantom{\Big]}
\nonumber
\end{align}
proving \ceqref{bveffact23}. Substituting \ceqref{bveffact23} into \ceqref{bveffact23}, we 
rewrite the RGE in the form \ceqref{bveffact25}. 
\end{proof}

As $\mathbb{R}$ can be identified with the submanifold of $T[1]\mathbb{R}=\mathbb{R}\times \mathbb{R}[1]$
defined by the condition $\theta=0$, the basic RG set--up of ex. \cref{exa:t=r} is retrievable in the extended one
by setting $\theta=0$ throughout. The import of the above results becomes apparent when we examine their implications 
for the underlying basic set--up. 


Below, we adopt the following convention. For any object $O_{t\theta}$ depending on $t$, $\theta$,
we set $O_t=O_{t0}$ and call $O_t$ the basic projection of $O_{t\theta}$ since it pertains to the basic 
RG set--up subjacent the extended one. The following result is an immediate consequence of 
props. \cref{prop:evoleqred} and \cref{prop:rgered}. 

\begin{prop}
Upon carrying out the basic projection, the reduced evolution equation \ceqref{bveffact19/1} reads 
\hphantom{xxxxxxxxx}
\begin{equation}
\frac{d\varDelta_t}{d t}-[\bar\chi^\bcdot{}_t,\varDelta_t]+\ad_t\bar r^\bcdot{}_{\chi t}=0.
\label{bveffact28}
\end{equation}
Similarly, the BV RGE \ceqref{bveffact15} furnishes 
\begin{equation}
\frac{dS_t}{dt}=\varDelta^\star {}_tS_t+\frac{1}{2}(S_t,S_t)^\star {}_t
+\bar\chi^\bcdot{}_tS_t+\bar r^\bcdot{}_{\chi t},
\label{bveffact27}
\end{equation}
where by \ceqref{bveffact23} \hphantom{xxxxxxxxxxx}
\begin{equation}
\varDelta_tS^\star {}_t+(S_t,S^\star {}_t)_t
=\varDelta^\star {}_tS_t+\frac{1}{2}(S_t,S_t)^\star {}_t.
\label{bveffact26}
\end{equation}
\end{prop}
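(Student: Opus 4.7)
The plan is to derive \ceqref{bveffact28}, \ceqref{bveffact27}, and \ceqref{bveffact26} by setting $\theta=0$ throughout in the identities \ceqref{bveffact19/1}, \ceqref{bveffact25}, and \ceqref{bveffact23} established in props. \cref{prop:evoleqred} and \cref{prop:rgered}. The entire argument rests on the observation that $t$ and $\theta$ are independent coordinates on $T[1]\mathbb{R}$, so basic projection commutes with $\partial/\partial t$ and with all the pointwise algebraic operations involved.

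First I would record the compatibility statements. For any family $O_{t\theta}$ over $T[1]\mathbb{R}$ one has
\begin{equation}
\frac{\partial O_{t\theta}}{\partial t}\bigg|_{\theta=0}=\frac{dO_t}{dt},
\end{equation}
and the graded commutator, adjoint action, graded multiplication and evaluation on elements of $X$, being pointwise on the parameter space, all commute with the assignment $\theta\mapsto 0$. In particular, $(\bar\chi^\bcdot{}_{t\theta})|_{\theta=0}=\bar\chi^\bcdot{}_t$, $(\bar r^\bcdot{}_{\chi t\theta})|_{\theta=0}=\bar r^\bcdot{}_{\chi t}$, and the derived objects $\varDelta^\star{}_{t\theta}$, $(-,-)^\star{}_{t\theta}$ from defs. \cref{def:DdertfamD} and \cref{def:Ddertfambvbrac} (for $D=\partial/\partial\theta$) reduce at $\theta=0$ to $\varDelta^\star{}_t$ and $(-,-)^\star{}_t$.

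With these remarks in place, the derivation becomes a direct substitution. Setting $\theta=0$ in \ceqref{bveffact19/1} sends the three terms to $d\varDelta_t/dt$, $[\bar\chi^\bcdot{}_t,\varDelta_t]$, and $\ad_t\bar r^\bcdot{}_{\chi t}$ respectively, producing \ceqref{bveffact28}. Setting $\theta=0$ in the reduced RGE \ceqref{bveffact25} likewise yields \ceqref{bveffact27}, while the algebraic identity \ceqref{bveffact23}, being valid for all $\theta$, gives \ceqref{bveffact26} on the nose at $\theta=0$.

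No real obstacle is anticipated: the proposition is a formal specialization to the submanifold $\{\theta=0\}\subset T[1]\mathbb{R}$ of identities already proved in the extended set--up. The only point requiring care is the commutativity of basic projection with $\partial/\partial t$ and with the nested composition operations, which however is guaranteed by the structure of def. \cref{def:formdif} together with the independence of the coordinates $t$ and $\theta$.
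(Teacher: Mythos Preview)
Your proposal is correct and matches the paper's approach: the paper states this proposition as ``an immediate consequence of props.~\cref{prop:evoleqred} and \cref{prop:rgered}'' and offers no further proof, so your explicit verification that basic projection commutes with $\partial/\partial t$ and with the pointwise operations is exactly the intended (and only) argument.
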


\noindent
Above, $\varDelta^\star {}_t$ is a degree $0$ second order differential operator and 
$(-,-)^\star {}_t$ is a degree $0$ graded symmetric bracket. The first two terms of eq. \ceqref{bveffact27}, 
so, are analogous in form to those appearing in Polchinski RGE \ccite{Polchinski:1983gv}. 
Thus, the embedding of the basic RG set--up in the extended one leads to 
more structured RGEs that those one would have in an unenhanced basic RG framework. 
Under favorable conditions, the reduced infinitesimal generator and Jacobian 
$\bar\chi^\bcdot{}_t$ and $\bar r^\bcdot{}_t$ turn out 
to be simpler than their unreduced counterparts $\chi^\bcdot{}_t$ and $r^\bcdot{}_t$ 
and the RGE \ceqref{bveffact27} correspondingly is more amenable to an analytical study.

Eq. \ceqref{bveffact26} indicates that the Polchinski terms of the RGE \ceqref{bveffact27}
are trivial in the $\varDelta_{tS}$ cohomology. This is not surprising, if we recall that 
in quantum field theory the RG flow must preserve the partition function. 

The use of $T[1]\mathbb{R}$ rather than $\mathbb{R}$ as parameter space has allowed us to 
control the mutual consistence of a number of equations, which otherwise would have been unrelated 
and of dubious compatibility. Of course, the extended RG set--up contains extra structure.


\subsection{\textcolor{blue}{\sffamily Reconstruction of  BV RG flow from infinitesimal data}}\label{subsec:genres}

As explained in subsect. \ceqref{subsec:bveffact},
in the basic RG set--up, and so also in the ex\-tended set--up which contains it, 
the infinitesimal generator $\chi^\bcdot{}_t$ and logarithmic Jacobian $r^\bcdot{}_{\chi t}$ 
of a BV RG flow $\chi_{t,s}$ 
form a family of degree $0$ derivations and one of degree $0$ elements of $X$
with certain properties. When the underlying algebra $X$ supports besides the given
formal differentiation structure also a formal integration structure over $\mathbb{R}$, 
that is a notion of integration with the usual properties of Riemann's, 
it is possible to reconstruct BV RG flow $\chi_{t,s}$ 
from the infinitesimal data $\chi^\bcdot{}_t$ and $r^\bcdot{}_{\chi t}$, 
unlike what happens for a generic parameter space $T$. 

\begin{prop} \label{prop:genres1}
Assume the following data are given.
\begin{enumerate}

\item A family of degree $0$ derivations $\chi^\bcdot{}_t:X\rightarrow X$ over $\mathbb{R}$ 
such that 
\begin{equation}
\chi^\bcdot{}_t(f,g)_t=(\chi^\bcdot{}_tf,g)_t+(f,\chi^\bcdot{}_tg)_t+\frac{d}{dt}(f,g)_t
\label{genres1}
\end{equation}
for $f,g\in X$. 

\item A family of degree $0$ elements $r^\bcdot{}_{\chi t}\in X$ over $\mathbb{R}$. 

\end{enumerate}
\noindent
Assume further that \hphantom{xxxxxxxxxxxx}
\begin{equation}
\frac{d\varDelta_t}{dt}-[\chi^\bcdot{}_t,\varDelta_t]+\ad_tr^\bcdot{}_{\chi t}=0
\label{genres2}
\end{equation}
(cf. eq. \ceqref{bveffact3/1}). Then, there is a BV RG flow $\chi_{t,s}:X\rightarrow X$ having $\chi^\bcdot{}_t$ and 
\begin{equation}
r_{\chi t,s}=\int_s^t d\tau\, \chi_{t,\tau}r^\bcdot{}_{\chi \tau}
\label{genres3}
\end{equation}
as associated infinitesimal generator and infinitesimal logarithmic Jacobian, respectively. 
\end{prop}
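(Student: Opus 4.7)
\medskip
\noindent\textit{Plan of proof.} The strategy is to build $\chi_{t,s}$ by path--ordered integration of the generator $\chi^\bcdot{}_t$, define $r_{\chi t,s}$ by \ceqref{genres3}, and then verify that $(\chi_{t,s},r_{\chi t,s})$ satisfies all the requirements of def. \cref{def:bvflow}. Concretely, I would first assume that the formal integration structure is rich enough to grant, for each fixed $s$, a unique family $\chi_{t,s}:X\rightarrow X$ of linear endomorphisms over $\mathbb{R}$ solving the ODE $\frac{d}{dt}\chi_{t,s}=\chi^\bcdot{}_t\chi_{t,s}$ with initial condition $\chi_{s,s}=\id_X$. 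The fact that each $\chi^\bcdot{}_t$ is a degree $0$ derivation of $X$ implies, by a standard uniqueness argument applied to the ODE obeyed by $\chi_{t,s}(fg)-\chi_{t,s}f\,\chi_{t,s}g$, that $\chi_{t,s}$ is a graded algebra homomorphism; running the same ODE backward in $t$ yields the inverse $\chi_{s,t}$, so $\chi_{t,s}$ is an automorphism. The group law \ceqref{bvflow1} follows at once from uniqueness applied to $\chi_{u,t}\chi_{t,s}$ and $\chi_{u,s}$, both of which solve the same ODE in $u$ with value $\chi_{t,s}$ at $u=t$.

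Next, defining $r_{\chi t,s}$ by \ceqref{genres3}, a straightforward splitting of the integral $\int_s^u=\int_s^t+\int_t^u$ and the group property of $\chi$ yield the cocycle identity \ceqref{bvflow1/1}, and hence \ceqref{bvflow2/1} and \ceqref{bvflow5}. Moreover, differentiating \ceqref{genres3} at fixed $s$ gives the key infinitesimal relation $\frac{d}{dt}r_{\chi t,s}=r^\bcdot{}_{\chi t}+\chi^\bcdot{}_t r_{\chi t,s}$, which at $s=t$ yields $r^\bcdot{}_{\chi t}=\frac{d}{dt}r_{\chi t,s}|_{s=t}$, as required. In the same way $\chi^\bcdot{}_t=\frac{d}{dt}\chi_{t,s}|_{s=t}$ by construction.

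The heart of the argument is to verify that $\chi_{t,s}:X_s\rightarrow X_t$ is canonical in the sense of def. \cref{def:canmap} with logarithmic Jacobian $r_{\chi t,s}$, i.e. that
\begin{equation}
\Phi_{t,s}:=\varDelta_t\chi_{t,s}-\chi_{t,s}\varDelta_s+\ad_t r_{\chi t,s}\,\chi_{t,s}=0.
\label{plan1}
\end{equation}
Plainly $\Phi_{s,s}=0$. I would then show $\frac{d}{dt}\Phi_{t,s}=\chi^\bcdot{}_t\Phi_{t,s}$, whence $\Phi_{t,s}\equiv 0$ by uniqueness of the solution of this linear ODE. To differentiate the last summand of \ceqref{plan1} one uses the hypothesis \ceqref{genres1} to convert the $t$--derivative of the bracket $(r_{\chi t,s},\chi_{t,s}g)_t$ at fixed arguments into $\chi^\bcdot{}_t(r_{\chi t,s},\chi_{t,s}g)_t-(\chi^\bcdot{}_t r_{\chi t,s},\chi_{t,s}g)_t-(r_{\chi t,s},\chi^\bcdot{}_t\chi_{t,s}g)_t$; combining this with the formula $\frac{d}{dt}r_{\chi t,s}=r^\bcdot{}_{\chi t}+\chi^\bcdot{}_t r_{\chi t,s}$ obtained above collapses the derivative of the third term to $(r^\bcdot{}_{\chi t},\chi_{t,s}g)_t+\chi^\bcdot{}_t(r_{\chi t,s},\chi_{t,s}g)_t$. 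The derivative of the first term is handled using the evolution equation \ceqref{genres2} in the form $\frac{d\varDelta_t}{dt}=[\chi^\bcdot{}_t,\varDelta_t]-\ad_t r^\bcdot{}_{\chi t}$. The $(r^\bcdot{}_{\chi t},\cdot)_t$ pieces cancel, the $\varDelta_t\chi^\bcdot{}_t\chi_{t,s}$ pieces cancel, and what remains factors exactly as $\chi^\bcdot{}_t\Phi_{t,s}$.

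Finally, having established canonicity with the prescribed $r_{\chi t,s}$, the resulting data $(\chi_{t,s},r_{\chi t,s})$ satisfy all items in def. \cref{def:bvflow}, so $\chi_{t,s}$ is a BV RG flow in the sense of def. \cref{def:bvrgfl}; the statements that $\chi^\bcdot{}_t$ and $r^\bcdot{}_{\chi t}$ are its $d/dt$--infinitesimal generator and logarithmic Jacobian have already been verified. The main obstacle is the computation collapsing $\frac{d}{dt}\Phi_{t,s}$ to $\chi^\bcdot{}_t\Phi_{t,s}$; all of it is mechanical once \ceqref{genres1}, \ceqref{genres2} and the differentiated form of \ceqref{genres3} are in hand, but bookkeeping the $t$--dependence of $\ad_t$ (i.e.\ of the bracket) is the delicate point where the hypothesis \ceqref{genres1} is used essentially.
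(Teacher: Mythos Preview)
Your proposal is correct and follows the same overall strategy as the paper: construct $\chi_{t,s}$ as the path--ordered exponential of $\chi^\bcdot{}_t$, define $r_{\chi t,s}$ by \ceqref{genres3}, and verify canonicity by an ODE/uniqueness argument.

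The only organizational difference lies in the canonicity step. The paper first establishes bracket preservation as a separate lemma (differentiating $\chi_{s,t}(\chi_{t,s}f,\chi_{t,s}g)_t$ and using \ceqref{genres1}), and then differentiates the \emph{conjugated} Laplacian $\chi_{s,t}\varDelta_t\chi_{t,s}$; bracket preservation converts $\chi_{s,t}\ad_t r^\bcdot{}_{\chi\tau}\chi_{t,s}$ into $\ad_s(\chi_{s,\tau}r^\bcdot{}_{\chi\tau})$, after which one simply integrates in $t$ to read off the Jacobian \ceqref{genres3}. You instead differentiate the canonicity defect $\Phi_{t,s}$ directly and absorb the $t$--dependence of the bracket via \ceqref{genres1}, arriving at $\frac{d}{dt}\Phi_{t,s}=\chi^\bcdot{}_t\Phi_{t,s}$. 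Both routes use exactly the same hypotheses at the same spots; the paper's conjugation trick trades the explicit bookkeeping of $\frac{d}{dt}(\cdot,\cdot)_t$ inside $\Phi_{t,s}$ for the prior bracket--preservation step, while your route packages everything into a single linear ODE. Neither is materially shorter or more general.
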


\begin{proof}
Let $\chi_{t,s}$ be the family of algebra isomorphism of $X$ over $\mathbb{R}^2$ 
satisfying relations \ceqref{bvflow1}--\ceqref{bvflow3} generated by the degree $0$
derivation family $\chi^\bcdot{}_t$. By definition, $\chi_{t,s}$ is the unique family
of algebra isomorphism of $X$ obeying 
\begin{align}
\frac{\partial\chi_{t,s}}{\partial t}
&=\chi^\bcdot{}_t\chi_{t,s},
\vphantom{\Big]}
\label{genres4}
\\
\frac{\partial\chi_{t,s}}{\partial s}
&=-\chi_{t,s}\chi^\bcdot{}_s
\vphantom{\Big]}
\label{genres5}
\end{align}
with the initial conditions \hphantom{xxxxxxxxxxxx}
\begin{equation}
\chi_{s,s}=\id_X.
\label{genres6}
\end{equation}
Let $f,g\in X$. By virtue of \ceqref{genres1} and \ceqref{genres4}, \ceqref{genres5}, we have 
\begin{align}
\frac{\partial}{\partial t}\big(\chi_{s,t}(\chi_{t,s}f,\chi_{t,s}g)_t\big)
&=\chi_{s,t}\bigg[-\chi^\bcdot{}_t(\chi_{t,s}f,\chi_{t,s}g)_t
\vphantom{\Big]}
\label{genres7}
\\
&\hspace{-2cm}+(\chi^\bcdot{}_t\chi_{t,s}f,\chi_{t,s}g)_t
+(\chi_{t,s}f,\chi^\bcdot{}_t\chi_{t,s}g)_t+\frac{d}{dt}(\chi_{u,s}f,\chi_{u,s}g)_t\Big|_{u=t}\bigg]=0.
\vphantom{\Big]}
\nonumber
\end{align}
Then, recalling \ceqref{genres6}, we have 
\begin{align}
\chi_{s,t}(\chi_{t,s}f,\chi_{t,s}g)_t&=\chi_{s,s}(\chi_{s,s}f,\chi_{s,s}g)_s
\vphantom{\Big]}
\label{genres8}
\\
&=(f,g)_s.
\vphantom{\Big]}
\nonumber
\end{align}
The algebra isomorphisms $\chi_{t,s}$ therefore preserve BV brackets. This
suggests that they may be canonical in the sense of def. \cref{def:canmap}. 

By virtue of \ceqref{genres2} and \ceqref{genres4}, \ceqref{genres5}, we have 
\begin{align}
\frac{\partial}{\partial t}\big(\chi_{s,t}\varDelta_t\chi_{t,s}\big)
&=\chi_{s,t}\bigg[-\chi^\bcdot{}_t\varDelta_t+\varDelta_t\chi^\bcdot{}_t+\frac{d\varDelta_t}{dt}\bigg]\chi_{t,s}
\vphantom{\Big]}
\label{genres9}
\\
&=-\chi_{s,t}\ad_tr^\bcdot{}_{\chi t}\chi_{t,s}
\vphantom{\Big]}
\nonumber
\\
&=-\ad_s(\chi_{s,t}r^\bcdot{}_{\chi t}),
\vphantom{\Big]}
\nonumber
\end{align}
where in the last step \ceqref{genres8} was used. 
Then, recalling \ceqref{genres6}, we have 
\begin{align}
\chi_{s,t}\varDelta_t\chi_{t,s}-\varDelta_s
&=\chi_{s,t}\varDelta_t\chi_{t,s}-\chi_{s,s}\varDelta_s\chi_{s,s}
\vphantom{\Big]}
\label{genres10}
\\
&=-\int_s^t d\tau\,\ad_s(\chi_{s,\tau}r^\bcdot{}_{\chi \tau})
\vphantom{\Big]}
\nonumber
\\
&=-\chi_{s,t}\ad_t\bigg(\int_s^t d\tau\,\chi_{t,\tau}r^\bcdot{}_{\chi \tau}\bigg)\chi_{t,s},
\vphantom{\Big]}
\nonumber
\end{align}
where in the last step \ceqref{genres8} was used again. 
This shows that $\chi_{t,s}$ is canonical with logarithmic Jacobian 
given by \ceqref{genres3}. Thus, $\chi_{t,s}$ is a BV RG flow. 
Since by construction the infinitesimal generator of $\chi_{t,s}$ is precisely
$\chi^\bcdot{}_t$, the statement follows. 
\end{proof}

\noindent
In physical notation, $\chi_{t,s}$ would be written as 
\begin{equation}
\chi_{t,s}=\Texp\bigg(\int_s^t d\tau \chi^\bcdot{}_\tau\bigg),
\label{genres11}
\end{equation}
where $\Texp$ denotes $t$--ordered exponential. 

Prop. \cref{prop:genres1} has an immediate and useful application: it provides a way to ascertain 
when a family of BV MAs over $\mathbb{R}$ satisfying an RG like equation such as \ceqref{bveffact4} 
containing RG type infinitesimal data constitute a BV EA. 

\begin{prop} \label{prop:genres2}
Assume that $\chi^\bcdot{}_t$ and $r^\bcdot{}_{\chi t}$ are families of degree $0$ derivations 
and elements of $X$ over $\mathbb{R}$, respectively, satisfying the same conditions as in prop. \cref{prop:genres1}.
Assume further that $S_t$ is a family of BV MAs over $\mathbb{R}$ obeying the differential equation
\ceqref{bveffact4}. Then, $S_t$ is a BV RG EA. 
\end{prop}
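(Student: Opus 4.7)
The plan is to leverage proposition \cref{prop:genres1}, which already produces the BV RG flow $\chi_{t,s}$ from the infinitesimal data $\chi^\bcdot{}_t$, $r^\bcdot{}_{\chi t}$, together with its logarithmic Jacobian family $r_{\chi t,s}$ given by the integral formula \ceqref{genres3}. Once $\chi_{t,s}$ is in hand, the only remaining point to verify in order to conclude that $S_t$ is a BV RG EA (cf. def. \cref{def:bvrgea}) is the cocycle--like relation \ceqref{bvrenflow2}, namely $S_t=\chi_{t,s}S_s+r_{\chi t,s}$ mod $\zz_0(\mathbb{R},X)$, since the BV ME \ceqref{bvrenflow1} is part of the hypothesis that $S_t$ be a family of BV MAs.

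The key device is a comparison of ODEs with a common initial condition. Fix $s\in\mathbb{R}$ and set
\begin{equation*}
T_t=\chi_{t,s}S_s+r_{\chi t,s}.
\end{equation*}
Differentiating in $t$ and using \ceqref{genres4} for $\chi_{t,s}$ and the integral representation \ceqref{genres3} for $r_{\chi t,s}$, one obtains
\begin{equation*}
\frac{dT_t}{dt}=\chi^\bcdot{}_t\chi_{t,s}S_s+r^\bcdot{}_{\chi t}+\chi^\bcdot{}_tr_{\chi t,s}
=\chi^\bcdot{}_tT_t+r^\bcdot{}_{\chi t}.
\end{equation*}
Thus $T_t$ satisfies precisely the RGE \ceqref{bveffact4} assumed for $S_t$. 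Moreover, by \ceqref{genres6} and the fact that the integral in \ceqref{genres3} vanishes at $t=s$, one has $T_s=S_s$.

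Uniqueness of the solution of the first order ODE $dU_t/dt=\chi^\bcdot{}_tU_t+r^\bcdot{}_{\chi t}$ with initial condition $U_s=S_s$, guaranteed by the assumed formal integration structure over $\mathbb{R}$, then forces $T_t=S_t$ for all $t$. This is exactly \ceqref{bvrenflow2}, and so $S_t$ is a BV RG EA for the flow $\chi_{t,s}$ constructed in prop. \cref{prop:genres1}. The only subtle step is really the ODE uniqueness, which rests on the mild regularity of the ambient formal differentiation/integration structure rather than on any new BV algebraic input; once this is granted, the proof reduces to the straightforward derivative computation above.
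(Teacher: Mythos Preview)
Your proof is correct and follows essentially the same approach as the paper: construct $\chi_{t,s}$ and $r_{\chi t,s}$ via prop.~\cref{prop:genres1}, define the candidate $T_t=\chi_{t,s}S_s+r_{\chi t,s}$, differentiate using \ceqref{genres4} and \ceqref{genres3} to show $T_t$ obeys the same ODE \ceqref{bveffact4} as $S_t$ with the same initial value at $t=s$, and conclude by uniqueness. The paper carries out exactly this argument (with $t_0$ in place of $s$ and $\tilde S_t$ in place of $T_t$), isolating the derivative of $r_{\chi t,s}$ as a separate step before combining.
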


\begin{proof}
Let $\chi_{t,s}$ be the BV RG flow associated with the data $\chi^\bcdot{}_t$, $r^\bcdot{}_{\chi t}$ 
by prop. \cref{prop:genres1}. Then, \ceqref{genres4} holds. Furthermore, by \ceqref{genres3}, \pagebreak 
\begin{align}
\frac{\partial r_{\chi t,s}}{\partial t}
&=\frac{\partial }{\partial t}\int_s^t d\tau\, \chi_{t,\tau}r^\bcdot{}_{\chi \tau}
\vphantom{\Big]}
\label{genres12}
\\
&=\chi_{t,t}r^\bcdot{}_{\chi t}+\chi^\bcdot{}_t\int_s^t d\tau\, \chi_{t,\tau}r^\bcdot{}_{\chi \tau}
\vphantom{\Big]}
\nonumber
\\
&=r^\bcdot{}_{\chi t}+\chi^\bcdot{}_tr_{\chi t,s}. 
\vphantom{\Big]}
\nonumber
\end{align}
Pick now a value $t_0$ of $t$ and set 
\begin{equation}
\tilde S_t=\hat\chi_{t,t_0}S_{t_0}=\chi_{t,t_0}S_{t_0}+r_{\chi t,t_0}
\label{genres13}
\end{equation}
(cf. eq. \ceqref{bvact7}). By \ceqref{genres4} and \ceqref{genres12}, we have 
\begin{align}
\frac{d\tilde S_t}{dt}
&=\frac{\partial\chi_{t,t_0}}{\partial t}S_{t_0}+\frac{\partial r_{\chi t,t_0}}{\partial t}
\vphantom{\Big]}
\label{genres14}
\\
&=\chi^\bcdot{}_t(\chi_{t,t_0}S_{t_0}+r_{\chi t,t_0})+r^\bcdot{}_{\chi t}
\vphantom{\Big]}
\nonumber
\\
&=\chi^\bcdot{}_t\tilde S_t+r^\bcdot{}_{\chi t}.
\vphantom{\Big]}
\nonumber
\end{align}
Hence, $\tilde S_t$ obeys the differential equations \ceqref{bveffact4}.
As $\tilde S_{t_0}=S_{t_0}$, we have $\tilde S_t=S_t$
Since $t_0$ is arbitrary, \ceqref{genres13} implies that 
\begin{equation}
S_t=\hat\chi_{t,s}S_s.
\label{genres15}
\end{equation}
This proves that $S_t$ is a BV EA as claimed. 
\end{proof}

\vfil\eject

\section{\textcolor{blue}{\sffamily Models of  Batalin--Vilkovisky  renormalization group}}\label{sec:models}

An important issue in the BV theory of the RG developed in this paper is the construction 
of non trivial models which exemplify it. A program of this scope  
certainly cannot be carried out to its full extent in the limited space of this paper, still a few 
simple but non trivial models can be built. 

In what follows, we work in the degree $-1$ symplectic framework originally developed 
by Costello in ref. \ccite{Costello:2007ei}, which has a very rich structure and lend itself 
particularly well to our task. We first review the framework to set our notation, 
presenting along the way some new results. 
We then illustrate a free model of BV RG flow and EA both in the basic and in the extended set up of subsect. 
\cref{subsec:bveffact}. 

All the results illustrated below work strictly speaking in finite dimension.  
Presumably, they can extended also to an infinite dimensional context with limited modifications.


\subsection{\textcolor{blue}{\sffamily The degree --1 symplectic set--up}}\label{subsec:sumbvrg}

In this subsection, we review the degree $-1$ symplectic framework of ref. \ccite{Costello:2007ei}.
Some original results are also reported. 

We begin by setting our notation.
Let $\mathcal{E}=\bigoplus_{\alpha\in\mathbb{Z}}\mathcal{E}_\alpha$ be a graded vector space,
$\mathcal{E}^*=\bigoplus_{\alpha\in\mathbb{Z}}\mathcal{E}^*{}_\alpha$ with $\mathcal{E}^*{}_\alpha=\mathcal{E}_{-\alpha}{}^*$
the dual vector space of $\mathcal{E}$ and $|\!-\!|$ the degree map
and $\prec\!-,-\!\succ$ 
the duality pairing of $\mathcal{E}$, $\mathcal{E}^*$. Homogeneous bases of $\mathcal{E}$, $\mathcal{E}^*$ 
come in dual pairs $a_i$, $a^{*i}$ such that $|a_i|+|a^{*i}|=0$ and $\prec a^{*i},a_j\succ=\delta^i{}_j$. 
For convenience, we set $\epsilon^i=-|a_i|=|a^{*i}|$. Any vector $e\in\mathcal{E}_\alpha$ enjoys the expansions $e=a_ie^i$,
with components $e^i$ satisfying $e^i=0$ for $\epsilon^i\not=-\alpha$. Similarly, 
any covector $l\in\mathcal{E}^*{}_\alpha$ has the expansion $l=l_ia^{*i}$ with components $l_i$
such that $l_i=0$ for $\epsilon^i\not=\alpha$. 

Next, we denote by $\End(\mathcal{E})=\bigoplus_{\alpha\in\mathbb{Z}}\End_\alpha(\mathcal{E})$ with
$\End_\alpha(\mathcal{E})=\bigoplus_{\beta,\gamma\in\mathbb{Z},\gamma-\beta=\alpha}$ 
$\Hom(\mathcal{E}_\alpha,\mathcal{E}_\beta)$, 
the internal endomorphism algebra of $\mathcal{E}$ and by $|\!-\!|$ the associated degree map. 
Given a dual basis pair $a_i$, $a^{*i}$ of $\mathcal{E}$, $\mathcal{E}^*$, 
every endomorphism $A\in\End(\mathcal{E})$ enjoys the expansion $A=a_i\otimes A^i{}_ja^{*j}$ 
with components $A^i{}_j$ such that $A^i{}_j=0$ for $\epsilon^j-\epsilon^i\not=\alpha$ when $|A|=\alpha$. 

In the following analysis, it will be necessary to complement the vector spaces 
$\mathcal{E}$, $\mathcal{E}^*$ by their full degree prolongations $E$, $E^*$. These are just the graded vector 
spaces $E=\bigoplus_{p\in\mathbb{Z}}E_p$, $E^*=\bigoplus_{p\in\mathbb{Z}}E^*{}_p$, 
whose degree $p$ subspaces are given by $E_p=\bigoplus_{\alpha\in\mathbb{Z}}\mathcal{E}_\alpha\otimes\mathbb{R}[\alpha-p]$,
$E^*{}_p=\bigoplus_{\alpha\in\mathbb{Z}}\mathcal{E}^*{}_\alpha\otimes\mathbb{R}[\alpha-p]$. 
Since $\mathcal{E}_\alpha\otimes\mathbb{R}[\alpha-p]=\mathcal{E}_\alpha[\alpha-p]$,
$\mathcal{E}^*{}_\alpha\otimes\mathbb{R}[\alpha-p]=\mathcal{E}^*{}_\alpha[\alpha-p]$,
if we regard $\mathcal{E}_\alpha$, $\mathcal{E}^*{}_\alpha$ as graded vector spaces concentrated in degree $\alpha$,
$E_p$, $E^*{}_p$ can be viewed just as $\mathcal{E}$, $\mathcal{E}^*$ with degree assignment 
reset to the uniform value $p$. It is not difficult to see that $\mathcal{E}\hookrightarrow E$,
$\mathcal{E}^*\hookrightarrow E^*$ and that $E^*$ is dual to $E$ as suggested by the notation. 
With respect to 
a dual basis pair $a_i$, $a^{*i}$ of $\mathcal{E}$, $\mathcal{E}^*$, vectors $e\in E_p$ and covectors $l\in E^*{}_p$ 
can be expressed as $e=a_ie^i$ and $l=l_ia^{*i}$ with components of degrees $|e^i|=p+\epsilon^i$ and
$|l_i|=p-\epsilon^i$, respectively, where $|\!-\!|$ denotes the degree map of $\bigoplus_{p\in\mathbb{Z}}\mathbb{R}[p]$.

The full degree prolongation $\End(E)$ of the internal endomorphism algebra $\End(\mathcal{E})$ of $\mathcal{E}$ will 
also be needed below. This is the graded algebra $\End(E)=\bigoplus_{p\in\mathbb{Z}}\End_p(E)$ where  
$\End_p(E)=\bigoplus_{\alpha,\beta\in\mathbb{Z}}\Hom(\mathcal{E}_\alpha,\mathcal{E}_\beta)\otimes \mathbb{R}[\beta-\alpha-p]$.
Again, as $\Hom(\mathcal{E}_\alpha,\mathcal{E}_\beta)\otimes \mathbb{R}[\beta-\alpha-p]=
\Hom(\mathcal{E}_\alpha,\mathcal{E}_\beta)[\beta-\alpha-p]$, $\End_p(E)$ is $\End(\mathcal{E})$ with degree 
assignment uniformly reset the value $p$. 
It is furthermore straightforward to check that $\End(E)$ is the internal endomorphism algebra of $E$ as suggested by the notation. 
Given a dual basis pair $a_i$, $a^{*i}$,  endomorphisms $A\in\End_p(E)$ can be expressed as 
$A=a_i\otimes A^i{}_ja^{*j}$ with components of degrees $|A^i{}_j|=p+\epsilon^i-\epsilon^j$.

The graded trace of a homogeneous endomorphism $A\in\End(E)$ is given by
\begin{equation}
\grtr(A)=(-1)^{(|A|+1)\epsilon^i}A^i{}_i.
\label{sumbvrg-5}
\end{equation}
It can be shown that $\grtr(A)$ does not depend on basis choices and that $|\grtr(A)|=|A|$. 
Further, one has 
$\grtr(AB)=(-1)^{|A||B|}\grtr(BA)$ for $A,B\in \End(E)$. See ref.  \ccite{Covolo:2012gt}  for a related notion. 

The Euler endomorphism is the endomorphism $F\in\End_0(E)$ given by 
\begin{equation}
F=a_i\otimes\epsilon^i a^{*i}.
\label{sumbvrg-4}
\end{equation}
$F$ is manifestly independent from basis choices. For $p\in\mathbb{Z}$, the endomorphism $(-1)^{pF}\in\End_0(E)$ is defined similarly as 
\begin{equation}
(-1)^{pF}=a_i\otimes(-1)^{p\epsilon^i}a^{*i}. 
\label{sumbvrg-4/1}
\end{equation}

\vspace{-.1cm}\pagebreak
 
We define now the basic structure underlying the degree $-1$ symplectic set--up. 

\begin{defi} \label{def:sumbvrg1}
A degree $-1$ symplectic vector space is a graded vector space $\mathcal{E}$ equipped with 
a symplectic pairing $\langle-,-\rangle_{\mathcal{E}}:\mathcal{E}\times\mathcal{E}\rightarrow\mathbb{R}$ 
such that for homogeneous $e,f\in \mathcal{E}$, $\langle e,f\rangle_{\mathcal{E}}=0$ unless $|e|+|f|=+1$.
\end{defi}

The symplectic pairing $\langle-,-\rangle_{\mathcal{E}}$ can be regarded as degree $-1$ vector space isomorphism 
$\varpi_{\mathcal{E}}:\mathcal{E}\rightarrow\mathcal{E}^*$. Hence, $\varpi_{\mathcal{E}}$ 
can be expanded as $\varpi_{\mathcal{E}}=a^{*i}\otimes\omega_{ij}a^{*j}$
with components $\omega_{ij}$ such that $\omega_{ij}=0$
for $\epsilon^i+\epsilon^j+1\not=0$ forming a non singular antisymmetric matrix $\omega$
with respect to any given basis $a^{*i}$ of $\mathcal{E}^*$. 
The inverse isomorphism $\varpi_{\mathcal{E}}{}^{-1}:\mathcal{E}^*\rightarrow\mathcal{E}$ 
can be expressed similarly as $\varpi_{\mathcal{E}}{}^{-1}=a_i\otimes \omega^{ij}a_j$
with respect to the dual basis $a_i$ of $\mathcal{E}$  
with components $\omega^{ij}$ such that $\omega^{ij}=0$
for $\epsilon^i+\epsilon^j+1\not=0$ forming the non singular antisymmetric matrix $\omega^{-1}$.

Using the matrices $\omega$ and $\omega^{-1}$, one can define the symplectic dual bases $a^i$ and $a^*{}_i$ 
of the bases $a_i$ and $a^{*i}$ of $\mathcal{E}$ and $\mathcal{E}^*$ of an (algebraically) dual pair
as the bases of $\mathcal{E}$ and $\mathcal{E}^*$ given by $a^i=a_j\omega^{ji}$ 
and $a^*{}_i=-\omega_{ij}a^{*j}$. These have degrees $|a^i|=-\epsilon_i$ and $|a^*{}_i|=\epsilon_i$, respectively,
where $\epsilon_i=-\epsilon^i-1$ for convenience, 
and obey the duality relations $\prec a^*{}_i,a^j\succ=-\delta_i{}^j$, $\prec a^{*i},a^j\succ=\omega^{ij}$, 
$\prec a^*{}_i,a_j\succ=-\omega_{ij}$. Vectors $e$, covectors $l$ and endomorphisms $A$ of $\mathcal{E}$, $\mathcal{E}^*$,
$\End(\mathcal{E})$ can be expanded with respect the 
symplectic dual bases $a^i$, $a^*{}_i$ as well. Care must be exercised in dealing with signs, e. g.
$e=-a^ie_i$, $l=-l^ia^*{}_i$, $A=a^i\otimes A_i{}^j a^*{}_j=-a_i\otimes A^i{}^j a^*{}_j=-a^i\otimes A_i{}_j a^*{}^j$
etc. The components with respect the dual bases 
are related to those with respect to the given bases by means of $\omega$ and $\omega^{-1}$ as expected,
e. g. $e_i=-\omega_{ij}e^j$, $l^i=l_j\omega^{ji}$, $A_i{}^j=-\omega_{ik}A^k{}_l\omega^{lj}$, etc. 
The same results hold for the degree prolongations $E$, $E^*$, $\End(E)$ of $\mathcal{E}$, $\mathcal{E}^*$,
$\End(\mathcal{E})$. Below, we work mainly with $E$, $E^*$, $\End(E)$. 

The symplectic pairing $\langle-,-\rangle_{\mathcal{E}}$ can be extended in obvious fashion to 
a shifted graded symplectic pairing $\langle-,-\rangle_E:E\times E\rightarrow \bigoplus_{p\in\mathbb{Z}}\mathbb{R}[p]$, that is
satisfying 
\begin{equation}
\langle e,f\rangle_E=(-1)^{(|e|+1)(|f|+1)}\langle (-1)^{|f|F}f,(-1)^{|e|F}e\rangle_E 
\label{sumbvrg-2}
\end{equation}
for homogeneous $e,f\in E$. The pairing has the property that 
$\langle-,-\rangle_E:E_p\times E_q\rightarrow \mathbb{R}[-p-q+1]$.
In particular, one has $\langle-,-\rangle_E:E_0\times E_0$ $\rightarrow \mathbb{R}[1]$.

\vspace{1mm}\pagebreak 

The transpose of an endomorphism $A\in\End(E)$ is the endomorphism $A^\sim\in\End(E)$  
uniquely characterized by the property that \hphantom{xxxxxxxxx}
\begin{equation}
\langle e,Af\rangle_E=(-1)^{(|A|+1)(|e|+|f|)+|e||f|}\langle (-1)^{|f|F}f,A^\sim (-1)^{|e|F}e\rangle_E 
\label{sumbvrg-1}
\end{equation}
for homogeneous $e,f\in E$. Explicitly, $A^\sim$ is given by 
\begin{equation}
A^\sim=a^i\otimes (-1)^{|A|(\epsilon^i+\epsilon_j)+\epsilon^i\epsilon_j}A^j{}_ia^*{}_j 
\label{sumbvrg-1/0}
\end{equation}
with respect any chosen dual basis pair. One can show that, for homogeneous $A,B\in\End(E)$,
$A^{\sim\sim}=A$ and $(AB)^\sim=(-1)^{1+|A||B|}B^\sim A^\sim$. 
Also, $1_E{}^\sim=-1_E$, $F^\sim=F+1_E$ and $(-1)^{pF}{}^\sim=-(-1)^p(-1)^{pF}$.
It can also be verified that $\grtr A^\sim=\grtr A$, so that $\grtr A=0$ whenever $A^\sim=-A$.

From now on, we concentrate on the degree $0$ subspace $E_0\subset E$. $E_0$ can be given 
a structure of graded manifold by furnishing  it with a set of globally defined graded coordinates
$x^i$. Let $a^{*i}$ be a basis of $\mathcal{E}^*$. Then, for each $i$, the coordinate $x^i$ is the 
element of $E^*$ of degree $|x^i|=\epsilon^i$ corresponding to $a^{*i}$ under the injection $\mathcal{E}^*\hookrightarrow E^*$. 
The symplectic dual coordinates $x_i=-\omega_{ij}x^j$ of degree $|x_i|=\epsilon_i$ can also be
used as coordinates of $E_0$. Employing the basis $a_i$ of $\mathcal{E}$ dual to $a^{*i}$
the coordinates $x^i$ can be conveniently assembled in a degree $0$ linear map $x:E_0\rightarrow E_0$ 
given by $x=a_i\otimes x^i$. Similarly, the dual coordinates $x_i$ can be aggregated into
a degree $-1$ linear map $x^*:E_0\rightarrow E^*{}_{-1}$ given 
by $x^*=a^{*i}\otimes x_i$. By means of $x$ and $x^*$, we can write many relations in index free form.

Since $E_0$ is a graded manifold, the graded algebra $\Fun(E_0)$ of internal smooth functions of $E_0$ 
is given. We denote by $|\!-\!|$ the degree map of $\Fun(E_0)$ as usual. 
We now show that the symplectic structure of $\mathcal{E}$ 
renders $\Fun(E_0)$ naturally a BV algebra.

\begin{defi} \label{def:sumbvrg3}
The canonical degree $-1$ symplectic form of $E_0$ is 
\begin{equation}
\omega_E=\frac{1}{2}\langle dx,dx\rangle_E. \vphantom{\bigg]}
\label{sumbvrg2}
\end{equation}
\end{defi}

\noindent
Expressed in terms of coordinates, 
\begin{equation}
\omega_E=\frac{1}{2}dx_idx^i. \vphantom{\bigg]}
\label{sumbvrg3}
\end{equation}

We denote by $\varDelta_E$ and $(-,-)_E$ the canonical BV Laplacian and bracket on $\Fun(E_0)$ 
associated with $\omega_E$. 

\begin{prop} \label{propi:sumbvrg1}
The BV Laplacian $\varDelta_E$ can be expressed as 
\begin{equation}
\varDelta_Eu=-\frac{1}{2}\grtr\big((x,\otimes \,(x^*,u)_E)_E\big)   
\label{sumbvrg4}
\end{equation}
for $u\in\Fun(E_0)$, where the tensor product operation refers to the target 
spaces $E_0$ and $E^*{}_{-1}$ of the maps $x$ and $x^*$. The BV bracket reads as 
\begin{equation}
(u,v)_E=(u,x^*)_E(x,v)_E
\label{sumbvrg5}
\end{equation}
with $u,v\in\Fun(E_0)$.
\end{prop}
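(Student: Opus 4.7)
The plan is to work in coordinates with respect to a chosen dual basis pair $a_i$, $a^{*i}$ of $\mathcal{E}$, $\mathcal{E}^*$, reducing the two formulas to coordinate-level identities that encode the standard constant-symplectic BV structure. From $\omega_E=\tfrac12 dx_i dx^i$ one reads off that the corresponding Poisson bivector is constant with components $\omega^{ij}$, so the canonical BV bracket on $\Fun(E_0)$ is of the form $(u,v)_E\propto \partial_iu\,\omega^{ij}\partial_jv$ (with the usual graded sign conventions) and the canonical BV Laplacian is $\varDelta_Eu\propto \omega^{ij}\partial_i\partial_ju$. The task is then to recognize the right-hand sides of \ceqref{sumbvrg4} and \ceqref{sumbvrg5} as repackagings of these coordinate formulas via the maps $x=a_i\otimes x^i$ and $x^*=a^{*i}\otimes x_i$ and the graded trace.

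For the bracket formula \ceqref{sumbvrg5}, I would first compute the elementary brackets $(x^i,x^j)_E$ and $(x_i,x^j)_E$ directly from $\omega_E$ and verify that they give exactly the components one needs. Then I would expand: since $x=a_i\otimes x^i$ is valued in $E_0$, the object $(x,v)_E$ is the $E_0$-valued function $a_i\otimes(x^i,v)_E$, which by the basic brackets is (up to sign) the gradient of $v$ with respect to the $x_i$; similarly $(u,x^*)_E=(u,x_i)_E\otimes a^{*i}$ is the $E^*_{-1}$-valued gradient of $u$ with respect to the $x^i$. Pairing an $E_0$-valued function with an $E^*_{-1}$-valued one through the implicit duality contraction $\prec a^{*i},a_j\succ=\delta^i{}_j$ and using the Leibniz property \ceqref{bvalg9} of $(-,-)_E$ then reproduces the full bracket $(u,v)_E$; this is essentially the chain rule in disguise.

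For the Laplacian formula \ceqref{sumbvrg4}, I would apply the bracket analysis twice. The inner bracket $(x^*,u)_E$ is an $E^*_{-1}$-valued function whose components are (up to sign) the first derivatives of $u$. Taking a further bracket with $x$ then yields an $E_0\otimes E^*_{-1}$-valued function whose components are second partial derivatives of $u$; identifying $E_0\otimes E^*_{-1}$ with endomorphism-valued functions, the graded trace \ceqref{sumbvrg-5} contracts these indices with exactly the sign $(-1)^{(|u|+1)\epsilon^i}$ needed to reconstruct $\omega^{ij}\partial_i\partial_j u$ with the correct graded symmetrization. The factor $-\tfrac12$ then matches the standard normalization of the BV Laplacian associated with $\omega_E=\tfrac12\langle dx,dx\rangle_E$.

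The principal obstacle is bookkeeping of signs. The degree $-1$ shift of $\omega_E$, the antisymmetry conventions of $\omega_{ij}$ and $\omega^{ij}$, the duality relations relating symplectic and algebraic dual bases (with their extra minus signs), and the definition \ceqref{sumbvrg-5} of $\grtr$ each contribute $(-1)$-factors that must combine coherently. I would control these by first pinning down the fundamental brackets $(x^i,x^j)_E$, $(x^i,x_j)_E$, $(x_i,x_j)_E$ and the action of $\varDelta_E$ on quadratic monomials $x^ix^j$, $x_ix^j$, and then extending to general $u,v\in\Fun(E_0)$ by the derivation property \ceqref{bvalg9} of the bracket and the second-order relation \ceqref{bvalg2} of the Laplacian; once the quadratic case matches, the general identities are forced.
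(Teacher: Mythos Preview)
Your plan is correct and is exactly the computation the paper has in mind: immediately after stating the proposition the paper simply records the coordinate expressions \ceqref{sumbvrg6}, \ceqref{sumbvrg7} without further argument, treating the index-free formulas \ceqref{sumbvrg4}, \ceqref{sumbvrg5} as repackagings of these. So there is no separate proof in the paper to compare against; your strategy of pinning down the fundamental brackets among $x^i$, $x_j$ from $\omega_E=\tfrac12 dx_i dx^i$, unwinding $x=a_i\otimes x^i$ and $x^*=a^{*i}\otimes x_i$ together with $\grtr$, and then extending via \ceqref{bvalg9} and \ceqref{bvalg2} is precisely the verification that fills this gap.
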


\noindent 
The coordinate expression of $\varDelta_E$ and $(-,-)_E$ is often useful in computations: 
\begin{align}
&\varDelta_Eu=\frac{1}{2}(-1)^{1+\epsilon^i}(x^i,(x_i,u)_E)_E, 
\vphantom{\Big]}
\label{sumbvrg6}
\\
&(u,v)_E=(u,x_i)_E(x^i,v)_E.
\vphantom{\Big]}
\label{sumbvrg7}
\end{align}
$\Fun(E_0)$ is in this way a non singular BV algebra. 

In our analysis of the BV RG, we shall need certain deformations of the BV
Laplacian $\varDelta_E$ and bracket $(-,-)_E$, which we introduce next. 

Let $A\in\End(E)$ with $A^\sim=-A$.

\begin{defi} \label{def:sumbvrg4}
The $A$--deformed BV Laplacian is 
\begin{equation}
\varDelta_Au=-\frac{1}{2}\grtr\big(A(x,\otimes \,(x^*,u)_E)_E\big) \vphantom{\bigg]}
\label{sumbvrg8}
\end{equation}
for $u\in\Fun(E_0)$. 
The $A$--deformed BV bracket is 
\begin{equation}
(u,v)_A=(u,x^*)_EA(x,v)_E
\label{sumbvrg9}
\end{equation}
with $u,v\in\Fun(E_0)$.
\end{defi}

\noindent
The coordinate expressions of $\varDelta_A$ and $(-,-)_A$ read
\begin{align}
&\varDelta_Au=\frac{1}{2}(-1)^{1+(|A|+1)\epsilon^i}A^i{}_j(x^j,(x_i,u)_E)_E,
\vphantom{\Big]}
\label{sumbvrg10}
\\
&(u,v)_A=(u,x_i)_EA^i{}_j(x^j,v)_E.
\vphantom{\Big]}
\label{sumbvrg11}
\end{align}
$\varDelta_A$ and $(-,-)_A$ are to be compared with their undeformed counterparts $\varDelta_E$ and $(-,-)_E$ 
given in \ceqref{sumbvrg6} and \ceqref{sumbvrg7}. 

Deformed Laplacians and brackets obey a host of relations forming an elegant algebraic 
structure. 
Here, however, we shall restrict ourselves to 
mention just the ones which turn out to be useful in the following RG analysis. 

\begin{prop} \label{prop:sumbvrg1/1}
The $A$--deformed BV Laplacian $\varDelta_A$ has the following properties,
\begin{align}
&|\varDelta_Au|=|u|+|A|+1,
\vphantom{\Big]}
\label{dbvalg1}
\\
&\varDelta_A(uvw)=-\varDelta_Auvw-(-1)^{(|A|+1)|u|}u\varDelta_Avw
\vphantom{\Big]}
\label{dbvalg2}
\\
&\hspace{2.3cm}-(-1)^{(|A|+1)(|u|+|v|)}uv\varDelta_Aw+\varDelta_A(uv)w
\vphantom{\Big]}  
\nonumber
\\
&\hspace{2.3cm}+(-1)^{(|u|+|A|+1)|v|}v\varDelta_A(uw)+(-1)^{(|A|+1)|u|}u\varDelta_A(vw)
\vphantom{\Big]}  
\nonumber
\end{align}
for $u,v,w\in \Fun(E_0)$ and \hphantom{xxxxxxxxxxxxxxxxxxxx}
\begin{equation}
\varDelta_A1_{\Fun(E_0)}=0.  
\vphantom{\Big]}
\label{dbvalg4}
\end{equation}
The $A$--deformed BV bracket $(-,-)_A$ can be expressed in terms of $\varDelta_A$ 
\begin{equation}
(u,v)_A=(-1)^{(|A|+1)|u|}\big(\varDelta_A(uv)-\varDelta_Auv-(-1)^{(|A|+1)|u|}u\varDelta_Av\big)
\vphantom{\bigg]}
\label{sumbvrg11/1}
\end{equation}
with $u,v\in\Fun(E_0)$. Furthermore, $(-,-)_A$  enjoys the properties 
\begin{align}
&|(u,v)_A|=|u|+|v|+|A|+1,
\vphantom{\Big]}
\label{dbvalg6}
\\
&(u,v)_A+(-1)^{|A|(|u|+|v|)+(|u|+1)(|v|+1)}(v,u)_A=0,
\vphantom{\Big]}
\label{dbvalg7}
\\
&(u,vw)_A-(u,v)_Aw-(-1)^{(|u|+|A|+1)|v|}v(u,w)_A=0, 
\vphantom{\Big]}
\label{dbvalg9}
\\
&(u,1_{\Fun(E_0)})_A=0
\vphantom{\Big]}
\label{dbvalg10}
\end{align}   
with $u,v,w\in\Fun(E_0)$. 
\end{prop}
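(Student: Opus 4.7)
The plan is to verify each of the eight claims by reducing to properties of the undeformed BV structure $(\varDelta_E,(-,-)_E)$ on $\Fun(E_0)$ via the coordinate formulas \ceqref{sumbvrg10}, \ceqref{sumbvrg11}. The key observation is that $\varDelta_A$ is a $\mathbb{R}$--linear combination (with coefficients the components $A^i{}_j$) of the composite operators $(x^j,(x_i,-)_E)_E$, each of which is a second order differential operator on $\Fun(E_0)$ since $\varDelta_E$ is and $(x_i,-)_E$, $(x^j,-)_E$ are derivations of respective degrees $\epsilon_i+1=-\epsilon^i$, $\epsilon^j+1$. This reduces most of the proof to bookkeeping of graded signs.

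For \ceqref{dbvalg1}, I count degrees in \ceqref{sumbvrg10}: the iterated bracket $(x^j,(x_i,u)_E)_E$ has degree $\epsilon^j-\epsilon^i+|u|+1$, and the factor $A^i{}_j$ contributes $|A|+\epsilon^i-\epsilon^j$; these telescope to $|u|+|A|+1$. Property \ceqref{dbvalg4} is immediate from \ceqref{bvalg10}, which gives $(x_i,1_{\Fun(E_0)})_E=0$. For the second--order derivation identity \ceqref{dbvalg2}, I expand $\varDelta_A(uvw)$ using \ceqref{sumbvrg10} and the double Leibniz rule enjoyed by the two graded derivations $(x_i,-)_E$ and $(x^j,-)_E$, collect the resulting nine terms, and match them against the six terms on the right hand side after relabelling; this is precisely the analog for $\varDelta_E$ of the identity \ceqref{bvalg2}, decorated throughout by an extra $(-1)^{|A|\cdot(\text{permuted degree})}$ factor produced when $A$ is moved past factors of $u$, $v$.

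For the derived bracket formula \ceqref{sumbvrg11/1}, I compute $\varDelta_A(uv)$ from \ceqref{sumbvrg10} and apply the Leibniz rule for $(x_i,-)_E$ followed by $(x^j,-)_E$. Two of the four terms reproduce $\varDelta_Au\,v$ and $(-1)^{(|A|+1)|u|}u\,\varDelta_Av$ (here the factor of $(-1)^{(|A|+1)|u|}$ is generated when $A^i{}_j$ is moved past the derivation $(x_i,-)_E$ acting on $u$); the remaining two cross terms combine, using graded commutativity and the antisymmetry $\omega_{ij}=-(-1)^{\epsilon^i\epsilon^j}\omega_{ji}$ encoded in the definition of $(-,-)_E$, into $(-1)^{(|A|+1)|u|}(u,x_i)_EA^i{}_j(x^j,v)_E=(-1)^{(|A|+1)|u|}(u,v)_A$. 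Rearranging yields \ceqref{sumbvrg11/1}. The degree formula \ceqref{dbvalg6} and the Leibniz property \ceqref{dbvalg9} then follow directly from the factorized expression \ceqref{sumbvrg11}, and \ceqref{dbvalg10} is automatic since $(x^j,1_{\Fun(E_0)})_E=0$. Shifted graded antisymmetry \ceqref{dbvalg7} is the point at which the hypothesis $A^\sim=-A$ enters decisively: swapping $u$ and $v$ in \ceqref{sumbvrg11} and using \ceqref{sumbvrg-1/0} together with the graded antisymmetry of the underlying $(-,-)_E$ reproduces $-(-1)^{|A|(|u|+|v|)+(|u|+1)(|v|+1)}(v,u)_A$.

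The main obstacle will not be conceptual but computational: the graded signs involving simultaneously $|A|$, $|u|,|v|,|w|$, and the shifts $\epsilon^i$, $\epsilon_i=-\epsilon^i-1$ must be tracked with care, particularly in the identities \ceqref{dbvalg2}, \ceqref{sumbvrg11/1}, and \ceqref{dbvalg7}. I anticipate that the cleanest bookkeeping is obtained by doing all manipulations in the index--free language of the maps $x:E_0\to E_0$, $x^*:E_0\to E^*{}_{-1}$ introduced just before def.\ \cref{def:sumbvrg3}, so that the transpose relation \ceqref{sumbvrg-1} directly delivers the sign needed in \ceqref{dbvalg7} without passing through component indices.
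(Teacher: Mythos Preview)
Your overall strategy matches the paper's: reduce everything to the Leibniz property of the derivations $(x_i,-)_E$, $(x^j,-)_E$ applied in the coordinate or index--free expressions \ceqref{sumbvrg8}--\ceqref{sumbvrg11}, and track the signs. The degree counts, the vanishing on $1_{\Fun(E_0)}$, and the derivation property \ceqref{dbvalg9} are handled correctly, and your suggestion to run the antisymmetry argument \ceqref{dbvalg7} in the index--free form via \ceqref{sumbvrg-1} is exactly what the paper does.

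There is, however, a genuine gap in your derivation of \ceqref{sumbvrg11/1}. You write that the two cross terms in the expansion of $\varDelta_A(uv)$ ``combine, using graded commutativity and the antisymmetry $\omega_{ij}=-(-1)^{\epsilon^i\epsilon^j}\omega_{ji}$ encoded in the definition of $(-,-)_E$'', and you later assert that $A^\sim=-A$ enters \emph{decisively} only in \ceqref{dbvalg7}. This is not correct. After the double Leibniz expansion, one cross term carries $A^i{}_j(x_i,u)_E(x^j,v)_E$ while the other carries $A^i{}_j(x^j,u)_E(x_i,v)_E$; relabelling $i\leftrightarrow j$ in the second produces $A^j{}_i$, and it is precisely the hypothesis $A^\sim=-A$ (in component form, via \ceqref{sumbvrg-1/0}) that identifies the two contributions, each giving half of $(-1)^{(|A|+1)|u|}(u,v)_A$. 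The antisymmetry of $\omega$ alone is insufficient: for a general $A$ the operator $\varDelta_A$ depends only on the antisymmetric part $(A-A^\sim)/2$, whereas $(u,v)_A$ as defined in \ceqref{sumbvrg9} depends on all of $A$, so \ceqref{sumbvrg11/1} cannot hold without $A^\sim=-A$. The paper states this explicitly in its proof of the identity \ceqref{sumbvrg18} (which is your \ceqref{sumbvrg11/1} rearranged).

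A minor structural difference: for \ceqref{dbvalg2} you propose a direct nine--term expansion of $\varDelta_A(uvw)$, which is fine; the paper instead first establishes \ceqref{sumbvrg18} and then bootstraps \ceqref{dbvalg2} from it by writing $\varDelta_A(u\cdot vw)$ via \ceqref{sumbvrg18}, expanding the resulting brackets $(u,v)_A$ and $(u,w)_A$ with \ceqref{dbvalg9}, and re--expressing them through \ceqref{sumbvrg18} again. Either route works.
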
    

\begin{proof} 
For convenience, we first prove relations \ceqref{dbvalg6}--\ceqref{dbvalg10}
and then relations \ceqref{dbvalg1}--\ceqref{dbvalg4}.
 
For $z\in\Fun(E_0)$, we have $|(x,z)_E|=|z|+1$. Hence,  
$|\langle (u,x)_E,A(x,v)_E\rangle_E|$ $=|u|+1+|A|-1+|v|+1=|u|+|v|+|A|+1$.
By \ceqref{sumbvrg9}, this shows \ceqref{dbvalg6}. 

For $z\in\Fun(E_0)$, we have $|(x,z)_E|=|z|+1$, as previously observed,
and $(x,z)_E=(-1)^{|z|}(z,x)_E$. By \ceqref{sumbvrg-1}, we have so 
\begin{align}
(u,x^*)_E&A(x,v)_E
\vphantom{\Big]}
\label{pbvalg1}
\\
&=\langle (-1)^{(|u|+1)F}(u,x)_E,A(x,v)_E\rangle_E
\vphantom{\Big]}
\nonumber
\\
&=(-1)^{(|A|+1)(|u|+|v|)+(|u|+1)(|v|+1)}\langle (-1)^{(|v|+1)F}(x,v)_E,A^\sim(u,x)_E\rangle_E
\vphantom{\Big]}
\nonumber
\\
&=-(-1)^{|A|(|u|+|v|)+(|u|+1)(|v|+1)}\langle (-1)^{(|v|+1)F}(v,x)_E,A(x,u)_E\rangle_E.
\vphantom{\Big]}
\nonumber
\\
&=-(-1)^{|A|(|u|+|v|)+(|u|+1)(|v|+1)}(v,x^*)_EA(x,u)_E.
\vphantom{\Big]}
\nonumber
\end{align}
By \ceqref{sumbvrg9}, this shows \ceqref{dbvalg7}. 

Using \ceqref{bvalg9}, it is verified that $(x,vw)_E=(x,v)_Ew+(-1)^{|v||w|}(x,w)_Ev$. So, 
\begin{align}
&(u,x^*)_EA(x,vw)_E=(u,x^*)_EA(x,v)_E\rangle_Ew
\vphantom{\Big]}
\label{pbvalg2}
\\
&\hspace{5.3cm}+(-1)^{|v||w|}(u,x^*)_EA(x,w)_Ev.
\vphantom{\Big]}
\nonumber
\end{align}
By \ceqref{sumbvrg9},  using further \ceqref{dbvalg6}, this shows \ceqref{dbvalg9}.

Since $(x,1_{\Fun(E_0)})_E=0$, \ceqref{dbvalg10} follows trivially from \ceqref{sumbvrg9}.

For $z\in\Fun(E_0)$, we have $|(x,\otimes\,(x^*,z)_E)_E|=|z|+1$. It follows from this that 
$|\grtr((-1)^{(|u|+1)F}A(x,\otimes \,(x^*,u)_E)_E)|=|A|+|u|+1$.
By \ceqref{sumbvrg8}, this shows \ceqref{dbvalg1}. 

The proof of the remaining relations requires the identity
\begin{equation}
\varDelta_A(uv)=\varDelta_Auv+(-1)^{(|A|+1)|u|}u\varDelta_Av+(-1)^{(|A|+1)|u|}(u,v)_A. \vphantom{\bigg]}
\label{sumbvrg18}
\end{equation}
To show \ceqref{sumbvrg18}, we express the operator $\varDelta_A$ 
through \ceqref{sumbvrg8} and uses systematically the graded derivation property \ceqref{bvalg9} of 
the bracket $(-,-)_E$. Since $\varDelta_A$ involves a twice iterated bracket, four terms are produced in this way.
Two of these give immediately the first two terms in the right hand side of \ceqref{sumbvrg18}. 
Using that $A^\sim=-A$, the remaining two terms are found to be equal to half the third term of \ceqref{sumbvrg18}. 

Using \ceqref{sumbvrg18}, one finds 
\begin{align}
&\varDelta_A(uvw)=\varDelta_Auvw+(-1)^{(|A|+1)|u|}u\varDelta_A(vw) \hspace{2.5cm}
\vphantom{\Big]}
\label{pbvalg3}
\end{align}
\vspace{-1cm}\eject\noindent
\begin{align}
&\hspace{2.5cm}+(-1)^{(|A|+1)|u|}(u,v)_Aw+(-1)^{(|A|+1)(|u|+|v|)+|u||v|}v(u,w)_A.
\vphantom{\Big]}
\nonumber
\end{align}
Expressing  the brackets $(u,w)_A$, $(v,w)_A$ in the right hand side using \ceqref{sumbvrg18},
we obtain \ceqref{dbvalg2} straightforwardly.

Since $(x^*,1_{\Fun(E_0)})_E=0$, \ceqref{dbvalg4} follows trivially from \ceqref{sumbvrg8}.
Relation \ceqref{sumbvrg11/1} is a simple rearrangement of \ceqref{sumbvrg18}.
\end{proof}

Prop. \cref{prop:sumbvrg1/1} states that the deformed Laplacian $\varDelta_A$ 
enjoys properties generalizing the \ceqref{bvalg1}--\ceqref{bvalg4}
except for the nilpotence relation \ceqref{bvalg3}. Similarly, the deformed bracket 
$(-,-)_A$ exhibits properties generalizing the \ceqref{bvalg6}--\ceqref{bvalg10}
except for the graded Jacobi identity \ceqref{bvalg8}. For $\varDelta_A$  and $(-,-)_A$ to obey analogs of 
\ceqref{bvalg3} and \ceqref{bvalg8}, it is required that $A$ is in addition 
even graded. 

\begin{prop} \label{prop:sumbvrg1/2} 
If $|A|=0$ mod $2$, the Laplacian $\varDelta_A$ is nilpotent, 
\begin{equation}
\varDelta_A{}^2=0.
\label{dbvalg3}
\end{equation}
Moreover, the bracket $(-,-)_A$ satisfies the shifted graded Jacobi identity
\begin{align}
&(-1)^{(|w|+1)(|u|+1)}(u,(v,w)_A)_A
+(-1)^{(|u|+1)(|v|+1)}(v,(w,u)_A)_A
\vphantom{\Big]}
\label{dbvalg8}
\\
&\hspace{5cm}+(-1)^{(|v|+1)(|w|+1)}(w,(u,v)_A)_A=0  
\vphantom{\Big]}
\nonumber
\end{align}
with $u,v,w\in\Fun(E_0)$. 
\end{prop}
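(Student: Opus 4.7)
The two claims of the proposition are tightly coupled: once the nilpotence \ceqref{dbvalg3} is established, the Jacobi identity \ceqref{dbvalg8} follows automatically. Indeed, for $|A|$ even, Prop.\ \cref{prop:sumbvrg1/1} already ensures that $\varDelta_A$ has degree $1$ and satisfies the second-order derivation identity \ceqref{bvalg2} and the unit annihilation \ceqref{bvalg4}. Combined with $\varDelta_A{}^2=0$, these properties would make $(\Fun(E_0),\varDelta_A)$ a BV algebra in the sense of Def.\ \cref{def:bvalg}. Since relation \ceqref{sumbvrg11/1} expressing $(-,-)_A$ in terms of $\varDelta_A$ reduces, for even $|A|$, to the canonical defining relation \ceqref{bvalg5} of Def.\ \cref{def:bvbrac}, the shifted graded Jacobi identity \ceqref{dbvalg8} would then be a direct application of Prop.\ \cref{prop:bvbracgerst}. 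My plan is therefore to concentrate all effort on proving $\varDelta_A{}^2=0$.

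To this end I would argue from the coordinate expression \ceqref{sumbvrg10}. Iterating it produces
\begin{equation*}
\varDelta_A{}^2 u \;=\; \tfrac{1}{4}(-1)^{\sigma(i,j,k,l)}\,A^k{}_l\, A^i{}_j\,\bigl(x^l,(x_k,(x^j,(x_i,u)_E)_E)_E\bigr)_E,
\end{equation*}
where $\sigma(i,j,k,l)$ packages the sign factors appearing in \ceqref{sumbvrg10}. The iterated bracket on the right is a fourth-order differential operator acting on $u$. Using the graded derivation property \ceqref{bvalg9} of the canonical bracket $(-,-)_E$ together with the elementary relations for $(x_i,x^j)_E$, $(x^i,x^j)_E$ and $(x_i,x_j)_E$, which reduce to scalars controlled by $\omega$, I would rewrite this iterated bracket as a graded-symmetric expression in the index pairs $(i,k)$ and $(j,l)$.

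The heart of the proof is then to show that this graded-symmetric expression, once contracted with $A^i{}_j A^k{}_l$, vanishes as a consequence of the antisymmetry $A^\sim=-A$ expressed via the transpose formula \ceqref{sumbvrg-1/0}. This is where the hypothesis $|A|\equiv 0 \pmod 2$ is essential: only for even $|A|$ do the parity-dependent signs in $\sigma$ and those arising from $A^\sim=-A$ align so that the antisymmetry of $A$ annihilates the graded-symmetric part of the iterated partials; for odd $|A|$ these signs would reinforce rather than cancel, which is exactly why the two conclusions of the proposition hold only in the even case. The main obstacle is the purely combinatorial one of sign bookkeeping, tracking how the factors $(-1)^{(|A|+1)\epsilon^i}$ in \ceqref{sumbvrg10}, the graded-commutation signs produced when rearranging the iterated brackets, and the sign in \ceqref{sumbvrg-1/0} ultimately conspire to give the desired cancellation.
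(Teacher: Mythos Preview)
Your overall plan is sound and close to the paper's, but two points need correction.

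For nilpotence, the cancellation is \emph{not} driven by the antisymmetry $A^\sim=-A$. What matters in your iterated expression is that the derivations $(x^j,-)_E$ and $(x_i,-)_E$ all graded--commute with one another, because their mutual $(-,-)_E$ brackets are constants in $\Fun(E_0)$. This lets you pass the two copies of $\varDelta_A$ through each other at the operator level and read off
\[
\varDelta_A{}^2=(-1)^{(|A|+1)^2}\varDelta_A{}^2=(-1)^{|A|+1}\varDelta_A{}^2,
\]
which forces $\varDelta_A{}^2=0$ exactly when $|A|$ is even. This is precisely the paper's argument; it never invokes $A^\sim=-A$ at this step (that condition enters in setting up $\varDelta_A$ and in deriving \ceqref{sumbvrg11/1}, not here). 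Your coordinate bookkeeping will reproduce the same sign if carried out correctly, but the operative symmetry is the block exchange $(i,j)\leftrightarrow(k,l)$ of the two $\varDelta_A$ factors, not an antisymmetry inside a single $A$ in the pairs $(i,k)$ and $(j,l)$ as you describe.

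For the Jacobi identity, your reduction is not literally valid: by \ceqref{dbvalg1} the operator $\varDelta_A$ has degree $|A|+1$, not $1$, so for even $|A|\neq 0$ the pair $(\Fun(E_0),\varDelta_A)$ is not a BV algebra in the strict $\mathbb{Z}$--graded sense of Def.\ \cref{def:bvalg}, and Prop.\ \cref{prop:bvbracgerst} cannot be invoked directly. The fix, which is what the paper does, is to observe that all the signs in \ceqref{dbvalg2}, \ceqref{sumbvrg11/1} and \ceqref{dbvalg8} depend only on degrees modulo $2$; under that reduction $\varDelta_A$ does behave as an odd nilpotent second--order operator, and the standard BV--algebra proof of the shifted Jacobi identity carries over verbatim.
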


\begin{proof}
As  
the components of the brackets $(x,\otimes\, x)_E$, $(x,\otimes\, x^*)_E$ and $(x^*,\otimes\, x^*)_E$ 
are constant elements of $\Fun(E_0)$, 
$[\ad_E x,\otimes\,\ad_E x]=0$, $[\ad_E x,\otimes\,\ad_E x^*]=0$
and $[\ad_E x^*,\otimes\,\ad_E x^*]=0$. Therefore, when commuting $\varDelta_A{}^2$ using \ceqref{sumbvrg8}
we can bring the left most factor $\varDelta_A$ past the rightmost one producing at most signs.
Keeping carefully track of these, we find that $\varDelta_A{}^2=(-1)^{1+|A|}\varDelta_A{}^2$.
\ceqref{dbvalg8} follows. 

For the validity of \ceqref{dbvalg8}, only the reduction mod $2$ of the grading is relevant.
Since $|A|=0$ mod $2$, under the reduction the deformed Laplacian $\varDelta_A$ has the same formal properties as 
an ordinary BV Laplacian and $\varDelta_A$ and the bracket $(-,-)_A$ are related formally 
in the same way as in an ordinary BV algebra. The proof of the shifted graded Jacobi identity for a BV algebra \pagebreak 
therefore goes over unchanged in the present situation. \ceqref{dbvalg8} is so proven. 
\end{proof}

For $A=1_E$, $\varDelta_A$ and $(-,-)_A$ reproduce $\varDelta_E$ and $(-,-)_E$. 
More is true. 

\begin{prop} \label{prop:sumbvrg2}
If $|A|=0$, $\varDelta_A$ is a BV Laplacian over $\Fun(E_0)$ 
and $(-,-)_A$ is the associated BV bracket (cf. defs. \cref{def:bvalg} and \cref{def:bvbrac}).  
Further, if $A$ is invertible, $\varDelta_A$ is non singular.
\end{prop}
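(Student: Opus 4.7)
The proof splits naturally into three tasks: verifying the BV Laplacian axioms for $\varDelta_A$, identifying $(-,-)_A$ as the associated bracket, and establishing non-singularity when $A$ is invertible. The first two are essentially specializations of earlier propositions, while the third requires a short independent argument.

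For the first task, I would read off all four defining properties of a BV Laplacian (cf.\ def.\ \cref{def:bvalg}) directly from the preceding results specialized to $|A|=0$. The degree condition \ceqref{bvalg1} is \ceqref{dbvalg1} with $|A|=0$; the second order derivation identity \ceqref{bvalg2} is \ceqref{dbvalg2} with $|A|=0$; the unit annihilation \ceqref{bvalg4} is \ceqref{dbvalg4}. Nilpotence \ceqref{bvalg3} is supplied by prop.\ \cref{prop:sumbvrg1/2}, whose parity hypothesis $|A|\equiv 0\pmod 2$ is satisfied.

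For the second task, the defining formula \ceqref{bvalg5} of the canonical BV bracket associated with $\varDelta_A$ coincides verbatim with relation \ceqref{sumbvrg11/1} once $|A|=0$ is imposed. Hence $(-,-)_A$ is precisely the BV bracket of $\varDelta_A$, with no further work required.

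For the third task, suppose $A$ is invertible and $c\in\Fun(E_0)$ satisfies $(c,v)_A=0$ for all $v\in\Fun(E_0)$. Using the coordinate expression \ceqref{sumbvrg11}, this reads
\begin{equation*}
(c,x_i)_E\,A^i{}_j\,(x^j,v)_E=0\qquad\text{for all }v.
\end{equation*}
Testing against the coordinate functions $v=x^k$, the factor $(x^j,x^k)_E$ reduces to a constant invertible matrix (essentially $\omega^{jk}$), so $(c,x_i)_E A^i{}_j=0$ for every $j$. Invertibility of $A$ then forces $(c,x_i)_E=0$ for every $i$, which by \ceqref{sumbvrg7} entails $(c,v)_E=0$ for all $v$. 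Thus $c$ lies in the BV centre of the undeformed bracket, and since $\varDelta_E$ is the standard odd-symplectic Laplacian on $E_0$, which is well known to be non-singular (constants are the only functions Poisson-commuting with every coordinate under a non-degenerate $\omega$), we conclude $c\in\mathbb{R}\cdot 1_{\Fun(E_0)}$.

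The only real subtlety lies in the third step, specifically in rigorously extracting the constant components $(c,x_i)_E$ from the tensorial relation by testing against sufficiently many $v$, and in invoking the non-singularity of $\varDelta_E$; the rest is pure bookkeeping relying on results already proved.
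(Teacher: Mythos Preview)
Your proof is correct and follows essentially the same route as the paper. The paper dispatches the first two tasks by citing props.\ \cref{prop:sumbvrg1/1} and \cref{prop:sumbvrg1/2} in a single sentence, exactly as you do, and for non-singularity it simply remarks that from the coordinate expression \ceqref{sumbvrg11} it is evident that $(u,-)_A$ vanishes only for $u\in\mathbb{R}1_{\Fun(E_0)}$ when $A$ is invertible---your testing against coordinate functions and reduction to the non-singularity of $\varDelta_E$ is just an unpacking of that one-line claim.
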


\begin{proof}
The first part of the proposition follows immediately from \cref{prop:sumbvrg1/1} and \cref{prop:sumbvrg1/2}
combined. From \ceqref{sumbvrg11}, it is evident that, if $A$ is invertible, $(u,-)_A$ vanishes 
only for $u\in\mathbb{R}1_{\Fun(E_0)}$. 
\end{proof}

Eq.  \ceqref{dbvalg3} is actually a particular case of the more general relation
\begin{equation}
[\varDelta_A,\varDelta_B]=0   
\label{sumbvrg12/0}
\end{equation}
valid for any two antisymmetric endomorphisms $A$, $B$ of arbitrary degree and shown in
the same way. 

\begin{prop} \label{prop:sumbvrg3}
Let $A,B\in\End(E)$ be such that  $A^\sim=-A$, $B^\sim=-B$. Then, for $u,v\in\Fun(E_0)$ one has 
\begin{equation}
\varDelta_A(u,v)_B=(\varDelta_Au,v)_B+(-1)^{(|A|+1)(|B|+|u|+1)}(u,\varDelta_Av)_B+[u,v]_{A,B},
\vphantom{\bigg]}
\label{sumbvrg12}
\end{equation}
where the bracket $[-,-]_{A,B}$ is defined as  
\begin{equation}
[u,v]_{A,B}=\grtr\big(A(x,\otimes\,(u,x^*)_E)_EB(x,\otimes\,(v,x^*)_E)_E\big). \vphantom{\bigg]}
\label{sumbvrg13}
\end{equation}
\end{prop}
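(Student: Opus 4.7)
The plan is to prove the identity by direct expansion, using the explicit formulas \ceqref{sumbvrg8} and \ceqref{sumbvrg9} for $\varDelta_A$ and $(-,-)_B$ in terms of the underlying brackets $(x,-)_E$ and $(x^*,-)_E$, and treating $\varDelta_A$ as a second-order graded differential operator acting on the product--like expression $(u,v)_B=(u,x^*)_EB(x,v)_E$.

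First, I would record three observations that do the bookkeeping. (i) By prop. \cref{propi:sumbvrg1} and eq. \ceqref{sumbvrg9}, $(f,x^*)_E$ and $(x,g)_E$ are $E^*$-- resp. $E$--valued shifted derivations in $f,g$, and the scalar brackets $(x^i,x^j)_E$, $(x_i,x_j)_E$, $(x^i,x_j)_E$ are constants in $\Fun(E_0)$; consequently the derivations $\ad_E x$ and $\ad_E x^*$ pairwise supercommute. (ii) Because of (i), $\varDelta_A$ acts as a second--order graded differential operator whose Leibniz expansion on any product gives the three expected contributions (two ``diagonal'' and one ``cross''). (iii) The antisymmetry $A^\sim=-A$, $B^\sim=-B$ and eq. \ceqref{sumbvrg-1} let me transpose $A$ or $B$ across the symplectic pairing at the price of the sign prescribed by \ceqref{sumbvrg-1/0}; this is what will combine the two cross terms into a single expression matching \ceqref{sumbvrg13}.

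Next, I would substitute \ceqref{sumbvrg9} into $\varDelta_A(u,v)_B$ and apply \ceqref{sumbvrg8}, writing the result as
\begin{equation*}
\varDelta_A(u,v)_B=-\tfrac{1}{2}\grtr\!\bigl(A\,(x,\otimes\,(x^*,(u,x^*)_EB(x,v)_E)_E)_E\bigr).
\end{equation*}
Using the second--order derivation identity for $\varDelta_A$ acting on a product (this is the deformed analog of eq. \ceqref{sumbvrg18}, which is precisely \ceqref{sumbvrg11/1}), I would split this into the three terms
\begin{equation*}
\varDelta_A\!\bigl((u,x^*)_E\bigr)\cdot B(x,v)_E \;+\; (u,x^*)_E\cdot B\cdot\varDelta_A(x,v)_E \;+\; \text{cross term},
\end{equation*}
with appropriate signs from \ceqref{dbvalg2} and \ceqref{sumbvrg11/1}. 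Using the fact that $\ad_Ex$ and $\ad_Ex^*$ supercommute with $\varDelta_A$ up to the same second--order boundary, the first two terms collapse to $(\varDelta_Au,v)_B$ and $(-1)^{(|A|+1)(|B|+|u|+1)}(u,\varDelta_Av)_B$, after moving $\varDelta_A$ through the outer brackets $(-,x^*)_E$ and $(x,-)_E$ and keeping track of the shifts induced by \ceqref{bvalg11} and \ceqref{dbvalg6}.

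The heart of the proof, and the main obstacle, is the cross term, where one factor of $\ad_Ex^*$ in $\varDelta_A$ acts on $(u,x^*)_E$ and the other factor of $\ad_Ex$ acts on $B(x,v)_E$ (together with its mirror image). Writing out both mirror contributions, I would use the transposition identity \ceqref{sumbvrg-1} together with $A^\sim=-A$ and $B^\sim=-B$ to recognize the two pieces as equal, combining them into
\begin{equation*}
\grtr\!\bigl(A(x,\otimes\,(u,x^*)_E)_E\,B(x,\otimes\,(v,x^*)_E)_E\bigr),
\end{equation*}
which is $[u,v]_{A,B}$ by \ceqref{sumbvrg13}. The factor of $1/2$ in $\varDelta_A$ is absorbed exactly by this doubling. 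The sign accounting here is the delicate part: one must carefully track the degrees $|u|,|v|,|A|,|B|$ and the Euler operator shifts coming from \ceqref{sumbvrg-2}, \ceqref{sumbvrg-1}, and verify that the overall sign on the cross term is $+1$ as required. I would perform this check in component form using \ceqref{sumbvrg10}, \ceqref{sumbvrg11} as a safety net, since indices make all signs unambiguous. Assembling the three pieces yields \ceqref{sumbvrg12}.
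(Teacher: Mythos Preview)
Your strategy matches the paper's: write $(u,v)_B=(u,x^*)_EB(x,v)_E$, apply the second--order Leibniz identity \ceqref{sumbvrg18} for $\varDelta_A$ on this product, and identify the three resulting pieces. Two small corrections to your bookkeeping. First, once you invoke \ceqref{sumbvrg18} the cross term is already the single bracket $(-1)^{(|A|+1)|u|}\big((u,x^*)_E,\,B(x,v)_E\big)_A$; the ``two mirror contributions absorbing the $\tfrac12$'' occur inside the proof of \ceqref{sumbvrg18} itself, not at this stage, so what remains here is simply to expand that $A$--bracket via \ceqref{sumbvrg9} and match indices against \ceqref{sumbvrg13}/\ceqref{sumbvrg14}. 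Second, to push $\varDelta_A$ through $(-,x^*)_E$ and $(x,-)_E$ you cannot cite \ceqref{bvalg11}, which is the identity for the undeformed pair $(\varDelta_E,(-,-)_E)$; the paper first establishes the mixed analogue \ceqref{sumbvrg16}, $\varDelta_A(u,w)_E=(\varDelta_Au,w)_E+(-1)^{(|A|+1)(|u|+1)}(u,\varDelta_Aw)_E+[u,w]_A$, and then specializes to $w=x^*$ and $w=x$, where the last two summands vanish since $\varDelta_Ax=\varDelta_Ax^*=0$.
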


\begin{proof}
Relation \ceqref{sumbvrg12} is the result of an explicit calculation. Hence, we provide only an outline of its proof. 

To begin with, one shows that 
\begin{equation}
\varDelta_A(u,v)_E=(\varDelta_Au,v)_E+(-1)^{(|A|+1)(|u|+1)}(u,\varDelta_Av)_E+[u,v]_A, 
\label{sumbvrg16}
\end{equation}
where the bracket $[-,-]_A$ is defined as  
\begin{equation}
[u,v]_A=\grtr\big(A((x,u)_E,\otimes\,(v,x^*)_E)_E\big).
\label{sumbvrg17}
\end{equation}
To prove \ceqref{sumbvrg16}, \ceqref{sumbvrg17}, one expresses the operator $\varDelta_A$ 
through \ceqref{sumbvrg8} and uses systematically the graded Jacobi identity \ceqref{bvalg8} 
obeyed by the bracket $(-,-)_E$. Since $\varDelta_A$ involves a twice iterated bracket, four terms are 
produced in this way. Two of these give readily the first two terms in the right hand side of \ceqref{sumbvrg16}. 
The remaining two terms combine to yield the third term of \ceqref{sumbvrg16}. 
At this stage the antisymmetry relation $A^\sim=-A$ must be used. 

Next, we compute $\varDelta_A(u,v)_B$. Using the defining expression \ceqref{sumbvrg9} of $(u,v)_B$ and the 
relation \ceqref{sumbvrg18}, we find 
\begin{align}
\varDelta_A(u,v)_B&=\varDelta_A(u,x^*)_EB(x,v)_E
\vphantom{\Big]}
\label{sumbvrg19}
\\
&+(-1)^{(|A|+1)(|B|+|u|)}(u,x^*)_EB\varDelta_A(x,v)_E
\vphantom{\Big]}
\nonumber
\\
&+(-1)^{(|A|+1)|u|}((u,x^*),B(x,v)_E)_A.
\vphantom{\Big]}
\nonumber
\end{align}
We now compute $\varDelta_A(u,x^*)_E$ through \ceqref{sumbvrg16} getting 
$\varDelta_A(u,x^*)_E=(\varDelta_Au,x^*)_E$. In the same way, one finds that 
$\varDelta_A(x,v)_E=(-1)^{(|A|+1)}(x,\varDelta_Av)_E$. Using these relations
in \ceqref{sumbvrg19} 
\begin{align}
&\varDelta_A(u,v)_B-(\varDelta_Au,v)_B-(-1)^{(|A|+1)(|B|+|u|+1)}(u,\varDelta_Av)_B
\vphantom{\Big]}
\label{sumbvrg20}
\\
&\hspace{5cm}=(-1)^{(|A|+1)|u|}((u,x^*),B(x,v)_E)_A.
\vphantom{\Big]}
\nonumber
\end{align}
To complete the proof, there remain to show that the right hand side of \ceqref{sumbvrg20} 
is precisely $[u,v]_{A,B}$. This can be done straightforwardly using  \ceqref{sumbvrg9}. 
\end{proof}

\noindent 
Written in terms of coordinates, $[u,v]_{A,B}$ is given by 
\begin{align}
&[u,v]_{A,B}=(-1)^{(|A|+|B|+|u|+|v|+1)\epsilon^i}A^i{}_j
\vphantom{\Big]}
\label{sumbvrg14}
\\
&\hspace{2.8cm}\times (x^j,(u,x_k)_E)_EB^k{}_l(x^l,(v,x_i)_E)_E. 
\vphantom{\Big]}
\nonumber
\end{align}

The following propositions will be used in the analysis of certain BV RG flows. 

\begin{prop} \label{cor:deltacom}
Let $A,B\in\End(E)$ such that $A^\sim=-A$, $|A|=0$ mod $2$ and $B^\sim$ $=-B$.
Then, one has 
\begin{equation}
[\langle x,B\ad_E x\rangle_E,\varDelta_A]=(-1)^{1+|B|}\varDelta_{AB+BA}. 
\label{sumbvrg15}
\end{equation}
\end{prop}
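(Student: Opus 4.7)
The plan is to verify the identity by direct coordinate computation, using the explicit formulas for $\varDelta_A$ and for the operator $V_B:=\langle x,B\ad_E x\rangle_E$ given in the paper.

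First I would unpack $V_B$. Writing $x=a_i\otimes x^i$ and $\ad_E x=a_i\otimes\ad_E x^i$, and reading off the matrix of $\langle a_i,a_j\rangle_E$ from the defining identity $\omega_E=\tfrac12\langle dx,dx\rangle_E=\tfrac12 dx_i dx^i$, one finds after graded-sign bookkeeping an expression of the form $V_B u = \kappa_B\, x_i\, B^i{}_j (x^j,u)_E$ (for a definite sign $\kappa_B$ depending on $|B|$ and $\epsilon^i$). In particular, $V_B$ is a graded derivation of $\Fun(E_0)$ of degree $|B|$. I would also record the identity $(x^i,x_j)_E=\delta^i{}_j$ (up to a sign set by the same conventions), which is what will ultimately feed the matrix contraction $AB+BA$.

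Next I would compute $V_B\varDelta_A u$ and $\varDelta_A V_B u$ using \ceqref{sumbvrg10},
\begin{equation*}
\varDelta_A u=\tfrac12(-1)^{1+(|A|+1)\epsilon^i}A^i{}_j(x^j,(x_i,u)_E)_E,
\end{equation*}
and the derivation property of $V_B$. Applying $V_B$ to the right-hand side, the Leibniz rule gives four contributions, corresponding to the action of $V_B$ on the prefactor $x^j$, on $x_i$, on the interior coordinate produced by the inner bracket, and finally on $u$ itself. The last contribution reproduces $\varDelta_A V_B u$ up to the canonical commutator sign $(-1)^{|V_B||\varDelta_A|}$ and therefore cancels in $[V_B,\varDelta_A]u$. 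What survives in the commutator are the two terms where $V_B$ hits the explicit prefactors $x^j$ and $x_i$: using $V_B(x^j)\propto x_k B^k{}_l(x^l,x^j)_E$ and the analogous expression for $V_B(x_i)$, the contractions $(x^l,x^j)_E$ and $(x^l,x_i)_E$ turn the remaining matrix structure into $(AB)^\bullet{}_\bullet$ from one surviving term and $(BA)^\bullet{}_\bullet$ from the other, each multiplying a nested bracket of the exact shape appearing in the coordinate form of $\varDelta_{AB+BA}$ (with $AB+BA$ playing the role of $A$).

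It remains to reassemble signs. The hypothesis $|A|\equiv 0$ mod $2$ makes $|AB+BA|=|B|$ and reduces $(-1)^{1+(|A|+1)\epsilon^i}$ to a power matching that appearing on the right-hand side after one common shift; the antisymmetries $A^\sim=-A$ and $B^\sim=-B$ yield $(AB+BA)^\sim=-(AB+BA)$, so $\varDelta_{AB+BA}$ is indeed a bona fide deformed Laplacian in the sense of def. \cref{def:sumbvrg4}. After collecting the graded signs produced by moving $V_B$ past the two nested brackets in $\varDelta_A u$, the two surviving contributions combine into a single term of the form $(-1)^{1+|B|}\varDelta_{AB+BA}u$, which is the claimed identity.

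The main obstacle will be the sign bookkeeping: each graded commutation of $x^i,x_j$ and of the operators $(x^k,-)_E$ produces a sign factor involving $\epsilon^i$, $\epsilon_j=-\epsilon^j-1$, $|A|$, and $|B|$, and the various pieces must cancel or recombine in a specific way to produce exactly $(-1)^{1+|B|}$ and to fuse $AB$ and $BA$ symmetrically. A notational shortcut would be to observe that $V_B$ coincides with $\ad_E h_B$ for the quadratic Hamiltonian $h_B=\tfrac12\langle x,Bx\rangle_E$ and then use \ceqref{bvalg11} and prop. \cref{prop:sumbvrg3} to move $\varDelta_A$ through $\ad_E h_B$; however, checking $V_B=\ad_E h_B$ and evaluating $\varDelta_A h_B$ requires exactly the same coordinate manipulations, so this reformulation merely relocates, rather than avoids, the sign-checking difficulty.
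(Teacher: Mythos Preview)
Your overall plan---compute the commutator by coordinate-level manipulation---is workable, but two pieces of your argument do not hold. First, you apply a Leibniz rule for $V_B$ to the nested bracket $(x^j,(x_i,u)_E)_E$, treating $V_B$ as a derivation of $(-,-)_E$. But $V_B$ is a derivation of the \emph{product} on $\Fun(E_0)$, not of the bracket, and here the two differ: the failure of $V_B$ to derive $(-,-)_E$ is precisely what the identity you are trying to prove measures when specialised to $A=1_E$, so your ``four contributions'' do not arise as described. Second, your alternative route fails outright: for $B^\sim=-B$ one has $\langle x,Bx\rangle_E=0$ identically (set $e=f=x$ and $A=B$ in \ceqref{sumbvrg-1}), so there is no nonzero quadratic Hamiltonian $h_B$ with $V_B=\ad_Eh_B$.

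The paper avoids all of this by reversing the order: it expands $\varDelta_A(V_Bu)$ rather than $V_B(\varDelta_Au)$. Since $V_Bu=x^*B(x,u)_E$ is genuinely a \emph{product} (componentwise, of the linear functions $x_k$ with the functions $B^k{}_l(x^l,u)_E$), the second-order product rule \ceqref{sumbvrg18} applies legitimately. Using $\varDelta_Ax^*=0$ and $\varDelta_A(x,u)_E=-(x,\varDelta_Au)_E$ (from \ceqref{sumbvrg16}), the commutator collapses to the single cross term $(-1)^{|B|}(x^*B,(x,u)_E)_A$, which by \ceqref{sumbvrg9} equals $2(-1)^{1+|B|}\varDelta_{AB}$; the replacement $AB\mapsto\tfrac12(AB+BA)$ then follows because $\varDelta_C$ depends only on the antisymmetric part of $C$. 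The $AB+BA$ structure you anticipated does appear, but from a single deformed bracket rather than from two Leibniz hits on coordinate arguments.
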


\begin{proof}
Let $u\in\Fun(E_0)$. Then, by \ceqref{sumbvrg18}
\begin{align}
\varDelta_A\langle x,B\ad_Ex \rangle_Eu
&=\varDelta_A(x^*B(x,u)_E)
\vphantom{\Big]}
\label{sumbvrg21}
\\
&=\varDelta_Ax^*B(x,u)_E+(-1)^{|B|+1}x^*B\varDelta_A(x,u)_E
\vphantom{\Big]}
\nonumber
\\
&\hspace{5.2cm}-(x^*,B(x,u)_E)_A.
\vphantom{\Big]}
\nonumber
\end{align}
Now, $\varDelta_Ax^*=0$ by \ceqref{sumbvrg8}. As $\varDelta_Ax=0$ also by \ceqref{sumbvrg8}, using \ceqref{sumbvrg16} 
we find further that $\varDelta_A(x,u)_E=-(x,\varDelta_Au)_E$. Thus, \ceqref{sumbvrg21} can be cast as 
\begin{align}
&\langle x,B\ad_Ex \rangle_E\varDelta_Au-(-1)^{|B|}\varDelta_A\langle x,B\ad_Ex \rangle_Eu
\vphantom{\Big]}
\label{sumbvrg22}
\\
&\hspace{4.7cm}=(-1)^{|B|}(x^*B,(x,u)_E)_A. 
\vphantom{\Big]}
\nonumber
\end{align}
Expressing the bracket $(-,-)_A$ through \ceqref{sumbvrg9}, the right hand side of \ceqref{sumbvrg22}
becomes after a few passages
\begin{align}
(-1)^{|B|}(x^*B,(x,u)_E)_A&=\frac{(-1)^{|B|}}{2}\grtr\big((AB+BA)(x,\otimes \,(x^*,u)_E)_E\big) 
\vphantom{\Big]}
\label{sumbvrg23}
\\
&=(-1)^{1+|B|}\varDelta_{AB+BA}u.   
\vphantom{\Big]}
\nonumber
\end{align}
The direct computation of the left hand side of \ceqref{sumbvrg23} yields actually the operator
$2(-1)^{1+|B|}\varDelta_{AB}$  given by eq. \ceqref{sumbvrg8} with $A$ replaced by $AB$, although 
the endomorphism $AB$ appearing here does not meet the antisymmetry requirement $(AB)^\sim=-AB$ 
in general. 
However, it can be verified that for a non antisymmetry endomorphism $C$, $\varDelta_C$ 
depends only on the antisymmetric part $(C-C^\sim)/2$ of $C$. Hence, it is possible to replace 
$AB$ by $(AB-(AB)^\sim)/2=(AB+BA)/2$. Substituting now 
\ceqref{sumbvrg23} into \ceqref{sumbvrg22}, we obtain finally \ceqref{sumbvrg15}. 
\end{proof}

\begin{prop} \label{cor:deltacom1}
Let $A,B\in\End(E)$ such that $A^\sim=-A$, $|A|=0$ mod $2$ and $B^\sim$ $=-B$.
Then, one has 
\begin{align}
&\langle x,B\ad_E x\rangle_E(u,v)_A=(\langle x,B\ad_E x\rangle_Eu,v)_A
\vphantom{\Big]}
\label{sumbvrg15/1}
\\
&\hspace{2.cm}+(-1)^{|B|(|u|+1)}(u,\langle x,B\ad_E x\rangle_Ev)_A
-(-1)^{|B|(|u|+1)}(u,v)_{AB+BA}
\vphantom{\Big]}
\nonumber
\end{align}
for $u,v\in\Fun(E_0)$. 
\end{prop}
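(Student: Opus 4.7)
The plan is to apply the derivation $D:=\langle x,B\ad_E x\rangle_E$ to the expression of $(u,v)_A$ in terms of $\varDelta_A$ given by relation \ceqref{sumbvrg11/1}, and to use the commutator identity \ceqref{sumbvrg15} to move $D$ past $\varDelta_A$. First I would observe that $D$ is a graded derivation of $\Fun(E_0)$ of degree $|B|$: it annihilates constants (since $(x,c)_E=0$), and the Leibniz rule $D(uv)=Du\,v+(-1)^{|B||u|}u\,Dv$ follows from the corresponding rule for the undeformed BV bracket $(x,-)_E$. Under the hypothesis $|A|\equiv 0\pmod 2$, relation \ceqref{sumbvrg11/1} specializes to
\begin{equation*}
(u,v)_A=(-1)^{|u|}\bigl(\varDelta_A(uv)-\varDelta_A u\cdot v-(-1)^{|u|}u\,\varDelta_A v\bigr).
\end{equation*}

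Next I would apply $D$ to both sides. On the right one uses the graded Leibniz rule of $D$ (keeping track that $|\varDelta_A u|=|u|+1$) and rewrites every occurrence of $D\varDelta_A$ via the rearrangement
\begin{equation*}
D\varDelta_A=(-1)^{|B|}\varDelta_A D+(-1)^{1+|B|}\varDelta_{AB+BA}
\end{equation*}
obtained from \ceqref{sumbvrg15} (noting $|\varDelta_A|=|A|+1$ is odd). This produces two groups of terms. The first group, involving $\varDelta_A$ acting on functions involving $D$, can be regrouped, using $D(uv)=Du\cdot v+(-1)^{|B||u|}u\cdot Dv$, into two copies of the shape $\varDelta_A(fg)-\varDelta_A f\cdot g-(-1)^{|f|}f\,\varDelta_A g$ with $(f,g)=(Du,v)$ and $(f,g)=(u,Dv)$; applying \ceqref{sumbvrg11/1} in reverse identifies these with $(Du,v)_A$ and $(u,Dv)_A$, respectively. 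The second group, involving $\varDelta_{AB+BA}$ applied to $uv$, $u\cdot v$ and $u\cdot v$, collapses into a multiple of $(u,v)_{AB+BA}$ via \ceqref{sumbvrg11/1} applied to the endomorphism $AB+BA$, whose antisymmetry $(AB+BA)^\sim=-(AB+BA)$ holds because $|A|$ is even, so that the deformed bracket $(-,-)_{AB+BA}$ is defined and the formula \ceqref{sumbvrg11/1} (valid for any antisymmetric endomorphism) applies.

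The main obstacle is sign bookkeeping: verifying that the Koszul signs from applying the degree-$|B|$ derivation $D$ combine, after the three Leibniz expansions and the commutation of $D$ past $\varDelta_A$, to produce precisely the coefficients $1$ in front of $(Du,v)_A$, $(-1)^{|B|(|u|+1)}$ in front of $(u,Dv)_A$, and $-(-1)^{|B|(|u|+1)}$ in front of $(u,v)_{AB+BA}$. A useful consistency check throughout is that both sides have the same total degree $|u|+|v|+|A|+|B|+1$ and that the computation reduces, in the special case $B=0$, to the trivial identity $0=0$, while for $B$ central (commuting with $A$) it yields the derivation property of $D$ with respect to $(-,-)_A$.
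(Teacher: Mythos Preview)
Your proposal is correct and follows essentially the same route as the paper's proof: express $(u,v)_A$ via relation \ceqref{sumbvrg11/1}, apply the degree-$|B|$ derivation $\langle x,B\ad_E x\rangle_E$ using the Leibniz rule, commute it past $\varDelta_A$ with \ceqref{sumbvrg15}, and regroup the resulting terms back into deformed brackets using \ceqref{sumbvrg11/1} for both $A$ and $AB+BA$. The paper's proof merely displays the intermediate expansions explicitly in two displayed computations, but there is no difference in strategy or in the ingredients used.
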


\begin{proof}
Expressing $(u,v)_A$ with \ceqref{sumbvrg11/1} and using \ceqref{bvalg9}, we find
\begin{align}
\langle x,B\ad_E& x\rangle_E(u,v)_A
\vphantom{\Big]}
\label{sumbvrg15/2}
\\
&=(-1)^{|u|}\langle x,B\ad_E x\rangle_E\big(\varDelta_A(uv)-\varDelta_Auv-(-1)^{|u|}u\varDelta_Av\big)
\vphantom{\Big]}
\nonumber
\\
&=(-1)^{|u|}\big(\langle x,B\ad_E x\rangle_E\varDelta_A(uv)-\langle x,B\ad_E x\rangle_E\varDelta_Auv
\vphantom{\Big]}
\nonumber
\\
&\hspace{.5cm}-(-1)^{|B|(|u|+1)}\varDelta_Au\langle x,B\ad_E x\rangle_Ev-(-1)^{|u|}\langle x,B\ad_E x\rangle_Eu\varDelta_Av
\vphantom{\Big]}
\nonumber
\\
&\hspace{1cm}-(-1)^{(|B|+1)|u|}u\langle x,B\ad_E x\rangle_E\varDelta_Av\big).
\vphantom{\Big]}
\nonumber
\end{align}
By virtue of relation \ceqref{sumbvrg15}, using again \ceqref{sumbvrg11/1} and \ceqref{bvalg9}, we obtain 
\begin{align}
\langle x,B&\ad_E x\rangle_E(u,v)_A
\vphantom{\Big]}
\label{sumbvrg15/3}
\\
&=(-1)^{|B|+|u|+1}\big(\varDelta_{AB+BA}(uv)-\varDelta_{AB+BA}uv-(-1)^{|u|(|B|+1)}u\varDelta_{AB+BA}v\big)  
\vphantom{\Big]}
\nonumber
\\
&\hspace{.5cm}+(-1)^{|B|+|u|}\big[\varDelta_A\big(\langle x,B\ad_E x\rangle_Euv+(-1)^{|B||u|}u\langle x,B\ad_E x\rangle_Ev\big)
\vphantom{\Big]}
\nonumber
\\
&\hspace{.7cm}-\varDelta_A\langle x,B\ad_E x\rangle_Euv-(-1)^{|B||u|}\varDelta_Au\langle x,B\ad_E x\rangle_Ev
\vphantom{\Big]}
\nonumber
\\
&\hspace{.9cm}-(-1)^{|B|+|u|}\langle x,B\ad_E x\rangle_Eu\varDelta_Av
-(-1)^{(|B|+1)|u|}u\varDelta_A\langle x,B\ad_E x\rangle_Ev\big]
\vphantom{\Big]}
\nonumber
\\
&=-(-1)^{|B|(|u|+1)}(u,v)_{AB+BA}+(\langle x,B\ad_E x\rangle_Eu,v)_A
\vphantom{\Big]}
\nonumber
\\
&\hspace{.5cm}+(-1)^{|B|(|u|+1)}(u,\langle x,B\ad_E x\rangle_Ev)_A. 
\vphantom{\Big]}
\nonumber
\end{align}
This shows \ceqref{sumbvrg15/1}. 
\end{proof}

In the study of free BV EAs, the computation  below is often employed.

\begin{prop} 
Let $A,B,C\in\End(E)$ be endomorphism such that  $A^\sim=-A$, $B^\sim=+B$, $C^\sim=+C$. Then, 
\begin{align}
&\varDelta_A\langle x,Bx\rangle_E=\grtr(AB),
\vphantom{\Big]}
\label{sumbvrg24}
\\
&(\langle x,Bx\rangle_E,\langle x,Cx\rangle_E)_A=4\langle x,BACx\rangle_E.
\vphantom{\Big]}
\label{sumbvrg25}
\end{align}
\end{prop}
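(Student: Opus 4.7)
The plan is to verify both identities by direct coordinate computation, using the explicit formulas \ceqref{sumbvrg10}, \ceqref{sumbvrg11} for $\varDelta_A$ and $(-,-)_A$ together with the graded Leibniz property \ceqref{bvalg9} of the undeformed bracket $(-,-)_E$. Writing $\langle x,Bx\rangle_E$ as a quadratic form in the coordinates $x^i,x_i$ whose coefficients are the matrix entries of $B$ reduces every iterated bracket to a low-order polynomial that can be matched term by term with the right-hand side. The fundamental inputs are the elementary relations $(x^i,x_j)_E=\delta^i{}_j$ (and their index-dual variants), which make $(x,-)_E$ and $(x^*,-)_E$ behave as coordinate derivatives of shifted degree.

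For the first identity, I would first compute $(x_i,\langle x,Bx\rangle_E)_E$ using \ceqref{bvalg9}; this yields a linear combination of the $x^j$ with coefficients built linearly from $B$. A second bracket $(x^j,-)_E$ then collapses this to a constant element of $\Fun(E_0)$ depending only on the entries of $B$. Substituting into \ceqref{sumbvrg10} produces a double sum of the form $A^i{}_jB^j{}_i$ up to signs, which matches $\grtr(AB)$ as defined in \ceqref{sumbvrg-5}. The symmetry $B^\sim=B$ is used to identify the two cross-contributions coming from the shifted-graded symmetry \ceqref{sumbvrg-2} of $\langle-,-\rangle_E$, so that they combine coherently into a single trace and produce the factor that cancels the $1/2$ in \ceqref{sumbvrg10}.

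For the second identity, the defining relation \ceqref{sumbvrg9} gives
$(\langle x,Bx\rangle_E,\langle x,Cx\rangle_E)_A=(\langle x,Bx\rangle_E,x^*)_E\,A\,(x,\langle x,Cx\rangle_E)_E$,
so it suffices to compute the two bracket factors separately. Each factor is linear in $x$ by the Leibniz rule applied to a quadratic input, and each produces a factor of $2$ from the two summands generated when differentiating a quadratic form; this is the origin of the overall $4$. Contracting the two resulting linear expressions through $A$ in the middle assembles a quadratic form whose matrix is $BAC$, which by the symmetries $B^\sim=B$ and $C^\sim=C$ can be rewritten as $\langle x,BACx\rangle_E$ using \ceqref{sumbvrg-2}.

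The main obstacle I expect is the careful tracking of Koszul signs. The shifted-degree derivations $(x,-)_E$ and $(x^*,-)_E$, the shifted graded symmetry \ceqref{sumbvrg-2} of $\langle-,-\rangle_E$, and the $\epsilon^i$-dependent signs in the coordinate formulas \ceqref{sumbvrg10}, \ceqref{sumbvrg11} must all be reconciled. Systematically invoking $A^\sim=-A$, $B^\sim=B$, $C^\sim=C$ at the right moments to pair symmetrically-placed terms is essential; while each individual sign is routine, verifying that they conspire to yield exactly the claimed coefficients (in particular the factor $4$ in \ceqref{sumbvrg25}) is the only substantive step in the proof.
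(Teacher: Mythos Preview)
Your approach is correct and essentially identical to the paper's: the paper first records the intermediate identities $(x,\langle x,Cx\rangle_E)_E=2Cx$, $(\langle x,Bx\rangle_E,x^*)_E=2x^*B$, and $(x,\otimes\,(x^*,\langle x,Bx\rangle_E)_E)_E=-2B$, then feeds these into the index-free definitions \ceqref{sumbvrg8} and \ceqref{sumbvrg9}. Your coordinate-based plan computes precisely the same brackets (you even cite \ceqref{sumbvrg9} for the second identity and isolate the two linear factors exactly as the paper does), differing only in working with components rather than the packaged maps $x$, $x^*$.
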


\begin{proof}
A simple calculation furnishes  
\begin{align}
&(x,\langle x,Cx\rangle_E)_E=2Cx,
\vphantom{\Big]}
\label{sumbvrg26}
\\
&(\langle x,Bx\rangle_E,x^*)_E=2x^*B
\vphantom{\Big]}
\label{sumbvrg27}
\end{align}
using which, one finds further 
\begin{equation}
(x,\otimes\,(x^*,\langle x,Bx\rangle_E)_E)_E=-2B.
\label{sumbvrg28}
\end{equation}
From \ceqref{sumbvrg8}, using \ceqref{sumbvrg28}, \ceqref{sumbvrg24} follows immediately. 
From \ceqref{sumbvrg9}, using \ceqref{sumbvrg26}, \ceqref{sumbvrg27}, we obtain \ceqref{sumbvrg25}
readily. 
\end{proof}

\begin{prop}
if $A,B\in\End(E)$ are endomorphism such that $A^\sim=-A$ and $B^\sim=+B$, then 
\begin{equation}
\langle x,A\ad_E x\rangle_E\langle x,Bx\rangle_E
=\langle x,(AB+(-1)^{|A||B|}BA)x\rangle_E.
\label{sumbvrg26/1}
\end{equation}
\end{prop}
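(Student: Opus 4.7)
The plan is to strip the outer operator $\langle x,A\ad_E x\rangle_E$ off the quadratic target, reduce the resulting expression to a multiple of $\langle x,ABx\rangle_E$, and then resymmetrize using the transposition identity for the quadratic form $\langle x,Cx\rangle_E$.

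Concretely, I first want to unpack the operator $\langle x,A\ad_E x\rangle_E$ as the composition $u\mapsto x^*A(x,u)_E$ acting on $\Fun(E_0)$. This is already implicit in the manipulation \ceqref{sumbvrg21} used in the proof of Prop. \cref{cor:deltacom}; for a general antisymmetric $A$ it is a direct consequence of the definition of the symplectic pairing and of $\ad_E x = (x,-)_E$ as the bracket-generated derivation. Applying this to $u=\langle x,Bx\rangle_E$ and invoking the computation \ceqref{sumbvrg26} (which uses $B^\sim=+B$) to replace $(x,\langle x,Bx\rangle_E)_E$ by $2Bx$, one obtains
\begin{equation*}
\langle x,A\ad_E x\rangle_E\langle x,Bx\rangle_E \;=\; x^*A\cdot 2Bx \;=\; 2\langle x,ABx\rangle_E,
\end{equation*}
the last equality being the basic identification of $x^*Cx$ with $\langle x,Cx\rangle_E$ determined by the symplectic form \ceqref{sumbvrg3}.

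The second, decisive step is to show $2\langle x,ABx\rangle_E = \langle x,(AB+(-1)^{|A||B|}BA)x\rangle_E$. For this I specialise the shifted graded symmetry \ceqref{sumbvrg-1} to $e=f=x$ with $|x|=0$; all the sign prefactors $(-1)^{(|A|+1)(|e|+|f|)+|e||f|}$ and $(-1)^{|e|F}$, $(-1)^{|f|F}$ collapse to $1$, yielding the clean identity $\langle x,Cx\rangle_E=\langle x,C^\sim x\rangle_E$ for every $C\in\End(E)$. Applying this with $C=AB$ and using the product-transpose rule $(AB)^\sim=(-1)^{1+|A||B|}B^\sim A^\sim$ together with $A^\sim=-A$ and $B^\sim=+B$ gives $(AB)^\sim=(-1)^{|A||B|}BA$, so that $\langle x,ABx\rangle_E=(-1)^{|A||B|}\langle x,BAx\rangle_E$; averaging $AB$ with its transpose produces exactly the symmetrized combination appearing on the right-hand side of \ceqref{sumbvrg26/1}.

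The step I expect to be the main obstacle is the very first identification $\langle x,A\ad_E x\rangle_E u = x^*A(x,u)_E$: in the earlier uses it was carried out only for degree-$0$ antisymmetric $A$, whereas here $A$ is antisymmetric of arbitrary degree and the signs produced by commuting $A$ past the odd generators $\ad_E x$ have to be tracked by hand through \ceqref{sumbvrg-1}. Once this sign bookkeeping is checked, the remainder is the routine symmetrization described above, and the graded identity \ceqref{sumbvrg26/1} follows.
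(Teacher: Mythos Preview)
Your argument is correct and is exactly the explicit computation the paper has in mind; the paper's own proof reads only ``This follows from an explicit computation, exploiting that $A^\sim=-A$,'' so you have supplied the details rather than taken a different route. Your one stated worry is unfounded: the identification $\langle x,A\ad_E x\rangle_E u = x^*A(x,u)_E$ for an antisymmetric $A$ of \emph{arbitrary} degree is precisely what the paper already uses in \ceqref{sumbvrg21} (there written with $B$ in the role of your $A$), so no new sign bookkeeping is required.
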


\begin{proof}
This follows form an explicit computation, exploiting that $A^\sim=-A$. 
\end{proof}


\subsection{\textcolor{blue}{\sffamily Symplectic gl(1$|$1) structures}}\label{subsec:n2struct}  

$\mathfrak{gl}(1|1)$ structures are recurrent in differential geometry, supersymmetric quantum mechanics 
and topological sigma models. A $\mathfrak{gl}(1|1)$ structure is also one of the constitutive
elements of the symplectic set--up of ref.  \ccite{Costello:2007ei}.

Let $\mathcal{E}$ be a degree $-1$ symplectic vector space and $E$ be its prolongation.

\begin{defi} \label{def:n2struct1}
A $\mathfrak{gl}(1|1)$ structure on $E$ is a set of four endomorphisms
$Q,\overline{Q},H,F$ $\in\End(E)$ of degrees $|Q|=1$, $|\overline{Q}|=-1$, $|H|=0$, $|F|=0$
obeying the graded commutation relations
\begin{align}
&[Q,Q]=0, \qquad [\overline{Q},\overline{Q}]=0,
\vphantom{\Big]}
\label{n2struct4}
\\
&[Q,\overline{Q}]=H,
\vphantom{\Big]}
\label{n2struct5}
\\
&[Q,H]=0, \qquad [\overline{Q},H]=0,
\vphantom{\Big]}
\label{n2struct6}
\\
&[F,Q]=Q,\qquad [F,\overline{Q}]=-\overline{Q},
\vphantom{\Big]}
\label{n2struct8}
\\
&[F,H]=0
\vphantom{\Big]}
\label{n2struct9}
\end{align}
and the transposition conditions
\begin{align}
&Q^\sim=Q,
\vphantom{\Big]}
\label{n2struct1}
\end{align}
\begin{align}
&\overline{Q}^\sim=-\overline{Q},
\vphantom{\Big]}
\label{n2struct2}
\\
&H^\sim=-H,
\vphantom{\Big]}
\label{n2struct3}
\\
&F^\sim=F+1_E.
\vphantom{\Big]}
\label{n2struct3/0}
\end{align}
\end{defi}
\noindent
It is possible to check that the \ceqref{n2struct1}--\ceqref{n2struct3/0} 
and the \ceqref{n2struct4}--\ceqref{n2struct9} are compatible thanks to the 
basic identity $[A,B]^\sim=[A^\sim,B^\sim]$. 

A $\mathfrak{gl}(1|1)$ structure on $E$ is called such because the \ceqref{n2struct4}--\ceqref{n2struct9} 
are the Lie brackets of a standard set of generators of the $\mathfrak{gl}(1|1)$ Lie superalgebra. 
In physical parlance, $Q$, $\overline{Q}$, $H$ and $F$ are named respectively supercharge, conjugate supercharge, 
Hamiltonian and Fermion number. $F$ can be identified with the Euler endomorphism introduced earlier
in subsect. \cref{subsec:sumbvrg} (cf. eq. \ceqref{sumbvrg-4}).

Relations \ceqref{n2struct5}, \ceqref{n2struct8} imply that 
\begin{align}
\grtr Q=0,
\vphantom{\Big]}
\label{n2struct7}
\\
\grtr\overline{Q}=0,
\vphantom{\Big]}
\label{n2struct7/1}
\\
\grtr H=0.
\vphantom{\Big]}
\label{n2struct7/2}
\end{align}
The last two identities follow also from \ceqref{n2struct2}, \ceqref{n2struct3}. $Q,\overline{Q},H$ define 
in this way an $\mathfrak{sl}(1|1)$ structure on $E$. \ceqref{n2struct4}--\ceqref{n2struct6} are indeed 
the basic Lie brackets of the $\mathfrak{sl}(1|1)$ Lie superalgebra. 

In finite dimension, one can  construct an endomorphism $\varphi(H)\in\End(E)$
for any function $\varphi:\mathbb{R}\rightarrow \mathbb{R}$ analytic and of infinite 
convergence radius at $0$, simply by replacing the variable $x\in\mathbb{R}$ by $H$
in the Taylor series of $\varphi(x)$ at $0$. It is easy to see using \ceqref{n2struct3} 
and \ceqref{n2struct6} that $|\varphi(H)|=0$,
\begin{equation}
\varphi(H)^\sim=-\varphi(H)
\label{n2struct10}
\end{equation}
and \hphantom{xxxxxxxxxxxxxxxxxxxxxxxx}
\begin{equation}
[Q,\varphi(H)]=[\overline{Q},\varphi(H)]=0.
\label{n2struct11}
\end{equation}

$\mathfrak{gl}(1|1)$ structures can be defined, and in fact are most useful, in an infinite dimensional context, when the 
vector space $\mathcal{E}$ is equipped also with a Hilbert space structure, such that $H$, $F$ are selfadjoint and 
$Q$, $\overline{Q}$ are reciprocally adjoint. In such a case, $Q$, $\overline{Q}$, $H$ and $F$ are generally 
unbounded differential operators and domain issues arise. Functions of $H$ may be defined using the spectral 
theorem.


\subsection{\textcolor{blue}{\sffamily Free models of BV RG}}\label{subsec:freemod}  

The basic datum required for the construction of free models of the BV RG 
is a degree $-1$ symplectic vector space $\mathcal{E}$ together with a $\mathfrak{gl}(1|1)$ structure
$Q,\overline{Q},H,F$ on the prolongation $E$ of $\mathcal{E}$. 

An application of prop. \cref{prop:bvdbuild} yields the following result. 

\begin{prop} \label{prop:basicrgst}
The operators 
\begin{equation}
\varDelta^0{}_t=\varDelta_{\ee^{-tH}}   
\label{freemod1}
\end{equation}
defined according to \ceqref{sumbvrg8} form a family of non singular 
BV Laplacians on $\Fun(E_0)$ over $\mathbb{R}$. The automorphisms of $\Fun(E_0)$ 
\begin{equation}
\chi^0{}_{t,s}=\exp\bigg(\frac{t-s}{2}\langle x,H\ad_E x\rangle_E\bigg)
\label{freemod2}
\end{equation}
form a BV flow along $\mathbb{R}$ relative to it with zero logarithmic Jacobian family 
\begin{equation}
r_{\chi^0 t,s}=0. 
\label{freemod3}
\end{equation}
\end{prop}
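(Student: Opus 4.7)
The plan is to apply proposition \cref{prop:bvdbuild} with the following data: origin $o=0 \in \mathbb{R}$; initial BV Laplacian $\varDelta_o = \varDelta_E$, the undeformed canonical Laplacian of $\Fun(E_0)$ (non singular by prop. \cref{prop:sumbvrg2}, since $1_E$ is invertible, even and satisfies $1_E{}^\sim = -1_E$); the family $\chi^0{}_{t,s}$ as written; and the trivial Jacobian family $r_{\chi^0 t,s}=0$.

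First I would verify the hypotheses of prop. \cref{prop:bvdbuild}. The operator $D = \tfrac{1}{2}\langle x, H\ad_E x\rangle_E$ is a degree $0$ derivation of $\Fun(E_0)$: writing $Du = \tfrac{1}{2} x_i H^i{}_j (x^j,u)_E$ and applying the derivation rule \ceqref{bvalg9} for $(-,-)_E$, a short sign calculation using $H^\sim = -H$ (via the fact that $|x_i| + |H^i{}_j| \equiv 1 + \epsilon^j \pmod 2$) shows that the cross terms combine correctly. Hence $\chi^0{}_{t,s} = \exp((t-s)D)$ is a well-defined family of graded algebra automorphisms (the exponential converges on each finite-dimensional space of polynomials, since $D$ preserves polynomial degree), and the groupoid law \ceqref{bvflow1} follows immediately from the exponential law. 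The Jacobian $r_{\chi^0 t,s}=0$ trivially satisfies \ceqref{bvflow1/1} and \ceqref{bvflow6}. Prop. \cref{prop:bvdbuild} then yields a non singular BV Laplacian family
\begin{equation*}
\varDelta_{\chi^0 t} = \chi^0{}_{t,0}\,\varDelta_E\,\chi^0{}_{t,0}{}^{-1}
\end{equation*}
with respect to which $\chi^0{}_{t,s}$ is a BV flow having $r_{\chi^0 t,s}=0$ as its logarithmic Jacobian family.

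The remaining content of the claim is the identification $\varDelta_{\chi^0 t} = \varDelta_{e^{-tH}}$. I would prove this by showing that both sides obey the same first order linear ODE with the same initial datum at $t=0$. For the left hand side, differentiating the conjugation gives $\tfrac{d}{dt}\varDelta_{\chi^0 t} = [D, \varDelta_{\chi^0 t}]$, with $\varDelta_{\chi^0 0} = \varDelta_E$. For the right hand side, one first checks that $(e^{-tH})^\sim = -e^{-tH}$: using the rule $(AB)^\sim = (-1)^{1+|A||B|}B^\sim A^\sim$ together with $H^\sim = -H$, induction gives $(H^n)^\sim = -H^n$ for all $n \geq 0$, and hence by Taylor expansion $(e^{-tH})^\sim = -e^{-tH}$. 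Since also $|e^{-tH}| = 0$ and $e^{-tH}$ is invertible, $\varDelta_{e^{-tH}}$ is a non singular BV Laplacian (prop. \cref{prop:sumbvrg2}). Applying prop. \cref{cor:deltacom} with $A = e^{-tH}$ and $B = H$ (which is admissible since $H^\sim = -H$), and using that $[H, e^{-tH}] = 0$, one obtains
\begin{equation*}
[\langle x, H\ad_E x\rangle_E, \varDelta_{e^{-tH}}] = -\varDelta_{H e^{-tH} + e^{-tH} H} = -2\varDelta_{H e^{-tH}}.
\end{equation*}
On the other hand, by linearity of $\varDelta_A$ in $A$, $\tfrac{d}{dt}\varDelta_{e^{-tH}} = \varDelta_{-H e^{-tH}} = [D, \varDelta_{e^{-tH}}]$, and $\varDelta_{e^{0 \cdot H}} = \varDelta_{1_E} = \varDelta_E$. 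Uniqueness for this linear ODE then forces $\varDelta_{\chi^0 t} = \varDelta_{e^{-tH}}$, establishing the proposition.

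The main obstacle is this last identification, which rests entirely on the commutator formula of prop. \cref{cor:deltacom} together with the verification of the antisymmetry property $(e^{-tH})^\sim = -e^{-tH}$; the rest is routine once the setting is in place. A secondary, essentially bookkeeping point is the sign verification that $\tfrac{1}{2}\langle x, H\ad_E x\rangle_E$ is a genuine degree $0$ derivation, which boils down to checking the signs generated by the derivation property \ceqref{bvalg9} of the BV bracket against the transposition of $H$.
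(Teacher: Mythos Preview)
Your proof is correct and follows the same overall strategy as the paper: apply prop.~\cref{prop:bvdbuild} with $o=0$, $\varDelta_o=\varDelta_E$, the given $\chi^0{}_{t,s}$ and $r_{\chi^0 t,s}=0$, and then identify the resulting Laplacian $\varDelta_{\chi^0 t}$ with $\varDelta_{\ee^{-tH}}$.

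The one genuine difference is in this last identification. The paper expands the conjugation $\chi^0{}_{t,0}\varDelta_E\chi^0{}_{t,0}{}^{-1}$ as a power series in $t$ and proves by induction, using prop.~\cref{cor:deltacom} with $A=H^n$, $B=H$, that $\big(\ad_{\Fun(E_0)}(\langle x,H\ad_E x\rangle_E)\big)^n\varDelta_E=(-2)^n\varDelta_{H^n}$, then resums. You instead differentiate and match ODEs, applying prop.~\cref{cor:deltacom} once with $A=\ee^{-tH}$, $B=H$. The two arguments are standard equivalents of one another; yours is slightly cleaner in that it avoids the inductive bookkeeping, at the cost of invoking uniqueness for a linear ODE in the operator space.

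One minor remark: the derivation property of $D=\tfrac{1}{2}\langle x,H\ad_E x\rangle_E$ does not actually require $H^\sim=-H$. Any operator of the form $\sum_j f_j\,(x^j,-)_E$ with $f_j\in\Fun(E_0)$ is a derivation, since left multiplication by a function preserves the Leibniz rule on a graded commutative algebra. The antisymmetry of $H$ is needed elsewhere (for $\varDelta_{\ee^{-tH}}$ to be defined via \ceqref{sumbvrg8} and for prop.~\cref{cor:deltacom} to apply), but not here.
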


\begin{proof}
As anticipated, the proof is an application of prop. \cref{prop:bvdbuild}. 
We set $o=0$ and $\varDelta_o=\varDelta_E$. We then define $\chi^0{}_{t,s}$ and $r_{\chi^0 t,s}$ 
by \ceqref{freemod2} and \ceqref{freemod3}, respectively.
Since $\frac{1}{2}\langle x,H\ad_E x\rangle_E$ is a degree $0$ derivation of $\Fun(E_0)$, 
$\chi^0{}_{t,s}$ is an automorphism of $\Fun(E_0)$ obeying \ceqref{bvflow1}. 
The vanishing of $r_{\chi^0 t,s}$ makes further \ceqref{bvflow1/1} and \ceqref{bvflow6} with $t=0$
trivially satisfied. It follows that the operators 
\begin{equation}
\varDelta_{\chi^0 t}=\exp\bigg(\frac{t}{2}\langle x,H\ad_E x\rangle_E\bigg)\varDelta_E
\exp\bigg(-\frac{t}{2}\langle x,H\ad_E x\rangle_E\bigg)
\label{freemod4}
\end{equation}
defined in accordance to \ceqref{xbvflow13} constitute a family of non singular BV Laplacians 
on $\Fun(E_0)$ over $\mathbb{R}$ and that $\chi^0{}_{t,s}$ is a BV flow along $\mathbb{R}$ relative to it with $r_{\chi^0 t,s}$ 
as logarithmic Jacobian family.  The only thing left to do is showing that $\varDelta_{\chi^0 t}=\varDelta^0{}_t$ is given by
\ceqref{freemod1}. 

Expanding the right hand side of \ceqref{freemod4}, we have 
\begin{equation}
\varDelta_{\chi^0 t}
=\sum_{n=0}^\infty\frac{t^n}{2^nn!}\big(\ad_{\Fun(E_0)}(\langle x,H\ad_E x\rangle_E)\big)^n\varDelta_E.
\label{freemod5}
\end{equation}
Using prop. \cref{cor:deltacom} with $A=H^n$ with integer $n\geq 0$ and $B=H$, it can be easily 
proven by induction that 
\begin{equation}
\big(\ad_{\Fun(E_0)}(\langle x,H\ad_E x\rangle_E)\big)^n\varDelta_E=(-2)^n\varDelta_{H^n}. \vphantom{\bigg]}
\label{freemod6}
\end{equation}
It follows that \hphantom{xxxxxxxxxxxxxx}
\begin{equation}
\varDelta_{\chi^0 t}=\sss_{n=0}^\infty\frac{(-t)^n}{n!}\varDelta_{H^n}=\varDelta_{\ee^{-tH}} \vphantom{\bigg]}
\label{freemod7}
\end{equation}
as claimed. 
\end{proof}


By prop. \cref{prop:sumbvrg2}, the BV bracket $(-,-)^0{}_t$ associated with $\varDelta^0{}_t$ is 
\begin{equation}
(u,v)^0{}_t=(u,v)_{\ee^{-tH}} \vphantom{\bigg]}
\label{freemod7/1}
\end{equation}
with $u,v\in \Fun(E_0)$, where the bracket $(-,-)_{\ee^{-tH}}$ is 
defined in \ceqref{sumbvrg9}.

A BV Laplacian and bracket 
\ceqref{freemod1}, \ceqref{freemod7/1} were first considered  
in ref. \ccite{Costello:2007ei}.

Next, we consider the family of free actions $S^0{}_t\in\Fun(E_0)$, where 
\begin{equation}
S^0{}_t=-\frac{1}{2}\langle x,Q \ee^{tH}x\rangle_E. \vphantom{\bigg]} 
\label{freemod8}
\end{equation}

\begin{prop} \label{rop:basicrgea}
$S^0{}_t$ obeys eqs. \ceqref{bvrenflow1} and \ceqref{bvrenflow2}
and is therefore a BV RG EA along $\mathbb{R}$. 
\end{prop}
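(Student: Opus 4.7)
The plan is to verify the two defining conditions of a BV RG EA (cf. def.~\cref{def:bvrgea}) separately, namely the BV ME~\ceqref{bvrenflow1} at each $t$ and the flow relation~\ceqref{bvrenflow2}, and to reduce the latter, thanks to prop.~\cref{prop:genres2} and the already--established structure of prop.~\cref{prop:basicrgst}, to checking the infinitesimal RGE.

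First I would verify the ME. The key preliminary observation is that $B_t := Q\ee^{tH}$ is $\sim$--symmetric: since $Q$ and $H$ commute by \ceqref{n2struct6}, $|Q|=1$, and $(\ee^{tH})^\sim=-\ee^{tH}$ by \ceqref{n2struct10}, the rule $(AB)^\sim=(-1)^{1+|A||B|}B^\sim A^\sim$ together with \ceqref{n2struct1} gives $B_t{}^\sim=-(\ee^{tH})^\sim Q^\sim=\ee^{tH}Q=B_t$. Applying \ceqref{sumbvrg24} with $A=\ee^{-tH}$ and $B=B_t$ and using cyclicity of $\grtr$ and \ceqref{n2struct7}, one obtains $\varDelta^0{}_tS^0{}_t=-\tfrac12\grtr(\ee^{-tH}Q\ee^{tH})=-\tfrac12\grtr Q=0$. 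Applying next \ceqref{sumbvrg25} with $A=\ee^{-tH}$, $B=C=B_t$,
\begin{equation*}
(S^0{}_t,S^0{}_t)^0{}_t=\tfrac14\cdot 4\langle x,Q\ee^{tH}\ee^{-tH}Q\ee^{tH}x\rangle_E=\langle x,Q^2\ee^{tH}x\rangle_E=0
\end{equation*}
by $Q^2=0$ from \ceqref{n2struct4}. This yields \ceqref{bvrenflow1}.

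Next I would verify the flow relation. Since the flow $\chi^0{}_{t,s}$ of prop.~\cref{prop:basicrgst} has trivial logarithmic Jacobian $r_{\chi^0 t,s}=0$, the relation~\ceqref{bvrenflow2} reads $S^0{}_t=\chi^0{}_{t,s}S^0{}_s$. By prop.~\cref{prop:genres2} applied with $\chi^\bcdot{}_t=\tfrac12\langle x,H\ad_E x\rangle_E$ and $r^\bcdot{}_{\chi t}=0$ (the infinitesimal data corresponding to $\chi^0{}_{t,s}$), it suffices to check the RGE $dS^0{}_t/dt=\chi^{0\bcdot}{}_tS^0{}_t$. Using \ceqref{sumbvrg26/1} with $A=H$ (for which $H^\sim=-H$, $|H|=0$) and $B=B_t$ (for which $B_t{}^\sim=B_t$, $|B_t|=1$), together with $[H,Q]=0$, one finds
\begin{equation*}
\chi^{0\bcdot}{}_tS^0{}_t=-\tfrac14\langle x,H\ad_Ex\rangle_E\langle x,B_tx\rangle_E=-\tfrac14\langle x,(HB_t+B_tH)x\rangle_E=-\tfrac12\langle x,HQ\ee^{tH}x\rangle_E,
\end{equation*}
which coincides with $dS^0{}_t/dt=-\tfrac12\langle x,QH\ee^{tH}x\rangle_E$. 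Hence \ceqref{bvrenflow2} holds, completing the proof.

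The main obstacle will be the sign bookkeeping of the transposition operation, in particular verifying that $B_t$ is $\sim$--symmetric so that propositions~\cref{prop:sumbvrg1/1}--\cref{cor:deltacom1} of subsect.~\cref{subsec:sumbvrg} are applicable with the appropriate degree parities; once this is secured, the computation is a direct application of the lemmas already proved there together with the $\mathfrak{gl}(1|1)$ relations $[Q,H]=0$, $Q^2=0$ and $\grtr Q=0$.
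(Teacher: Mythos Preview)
Your proof is correct. The verification of the ME \ceqref{bvrenflow1} is essentially identical to the paper's: both apply \ceqref{sumbvrg24}, \ceqref{sumbvrg25} and conclude via $\grtr Q=0$ and $Q^2=0$; you make the symmetry check $B_t{}^\sim=B_t$ explicit, which the paper leaves implicit.

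For the flow relation \ceqref{bvrenflow2} you take a genuinely different route. The paper expands the exponential in \ceqref{freemod2} directly and proves by induction, using \ceqref{sumbvrg26/1} repeatedly, that
\[
\big(\langle x,H\ad_E x\rangle_E\big)^n\langle x,Q\ee^{sH}x\rangle_E=\langle x,(2H)^nQ\ee^{sH}x\rangle_E,
\]
then resums to obtain $\chi^0{}_{t,s}S^0{}_s=S^0{}_t$. You instead invoke prop.~\cref{prop:genres2} to reduce the global statement to the infinitesimal RGE $dS^0{}_t/dt=\chi^{0\bcdot}{}_tS^0{}_t$, which is a single application of \ceqref{sumbvrg26/1}. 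Your approach is shorter and avoids the induction, at the price of relying on the reconstruction machinery of subsect.~\cref{subsec:genres} (in particular on a formal integration structure and on the uniqueness of the flow generated by $\chi^{0\bcdot}{}_t$, which you tacitly use to identify the flow of prop.~\cref{prop:genres1} with $\chi^0{}_{t,s}$). The paper's approach is more self--contained and yields the explicit finite relation $\chi^0{}_{t,s}S^0{}_s=S^0{}_t$, which is reused verbatim in the extended--set--up computation \ceqref{freemod26}.
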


\begin{proof}
We first prove that $S^0{}_t$ obeys the BV ME \ceqref{bvrenflow1}. 
A simple application of relations \ceqref{sumbvrg24}, \ceqref{sumbvrg25} and \ceqref{n2struct11} yields 
\begin{align}
\varDelta^0{}_tS^0{}_t+\,&\frac{1}{2}(S^0{}_t,S^0{}_t)^0{}_t
\vphantom{\Big]}
\label{freemod9}
\\
&=-\varDelta_{\ee^{-tH}}\frac{1}{2}\langle x,Q \ee^{tH}x\rangle_E
+\frac{1}{8}(\langle x,Q \ee^{tH}x\rangle_E,\langle x,Q \ee^{tH}x\rangle_E)_{\ee^{-tH}}
\vphantom{\Big]}
\nonumber
\\
&=-\frac{1}{2}\grtr(Q)+\frac{1}{2}\langle x,Q^2\ee^{tH}x\rangle_E=0,
\vphantom{\Big]}
\nonumber
\end{align}
where in the last step we used that $\grtr Q=0$ and $Q^2=0$ by the first commutation relation \ceqref{n2struct4}
and \ceqref{n2struct7}. Thus, $S^0{}_t$ does indeed satisfy \ceqref{bvrenflow1}. \pagebreak 

Next, we prove that $S^0{}_t$ evolves in accordance with  \ceqref{bvrenflow2}. Using \ceqref{freemod2}, 
\ceqref{freemod3}, we find 
\begin{align}
\hat\chi^0{}_{t,s}S^0{}_s&
=-\frac{1}{2}\exp\bigg(\frac{t-s}{2}\langle x,H\ad_E x\rangle_E\bigg)\langle x,Q \ee^{sH}x\rangle_E
\vphantom{\Big]}
\label{freemod11}
\\
&=-\frac{1}{2}\sum_{n=0}^\infty\frac{(t-s)^n}{2^nn!}\big(\langle x,H\ad_E x\rangle_E\big)^n\langle x,Q \ee^{sH}x\rangle_E.
\vphantom{\Big]}
\nonumber
\end{align}
Using \ceqref{sumbvrg26/1}, it is straightforward to show by induction that 
\begin{equation}
\big(\langle x,H\ad_E x\rangle_E\big)^n\langle x,Q \ee^{sH}x\rangle_E
=\langle x,(2H)^nQ\ee^{sH}x\rangle_E. \vphantom{\bigg]}
\label{freemod12}
\end{equation}
From \ceqref{freemod12},  using \ceqref{n2struct11}, it follows then that \hphantom{xxxxxxxxxxxxxxxx}
\begin{align}
\chi^0{}_{t,s}S^0{}_s
&=-\frac{1}{2}\sum_{n=0}^\infty\frac{(t-s)^n}{n!}\langle x,H^nQ\ee^{sH}x\rangle_E
\vphantom{\Big]}
\label{freemod13}
\\
&=-\frac{1}{2}\langle x,\ee^{(t-s)H}Q\ee^{sH}x\rangle_E=S^0{}_t,
\vphantom{\Big]}
\nonumber
\end{align}
verifying \ceqref{bvrenflow2}. 
\end{proof}

The RGE obeyed by $S^0{}_t$ has therefore the form 
\begin{equation}
\frac{dS^0{}_t}{dt}=\frac{1}{2}\langle x,H(x,S^0{}_t)_E\rangle_E. \vphantom{\bigg]} 
\label{freemod14}
\end{equation}
The Hamiltonian $H$ thus drives the RG flow. 

The RG set--up considered above is evidently of the basic type discussed in subsect.
\cref{subsec:bveffact}. It is natural to wonder whether there exists an analogous construction in the 
extended set--up. The answer is affirmative as we show next. 


\begin{prop} \label{prop:extrgst}
The operators 
\begin{equation}
\varDelta^0{}_{t\theta}=\varDelta_{\ee^{-tH}}+\theta\varDelta_{\overline{Q}\ee^{-tH}}  \vphantom{\bigg]} 
\label{freemod15}
\end{equation}
defined according to \ceqref{sumbvrg8} form a family of non singular 
BV Laplacians on $\Fun(E_0)$ over $T[1]\mathbb{R}$. The automorphisms of $\Fun(E_0)$ 
\begin{align}
&\chi^0{}_{t\theta,s\zeta}=\bigg(\id_{\Fun(E_0)}+\theta\frac{1}{2}\langle x,\overline{Q}\ad_E x\rangle_E\bigg)
\vphantom{\Big]}
\label{freemod16}
\\
&\hspace{3cm}\exp\bigg(\frac{t-s}{2}\langle x,H\ad_E x\rangle_E\bigg)
\bigg(\id_{\Fun(E_0)}-\zeta\frac{1}{2}\langle x,\overline{Q}\ad_E x\rangle_E\bigg)
\vphantom{\Big]}
\nonumber
\end{align}
\vspace{-.9cm}\eject\noindent
form a BV flow along $T[1]\mathbb{R}$ relative to it with zero 
logarithmic Jacobian family
\begin{equation}
r_{\chi^0 t\theta,s\zeta}=0. 
\label{freemod17}
\end{equation}
\end{prop}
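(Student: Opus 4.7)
The plan is to invoke prop. \cref{prop:bvdbuild} with origin $o=(0,0)\in T[1]\mathbb{R}$, initial Laplacian $\varDelta_o=\varDelta_E$, proposed flow $\chi^0{}_{t\theta,s\zeta}$ given by \ceqref{freemod16}, and trivial Jacobian family $r_{\chi^0 t\theta,s\zeta}=0$. Introduce the shorthand $\xi:=\tfrac{1}{2}\langle x,H\ad_E x\rangle_E$, a degree $0$ derivation of $\Fun(E_0)$, and $\bar\xi:=\tfrac{1}{2}\langle x,\overline{Q}\ad_E x\rangle_E$, a degree $-1$ derivation, so that $\chi^0{}_{t\theta,s\zeta}=(\id+\theta\bar\xi)\exp((t-s)\xi)(\id-\zeta\bar\xi)$. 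Each factor is an algebra automorphism: $\exp((t-s)\xi)$ by the derivation property of $\xi$, and $\id\pm\tau\bar\xi$ (for an odd parameter $\tau$) by the derivation property of $\bar\xi$ combined with the nilpotence $\tau^2=0$, which kills the quadratic correction to the homomorphism law.

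The cocycle identity \ceqref{bvflow1} will require two algebraic inputs: $\bar\xi^2=0$ and $[\xi,\bar\xi]=0$. Both $\bar\xi^2$ and $[\xi,\bar\xi]$ are derivations of $\Fun(E_0)$, so they are determined by their values on the coordinate generators $x^k$. A short matrix computation reduces the two vanishings to $\overline{Q}^2=0$ and $[H,\overline{Q}]=0$, supplied by the $\mathfrak{gl}(1|1)$ relations \ceqref{n2struct4} and \ceqref{n2struct6}. Combined with the sign rule $\bar\xi\tau=-\tau\bar\xi$ that holds for odd $\tau$ (both factors being odd), these yield $(\id-\theta\bar\xi)(\id+\theta\bar\xi)=\id-\theta\bar\xi\theta\bar\xi=\id+\theta^2\bar\xi^2=\id$, while $\exp((u-t)\xi)$ commutes past $(\id-\theta\bar\xi)$ thanks to $[\xi,\bar\xi]=0$ and the evenness of $\xi$. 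The composition $\chi^0{}_{u\eta,t\theta}\chi^0{}_{t\theta,s\zeta}$ therefore telescopes to $\chi^0{}_{u\eta,s\zeta}$. With $r_{\chi^0 t\theta,s\zeta}=0$, conditions \ceqref{bvflow1/1} and \ceqref{bvflow6} at $t=o$ are trivially met, and prop. \cref{prop:bvdbuild} applies, producing a non-singular BV Laplacian family $\varDelta_{\chi^0 t\theta}=\chi^0{}_{t\theta,00}\varDelta_E\chi^0{}_{t\theta,00}{}^{-1}$ relative to which $\chi^0{}_{t\theta,s\zeta}$ is a BV flow with zero Jacobian.

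It remains to show $\varDelta_{\chi^0 t\theta}=\varDelta^0{}_{t\theta}$. Factoring $\chi^0{}_{t\theta,00}=(\id+\theta\bar\xi)\exp(t\xi)$ with inverse $\exp(-t\xi)(\id-\theta\bar\xi)$, the identity $\exp(t\xi)\varDelta_E\exp(-t\xi)=\varDelta_{\ee^{-tH}}$ established in the proof of prop. \cref{prop:basicrgst} (eqs. \ceqref{freemod4}--\ceqref{freemod7}) reduces the computation to $(\id+\theta\bar\xi)\varDelta_{\ee^{-tH}}(\id-\theta\bar\xi)$. Expanding and using $\theta^2=0$ to discard the quartic-in-$\theta$ term, I obtain $\varDelta_{\ee^{-tH}}+\theta[\bar\xi,\varDelta_{\ee^{-tH}}]$, a graded commutator since $\bar\xi$ has degree $-1$ and $\varDelta_{\ee^{-tH}}$ has degree $1$. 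Prop. \cref{cor:deltacom} applied with $A=\ee^{-tH}$ (even, antisymmetric by \ceqref{n2struct10}) and $B=\overline{Q}$ then yields $[\langle x,\overline{Q}\ad_E x\rangle_E,\varDelta_{\ee^{-tH}}]=\varDelta_{\ee^{-tH}\overline{Q}+\overline{Q}\ee^{-tH}}$, which collapses to $2\varDelta_{\overline{Q}\ee^{-tH}}$ once $[H,\overline{Q}]=0$ implies that $\ee^{-tH}$ and $\overline{Q}$ commute. The factor $\tfrac{1}{2}$ built into $\bar\xi$ delivers precisely $\theta\varDelta_{\overline{Q}\ee^{-tH}}$, matching \ceqref{freemod15}.

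The principal obstacle is bookkeeping the signs produced by the odd parameters $\theta,\zeta,\eta$ and the odd derivation $\bar\xi$: in particular, verifying $(\id\mp\tau\bar\xi)(\id\pm\tau\bar\xi)=\id$ from $\bar\xi\tau=-\tau\bar\xi$ and $\tau^2=0$, and obtaining a graded commutator (rather than an anticommutator) in the final conjugation. Once these signs are handled correctly, the rest of the argument is a direct concatenation of results established earlier in subsects. \cref{subsec:bvflow} and \cref{subsec:sumbvrg}.
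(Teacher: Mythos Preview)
Your proposal is correct and follows essentially the same route as the paper's proof: both invoke prop.~\cref{prop:bvdbuild} with origin $o=(0,0)$ and $\varDelta_o=\varDelta_E$, verify the automorphism and cocycle properties of $\chi^0{}_{t\theta,s\zeta}$, and then identify $\varDelta_{\chi^0 t\theta}$ with \ceqref{freemod15} by combining the basic-set-up computation \ceqref{freemod4}--\ceqref{freemod7} with prop.~\cref{cor:deltacom} applied to $A=\ee^{-tH}$, $B=\overline{Q}$. Your treatment of the cocycle identity (via $\bar\xi^2=0$ and $[\xi,\bar\xi]=0$) is in fact more explicit than the paper's, which simply declares the verification ``wieldy''.
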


\begin{proof} The proof is a straightforward generalization of that of prop. \cref{prop:basicrgst}
exploiting again the results of prop. \cref{prop:bvdbuild}. We set $o=00$ and $\varDelta_o=\varDelta_E$. We then define 
$\chi^0{}_{t\theta,s\zeta}$ and $r_{\chi^0 t\theta,s\zeta}$ 
by \ceqref{freemod16} and \ceqref{freemod17}, respectively.
Since $\frac{1}{2}\langle x,H\ad_E x\rangle_E$ and $\frac{1}{2}\langle x,\overline{Q}\ad_E x\rangle_E$ are degree $0$ and $-1$
derivations of $\Fun(E_0)$, $\chi^0{}_{t\theta,s\zeta}$ is an automorphism of $\Fun(E_0)$.  The verification of \ceqref{bvflow1}
is wieldy. The vanishing of $r_{\chi^0 t,s}$ makes further \ceqref{bvflow1/1} and \ceqref{bvflow6} with $t=00$
trivially satisfied. It follows that the operators 
\begin{align}
\varDelta_{\chi^0 t\theta}&
=\bigg(\id_{\Fun(E_0)}+\theta\frac{1}{2}\langle x,\overline{Q}\ad_E x\rangle_E\bigg)
\exp\bigg(\frac{t}{2}\langle x,H\ad_E x\rangle_E\bigg)
\vphantom{\Big]}
\label{freemod18}
\\
&\hspace{2cm}\times \varDelta_E
\exp\bigg(-\frac{t}{2}\langle x,H\ad_E x\rangle_E\bigg)\bigg(\id_{\Fun(E_0)}
-\theta\frac{1}{2}\langle x,\overline{Q}\ad_E x\rangle_E\bigg)
\vphantom{\Big]}
\nonumber
\end{align}
defined according to \ceqref{xbvflow13} constitute a family of non singular BV Laplacians 
on $\Fun(E_0)$ over $T[1]\mathbb{R}$ and that $\chi^0{}_{t\theta,s\zeta}$ is a BV flow 
along $\mathbb{R}$ relative to it with $r_{\chi^0 t\theta,s\zeta}$ as logarithmic Jacobian. 
To complete the proof, we have to show only that $\varDelta_{\chi^0 t\theta}=\varDelta^0{}_{t\theta}$ is given by
\ceqref{freemod15}. 

The product of the three central factors in the right hand side of \ceqref{freemod18} was computed in the 
paragraph of eqs. \ceqref{freemod5}--\ceqref{freemod7} above. \ceqref{freemod18} so becomes
\begin{align}
\varDelta_{\chi^0 t\theta}&
=\bigg(\id_{\Fun(E_0)}+\theta\frac{1}{2}\langle x,\overline{Q}\ad_E x\rangle_E\bigg)
\vphantom{\Big]}
\label{freemod19}
\\
&\hspace{2cm}\times \varDelta_{\ee^{-tH}}\bigg(\id_{\Fun(E_0)}-\theta\frac{1}{2}\langle x,\overline{Q}\ad_E x\rangle_E\bigg)
\vphantom{\Big]}
\nonumber
\\
&=\varDelta_{\ee^{-tH}}+\theta\frac{1}{2}\big[\langle x,\overline{Q}\ad_E x\rangle_E,\varDelta_{\ee^{-tH}}\big].
\vphantom{\Big]}
\nonumber
\end{align}
Using cor. \cref{cor:deltacom} with $A=\ee^{-tH}$ and $B=\overline{Q}$, we find
\begin{equation}
\big[\langle x,\overline{Q}\ad_E x\rangle_E,\varDelta_{\ee^{-tH}}\big]=2\varDelta_{\overline{Q}\ee^{-tH}}.
\label{freemod20}
\end{equation}
Substituting \ceqref{freemod20} into \ceqref{freemod19}, we get \ceqref{freemod15} immediately.
\end{proof}

An application of props. \cref{prop:sumbvrg1/1} and \cref{prop:sumbvrg2} \pagebreak 
shows that the BV bracket $(-,-)^0{}_{t\theta}$ associated with 
$\varDelta^0{}_{t\theta}$ is given by 
\begin{equation}
(u,v)^0{}_{t\theta}=(u,v)_{\ee^{-tH}}+\theta(-1)^{|u|}(u,v)_{\overline{Q}\ee^{-tH}}
\label{freemod20/1}
\end{equation}
with $u,v\in \Fun(E_0)$, where the brackets $(-,-)_{\ee^{-tH}}$, $(-,-)_{\overline{Q}\ee^{-tH}}$
are defined in accordance with \ceqref{sumbvrg9}.

Next, we consider the family of free actions $S^0{}_t\in\Fun(E_0)$ where 
\begin{equation}
S^0{}_{t\theta}=-\frac{1}{2}\langle x,Q \ee^{tH}x\rangle_E-\theta\frac{1}{4}\langle x,\{\overline{Q},Q\} \ee^{tH}x\rangle_E,
\label{freemod21}
\end{equation}
where $\{\overline{Q},Q\}=\overline{Q}Q-Q\overline{Q}$ is the graded anticommutator of $\overline{Q}$, $Q$. 

\begin{prop} \label{prop:extrgea}
$S^0{}_{t\theta}$ obeys eqs. \ceqref{bvrenflow1} and \ceqref{bvrenflow2}
and is therefore a BV RG EA along $T[1]\mathbb{R}$. 
\end{prop}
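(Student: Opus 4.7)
The plan is to adapt the strategy of prop.~\cref{rop:basicrgea} by splitting everything in powers of the odd parameter $\theta$ and exploiting the general relation \ceqref{bveffact26} provided by prop.~\cref{prop:rgered}. Writing $S^0{}_{t\theta}=S^0{}_t+\theta S^{0\star}{}_t$ with $S^{0\star}{}_t=-\frac{1}{4}\langle x,\{\overline{Q},Q\}\ee^{tH}x\rangle_E$, and using the decompositions \ceqref{freemod15} and \ceqref{freemod20/1} of $\varDelta^0{}_{t\theta}$ and $(-,-)^0{}_{t\theta}$, the BV ME \ceqref{bvrenflow1} splits into a $\theta^0$ equation and a $\theta^1$ equation. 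The $\theta^0$ equation is precisely the basic free ME already established in prop.~\cref{rop:basicrgea} and requires no new work.

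At order $\theta^1$, the ME reduces to the identity \ceqref{bveffact26}, namely
\begin{equation*}
\varDelta_{\ee^{-tH}}S^{0\star}{}_t+(S^0{}_t,S^{0\star}{}_t)_{\ee^{-tH}}
=\varDelta_{\overline{Q}\ee^{-tH}}S^0{}_t+\frac{1}{2}(S^0{}_t,S^0{}_t)_{\overline{Q}\ee^{-tH}}.
\end{equation*}
I would evaluate each of the four summands explicitly via \ceqref{sumbvrg24} and \ceqref{sumbvrg25}: the two Laplacian terms both reduce to scalars proportional to $\grtr(\overline{Q}Q)$ after using $\{Q,\overline{Q}\}=\overline{Q}Q-Q\overline{Q}$ and the graded cyclicity of $\grtr$, and turn out to coincide; the two bracket terms both reduce to a scalar multiple of $\langle x,Q\overline{Q}Q\ee^{tH}x\rangle_E=\langle x,HQ\ee^{tH}x\rangle_E$, the last step exploiting $[Q,\overline{Q}]=H$ and $Q^2=0$ from \ceqref{n2struct4}, \ceqref{n2struct5}. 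A term-by-term match then delivers the identity.

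For the flow relation \ceqref{bvrenflow2}, the vanishing of the logarithmic Jacobian \ceqref{freemod17} reduces the task to verifying $S^0{}_{t\theta}=\chi^0{}_{t\theta,s\zeta}S^0{}_{s\zeta}$, which I would attack by applying the three factors of \ceqref{freemod16} in succession. Setting $V=\frac{1}{2}\langle x,\overline{Q}\ad_E x\rangle_E$, a single application of \ceqref{sumbvrg26/1} with $A=\overline{Q}$ and $B=Q\ee^{sH}$ shows that $VS^0{}_s=S^{0\star}{}_s$, so the rightmost factor $\id-\zeta V$ applied to $S^0{}_s+\zeta S^{0\star}{}_s$ collapses to $S^0{}_s$. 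The middle exponential then carries $S^0{}_s$ into $S^0{}_t$ exactly as in the computation \ceqref{freemod11}--\ceqref{freemod13} of prop.~\cref{rop:basicrgea}. Finally, the same identity $VS^0{}_t=S^{0\star}{}_t$ shows that the leftmost factor $\id+\theta V$ applied to $S^0{}_t$ produces $S^0{}_t+\theta S^{0\star}{}_t=S^0{}_{t\theta}$, as required.

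The main obstacle is the careful bookkeeping of Koszul signs introduced by the odd parameter $\theta$ when extending the $\Fun(E_0)$--valued operator $\varDelta^0{}_{t\theta}$ and the bracket $(-,-)^0{}_{t\theta}$ to act on the $\theta$--enriched element $S^0{}_{t\theta}$, together with the verification of the transposition properties $A^\sim=\pm A$ needed for each composite operator such as $Q\ee^{tH}$, $\overline{Q}\ee^{-tH}$ and $\{\overline{Q},Q\}\ee^{tH}$ before the structural identities \ceqref{sumbvrg24}, \ceqref{sumbvrg25} and \ceqref{sumbvrg26/1} can be invoked.
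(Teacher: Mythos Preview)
Your proposal is correct and follows essentially the same route as the paper: splitting the ME into its $\theta^0$ and $\theta^1$ components, evaluating the four terms at order $\theta^1$ via \ceqref{sumbvrg24}--\ceqref{sumbvrg25} and the $\mathfrak{gl}(1|1)$ relations, and verifying the flow relation by applying the three factors of \ceqref{freemod16} in succession using the key identity $VS^0{}_s=S^{0\star}{}_s$ from \ceqref{sumbvrg26/1}. The only cosmetic difference is that you organize the $\theta^1$ identity as a pairwise match (Laplacian with Laplacian, bracket with bracket), whereas the paper collects everything on one side as in \ceqref{freemod22}; your phrasing ``exploiting the general relation \ceqref{bveffact26}'' is slightly misleading since that relation is a \emph{consequence} of the ME rather than an independent input, but you in fact verify the identity directly, so there is no circularity.
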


\begin{proof}
We first prove that $S^0{}_{t\theta}$ obeys the BV ME \ceqref{bvrenflow1}. 
To organize the computations in a manageable and efficient manner, we let $\varDelta^0{}_t=\varDelta_{\ee^{-tH}}$, 
$\varDelta^{0\star}{}_t=\varDelta_{\overline{Q}\ee^{-tH}}$ 
be the two Laplacians appearing in the right hand side \ceqref{freemod15}.
Similarly, we let $(-,-)^0{}_t=(-,-)_{\ee^{-tH}}$, $(-,-)^{0\star}{}_t=(-,-)_{\overline{Q}\ee^{-tH}}$
be the two brackets in the right hand side of \ceqref{freemod20/1}. Finally, we denote by
$S^0{}_t=-\frac{1}{2}\langle x,Q \ee^{tH}x\rangle_E$, $S^{0\star }{}_t=-\frac{1}{4}\langle x,\{\overline{Q},Q\} \ee^{tH}x\rangle_E$
the two terms in the right hand side of \ceqref{freemod21}. The ME splits in this way into two equations.
The first, $\varDelta^0{}_tS^0{}_t+\frac{1}{2}(S^0{}_t,S^0{}_t)^0{}_t=0$,  was verified earlier on through the computation 
\ceqref{freemod9}. The second, 
$\varDelta^{0\star}{}_tS^0{}_t-\varDelta^0{}_tS^{0\star }{}_t
+\frac{1}{2}(S^0{}_t,S^0{}_t)^{0\star}{}_t-(S^0{}_t,S^{0\star }{}_t)^0{}_t=0$,  is shown similarly 
using systematically the identities \ceqref{sumbvrg24}, \ceqref{sumbvrg25} and \ceqref{n2struct11}, 
\begin{align}
\varDelta^{0\star}{}_tS^0{}_t-\,&\varDelta^0{}_tS^{0\star }{}_t+\frac{1}{2}(S^0{}_t,S^0{}_t)^{0\star}{}_t-(S^0{}_t,S^{0\star }{}_t)^0{}_t
\vphantom{\Big]}
\label{freemod22}
\\
&=-\frac{1}{2}\varDelta_{\overline{Q}\ee^{-tH}}\langle x,Q \ee^{tH}x\rangle_E
+\frac{1}{4}\varDelta_{\ee^{-tH}}\langle x,\{\overline{Q},Q\} \ee^{tH}x\rangle_E
\vphantom{\Big]}
\nonumber
\\
&\hspace{1.5cm}+\frac{1}{8}(\langle x,Q \ee^{tH}x\rangle_E,\langle x,Q \ee^{tH}x\rangle_E)_{\overline{Q}\ee^{-tH}}
\vphantom{\Big]}
\nonumber
\\
&\hspace{3cm}-\frac{1}{8}(\langle x,Q \ee^{tH}x\rangle_E,\langle x,\{\overline{Q},Q\} \ee^{tH}x\rangle_E)_{\ee^{-tH}}
\vphantom{\Big]}
\nonumber
\\
&=-\frac{1}{2}\grtr(\overline{Q}Q)+\frac{1}{4}\grtr(\{\overline{Q},Q\})
\vphantom{\Big]}
\nonumber
\\
&\hspace{2.5cm}+\frac{1}{2}\langle x, Q\overline{Q}Q\ee^{tH}x \rangle_E
-\frac{1}{2}\langle x, Q\{\overline{Q},Q\} \ee^{tH}x \rangle_E=0,
\vphantom{\Big]}
\nonumber
\end{align}
where in the last step we used that $\grtr(\overline{Q}Q)=-\grtr(Q\overline{Q})$ and $Q^2=0$ 
by the first commutation relation \ceqref{n2struct4}. Thus, $S^0{}_{t\theta}$ satisfies \ceqref{bvrenflow1} 
as required. 

Next, we prove that $S^0{}_{t\theta}$ evolves in accordance with \ceqref{bvrenflow2}. Let us write 
the flow maps $\chi^0{}_{t\theta,s\zeta}$ given in \ceqref{freemod16} as 
\begin{equation}
\chi^0{}_{t\theta,s\zeta}
=(1_{\Fun(E_0)}+\theta\chi^{0\star })\chi^0{}_{t,s}(1_{\Fun(E_0)}-\zeta\chi^{0\star }),
\label{freemod23}
\end{equation}
where $\chi^{0\star }=\frac{1}{2}\langle x,\overline{Q}\ad_E x\rangle_E$ and $\chi^0{}_{t,s}=
\exp\big(\frac{t-s}{2}\langle x,H\ad_E x\rangle_E\big)$. Then, 
\begin{equation}
\chi^0{}_{t\theta,s\zeta}S^0{}_{s\zeta}
=(1_{\Fun(E_0)}+\theta\chi^{0\star })\chi^0{}_{t,s}\big(S^0{}_s+\zeta(S^{0\star }{}_s-\chi^{0\star }S^0{}_s)\big).
\label{freemod24}
\end{equation}
Now, using \ceqref{sumbvrg26/1}, we find 
\begin{align}
\chi^{0\star }S^0{}_s
&=-\frac{1}{4}\langle x,\overline{Q}\ad_E x\rangle_E\langle x,Q\ee^{sH}x\rangle_E
\vphantom{\Big]}
\label{freemod25}
\\
&=-\frac{1}{2}\langle x, \overline{Q}Q\ee^{sH}x\rangle_E
\vphantom{\Big]}
\nonumber
\\
&=-\frac{1}{4}\langle x, \{\overline{Q},Q\}\ee^{sH}x\rangle_E=S^{0\star }{}_s. 
\vphantom{\Big]}
\nonumber
\end{align}
In the last step, we employed the relation $(\overline{Q}Q\ee^{sH})^\sim=-Q\overline{Q}\ee^{sH}$, which follows from
\ceqref{n2struct1}, \ceqref{n2struct2} and \ceqref{n2struct10}. Further, 
\begin{equation}
\chi^0{}_{t,s}S^0{}_s=S^0{}_t
\label{freemod26}
\end{equation}
by the same calculation as that of the paragraph of eqs. \ceqref{freemod11}--\ceqref{freemod13}. 
On account of \ceqref{freemod25}, \ceqref{freemod26}, relation 
\ceqref{freemod24} simplifies as 
\begin{align}
\chi^0{}_{t\theta,s\zeta}S^0{}_{s\zeta}
&=(1_{\Fun(E_0)}+\theta\chi^{0\star })\chi^0{}_{t,s}S^0{}_s
\vphantom{\Big]}
\label{freemod27}
\\
&=S^0{}_t+\theta\chi^{0\star } S^0{}_t
\vphantom{\Big]}
\nonumber
\\
&=S^0{}_t+\theta S^{0\star }{}_t=S^0{}_{t\theta}.
\vphantom{\Big]}
\nonumber
\end{align}
\ceqref{bvrenflow2} is so verified. 
\end{proof}

From \ceqref{freemod16} and \ceqref{freemod21}, it is evident the BV flow $\chi^0{}_{t,s}=\chi^0{}_{t0,s0}$
and EA $S^0{}_t=S^0{}_{t0}$ are those of the basic RG set--up given by \ceqref{freemod2} and 
\ceqref{freemod8}, respectively. In  particular, \pagebreak 
the RGE obeyed by $S^0{}_t$ is again \ceqref{freemod14}.
The RG set--up considered above is thus of the extended type discussed in subsect.
\cref{subsec:bveffact} with $\chi^0{}_{t,s}$ and $S^0{}_t$ being the basic projection of 
$\chi^0{}_{t0,s0}$ and $S^0{}_{t\theta}$. 

\begin{prop}
Under basic projection, the RGE can be put in the form
\begin{equation}
\frac{d S^0{}_t}{dt}=-\varDelta_{\overline{Q}\ee^{-tH}}S^0{}_t-\frac{1}{2}(S^0{}_t,S^0{}_t)_{\overline{Q}\ee^{-tH}}
-\frac{1}{2}\grtr(\overline{Q}Q).
\label{freemod28}
\end{equation}
\end{prop}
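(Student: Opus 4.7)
The statement is an explicit identity, which I would verify by a direct computation. The plan is to compute both sides in closed form using the quadratic nature of $S^0{}_t$ together with the general identities \ceqref{sumbvrg24} and \ceqref{sumbvrg25} for deformed BV Laplacians and brackets acting on monomials $\langle x,Bx\rangle_E$, reducing everything to algebra in $\End(E)$ controlled by the $\mathfrak{gl}(1|1)$ structure relations of def. \cref{def:n2struct1}.

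For the left-hand side, I differentiate $S^0{}_t=-\tfrac{1}{2}\langle x,Q\ee^{tH}x\rangle_E$ from \ceqref{freemod8} and use $[Q,H]=0$ from \ceqref{n2struct6} to obtain $dS^0{}_t/dt=-\tfrac{1}{2}\langle x,HQ\ee^{tH}x\rangle_E$.

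For the right-hand side, I first check that $\overline{Q}\ee^{-tH}$ is transpose-antisymmetric and $Q\ee^{tH}$ is transpose-symmetric, using \ceqref{n2struct1}, \ceqref{n2struct2}, the transposition rule $(AB)^\sim=(-1)^{1+|A||B|}B^\sim A^\sim$ of subsect. \cref{subsec:sumbvrg}, and $[Q,H]=[\overline{Q},H]=0$. This allows me to apply \ceqref{sumbvrg24} and \ceqref{sumbvrg25}, which, together with the cyclicity of $\grtr$, yield $\varDelta_{\overline{Q}\ee^{-tH}}S^0{}_t=-\tfrac{1}{2}\grtr(\overline{Q}Q)$ and $(S^0{}_t,S^0{}_t)_{\overline{Q}\ee^{-tH}}=\langle x,Q\overline{Q}Q\ee^{tH}x\rangle_E$. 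Using $Q^2=0$ from the first relation in \ceqref{n2struct4} together with $Q\overline{Q}+\overline{Q}Q=H$, which is the graded commutator \ceqref{n2struct5} read for the odd generators $Q,\overline{Q}$, I simplify $Q\overline{Q}Q=(H-\overline{Q}Q)Q=HQ$. Substituting into the right-hand side of the target equation, the two $\pm\tfrac{1}{2}\grtr(\overline{Q}Q)$ contributions cancel, leaving $-\tfrac{1}{2}\langle x,HQ\ee^{tH}x\rangle_E$, which coincides with the left-hand side.

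The computation is essentially mechanical; the only points that demand care are carrying through the transposition of $\overline{Q}\ee^{-tH}$ correctly, using \ceqref{n2struct10} so that the hypotheses of \ceqref{sumbvrg24}, \ceqref{sumbvrg25} are met, and the bookkeeping of graded versus ungraded commutators in the $\mathfrak{gl}(1|1)$ algebra, so that the constant $-\tfrac{1}{2}\grtr(\overline{Q}Q)$ emerges with exactly the sign and coefficient needed to cancel the analogous contribution produced by $\varDelta_{\overline{Q}\ee^{-tH}}S^0{}_t$.
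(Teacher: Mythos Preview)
Your proposal is correct and follows essentially the same route as the paper: both compute $\varDelta_{\overline{Q}\ee^{-tH}}S^0{}_t$ and $(S^0{}_t,S^0{}_t)_{\overline{Q}\ee^{-tH}}$ via \ceqref{sumbvrg24}, \ceqref{sumbvrg25}, reduce $Q\overline{Q}Q$ to $HQ$ using $Q^2=0$ and $[Q,\overline{Q}]=H$, and match with $dS^0{}_t/dt=-\tfrac{1}{2}\langle x,QH\ee^{tH}x\rangle_E$. One small remark: the simplification $\grtr(\overline{Q}\ee^{-tH}Q\ee^{tH})=\grtr(\overline{Q}Q)$ comes from $[Q,H]=0$ (so $\ee^{-tH}Q\ee^{tH}=Q$) rather than from graded cyclicity of $\grtr$, which would introduce an unwanted sign here.
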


\begin{proof}
A simple application of the identities \ceqref{sumbvrg24}, \ceqref{sumbvrg25}, \ceqref{n2struct4} and \ceqref{n2struct11}
yields 
\begin{align}
-\frac{1}{2}\varDelta_{\overline{Q}\ee^{-tH}}\langle x,Q\ee^{tH} x\rangle_E
&+\frac{1}{8}(\langle x,Q\ee^{tH} x\rangle_E,\langle x,Q\ee^{tH} x\rangle_E)_{\overline{Q}\ee^{-tH}}
\vphantom{\Big]}
\label{freemod29}
\\
&=-\frac{1}{2}\grtr(\overline{Q}Q)+\frac{1}{2}\langle x,Q\overline{Q}Q\ee^{tH} x\rangle_E
\vphantom{\Big]}
\nonumber
\\
&=-\frac{1}{2}\grtr(\overline{Q}Q)+\frac{1}{2}\langle x,QH\ee^{tH} x\rangle_E
\vphantom{\Big]}
\nonumber
\\
&=-\frac{1}{2}\grtr(\overline{Q}Q)+\frac{d}{dt}\frac{1}{2}\langle x,Q\ee^{tH} x\rangle_E.
\vphantom{\Big]}
\nonumber
\end{align}
Recalling \ceqref{freemod8}, \ceqref{freemod29} can be cast in the form \ceqref{freemod28}. 
\end{proof}


By the results of subsect. \cref{subsec:bveffact}, we expect that 
the RGE \ceqref{freemod14} can be cast in the form \ceqref{bveffact27}.
Eq. \ceqref{freemod28} apparently deviates from \ceqref{bveffact27} by 
the sign of the first two terms. This mismatch is however only apparent.
As we shall argue in subsect. \cref{subsec:permod} below, 
the reduced infinitesimal generator and Jacobian of the BV flow,
$\bar\chi^{0\bcdot}{}_t$ and $\bar r^\bcdot{}_{\chi^0 t}$,  contain those very same terms with
coefficients such to produce the result shown.  

In this subsection, we illustrated two non trivial examples of BV RG flow. These flows are special:
their logarithmic Jacobian vanishes. In the associated RGEs, the quantum term is therefore absent.
In subsect. \cref{subsec:permod} next, we shall encounter instances  
of non unimodular flows and study the corresponding RGEs.


\subsection{\textcolor{blue}{\sffamily Towards perturbative BV RG}}\label{subsec:permod}  

In this final partly conjectural subsection, we consider the implications of the results 
found in subsect. \cref{subsec:freemod} for the perturbative BV RG. 

In perturbation theory, one promotes the function algebra $\Fun(E_0)$ to the formal power series algebra
$\Fun(E_0)[[\hbar]]$ of the parameter $\hbar$ over $\Fun(E_0)$. Further, the full action $S_t\in\Fun(E_0)[[\hbar]]$ 
is rescaled by $\hbar^{-1}$. It is further assumed that $S_t$ satisfies 
the same BV quantum ME as the free action $S^0{}_t$, 
\begin{equation}
\hbar\varDelta_{\ee^{-tH}}S_t+\frac{1}{2}(S_t,S_t)_{\ee^{-tH}}=0.
\label{permod1}
\end{equation}
Aiming to a perturbative analysis of the BV RG, one splits the action $S_t$ as 
\begin{equation}
S_t=S^0{}_t+I_t,
\label{permod2}
\end{equation}
where $S^0{}_t$ 
is the free action \ceqref{freemod8} and 
$I_t\in\Fun(E_0)[[\hbar]]$ is an interaction term at least cubic in $x$ mod $\hbar$. 

\begin{prop}
$I_t$ obeys the BV quantum ME
\begin{equation}
\hbar\varDelta_{\ee^{-tH}}I_t-\mathcal{Q}I_t+\frac{1}{2}(I_t,I_t)_{\ee^{-tH}}=0,
\label{permod3}
\end{equation}
where $\mathcal{Q}$ is the degree $1$ first order differential operator defined by 
\begin{equation}
\mathcal{Q}u=\langle x,Q(x,u)_E\rangle_E
\label{permod4}
\end{equation}
acting on $\Fun(E_0)[[\hbar]]$. 
\end{prop}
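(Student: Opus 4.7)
The plan is to substitute the decomposition $S_t = S^0{}_t + I_t$ into the master equation \ceqref{permod1}, expand bilinearly, and identify each of the resulting terms. Using the bilinearity of $\varDelta_{\ee^{-tH}}$ and the shifted graded symmetry \ceqref{dbvalg7} of $(-,-)_{\ee^{-tH}}$ (which is a degree $1$ bracket, hence graded symmetric in its two degree $0$ arguments), the left hand side of \ceqref{permod1} becomes
\begin{equation}
\hbar\varDelta_{\ee^{-tH}}S^0{}_t+\tfrac{1}{2}(S^0{}_t,S^0{}_t)_{\ee^{-tH}}
+\hbar\varDelta_{\ee^{-tH}}I_t+(S^0{}_t,I_t)_{\ee^{-tH}}+\tfrac{1}{2}(I_t,I_t)_{\ee^{-tH}}=0.
\nonumber
\end{equation}

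First, I would show that the two $S^0{}_t$--only terms vanish separately. The computation \ceqref{freemod9} already evaluated $\varDelta_{\ee^{-tH}}S^0{}_t=-\tfrac{1}{2}\grtr Q$ and $(S^0{}_t,S^0{}_t)_{\ee^{-tH}}=\langle x,Q^2\ee^{tH}x\rangle_E$, both of which vanish by virtue of $\grtr Q=0$ (eq. \ceqref{n2struct7}) and $Q^2=0$ (the first relation of \ceqref{n2struct4}). Hence the purely free contribution drops out.

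The crux is then to show that the cross term $(S^0{}_t,I_t)_{\ee^{-tH}}$ reproduces $-\mathcal{Q}I_t$. For this I would use the defining formula \ceqref{sumbvrg9} for the deformed bracket with $A=\ee^{-tH}$, writing
\begin{equation}
(S^0{}_t,I_t)_{\ee^{-tH}}=(S^0{}_t,x^*)_E\,\ee^{-tH}(x,I_t)_E.
\nonumber
\end{equation}
Since $S^0{}_t=-\tfrac{1}{2}\langle x,Q\ee^{tH}x\rangle_E$, an immediate application of \ceqref{sumbvrg27} (with $B=Q\ee^{tH}$) yields $(S^0{}_t,x^*)_E=-x^*Q\ee^{tH}$. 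Cancelling $\ee^{tH}\ee^{-tH}=1_E$ and contracting $x^*Q(x,I_t)_E$ against the symplectic pairing gives $-\langle x,Q(x,I_t)_E\rangle_E=-\mathcal{Q}I_t$ by the definition \ceqref{permod4}. As a consistency check, for a quadratic $I_t=\langle x,Cx\rangle_E$ this agrees with what the quadratic-quadratic formula \ceqref{sumbvrg25} gives, both sides evaluating to $-2\langle x,QCx\rangle_E$.

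Assembling the surviving terms yields \ceqref{permod3}. The only subtlety is the last step, the identification of $x^*QA(x,I_t)_E$ with $\langle x,QA(x,I_t)_E\rangle_E$, which I would handle by invoking the definition $x^*=\varpi_E x$ and the shifted graded symmetry \ceqref{sumbvrg-2} of $\langle-,-\rangle_E$; this is purely bookkeeping of signs and poses no real obstacle because all quantities involved have total pairing degree consistent with that of $\mathcal{Q}$.
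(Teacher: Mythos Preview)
Your proposal is correct and follows essentially the same route as the paper's proof: expand \ceqref{permod1} using $S_t=S^0{}_t+I_t$, discard the purely free contribution, and identify the cross term $(S^0{}_t,I_t)_{\ee^{-tH}}$ with $-\mathcal{Q}I_t$ via \ceqref{sumbvrg9} and \ceqref{sumbvrg27}. Your explicit observation that $\varDelta_{\ee^{-tH}}S^0{}_t$ and $(S^0{}_t,S^0{}_t)_{\ee^{-tH}}$ vanish \emph{separately} is a useful clarification, since the presence of the $\hbar$ in \ceqref{permod1} means one cannot simply subtract the unrescaled free ME; the paper's phrase ``its counterpart for $S^0{}_t$'' relies implicitly on this same fact.
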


\begin{proof} 
The ME \ceqref{permod1} for $S_t$ together with its counterpart for $S^0{}_t$ yield by virtue of \ceqref{permod2}
the equation 
\begin{equation}
\hbar\varDelta_{\ee^{-tH}}I_t+(S^0{}_t,I_t)_{\ee^{-tH}}+\frac{1}{2}(I_t,I_t)_{\ee^{-tH}}=0.
\label{permod5}
\end{equation}
From \ceqref{freemod8}, by \ceqref{sumbvrg9}, \ceqref{sumbvrg27},  
the second term in the right hand side is
\begin{align}
(S^0{}_t,I_t)_{\ee^{-tH}}
&=-\frac{1}{2}\langle(\langle x,Q\ee^{tH} x\rangle_E,x)_E,\ee^{-tH}(x,I_t)_E\rangle_E 
\vphantom{\Big]}
\label{permod6}
\\
&=-\langle x,Q(x,I_t)_E\rangle_E=-\mathcal{Q}I_t. 
\vphantom{\Big]}
\nonumber
\end{align}
We reach thus \ceqref{permod3}. 
\end{proof}

\noindent
Eq. \ceqref{permod3} is of the same form as the ME for $I_t$ obtained by Costello in 
ref. \ccite{Costello:2007ei}. 

The next problem we have to address is determining the BV RG flow determining the 
RGE of the full action $S_t$. We assume as a working hypothesis that 
this is structured as the Polchinski's RGE with seed \pagebreak 
action $S^0{}_t$ reducing to the free RGE \ceqref{freemod28} in the limit of vanishing $I_t$ is 
\begin{equation}
\frac{dS_t}{dt}=\hbar\varDelta_{\overline{Q}\ee^{-tH}}(S_t-2S^0{}_t)
+\frac{1}{2}(S_t,S_t-2S^0{}_t)_{\overline{Q}\ee^{-tH}}-\frac{\hbar}{2}\grtr(\overline{Q}Q).
\label{permod7}
\end{equation}
The last term may be absorbed by adding a constant $\frac{t}{2}\grtr(\overline{Q}Q)$ to $S_t$
and for this reason is usually neglected, but we shall keep it nevertheless. 
There are several indications that \ceqref{permod7} is a likely to be the correct full RGE. 

\begin{prop}
The RGE \ceqref{permod7} can be cast in the form 
\begin{equation}
\frac{dI_t}{dt}=\hbar\varDelta_{\overline{Q}\ee^{-tH}}I_t+\frac{1}{2}(I_t,I_t)_{\overline{Q}\ee^{-tH}}.
\label{permod8}
\end{equation}
\end{prop}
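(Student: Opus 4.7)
The plan is to substitute the splitting $S_t = S^0{}_t + I_t$ of \ceqref{permod2} into the ansatz \ceqref{permod7} and reduce the resulting equation to \ceqref{permod8} by collecting the inhomogeneous contributions coming from $S^0{}_t$ and matching them against the free RGE \ceqref{freemod28}. Writing $S_t - 2S^0{}_t = I_t - S^0{}_t$ and using the linearity of the deformed Laplacian and the biadditivity of the deformed bracket, the RHS of \ceqref{permod7} becomes
\begin{equation*}
\hbar\varDelta_{\overline{Q}\ee^{-tH}}I_t \;-\; \hbar\varDelta_{\overline{Q}\ee^{-tH}}S^0{}_t \;+\; \tfrac{1}{2}(S^0{}_t+I_t,\, I_t-S^0{}_t)_{\overline{Q}\ee^{-tH}} \;-\; \tfrac{\hbar}{2}\grtr(\overline{Q}Q).
\end{equation*}
Since $|\overline{Q}\ee^{-tH}| = -1$ is odd, prop.\ \cref{prop:sumbvrg1/1} (in particular \ceqref{dbvalg7}) shows that $(-,-)_{\overline{Q}\ee^{-tH}}$ is graded symmetric; as $S^0{}_t$ and $I_t$ are both of degree zero, the mixed terms cancel and the bracket contribution simplifies to $\tfrac{1}{2}(I_t,I_t)_{\overline{Q}\ee^{-tH}} - \tfrac{1}{2}(S^0{}_t,S^0{}_t)_{\overline{Q}\ee^{-tH}}$.

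Next I would evaluate the two $S^0{}_t$-dependent pieces directly, using \ceqref{sumbvrg24} and \ceqref{sumbvrg25} together with the $\mathfrak{gl}(1|1)$ relations of subsect.\ \cref{subsec:n2struct}, essentially reproducing the intermediate steps of \ceqref{freemod29}. Because $[\overline{Q},H]=0$, the factor $\ee^{-tH}$ commutes with $\overline{Q}$, and one obtains $\varDelta_{\overline{Q}\ee^{-tH}}S^0{}_t = -\tfrac{1}{2}\grtr(\overline{Q}Q)$; using further $Q\overline{Q}Q = HQ$ (from \ceqref{n2struct5} together with $Q^2=0$) and $[Q,H]=0$, one gets $(S^0{}_t,S^0{}_t)_{\overline{Q}\ee^{-tH}} = \langle x, QH\ee^{tH}x\rangle_E = -2\, dS^0{}_t/dt$.

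Substituting these values into the expanded RHS, the piece $-\hbar\varDelta_{\overline{Q}\ee^{-tH}}S^0{}_t$ exactly cancels the anomaly term $-\tfrac{\hbar}{2}\grtr(\overline{Q}Q)$, while $-\tfrac{1}{2}(S^0{}_t,S^0{}_t)_{\overline{Q}\ee^{-tH}} = dS^0{}_t/dt$ cancels the $dS^0{}_t/dt$ summand arising from $dS_t/dt = dS^0{}_t/dt + dI_t/dt$ on the LHS of \ceqref{permod7}. What survives is precisely \ceqref{permod8}. The only thing to watch is the bookkeeping of signs and of the two separate $\hbar$ powers in the Laplacian and bracket terms; conceptually there is no obstacle, and the fact that the cancellations happen so cleanly is itself an indication that the ansatz \ceqref{permod7} has the right structure — the seed action $S^0{}_t$ and the constant $\grtr(\overline{Q}Q)$ term in \ceqref{permod7} are finely tuned by the free RGE \ceqref{freemod28} to yield a Polchinski-type RGE for $I_t$ without inhomogeneous remainder.
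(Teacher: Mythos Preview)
Your proof is correct and follows essentially the same route as the paper: substitute $S_t=S^0{}_t+I_t$ into \ceqref{permod7}, expand, and cancel the $S^0{}_t$-dependent terms using the free RGE. The paper packages the cancellation as a single appeal to \ceqref{freemod28}, whereas you carry out the two cancellations ($\hbar\varDelta_{\overline{Q}\ee^{-tH}}S^0{}_t$ against the trace term, and $\tfrac{1}{2}(S^0{}_t,S^0{}_t)_{\overline{Q}\ee^{-tH}}$ against $dS^0{}_t/dt$) separately; this is actually a bit cleaner, since \ceqref{freemod28} as stated has no $\hbar$ and one must in any case unbundle it into those two identities to match the $\hbar$ grading in \ceqref{permod9}.
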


\begin{proof}
Inserting \ceqref{permod2} into \ceqref{permod7}, we obtain
\begin{align}
\frac{dI_t}{dt}
&=\hbar\varDelta_{\overline{Q}\ee^{-tH}}I_t+\frac{1}{2}(I_t,I_t)_{\overline{Q}\ee^{-tH}}
\vphantom{\Big]}
\label{permod9}
\\
&\hspace{2cm}-\frac{d S^0{}_t}{dt}-\hbar\varDelta_{\overline{Q}\ee^{-tH}}S^0{}_t-
\frac{1}{2}(S^0{}_t,S^0{}_t)_{\overline{Q}\ee^{-tH}}-\frac{\hbar}{2}\grtr(\overline{Q}Q).
\vphantom{\Big]}
\nonumber
\end{align}
The second line of the right hand side vanishes on account of \ceqref{freemod28}.
\end{proof}

Eq. \ceqref{permod8} reads compactly as 
\begin{equation}
\frac{dI_t}{dt}=\hbar\varDelta_{\overline{Q}\ee^{-tH}}e^{I_t/\hbar} 
\label{permod10}
\end{equation}
by \ceqref{freemod15}, \ceqref{freemod20/1} and the 
standard identity $\ee^{-u}\varDelta\ee^u=\varDelta u+\frac{1}{2}(u,u)$ of BV theory. 
The formal solution of \ceqref{permod10} is 
\begin{equation}
\ee^{I_t/\hbar}=\exp\bigg(\hbar\int_s^t d\tau \varDelta_{\overline{Q}\ee^{-\tau H}}\bigg)\ee^{I_s/\hbar}.
\label{permod11}
\end{equation}
This relation is the characterizing property of $I_t$ in the analysis of ref. \ccite{Costello:2007ei}. 

The RGE \ceqref{permod7} can be written as 
\begin{equation}
\frac{dS_t}{dt}=\hbar\varDelta_{\overline{Q}\ee^{-tH}}S_t+\frac{1}{2}(S_t,S_t)_{\overline{Q}\ee^{-tH}}
+\bar\chi^\bcdot{}_tS_t+\hbar \bar r^\bcdot{}_{\chi t},
\label{permod12}
\end{equation}
where 
$\bar\chi^\bcdot{}_t$ and $\bar r^\bcdot{}_t,$ are the degree $0$ derivation and element of $\Fun(E_0)$
\begin{align}
&\bar\chi^\bcdot{}_t=-\ad_{\overline{Q}\ee^{-tH}}S^0{}_t,
\vphantom{\Big]}
\label{permod13}
\\
&\bar r^\bcdot{}_{\chi t}=-2\varDelta_{\overline{Q}\ee^{-tH}}S^0{}_t-\frac{1}{2}\grtr(\overline{Q}Q).
\vphantom{\Big]}
\label{permod14}
\end{align}
Eq. \ceqref{permod12} is formally identical to the RGE \ceqref{bveffact27} of the extended RG set--up 
of subsect. \cref{subsec:bveffact}. By itself this is not enough to show that 
our RGE fits in the BV RG framework elaborated in this paper. However, it provides useful indications about 
what the underlying BV RG flow, if there indeed is one as we suppose, may look like.

In the extended RG set--up, the full BV RG EA $S_t$ has degree $-1$ partner
$S^\star {}_t$ and $S_t$ and $S^\star {}_t$ obey eq. \ceqref{bveffact26}. Let us assume that such a partner does 
indeed exist in the present framework. Eq. \ceqref{bveffact26} would then read 
\begin{equation}
\hbar \varDelta_{\ee^{-tH}}S^\star {}_t+(S_t,S^\star {}_t)_{\ee^{-tH}}
=\hbar \varDelta_{\overline{Q}\ee^{-tH}}S_t+\frac{1}{2}(S_t,S_t)_{\overline{Q}\ee^{-tH}}.
\label{permod14/1} 
\end{equation}   
Correspondingly,  the RGE \ceqref{permod12} would take the form 
\begin{equation}
\frac{dS_t}{dt}=\chi^\bcdot{}_tS_t+\hbar r^\bcdot{}_{\chi t},
\label{permod14/2}
\end{equation}
where $\chi^\bcdot{}_t$ and $r^\bcdot{}_t$ are the degree $0$ derivation and element of $\Fun(E_0)$
\begin{align}
&\chi^\bcdot{}_t=-\ad_{\ee^{-tH}}S^\star {}_t+\bar\chi^\bcdot{}_t,
\vphantom{\Big]}
\label{permod16}
\\
&r^\bcdot{}_{\chi t}=\varDelta_{\ee^{-tH}}S^\star {}_t+\bar r^\bcdot{}_{\chi t}.
\vphantom{\Big]}
\label{permod17}
\end{align}
If we manage to show that $\chi^\bcdot{}_t$ and $r^\bcdot{}_t$ meet the hypotheses of prop. \cref{prop:genres1},
then $\chi^\bcdot{}_t$ and $r^\bcdot{}_t$ will be the infinitesimal generator and logarithmic Jacobian 
of a BV RG flow $\chi_{t,s}$. Further, by prop. \cref{prop:genres2}, $S_t$ will be a BV RG EA with respect to $\chi_{t,s}$. 


It is reasonable to hypothesize that, in analogy to $S_t$, $S^\star {}_t$ splits as 
\begin{equation}
S^\star {}_t=S^{0\star }{}_t+I^\star {}_t, 
\label{permod15}
\end{equation}
where $S^{0\star }{}_t$ 
is the partner of the free action $S^0{}_t$ given by the second term in the right hand side of
\ceqref{freemod21} and $I^\star {}_t\in\Fun(E_0)[[\hbar]]$ is similarly the partner cubic in $x$ mod $\hbar$
of the interaction term $I_t$. 

\begin{prop} Under the above assumptions, $\chi^\bcdot{}_t$ and $r^\bcdot{}_{\chi t}$ are given by 
\begin{align}
&\chi^\bcdot{}_t=\frac{1}{2}\langle x, H\ad_E x\rangle_E -\ad_{\ee^{-tH}}I^\star {}_t,
\vphantom{\Big]}
\label{permod18}
\\
&r^\bcdot{}_{\chi t}=\varDelta_{\ee^{-tH}}I^\star {}_t.
\vphantom{\Big]}
\label{permod19}
\end{align}
\end{prop}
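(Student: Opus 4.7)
The plan is to substitute the ansatz $S^\star {}_t = S^{0\star }{}_t + I^\star {}_t$ of eq.~\ceqref{permod15} into the expressions \ceqref{permod16} and \ceqref{permod17} for $\chi^\bcdot{}_t$ and $r^\bcdot{}_{\chi t}$ and invoke the explicit forms of $\bar\chi^\bcdot{}_t$ and $\bar r^\bcdot{}_{\chi t}$ given by \ceqref{permod13}, \ceqref{permod14}. By linearity of $\ad_A$ and $\varDelta_A$ in their arguments, this produces
\begin{align*}
\chi^\bcdot{}_t &= -\ad_{\ee^{-tH}} I^\star {}_t + \big(-\ad_{\ee^{-tH}} S^{0\star }{}_t - \ad_{\overline{Q}\ee^{-tH}} S^0{}_t\big),\\
r^\bcdot{}_{\chi t} &= \varDelta_{\ee^{-tH}} I^\star {}_t + \big(\varDelta_{\ee^{-tH}} S^{0\star }{}_t - 2\varDelta_{\overline{Q}\ee^{-tH}} S^0{}_t - \tfrac{1}{2}\grtr(\overline{Q}Q)\big).
\end{align*}
The claim is thus equivalent to the two free identities
\begin{align*}
-\ad_{\ee^{-tH}} S^{0\star }{}_t - \ad_{\overline{Q}\ee^{-tH}} S^0{}_t &= \tfrac{1}{2}\langle x, H\ad_E x\rangle_E,\\
\varDelta_{\ee^{-tH}} S^{0\star }{}_t - 2\varDelta_{\overline{Q}\ee^{-tH}} S^0{}_t - \tfrac{1}{2}\grtr(\overline{Q}Q) &= 0,
\end{align*}
which depend only on the explicit expressions \ceqref{freemod8}, \ceqref{freemod21} for the free EAs and on the $\mathfrak{gl}(1|1)$ relations \ceqref{n2struct4}--\ceqref{n2struct3/0}.

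For the first identity I would evaluate both sides on an arbitrary $v\in\Fun(E_0)$, rewrite the brackets through \ceqref{sumbvrg9} and apply \ceqref{sumbvrg27} to the quadratic generators $\langle x, Q\ee^{tH} x\rangle_E$ and $\langle x, \{\overline{Q},Q\}\ee^{tH} x\rangle_E$. Their symmetry $B^\sim = +B$ follows from \ceqref{n2struct1}--\ceqref{n2struct3} combined with $(\ee^{tH})^\sim=-\ee^{tH}$ (which itself comes from iterating \ceqref{n2struct3}). Since $[\overline{Q},H]=0$, the exponentials telescope and one gets $-\ad_{\ee^{-tH}} S^{0\star }{}_t\cdot v = \tfrac{1}{2}\langle x,\{\overline{Q},Q\}(x,v)_E\rangle_E$ and $-\ad_{\overline{Q}\ee^{-tH}} S^0{}_t\cdot v = \langle x,Q\overline{Q}(x,v)_E\rangle_E$. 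Summing and using the graded commutation $[Q,\overline{Q}]=Q\overline{Q}+\overline{Q}Q=H$ together with the paper's convention $\{\overline{Q},Q\}=\overline{Q}Q-Q\overline{Q}$ collapses the expression to $\tfrac{1}{2}\langle x,H(x,v)_E\rangle_E = \tfrac{1}{2}\langle x,H\ad_E x\rangle_E\cdot v$.

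For the second identity I would apply \ceqref{sumbvrg24} directly. Because $[Q,H]=[\overline{Q},H]=0$, the factors $\ee^{\pm tH}$ commute past $Q$ and $\overline{Q}$ and collapse inside the supertrace, giving $\varDelta_{\ee^{-tH}} S^{0\star }{}_t = -\tfrac{1}{4}\grtr(\{\overline{Q},Q\})$ and $\varDelta_{\overline{Q}\ee^{-tH}} S^0{}_t = -\tfrac{1}{2}\grtr(\overline{Q}Q)$. The graded cyclicity of $\grtr$ together with $|Q|=1$, $|\overline{Q}|=-1$ yields $\grtr(Q\overline{Q})=-\grtr(\overline{Q}Q)$, so $\grtr(\{\overline{Q},Q\})=2\grtr(\overline{Q}Q)$. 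Substituting gives $-\tfrac{1}{2}\grtr(\overline{Q}Q) + \grtr(\overline{Q}Q) - \tfrac{1}{2}\grtr(\overline{Q}Q)=0$.

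The computations are essentially linear--algebraic. The main obstacle is sign bookkeeping: carefully applying the transposition rule \ceqref{sumbvrg-1/0} to the composite endomorphisms $\overline{Q}\ee^{-tH}$ and $\{\overline{Q},Q\}\ee^{tH}$ to justify the use of \ceqref{sumbvrg9}, \ceqref{sumbvrg24}, and \ceqref{sumbvrg27}, and then correctly handling the graded antisymmetry of $\grtr$. Conceptually, the calculation confirms the expected structural role of $S^{0\star }{}_t$ and of the constant term $-\tfrac{1}{2}\grtr(\overline{Q}Q)$ appearing in \ceqref{permod14}: together they reproduce on the free sector precisely the Hamiltonian generator $\tfrac{1}{2}\langle x, H\ad_E x\rangle_E$ of prop.~\cref{prop:extrgst}, while leaving no residual Jacobian in the free limit, so that all nontriviality of $\chi^\bcdot{}_t$ and $r^\bcdot{}_{\chi t}$ is carried by $I^\star {}_t$.
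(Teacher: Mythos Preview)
Your proposal is correct and follows essentially the same approach as the paper: substitute \ceqref{permod13}--\ceqref{permod15} into \ceqref{permod16}--\ceqref{permod17}, reduce to the two free identities (these are the paper's \ceqref{permod22} and \ceqref{permod23}), and verify them via \ceqref{sumbvrg9}, \ceqref{sumbvrg24}, \ceqref{sumbvrg27} together with the $\mathfrak{gl}(1|1)$ relations. The sign bookkeeping you flag is exactly what the paper handles, and your intermediate expressions and use of $\grtr(Q\overline{Q})=-\grtr(\overline{Q}Q)$ match the paper's computation line for line.
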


\begin{proof} To show the statement, we first substitute relations \ceqref{permod13}, \ceqref{permod14}
and \ceqref{permod15} into \ceqref{permod16}, \ceqref{permod17} and then insert the explicit expressions
of $S^0{}_t$ and $S^{0\star }{}_t$ appearing in \ceqref{freemod21} into the resulting identities. We obtain in this way 
\begin{align}
&\chi^\bcdot{}_t=\frac{1}{4}\ad_{\ee^{-tH}}(\langle x,\{\overline{Q},Q\}\ee^{tH}x\rangle_E)
\vphantom{\Big]}
\label{permod20}
\\
&\hspace{2cm}
+\frac{1}{2}\ad_{\overline{Q}\ee^{-tH}}(\langle x,Q\ee^{tH}x\rangle_E)-\ad_{\ee^{-tH}}I^\star {}_t,
\vphantom{\Big]}
\nonumber
\\
&r^\bcdot{}_{\chi t}=-\frac{1}{4}\varDelta_{\ee^{-tH}}\langle x,\{\overline{Q},Q\}\ee^{tH}x\rangle_E
\vphantom{\Big]}
\label{permod21}
\\
&\hspace{2cm}
+\varDelta_{\overline{Q}\ee^{-tH}}\langle x,Q\ee^{tH}x\rangle_E-\frac{1}{2}\grtr(\overline{Q}Q)
+\varDelta_{\ee^{-tH}}I^\star {}_t. 
\vphantom{\Big]}
\nonumber
\end{align}
Combined application of \ceqref{sumbvrg9}, \ceqref{sumbvrg27} and \ceqref{n2struct11} furnishes
\begin{align}
\frac{1}{4}(\langle x,\{\overline{Q},Q\}\ee^{tH}x\rangle_E,-)_{\ee^{-tH}}
+&\frac{1}{2}(\langle x,Q\ee^{tH}x\rangle_E,-)_{\overline{Q}\ee^{-tH}}
\vphantom{\Big]}
\label{permod22}
\\
&=\frac{1}{2}\langle x,(\{\overline{Q},Q\}+2Q\overline{Q})(x,-)_E\rangle_E
\vphantom{\Big]}
\nonumber
\\
&=\frac{1}{2}\langle x,H(x,-)_E\rangle_E,
\vphantom{\Big]}
\nonumber
\end{align}
where \ceqref{n2struct5} was used in the last step. Inserting \ceqref{permod22} into \ceqref{permod20}, we obtain 
\ceqref{permod18}. Using \ceqref{sumbvrg24} and \ceqref{n2struct11} yields further 
\begin{align}
-\frac{1}{4}\varDelta_{\ee^{-tH}}\langle x,\{\overline{Q},Q\}\ee^{tH}x\rangle_E
&+\varDelta_{\overline{Q}\ee^{-tH}}\langle x,Q\ee^{tH}x\rangle_E
\vphantom{\Big]}
\label{permod23}
\\
&=\frac{1}{4}\grtr(-\{\overline{Q},Q\}+4\overline{Q}Q)
\vphantom{\Big]}
\nonumber
\\
&=\frac{1}{2}\grtr(\overline{Q}Q).
\vphantom{\Big]}
\nonumber
\end{align}
Inserting \ceqref{permod23} into \ceqref{permod21}, we obtain 
\ceqref{permod19}.
\end{proof}

\begin{prop}
Under the assumptions made, 
$\chi^\bcdot{}_t$ and $r^\bcdot{}_t$ are the infinitesimal generator and logarithmic Jacobian 
of a BV RG flow $\chi_{t,s}$. Further, $S_t$ is a BV RG EA with respect to $\chi_{t,s}$. 
\end{prop}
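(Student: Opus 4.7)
The plan is to invoke propositions \cref{prop:genres1} and \cref{prop:genres2} with BV Laplacian family $\varDelta_t=\hbar\varDelta_{\ee^{-tH}}$ and the associated bracket family $(-,-)_t$ determined by \ceqref{bvalg5}, acting on $X=\Fun(E_0)[[\hbar]]$, and with the infinitesimal data $\chi^\bcdot{}_t$, $r^\bcdot{}_{\chi t}$ of \ceqref{permod18}, \ceqref{permod19} (with the $\hbar$ scaling fixed by matching \ceqref{permod14/2} to the abstract RGE \ceqref{bvrenflow3}). The argument then splits into three hypothesis checks followed by the application of the two propositions.

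First, I would observe that $\chi^\bcdot{}_t$ is a degree $0$ derivation of $X$: the summand $\frac12\langle x,H\ad_E x\rangle_E$ is one by construction, and $\ad_{\ee^{-tH}}I^\star{}_t=(I^\star{}_t,-)_{\ee^{-tH}}$ is one by \ceqref{bvalg9}. Next, I would verify the Leibniz--like compatibility \ceqref{genres1}. A direct computation from \ceqref{sumbvrg9} gives $\frac{d}{dt}(-,-)_{\ee^{-tH}}=-(-,-)_{H\ee^{-tH}}$. Applying proposition \cref{cor:deltacom1} with $A=\ee^{-tH}$ and $B=H$, and using $[H,\ee^{-tH}]=0$ so that $AB+BA=2H\ee^{-tH}$, shows that $\frac12\langle x,H\ad_E x\rangle_E$ fails to be a bracket derivation by precisely $\frac{d}{dt}(-,-)_t$. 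The residual piece $-\ad_{\ee^{-tH}}I^\star{}_t$ is a genuine bracket derivation by the graded Jacobi identity \ceqref{bvalg8}. Adding the two contributions yields \ceqref{genres1}.

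The third and most substantive check is the evolution equation \ceqref{genres2}. One has $\frac{d\varDelta_t}{dt}=-\hbar\varDelta_{H\ee^{-tH}}$, while proposition \cref{cor:deltacom} applied with $A=\ee^{-tH}$, $B=H$ gives
\[
\tfrac12\bigl[\langle x,H\ad_E x\rangle_E,\varDelta_t\bigr]=-\hbar\varDelta_{H\ee^{-tH}}=\tfrac{d\varDelta_t}{dt}.
\]
Hence this contribution to $[\chi^\bcdot{}_t,\varDelta_t]$ cancels $\frac{d\varDelta_t}{dt}$ in \ceqref{genres2}. The remaining commutator $-\bigl[\ad_{\ee^{-tH}}I^\star{}_t,\varDelta_t\bigr]$ equals $\hbar\ad_{\ee^{-tH}}\bigl(\varDelta_{\ee^{-tH}}I^\star{}_t\bigr)$ by the identity \ceqref{bvalg11} with $f=I^\star{}_t$, $|I^\star{}_t|=-1$, and this exactly matches $\ad_t r^\bcdot{}_{\chi t}$ computed from \ceqref{permod19}. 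Thus \ceqref{genres2} holds.

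With the hypotheses of prop \cref{prop:genres1} met, a BV RG flow $\chi_{t,s}$ with the prescribed $\chi^\bcdot{}_t$ and $r^\bcdot{}_{\chi t}$ exists. To conclude via prop \cref{prop:genres2}, it remains to note that $S_t$ is a BV MA for $\varDelta_t$ (the quantum ME \ceqref{permod1} written in the rescaled normalization) and that the RGE \ceqref{permod14/2} is precisely of the form \ceqref{bveffact4} required there; hence $S_t$ is a BV RG EA with respect to $\chi_{t,s}$. The main obstacle is bookkeeping rather than any deep step: one must keep the $\hbar$ powers consistent throughout and, more importantly, presuppose the existence of the degree $-1$ partner $I^\star{}_t$ satisfying the algebraic identity \ceqref{permod14/1}. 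This is the conjectural ingredient the author flagged, and a complete proof would need either to exhibit $I^\star{}_t$ directly or to derive it perturbatively from the data $S^0{}_t$, $I_t$ already available.
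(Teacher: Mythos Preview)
Your proposal is correct and follows essentially the same route as the paper: verify the hypotheses \ceqref{genres1} and \ceqref{genres2} of prop.~\cref{prop:genres1} using \ceqref{sumbvrg15/1} (your prop.~\cref{cor:deltacom1}) and the Jacobi identity for the bracket part, and \ceqref{sumbvrg15} (your prop.~\cref{cor:deltacom}) together with \ceqref{bvalg11} for the Laplacian part, then invoke prop.~\cref{prop:genres2}. The only cosmetic difference is that the paper works directly with $\varDelta_{\ee^{-tH}}$ (implicitly treating $S_t/\hbar$ as the MA) rather than introducing $\varDelta_t=\hbar\varDelta_{\ee^{-tH}}$, which avoids the bracket-normalization mismatch you flagged; your citation of \ceqref{bvalg11} is exactly the paper's use of \ceqref{sumbvrg12} with $A=B$.
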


\begin{proof}
We have to show that $\chi^\bcdot{}_t$, $r^\bcdot{}_t$ meet the hypotheses of prop. \cref{prop:genres1}, that is 
that they satisfy eqs. \ceqref{genres1} and \ceqref{genres2}. Let $u,v\in\Fun(E_0)$. By properties 
\ceqref{dbvalg8} and \ceqref{sumbvrg15/1}, we have \pagebreak 
\begin{align}
\chi^\bcdot{}_t(u,v)_{\ee^{-tH}}
&=\langle x,H\ad_E x\rangle_E(u,v)_{\ee^{-tH}}-(I^\star {}_t,(u,v)_{\ee^{-tH}})_{\ee^{-tH}}
\vphantom{\Big]}
\label{}
\\
&=\frac{1}{2}(\langle x,H\ad_E x\rangle_Eu,v)_{\ee^{-tH}}+\frac{1}{2}(u,\langle x,H\ad_E x\rangle_Ev)_{\ee^{-tH}}
\vphantom{\Big]}
\nonumber
\\
&\hspace{.5cm}-(u,v)_{H\ee^{-tH}}-((I^\star {}_t,u)_{\ee^{-tH}},v)_{\ee^{-tH}}-(u,(I^\star {}_t,v)_{\ee^{-tH}})_{\ee^{-tH}}
\vphantom{\Big]}
\nonumber
\\
&=(\chi^\bcdot{}_tu,v)_{\ee^{-tH}}+(u,\chi^\bcdot{}_tv)_{\ee^{-tH}}+\frac{d}{dt}(u,v)_{\ee^{-tH}}.
\vphantom{\Big]}
\nonumber
\end{align}
\ceqref{genres1} thus holds. Next, by \ceqref{sumbvrg12} (with $A=B$ and thus $[-,-]_{A,B}=0$) and 
\ceqref{sumbvrg15}, we have 
\begin{align}
[\chi^\bcdot{}_t,\varDelta_{\ee^{-tH}}]
&=\frac{1}{2}[\langle x,H\ad_E x\rangle_E,\varDelta_{\ee^{-tH}}]-[\ad_{\ee^{-tH}}I^\star {}_t,\varDelta_{\ee^{-tH}}]
\vphantom{\Big]}
\label{}
\\
&=-\varDelta_{H\ee^{-tH}}+\ad_{\ee^{-tH}}\varDelta_{\ee^{-tH}}I^\star {}_t
\vphantom{\Big]}
\nonumber
\\
&=\frac{d}{dt}\varDelta_{\ee^{-tH}}+\ad_{\ee^{-tH}}r^\bcdot{}_{\chi t}. 
\vphantom{\Big]}
\nonumber
\end{align}
Therefore, also \ceqref{genres2} holds. It follows in this way that 
$\chi^\bcdot{}_t$ and $r^\bcdot{}_t$ are the infinitesimal generator and logarithmic Jacobian 
of a BV RG flow $\chi_{t,s}$. Since $S_t$ satisfies eq. \ceqref{permod14/2}, $S_t$ is a BV RG EA
relative to $\chi_{t,s}$ by  prop. \cref{prop:genres2}.  
\end{proof}

The infinitesimal generator and Jacobian $\chi^{0\bcdot}{}_t$ and $r^\bcdot{}_{\chi^0 t}$ 
of the special free BV RG flow $\chi^0{}_{t,s}$ of eq. \ceqref{freemod2} are \hphantom{xxxxxxxx}
\begin{align}
&\chi^{0\bcdot}{}_t=\frac{1}{2}\langle x, H\ad_E x\rangle_E, 
\vphantom{\Big]}
\label{permod24}
\\
&r^\bcdot{}_{\chi^0 t}=0. 
\vphantom{\Big]}
\label{permod25}
\end{align}
Comparison of \ceqref{permod18}, \ceqref{permod19} with \ceqref{permod24}, \ceqref{permod25} shows that 
the full BV RG flow $\chi_{t,s}$ differs form its free counterpart $\chi^0{}_{t,s}$ 
by an amount set by the interacting term partner $I^\star {}_t$. As this latter, $\chi_{t,s}$ has thus 
a perturbative expansion. 

The importance of the BV RG EA partner's interaction term $I^\star {}_t$ should by now be evident, 
although presently we cannot prove the partner's existence in general. $I^\star {}_t$ is related 
to BV RG EA interaction term $I_t$ through eq. \ceqref{permod14/1}. 

\begin{prop} The interaction terms $I_t$ and $I^\star {}_t$ obey
\begin{equation}
\hbar \varDelta_{\ee^{-tH}}I^\star {}_t-\mathcal{Q}I^\star {}_t+(I_t,I^\star {}_t)_{\ee^{-tH}}
=\hbar \varDelta_{\overline{Q}\ee^{-tH}}I_t-\frac{1}{2}\mathcal{H}I_t+\frac{1}{2}(I_t,I_t)_{\overline{Q}\ee^{-tH}},
\label{permod3/1}
\end{equation} 
where $\mathcal{Q}$ is defined in \ceqref{permod4} and $\mathcal{H}$ is the degree $0$ first order differential operator
\begin{equation}
\mathcal{H}u=\langle x,H(x,u)_E\rangle_E.
\label{permod4/1}
\end{equation}
\end{prop}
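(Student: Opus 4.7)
The plan is to derive \ceqref{permod3/1} directly from the postulated identity \ceqref{permod14/1} by inserting the perturbative splittings $S_t=S^0{}_t+I_t$ and $S^\star{}_t=S^{0\star}{}_t+I^\star{}_t$ supplied by \ceqref{permod2} and \ceqref{permod15}. First I would expand both sides of \ceqref{permod14/1} by linearity of $\varDelta_A$ and bilinearity of $(-,-)_A$; the terms involving only $S^0{}_t$ and $S^{0\star}{}_t$ reproduce the free counterpart of \ceqref{permod14/1}. I would verify that this free version holds by reorganizing the computation \ceqref{freemod22} from the proof of prop.~\cref{prop:extrgea} according to powers of $\hbar$: the $\hbar$--linear part amounts to $\varDelta_{\ee^{-tH}}S^{0\star}{}_t=\varDelta_{\overline{Q}\ee^{-tH}}S^0{}_t$, both equal to $-\tfrac{1}{2}\grtr(\overline{Q}Q)$ by \ceqref{sumbvrg24} and the trace identity $\grtr(\{\overline{Q},Q\})=2\grtr(\overline{Q}Q)$; the $\hbar$--independent part amounts to $(S^0{}_t,S^{0\star}{}_t)_{\ee^{-tH}}=\tfrac{1}{2}(S^0{}_t,S^0{}_t)_{\overline{Q}\ee^{-tH}}$, both equal to $\tfrac{1}{2}\langle x,Q\overline{Q}Q\ee^{tH}x\rangle_E$ by \ceqref{sumbvrg25}, the identity $Q\{\overline{Q},Q\}=Q\overline{Q}Q$ (a consequence of $Q^2=0$) and \ceqref{n2struct11}. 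These free contributions then cancel on both sides.

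After cancellation, what remains is the residual equation
\[
\hbar\varDelta_{\ee^{-tH}}I^\star{}_t+(S^0{}_t,I^\star{}_t)_{\ee^{-tH}}+(I_t,S^{0\star}{}_t)_{\ee^{-tH}}+(I_t,I^\star{}_t)_{\ee^{-tH}}=\hbar\varDelta_{\overline{Q}\ee^{-tH}}I_t+(S^0{}_t,I_t)_{\overline{Q}\ee^{-tH}}+\tfrac{1}{2}(I_t,I_t)_{\overline{Q}\ee^{-tH}}.
\]
The next step is to evaluate the three mixed cross brackets using the definition \ceqref{sumbvrg9} and the key formula \ceqref{sumbvrg27}, which is applicable because $(Q\ee^{tH})^\sim=Q\ee^{tH}$ and $(\{\overline{Q},Q\}\ee^{tH})^\sim=\{\overline{Q},Q\}\ee^{tH}$ by \ceqref{n2struct1}--\ceqref{n2struct3} and \ceqref{n2struct10}, and using the commutativities $[\ee^{tH},Q]=[\ee^{tH},\overline{Q}]=0$ from \ceqref{n2struct11}. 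A direct computation mirroring \ceqref{permod6} would yield $(S^0{}_t,I^\star{}_t)_{\ee^{-tH}}=-\mathcal{Q}I^\star{}_t$ and $(S^0{}_t,I_t)_{\overline{Q}\ee^{-tH}}=-\langle x,Q\overline{Q}(x,I_t)_E\rangle_E$; applying the antisymmetry \ceqref{dbvalg7} (which produces no sign since $(|I_t|+1)(|S^{0\star}{}_t|+1)=0$) then gives $(I_t,S^{0\star}{}_t)_{\ee^{-tH}}=\tfrac{1}{2}\langle x,(\overline{Q}Q-Q\overline{Q})(x,I_t)_E\rangle_E$.

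The conclusion is reached by combining these evaluations via the $\mathfrak{gl}(1|1)$ anticommutator relation $Q\overline{Q}+\overline{Q}Q=[Q,\overline{Q}]=H$ from \ceqref{n2struct5}. One finds
\[
(I_t,S^{0\star}{}_t)_{\ee^{-tH}}-(S^0{}_t,I_t)_{\overline{Q}\ee^{-tH}}=\tfrac{1}{2}\langle x,(Q\overline{Q}+\overline{Q}Q)(x,I_t)_E\rangle_E=\tfrac{1}{2}\mathcal{H}I_t
\]
by definition \ceqref{permod4/1}, and substitution into the residual equation followed by a trivial rearrangement produces \ceqref{permod3/1}. The main obstacle is not conceptual but technical: the careful sign accounting through the transpose and graded antisymmetry rules for the $A$--deformed brackets. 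In particular, verifying $(\{\overline{Q},Q\}\ee^{tH})^\sim=+\{\overline{Q},Q\}\ee^{tH}$ relies on a delicate cancellation between the $^\sim$--invariance of $\{\overline{Q},Q\}=\overline{Q}Q-Q\overline{Q}$ and the minus sign in $\ee^{tH}{}^\sim=-\ee^{tH}$, and is essential so that \ceqref{sumbvrg27} applies to $S^{0\star}{}_t$ and the whole cross-bracket evaluation goes through coherently.
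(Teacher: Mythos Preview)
Your proposal is correct and follows essentially the same route as the paper: insert the splittings \ceqref{permod2}, \ceqref{permod15} into \ceqref{permod14/1}, cancel the free contributions (which the paper justifies by citing \ceqref{freemod22}, while you verify them explicitly term by term), and then evaluate the three mixed cross brackets via \ceqref{sumbvrg9}, \ceqref{sumbvrg27} and the relation $[Q,\overline{Q}]=H$. The only cosmetic difference is that the paper keeps $(I_t,S^{0\star}{}_t)_{\ee^{-tH}}$ on the right hand side and computes the combination $(S^0{}_t,I_t)_{\overline{Q}\ee^{-tH}}-(I_t,S^{0\star}{}_t)_{\ee^{-tH}}=-\tfrac{1}{2}\mathcal{H}I_t$ directly, rather than passing through the graded antisymmetry \ceqref{dbvalg7} as you do.
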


\begin{proof}
The full action and action partner $S_t$ and $S^\star {}_t$ satisfy eq. \ceqref{permod14/1}. 
By \ceqref{freemod22}, the free action and action partner $S^0{}_t$ and $S^{0\star }{}_t$
obey eq. \ceqref{permod14/1} as well. Insertion of the expansions \ceqref{permod2}, \ceqref{permod15} 
into \ceqref{permod14/1} then yields 
\begin{align}
&\hbar \varDelta_{\ee^{-tH}}I^\star {}_t+(S^0{}_t,I^\star {}_t)_{\ee^{-tH}}+(I_t,I^\star {}_t)_{\ee^{-tH}}
\vphantom{\Big]}       
\label{permod3/2}
\\
&\hspace{2cm}=\hbar \varDelta_{\overline{Q}\ee^{-tH}}I_t+(S^0{}_t,I_t)_{\overline{Q}\ee^{-tH}}
-(I_t,S^{0\star }{}_t)_{\ee^{-tH}}+\frac{1}{2}(I_t,I_t)_{\overline{Q}\ee^{-tH}}.
\vphantom{\Big]}
\nonumber
\end{align}
By a calculation identical to \ceqref{permod6}, one can show that 
\begin{equation}
(S^0{}_t,I^\star {}_t)_{\ee^{-tH}}=-\mathcal{Q}I^\star {}_t.
\label{permod4/2}
\end{equation}
By a similar calculation using \ceqref{n2struct5} and \ceqref{n2struct11}, one finds 
\begin{align}
(S^0{}_t,I_t)_{\overline{Q}\ee^{-tH}}-&(I_t,S^{0\star }{}_t)_{\ee^{-tH}}
\vphantom{\Big]}
\label{}
\\
&=-\frac{1}{2}\langle(\langle x,Q\ee^{tH} x\rangle_E,x)_E,\overline{Q}\ee^{-tH}(x,I_t)_E\rangle_E 
\vphantom{\Big]}
\nonumber
\\
&\hspace{.5cm}-\frac{1}{4}\langle(\langle x,\{\overline{Q},Q\}\ee^{tH} x\rangle_E,x)_E,\ee^{-tH}(x,I_t)_E\rangle_E 
\vphantom{\Big]}
\nonumber
\\
&=-\frac{1}{2}\langle x,(2Q\overline{Q}+\{\overline{Q},Q\})(x,I_t)_E\rangle_E 
\vphantom{\Big]}
\nonumber
\\
&=-\frac{1}{2}\langle x,H(x,u)_E\rangle_E=-\frac{1}{2}\mathcal{H}I_t.
\vphantom{\Big]}
\nonumber
\end{align}
\ceqref{permod3/1} follows. 
\end{proof}


We have collected in this way a number of clues suggesting that Polchinski's equation \ceqref{permod7}
supplemented by eq. \ceqref{permod14/1} may indeed be the ones describing perturbatively 
the BV RG flow of the full interacting theory. We hope to further substantiate this 
claim in future work. 
\vfil\eject

\vspace{.4cm}
\noindent
\textcolor{blue}{\sf Acknowledgements}. 
The author wishes to thank G. Velo for useful discussions.
He also acknowledges financial support from INFN High Energy Physics 
Research Agency under the provisions of the agreement between 
Bologna University and INFN. Finally, he thanks for kind hospitality 
the Instituto Superior T\'ecnico of Lisbon, where a part of this work was done
in occasion of the conference ``Higher Structures 2017''.

\vfil\eject

\end{document}